\def\showauthornotes{0}
\newtheorem{theorem}{Theorem}[section]
\newtheorem*{theorem*}{Theorem}
\newtheorem{proposition}[theorem]{Proposition}
\newtheorem*{proposition*}{Proposition}
\newtheorem{lemma}[theorem]{Lemma}
\newtheorem*{lemma*}{Lemma}
\newtheorem{corollary}[theorem]{Corollary}
\newtheorem{corollary*}{Corollary}
\newtheorem*{conjecture*}{Conjecture}
\newtheorem*{fact*}{Fact}
\newtheorem*{hypothesis*}{Hypothesis}
\theoremstyle{definition}
\newtheorem{definition}[theorem]{Definition}
\newtheorem{claim}[theorem]{Claim}
\newtheorem*{claim*}{Claim}
\theoremstyle{remark}
\newtheorem{remark}[theorem]{Remark}
\newtheorem*{remark*}{Remark}
\newtheorem*{observation*}{Observation}
\newtheorem*{rep@theorem}{\rep@title}
\newcommand{\newreptheorem}[2]{%
\newenvironment{rep#1}[1]{%
 \def\rep@title{#2 ##1}%
 \begin{rep@theorem}}%
 {\end{rep@theorem}}
}
\def\showcolorlinks{1}
\newcommand{\savehyperref}[2]{\texorpdfstring{\hyperref[#1]{#2}}{#2}}
\newcommand{\Sref}[1]{\hyperref[#1]{\S\ref*{#1}}}
\def\usemicrotype{0}
\newcommand{\Authornote}[2]{{\sffamily\small\color{red}{[#1: #2]}}}
\newcommand{\Authornotecolored}[3]{{\small\color{#1}{[#2: #3]}}}
\newcommand{\Authorcomment}[2]{{\sffamily\small\color{gray}{[#1: #2]}}}
\newcommand{\Authorstartcomment}[1]{\sffamily\small\color{gray}[#1: }
\newcommand{\Authorfnote}[2]{\footnote{\color{red}{#1: #2}}}
\newcommand{\Authorfixme}[1]{\Authornote{#1}{\textbf{??}}}
\newcommand{\Authormarginmark}[1]{\marginpar{\textcolor{red}{\fbox{\Large #1:!}}}}
\newcommand{\Authornote}[2]{}
\newcommand{\Authornotecolored}[3]{}
\newcommand{\Authorcomment}[2]{}
\newcommand{\Authorstartcomment}[1]{}
\newcommand{\Authorfnote}[2]{}
\newcommand{\Authorfixme}[1]{}
\newcommand{\Authormarginmark}[1]{}
\def\showfixme{0}
\newcommand\sett[2]{\left\{ #1 \left| \; \vphantom{#1 #2} \right. #2  \right\}}
\newcommand{\set}[1]{\{#1\}}
\newcommand{\norm}[1]{\lVert#1\rVert}
\def\dim{{\textrm{dim}}}
\newcommand{\Esymb}{\mathbb{E}}
\newcommand{\Psymb}{\mathbb{P}}
\DeclareMathOperator*{\E}{\Esymb}
\DeclareMathOperator*{\ProbOp}{\Psymb}
\renewcommand{\Pr}{\ProbOp}
\newcommand{\textparen}[1]{\text{(#1)}}
\newcommand{\because}[1]{\textparen{because #1}}
\renewcommand{\because}[1]{\textparen{because #1}}
\newcommand\precdot{\mathrel{\ooalign{$\prec$\cr
  \hidewidth\hbox{$\cdot\mkern0.5mu$}}}}
\newcommand\succdot{\mathrel{\ooalign{$\succ$\cr
  $\cdot$}}}
\newcommand{\symdiff}{\Delta}
\newcommand\bdot\bullet
\DeclareMathOperator{\poly}{poly}
\DeclareMathOperator{\supp}{supp}
\DeclareMathOperator{\dist}{dist}
\newcommand{\dt}{\diffmacro{t}}
\newcommand{\Z}{\mathbb Z}
\newcommand{\N}{\mathbb N}
\newcommand{\R}{\mathbb R}
\newcommand{\C}{\mathcal C}
\newcommand{\cC}{\mathcal C}
\newcommand{\cF}{\mathcal F}
\renewcommand{\leq}{\leqslant}
\renewcommand{\le}{\leqslant}
\renewcommand{\geq}{\geqslant}
\renewcommand{\ge}{\geqslant}
\let\epsilon=\varepsilon
\newcommand{\MYstore}[2]{%
  \global\expandafter \def \csname MYMEMORY #1 \endcsname{#2}%
}
\newcommand{\MYload}[1]{%
  \csname MYMEMORY #1 \endcsname%
}
\newcommand{\MYnewlabel}[1]{%
  \newcommand\MYcurrentlabel{#1}%
  \MYoldlabel{#1}%
}
\newcommand{\MYdummylabel}[1]{}
\newcommand{\torestate}[1]{%
  \let\MYoldlabel\label%
  \let\label\MYnewlabel%
  #1%
  \MYstore{\MYcurrentlabel}{#1}%
  \let\label\MYoldlabel%
}
\newcommand{\restatetheorem}[1]{%
  \let\MYoldlabel\label
  \let\label\MYdummylabel
  \begin{theorem*}[Restatement of \prettyref{#1}]
    \MYload{#1}
  \end{theorem*}
  \let\label\MYoldlabel
}
\newcommand{\restatelemma}[1]{%
  \let\MYoldlabel\label
  \let\label\MYdummylabel
  \begin{lemma*}[Restatement of \prettyref{#1}]
    \MYload{#1}
  \end{lemma*}
  \let\label\MYoldlabel
}
\newcommand{\restateprop}[1]{%
  \let\MYoldlabel\label
  \let\label\MYdummylabel
  \begin{proposition*}[Restatement of \prettyref{#1}]
    \MYload{#1}
  \end{proposition*}
  \let\label\MYoldlabel
}
\newcommand{\restatefact}[1]{%
  \let\MYoldlabel\label
  \let\label\MYdummylabel
  \begin{fact*}[Restatement of \prettyref{#1}]
    \MYload{#1}
  \end{fact*}
  \let\label\MYoldlabel
}
\newcommand{\restate}[1]{%
  \let\MYoldlabel\label
  \let\label\MYdummylabel
  \MYload{#1}
  \let\label\MYoldlabel
}
\newcommand{\e}{\epsilon}
\newcommand{\eps}{\epsilon}
\let\origparagraph\paragraph
\renewcommand{\paragraph}[1]{\origparagraph{#1.}}
\newcommand\F{\mathbb{F}}
\def\cosyst{\mathrm{cosyst}}
\def\coloc{\mathrm{coloc}}
\def\syst{\mathrm{syst}}
\def\cyc{\mathrm{cyc}}
\def\cocyc{\mathrm{cocyc}}
\def\F{\mathbb{F}}
\def\rest{\textrm{res}}
\def\corest{\textrm{co-res}}
\def\type{\texttt{type}}
\newcommand\remove[1]{}
\def\Id{I}
\def\mA{\mathcal{A}}
\def\im{\mathrm{im}\,}
\def\Op{\mathrm{Op}}
\def\Nb{\mathrm{Nb}}
\def\code{\mathcal{C}}
\def\mC{\mathcal{C}}
\def\sheaf{\mathcal{F}}
\def\dist{\mathsf{d}}
\def\e{\mathsf{e}}
\def\dt{\delta_\ell}
\def\y{\mathbf{y}}
\def\r{r}
\newcommand{\up}{\mathfrak{u}}
\newcommand{\down}{\mathfrak{d}}
\newcommand{\ol}[1]{\overline{#1}}
\newcommand{\conc}{\!\!\parallel\!\!}
\newcommand{\Cay}{\mathrm{Cay}}
\newcommand{\mF}{\mathcal{F}}
\renewcommand{\dim}{\mathrm{dim}}
\newcommand{\hp}{h^\perp}
\title{Expansion of
high-dimensional cubical complexes\\[2mm] \Large with application to quantum
locally testable codes}
\author{Irit Dinur\thanks{Department of Applied Math and Computer Science, The Weizmann Institute of Science. Email: \texttt{irit.dinur@weizmann.ac.il}.} \and Ting-Chun Lin\thanks{Department of Physics, University of California San Diego, CA, and Hon Hai Research Institute, Taipei. Email: \texttt{til022@ucsd.edu}.} \and Thomas Vidick\thanks{Department of Applied Math and Computer Science, The Weizmann Institute of Science.  Email: \texttt{thomas.vidick@weizmann.ac.il}.}}
\begin{document}

\maketitle
\begin{abstract}
We introduce a high-dimensional cubical complex, for any dimension $t\in\mathbb{N}$, and apply it to the design of quantum locally testable codes. Our complex is a natural generalization of the constructions by Panteleev and Kalachev and by Dinur et.\ al of a square complex (case $t=2$), which have been applied to the design of classical locally testable codes (LTC) and quantum low-density parity check codes (qLDPC)  respectively.

We turn the geometric (cubical) complex into a chain complex by relying on constant-sized local codes $h_1,\ldots,h_t$ as gadgets. A recent result of Panteleev and Kalachev on existence of tuples of codes that are product expanding enables us to prove lower bounds on the cycle and co-cycle expansion of our chain complex.

For $t=4$ our construction gives a new family of ``almost-good'' quantum LTCs --- with constant relative rate, inverse-polylogarithmic relative distance and soundness, and constant-size parity checks. Both the distance of the quantum code and its local testability are proven directly from the cycle and co-cycle expansion of our chain complex.
\end{abstract}

\newpage
\tableofcontents

\section{Introduction}

Expander graphs are bounded-degree graphs with strong connectivity, or more precisely \emph{expansion}, properties. Explicit constructions of expander graphs are non-trivial, but, by now, abound; their use is ubiquitous in algorithms design, complexity theory, combinatorics, and other areas. High-dimensional expanders generalize the expansion requirements of expander graphs to higher-dimensional structures, composed of vertices and edges as well as higher-dimensional faces. Constructions of high-dimensional expanders (HDX) are difficult and comparatively few techniques are known. In recent years HDX have found impactful applications to randomized algorithms, complexity theory, and the design of error-correcting codes, among others. In addition to their intrinsic interest as combinatorial/geometric objects, the increasing range of applications further motivates the study of HDX.

In this paper we introduce a natural generalization of a family of two-dimensional (graphs are one-dimensional) expanders introduced recently and independently in~\cite{DELLM,PK2} and used to simultaneously construct the first good classical locally testable codes (LTC) \cite{DELLM, PK2} and the first qood quantum low-density parity check codes (qLDPC) \cite{PK2}. We extend this construction to arbitrary dimensions and prove lower bounds on its \emph{cycle} and \emph{co-cycle} expansion, which we define below. The bounds that we obtain depend on underlying parameters of the complex, namely the (spectral) expansion of an associated family of Cayley graphs and a certain robustness-like property of an associated family of constant-sized \emph{local codes}. The main application of our construction, which has been our primary motivation, is towards the construction of quantum locally testable codes (qLTC). By instantiating our complex using a suitable abelian lift of an expander, as described in~\cite{JeronimoMO0T22}, and leveraging recent results on product expansion of tuples of random codes over large fields~\cite{PK-future}, we obtain the following result.

\begin{theorem}[Informal, see Corollary \ref{cor:main-code-explicit}]\label{thm:qltc}
There exists an explicit family of $[n,k,d]$ quantum LDPC codes where $n\to\infty$ and $k=\Omega(n),d=\Omega(n/(\log n)^3)$, such that the codes are CSS codes with parity checks of weight $O(1)$ and are qLTC with soundness $\rho=\Omega(1/(\log n)^3)$.
\end{theorem}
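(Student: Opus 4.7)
The plan is to instantiate the general $t$-dimensional cubical chain complex constructed in the paper at $t=4$ and extract a CSS quantum code from the middle positions of the resulting complex $C_0\to C_1\to C_2\to C_3\to C_4$. The cubical structure immediately bounds every parity check by a constant (depending only on the lengths of the local codes $h_1,\ldots,h_4$), so the resulting CSS code is LDPC; the non-trivial work is in lower-bounding its rate, its distance, and its local-testability soundness, and all three reduce to cycle and co-cycle expansion of the chain complex at the middle level.

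First, I would fix the ingredients. Choose constant-sized local codes $h_1,\ldots,h_4$ over a finite field $\mathbb{F}_q$ of sufficiently large (but still constant) size such that the tuple $(h_1,h_2,h_3,h_4)$ is product expanding in the sense of Panteleev--Kalachev; such a tuple exists (and is in fact generic) by the existence theorem of~\cite{PK-future}. For the geometric base, take an abelian lift of a Ramanujan-type expander in the style of~\cite{JeronimoMO0T22}, which produces a family of Cayley graphs on an abelian group $A$ for which the relevant ``square walks'' have spectral gap inverse-polylogarithmic in $|A|$. Plug these into the cubical construction of the paper, using $h_1,\ldots,h_4$ as gadgets on the four axes, to obtain the chain complex whose middle term $C_2$ has dimension $n=\Theta(|A|)$.

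Next, I would read off the code parameters. The rate $k=\Omega(n)$ comes from a direct dimension count: each local code has constant rate, and the boundary/coboundary images at the middle level carve out only constant fractions of $C_2$, so the homology at $C_2$ has dimension proportional to $n$. The distance $d=\Omega(n/(\log n)^3)$ follows from cycle and co-cycle expansion at level $2$: a non-trivial logical $X$-codeword is an element of $\ker \partial_2 \setminus \mathrm{im}\,\partial_3$, and cycle expansion at level $2$ forces its support to be at least $\Omega(\rho_{\mathrm{cyc}}\cdot n)$; the $Z$-side is symmetric via co-cycle expansion. The soundness $\rho=\Omega(1/(\log n)^3)$ is essentially the co-cycle expansion constant at level $2$: if a putative codeword $v\in C_2$ violates only a $\rho$-fraction of the parity checks, then co-cycle expansion guarantees that $v$ lies within relative distance $O(1)$ of a true codeword.

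The hard part, which is the technical heart of the paper, is to prove that the cycle and co-cycle expansion at the middle level is at least $\Omega(1/(\log n)^3)$. This argument combines three ingredients: (i) the spectral expansion of the Cayley graphs obtained from the abelian lift, which provides global mixing between local views and which is the source of the $(\log n)^{-c}$ loss; (ii) the product expansion of the tuple $(h_1,\ldots,h_4)$, which is used to upgrade local inconsistencies along pairs of axes into bulk disagreement; and (iii) a careful decomposition of middle-dimensional chains into their projections onto the $\binom{4}{2}=6$ pairs of axes, iteratively peeling off coboundary corrections so as to reduce the $4$-dimensional expansion claim to the $2$-dimensional square-complex analysis in the spirit of \cite{DELLM,PK2}. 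Once these expansion bounds are established at the abstract level, the theorem follows by the parameter bookkeeping outlined above.
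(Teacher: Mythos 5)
Your high-level framework matches the paper: instantiate the cubical complex at $t=4$, use an abelian lift à la~\cite{JeronimoMO0T22} for the geometric base, use the Panteleev--Kalachev product-expanding tuple for the local codes, read the CSS code off the middle of the chain complex, and derive the parameters from (co)cycle expansion bounds. The $(\log n)^{-3}$ loss does indeed come from the small-set expansion of the abelian lift, as you say. However, your description of the technical core contains two substantive errors.

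First, you write that ``the $Z$-side is symmetric via co-cycle expansion,'' but the complex $C_*(X,\sheaf)$ is \emph{not} self-dual, and this asymmetry is a central difficulty the paper has to address. The co-cycle expansion argument (Section~\ref{sec:codistance}) is a local-minimality / expanding-random-walk argument in the style of~\cite{kaufman2016isoperimetric,EvraK16}, but the \emph{cycle} expansion cannot be obtained by dualizing it. Instead, the paper proves cycle expansion by an indirect reduction: one builds a ``dual'' sheaf of local views, constructs a double chain complex with two commuting (co)boundary maps, and chases arrows until one either stitches together a global preimage or reaches a tensor-codeword obstruction that can be decoded and mapped into a co-cycle of a \emph{different} complex $\tilde X$ built from the dual local codes $h_i^\perp$; one then invokes co-systolic distance of $\tilde X$ (Section~\ref{sec:distance}). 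Your proposal omits this entirely and would leave the $d_X$ bound unproved. Second, your proposed mechanism (iii) --- projecting onto the $\binom{4}{2}$ pairs of axes and reducing to the $t=2$ square-complex analysis --- is not how the paper proceeds, and there is no evidence it would work: the paper's co-cycle argument is inherently $t$-dimensional, using a mixture of walks that descend $j$ levels and come back up, together with robustness of the local complex $C(L_S)$ at \emph{every} level $k<|S|$. This is why the paper needs ``two-way robustness,'' which is strictly stronger than product expansion when $t\geq 3$; a nontrivial reduction (Lemma~\ref{lemma:prodexp-robust}, relying on Proposition~\ref{prop:DPcbdry} about lifting coboundary expansion from coefficients $V$ to $V^M$) is required to obtain it from the Panteleev--Kalachev result. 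Without these two pieces --- the dual-complex cycle-expansion argument and the upgrade from product expansion to two-way robustness --- the parameter bookkeeping you describe at the end does not close.
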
 

We refer to the main text for the definition of a quantum CSS code and the soundness of a qLTC. In the theorem, by ``explicit'' we mean that the parity check matrix for the code of dimension $n$ in our family can be computed in time polynomial in $n$.
We note that the soundness parameter in the theorem can be improved to a constant by allowing a larger check weight $O((\log n)^3)$.\footnote{We omit the proof of this fact, which follows from standard arguments; briefly, one uses  (explicit) parity samplers to combine parity checks to form higher-weight checks.
This process does not change $k$ and $d$.}
Consequently, one can obtain an explicit CSS code with $k=\Omega(n)$, $d = \Omega(n/(\log n)^3)$, parity checks of weight $O((\log n)^3)$, and soundness $\rho = \Omega(1)$.
Additionally, the distance parameter can also be improved to the optimal through distance amplification~\cite{wills2023tradeoff}. In particular, we obtain a CSS code with $k,d=\Omega(n)$, parity checks of weight $O(\text{poly} \log n)$, and soundness $\rho = \Omega(1)$.

In contrast, previous best constructions of qLTC had $k = 1$, $d = \Theta(\sqrt{n})$, $\rho = \Omega(1/(\log n)^2)$ for \cite{hastings2017quantum} and $\rho = \Omega(1/\log n)$ for \cite{leverrier2022towards}. Additional tradeoffs are possible (see Section~\ref{sec:intro:qltc} for further discussion); however, prior to our work no qLTC codes with even polynomial (let alone linear) dimension were known while keeping the locality and soundness at most polylogarithmic and at least inverse-polylogarithmic respectively.

\subsection{Construction}
\label{sec:intro-cons}

For any integer $t\geq 1$ we construct a $t$-dimensional cubical chain complex that generalizes the Sipser-Spielman construction of expander codes~\cite{SipserSp96} (case $t=1$) and the lifted product codes from~\cite{PK1} (case $t=2$) to higher dimensions $t\geq 3$.

The construction is based on two ingredients. Firstly, there is the cubical complex $X$, which is a purely geometric structure (technically, a graded incidence poset with certain good expansion properties). Secondly, there is a system of local coefficients (a sheaf $\sheaf$, in the terminology of~\cite{first2022good}) that is constructed from a family of classical codes of constant dimension and good ``robustness'' properties. The chain complex is a sequence of coboundary maps from $i$-cochains to $i+1$-cochains,
\[ C^0(X,\sheaf)\to C^1(X,\sheaf)\to\cdots\to C^t(X,\sheaf).\]
We expand on these ingredients below.

\paragraph{Cubical complex}
The (higher-dimensional) cubical complex $X$ can be constructed from any set $G$ of size $N=|G|$, and finite subsets of permutations $A_1,\ldots,A_t$ of $G$ of size $n=|A_i|$.\footnote{In general it is possible to make the integer $n$ also depend on $i\in\{1,\ldots,t\}$; for simplicity we assume that there is no such dependence.} The sets $A_i$ should be closed under inverse and such that permutations taken from different sets commute.

For example, in the case of $t=2$ we can take $G$ to be any finite group and let $A_1$ act by multiplication on the left, and $A_2$ by multiplication on the right. This choice, with constant-size sets $A_1$ and $A_2$ of expanding generators, underlies the square complex from~\cite{DELLM,PK2}. For larger $t\geq 3$, it is not clear how to generalize the fact that multiplication on the right commutes, as an action, with multiplication on the left, $(ag)b = a(gb)$. A natural approach is to take $G$ an abelian group, such as $G=\Z_2^{\log N}$, and let $A_i$ be subsets of generators of $G$ acting on it by multiplication. This choice has the disadvantage that each $A_i$ must have size at least logarithmic in $|G|$ (otherwise the graphs won't expand enough).
A better cubical complex with {\em constant} degree can be obtained through an abelian lift \cite{JeronimoMO0T22}. We describe this approach, which gives us the best parameters overall, in Section~\ref{sec:particular-instance-of-complex}.
Other, more structured choices may be possible.

Irrespective of the specific choice of $G$ and $\{A_i\}$, the resulting complex $X(G;\{A_i\}_{i=1}^t$) is $2^t$-partite, with vertex set $G\times\{0,1\}^t$ and $k$-dimensional faces associated with the $2^{|S|}$ vertices $\{g\cdot \prod_{j\in T} a_j:\, T\subseteq S\}$, where $S\subseteq\{1,\ldots,t\}$ has size $|S|=k$ and $a_j\in A_j$. Edges coming from permutations in $A_i$ are said to have direction $i$, and more generally, a face has type $S$ if $S$ is the set of directions of edges it contains. See Section~\ref{sec:geometry} for a complete description, and Figure~\ref{fig:geometry} for an illustration.

The only requirement on the permutation sets $A_i$, other than pairwise commutation, is that the associated graphs $\Cay(G,A_i)$, with vertex set $G$ and edge set $\{g,\pi(g)\}$ for $\pi\in A_i$, have good spectral expansion\footnote{Traditionally, $\Cay(G,A_i)$ is a connected graph. In our construction, what we call ``$\Cay(G,A_i)$'' may be disconnected. The specific notion we require is introduced in Definition~\ref{def:x-expand} --- informally, $\Cay(G,A_i)$ should be a union of not too many disjoint copies of spectral expanders.}. As we will show, this suffices to imply expansion of a natural set of high-dimensional random walks on the cubical complex that are used in the proof of (co)-cycle expansion.

\paragraph{Local coefficients}The cubical complex $X$ is turned into a chain complex $C(G;\{A_i\}_{i=1}^t$) by endowing it with a system of local coefficients. In the terminology of \cite{first2022good}, we put a sheaf structure over the geometrical complex $X$. To each face $f$ of the complex, which can be a vertex, an edge, a square, a cube, etc., we associate a small coefficient space that is in some sense dual to it.
For example, suppose $t=3$ and our complex $X$ is a cubical complex. Fix a $1$-dimensional face, namely an edge. Its space of coefficients is a space of $2$-dimensional matrices. If the edge is labeled by direction $1$, then the matrix dimensions are labeled using the other directions of the complex, namely $2,3$. These spaces are described in detail in Section~\ref{sec:loc-coeffs}. We introduce a system of linear maps $\mF_{f\to f'}$ from the coefficient space of a $k$-face $f$ to that of a ($k+1$)-face $f'$. These maps will be constructed from a collection of parity check matrices $h_1,\ldots,h_t$, where each $h_i\in \F_q^{m_i\times n}$ for some $m_i\leq n$ and some finite field $\F_q$. These matrices will be used as follows. Suppose for example that, as before, $t=3$, $f$ is an edge and $f'$ is a face that contains the edge $f$, as well as another (pair of) edges in direction $2$ (and a fourth edge, parallel to $f$). Then the map $\mF_{f\to f'}$ is obtained by taking the coefficient $x(f)\in \F_q^{m_2\times m_3}$ associated with $f$, expanding it in the second direction by applying $h_2^T$ to obtain $h_2^T x(f) \in \F_q^{n\times m_3}$, and then restricting to the row indexed by the face $f'$ (which is uniquely identified by its edge in direction $2$, labeled by an element of $A_2$ which has size $n$) and returning the resulting row vector, an element of the coefficient space $\F_q^{m_3}$ associated with the face $f'$. These maps are described in much more detail in Section~\ref{sec:co-maps} (where they are denoted $\corest_{f,f'}$).

\subsection{Two-way robustness and product expansion}
\label{sec:intro:robust}

Our proofs of cycle and co-cycle expansion of the complex follow from the expansion of the graphs $Cay(G,A_i)$ and from a property of the local parity check matrices $(h_1,\ldots,h_t)$ which we call {\em two-way robustness}. A collection of linear transformations $h_i:\F_q^n\to\F_q^{k_i}$, $1\leq i \leq t$, is said to be {\em two-way robust} if the $t$-fold homological product of $h_1^T,\ldots,h_t^T$ is {\em coboundary expanding} in all dimensions $i<t$, and the same holds also for $(h_1^\perp)^T,\ldots,(h_t^\perp)^T$, where for each $i$, $h_i^\perp:\F_q^n\to\F_q^{n-k_i}$ is (some) matrix that satisfies $h_i^\perp h_i^T = 0$, i.e. the code with parity check matrix $h_i^\perp$ is dual to the code with parity check matrix $h_i$.
We refer to Section~\ref{sec:robustness} for the precise definition. The reason for considering the $t$-fold homological product is that this is precisely the local view of the complex from a vertex. Namely, the view when restricting to cells that contain a fixed vertex. So our requirements for co-cycle expansion boils down to a local condition (two-way robustness) on the restriction of the code to cells around a vertex and a global condition (spectral expansion of each $\Cay(G,A_i)$). This is in perfect analogy to previous works on co-cycle expansion of simplicial complexes \cite{KKL14, EvraK16} and of cubical complexes \cite{DELLM, PK2}. In both types of complexes, like here, global co-cycle expansion followed from some form of spectral expansion together with coboundary expansion at the links.

In case $t=2$, the notion of two-way robustness coincides with product expansion as in~\cite{PK2} (and with agreement testability, see \cite{DELLM}). However, for $t\geq 3$ these notions diverge, as studied in \cite{kalachev2023prexp-vs-agreement}.
In case $t=2$, Panteleev and Kalachev proved that a pair of random maps $h_1,h_2$ satisfy, with high probability, a version of two-way robustness. The robustness parameters were later improved in \cite{kalachev2022two,DHLV}, but again, only for a pair of random maps, namely $t=2$.
For three or more maps it was conjectured in \cite{kalachev2022two} that random maps are product expanding. Recently, Kalachev and Panteleev~\cite{PK24} showed that any (constant) number of random maps are product expanding, provided that the maps are over a large enough field --- of size $2^{\poly(n)}$, see Theorem~\ref{thm:PK-future} for a precise statement. Because for us $n$ can be taken a constant, this large field size is not an obstacle.

Our proof of global cycle and co-cycle expansion, however, requires {\em two-way robustness}, which is a property that is a priori stronger than product expansion. While product expansion addresses coboundary expansion in level $i=t-1$, two-way robustness requires coboundary expansion for all levels $i<t$. In Section~\ref{sec:product-robust} we prove a reduction from the latter to the former, thereby completing all requirements on the sheaf structure.
An interesting component in this reduction is a proof that expansion with respect to coefficients in a vector space $V$ implies the same with respect to $V^M$, with no dependence on the expansion parameter on $M$.

 \subsection{Main result on expansion}
To state our main result we need to define the notions of distance and expansion that we consider. We will use the term systolic (resp. co-systolic) distance to denote the lowest weight of a $k$-cycle (co-cycle) that is not a $k$-boundary (co-boundary).
Namely, if $C(X)$ is a chain complex with boundary map $\partial$ and coboundary map $\delta$, then
\[ \mu_\syst(k) \,=\, \min\big\{ |x| \,:\, x\in \ker\partial_k - \im\partial_{k+1} \big\},\qquad \mu_\cosyst(k) \,=\, \min\big\{ |x| \,:\, x\in \ker\delta_k - \im\delta_{k-1} \big\}\;.\]
Here we employ the usual notation $\partial_k:C_k(X)\to C_{k-1}(X)$ and $\delta_k:C^k(X)\to C^{k+1}(X)$, with $C_k(X)$ and $C^k(X)$ the (isomorphic) spaces of chains and co-chains respectively.

We establish lower bounds on the \emph{systolic} and \emph{co-systolic} distance of the complex, as well as lower bounds on the cycle and co-cycle expansion, which we define next. The lower bounds are attained via bounding a locally-minimal version of distance (see Definition \ref{def:co-loc-min-dist}), using the strategy initiated in \cite{KKL14,EvraK16}, and used also in \cite{PK2,DELLM} as well as several other works.
The \emph{cycle expansion} of the complex is the smallest size of the boundary of a $k$-chain, relative to the distance of the $k$-chain to the set of $k$-cycles. More formally,
\[  \eps_\cyc(k) \,=\, \min \Big\{ \frac{|\partial_k(x)|}{  dist(x, \ker \partial_k) }\,:\, x\in C_k(X) - \ker \partial_k \Big\}\;,\]
and similarly, co-cycle expansion is defined by
\[  \eps_\cocyc(k) \,=\, \min \Big\{ \frac{|\delta_k(x)|}{  dist(x, \ker \delta_k) }\,:\, x\in C_k(X) - \ker \delta_k \Big\}\;.\]
In the definitions above, $dist(\cdot,\cdot)$ and $|\cdot|$ refers to the Hamming weight, or, in the case of a sheaf complex, the \emph{block-Hamming weight}. 
For precise definitions of all these notions, we refer to Section~\ref{sec:prelim-chains}. We remark that co-cycle expansion was first studied in \cite{KKL14} in the simplicial setting (earlier works \cite{LinialM06, Gromov2010} introduced the notion of coboundary expansion, which coincides with co-cycle expansion for exact chains). The term co-systolic expansion was introduced in \cite{EvraK16}, and combined both co-systolic distance and co-cycle expansion. In retrospect it makes sense to separate into two definitions as above.

Our main result is a general result on (co-)systolic expansion of dimension-$t$ chain complexes of the form $C(G;\{A_i\})$ described in Section~\ref{sec:intro-cons} above. We bound the expansion parameters of the complex as a function of the spectral expansion of the graphs $\Cay(G,A_i)$ and the two-way robustness of the local codes $\{ h_i\}$.
We note that for $t=1$ and $n$ a constant, this result recovers the special case that Sipser-Spielman codes based on good enough expander graphs and good local codes have a linear distance. For $t=2$ and $n$ a constant, we recover the expansion properties of square complexes that underlie the constructions of locally testable and quantum low-density parity check codes from~\cite{DELLM,PK2}. We refer to Theorem~\ref{thm:main} for a statement of this general result.

Here we instantiate the general result with a specific choice of $(G,\{A_i\})$ that are obtained from an abelian lift as in~\cite{JeronimoMO0T22} (this construction is described in Section~\ref{sec:ex2}), and local codes with two-way robustness that can be obtained by exhaustive search (since in this construction the degree $n$ is constant; existence is guaranteed as explained in Section~\ref{sec:intro:robust} above). This gives an explicit construction of a (co-)systolic expander of dimension $t$, as stated in the following corollary.

\begin{theorem}[Main construction, restated as Corollary~\ref{cor:main-cor}]\label{thm:main-applied}
Let $t\geq 2$ be an integer. There is an explicit construction of a chain complex $C(G;\{A_i\})$ and local codes $\{h_i\}$ such that the $k$-chain space $C_k$ has dimension $\Theta(N)$  for every $0\leq k \leq t$ and furthermore
\begin{enumerate}
\item The co-chain complex $C^*$ has  co-systolic distance $\mu_\cosyst(k) = \Theta(|X(k)|/(\log N)^{t-1})$ for every $0\leq k \leq t-1$ and co-cycle expansion $\eps_\cocyc(k)=\Theta(1/(\log N)^{t-1})$  for every $0\leq k \leq t-2$\ ;
\item The chain complex $C_*$ has systolic distance  $\mu_\syst(k) = \Theta(|X(k)|/(\log N)^{t-1})$ for every $1\leq k \leq t$ and cycle expansion  $\eps_\cyc(k) = \Theta(1/(\log N)^{t-1})$ for every $2\leq k \leq t$.
\end{enumerate}
\end{theorem}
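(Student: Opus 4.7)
The proof will be essentially a plug-and-play composition: apply the general Theorem~\ref{thm:main} (which takes as black-box inputs a tuple of pairwise-commuting permutation sets $(G;\{A_i\})$ with spectrally expanding Cayley graphs, together with a tuple of local codes $\{h_i\}$ that is two-way robust, and outputs systolic/co-systolic distance and cycle/co-cycle expansion bounds), and verify both hypotheses for the concrete instantiation via the abelian lift of Section~\ref{sec:ex2} paired with local codes produced by Section~\ref{sec:product-robust}. So my plan is to first state precisely the instantiation, then verify the two hypotheses, and finally read off the claimed quantitative bounds.

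First I would specify $(G;\{A_i\})$ as the abelian lift construction from~\cite{JeronimoMO0T22} described in Section~\ref{sec:ex2}, with block size $N$ and constant degree $n$ chosen so that the $t$ Cayley graphs $\Cay(G,A_i)$ are each spectral expanders (in the sense of Definition~\ref{def:x-expand}) with a spectral gap $\lambda$ that is a fixed constant depending only on $n$ and $t$. The abelian-lift construction supplies pairwise commutation of the $A_i$ (this is inherited from the abelian action on the lift), the $2^t$-partite cubical structure of $X(G;\{A_i\})$, and a straightforward count showing that each $k$-face set $X(k)$ has cardinality $\Theta(N)$ (with constants depending on $n$, $t$, $k$). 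This already delivers the claim that $\dim C_k = \Theta(N)$.

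Next I would verify two-way robustness of the local codes $\{h_i\}$. By~\cite{PK-future} (Theorem~\ref{thm:PK-future}), a uniformly random tuple $(h_1,\ldots,h_t)$ of parity-check matrices of constant block length $n$ over a sufficiently large field $\F_q$ is product expanding with positive probability. Because $n$ is a constant in our instantiation, exhaustive search over $\F_q^{m_i\times n}$ yields such a tuple in time depending only on $n$, $t$, $q$; this preserves the explicitness of the overall construction. I would then invoke the reduction of Section~\ref{sec:product-robust} which lifts product expansion (an expansion statement at level $i=t-1$) to two-way robustness (expansion at all levels $i<t$, both for the $h_i$ and their duals $h_i^\perp$). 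The key content of that reduction is that coboundary expansion with coefficients in a vector space $V$ transfers to coefficients in $V^M$ with no loss in parameters in $M$; I would simply cite that fact and its consequences, which combined with exhaustive search give an explicit two-way robust tuple.

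With both hypotheses verified, I would invoke Theorem~\ref{thm:main} to get, on the chain side, cycle expansion $\eps_\cyc(k) = \Omega(1/(\log N)^{t-1})$ for $2\leq k\leq t$ and systolic distance $\mu_\syst(k) = \Omega(|X(k)|/(\log N)^{t-1})$ for $1\leq k\leq t$, and symmetric statements on the co-chain side. The polylogarithmic factor $(\log N)^{t-1}$ is inherent to the abelian-lift instantiation: each of the $t$ directions contributes a factor coming from the ratio between the global block size $N$ and the size of a link, and the proof of Theorem~\ref{thm:main} accumulates these contributions inductively across dimensions (as in the local-to-global arguments of~\cite{KKL14,EvraK16} adapted to the cubical sheaf setting). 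The matching upper bounds ($\Theta$ rather than just $\Omega$) follow from explicit small (co)cycles exhibited by the abelian-lift geometry: one can take a ``slab'' of the complex of dimension $1/(\log N)^{t-1}$ in relative size and check directly that it is a non-trivial (co)cycle, giving distance upper bounds of the claimed form.

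The step I expect to require the most care is the transfer from product expansion to two-way robustness in Section~\ref{sec:product-robust}, since it is the one non-black-box ingredient and must hold uniformly across all dimensions $i<t$ and for both $\{h_i\}$ and $\{h_i^\perp\}$; the rest of the argument is a clean assembly of the general theorem with the two explicit components. A secondary subtlety is to ensure that the constants hidden in the $\Theta$ notation do not depend on $N$ but only on the fixed choices of $t$, $n$, $q$, which is automatic from the local nature of two-way robustness and the constant spectral gap of $\Cay(G,A_i)$.
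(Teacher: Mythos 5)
Your overall skeleton matches the paper's proof of Corollary~\ref{cor:main-cor}: instantiate with the abelian lift of Section~\ref{sec:ex2}, obtain two-way $\kappa$-robust local codes by exhaustive search over the constant-sized space of parity check matrices using Theorem~\ref{thm:PK-future} plus Lemma~\ref{lemma:prodexp-robust}, then plug both ingredients into Theorem~\ref{thm:main}. However, your account of \emph{where} the $(\log N)^{t-1}$ factor originates is incorrect, and this is not a cosmetic point --- it is the one place where the instantiation actually interacts with the general machinery. The factor does not come from ``accumulating contributions inductively across dimensions'' in the proof of Theorem~\ref{thm:main}. It comes from the single small-set expansion parameter $r$ in Definition~\ref{def:x-expand}. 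In the abelian-lift instantiation the graphs $\Cay(G,A_j)$ are \emph{not} connected expanders: fixing a direction $j$, the $t-1$ base-graph coordinates $v_{j'}$ ($j'\neq j$) are frozen, so $\Cay(G,A_j)$ is a disjoint union of $|V_1|^{t-1}=\Theta((\log N)^{t-1})$ copies of the near-Ramanujan lift $R$. Each copy has a constant spectral gap, but the graph as a whole is only ``$\lambda$-expanding up to size $r|G|$'' with $r=\Omega(1/(\log N)^{t-1})$. That $r$ enters Theorem~\ref{thm:main} linearly (via Lemma~\ref{lem:graph-expansion}/Lemma~\ref{lem:ac}, where $r|G|$ replaces the component size in the expander mixing bound), which is exactly the polylogarithmic loss. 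Writing that the graphs have ``a spectral gap $\lambda$ that is a fixed constant'' without the ``up to size $r|G|$'' qualifier suppresses this, and the subsequent explanation of the factor as an inductive accumulation is not what happens in the proof.

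A secondary note: your proposed ``slab'' construction for the matching upper bounds on $\mu_{\cosyst}, \mu_{\syst}, \eps_{\cocyc}, \eps_{\cyc}$ is not carried out in the paper; the proof of Corollary~\ref{cor:main-cor} that the paper gives only establishes the lower bounds via Theorem~\ref{thm:main}. Your heuristic (a cochain supported on a single connected component of one of the $\Cay(G,A_j)$ factors should be a non-trivial cocycle of the right weight) is plausible as the source of an upper bound, but you should treat it as a claim requiring verification rather than a consequence of the cited results. The rest of the assembly --- pairwise commutation from the abelian lift, $|X(k)|=\Theta(N)$ from the face count in Lemma~\ref{lem:geom-2}, explicitness via exhaustive search since $n,q,t$ are constants, and the reduction from product expansion (level $t-1$ only) to two-way robustness (all levels, both $\{h_i\}$ and $\{h_i^\perp\}$) via Proposition~\ref{prop:DPcbdry} --- is correctly identified and correctly weighted in difficulty.
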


In the theorem, the constants implicit in the $\Theta(\cdot)$ notation depend exponentially on $t$, so our bounds get exponentially worse as $t$ increases. 
The main application we have in mind, to quantum locally testable codes, requires only $t=4$, and we did not attempt to optimize the dependence on~$t$.

\subsection{Towards good quantum locally testable codes}\label{sec:intro:qltc}

The existence of good quantum error-correcting codes is one of the pillars that underlie the promise of quantum computing to deliver impactful applications in the long term. Good codes with \emph{low-density parity checks} are particularly sought after for their potential application to fault-tolerance~\cite{gottesman2014fault} and quantum complexity theory~\cite{anshu2023nlts}. In the past few years a flurry of works obtained better understanding and better parameters \cite{evra2022decodable, hastings2021fiber, BE, PK1} culminating in a construction of good quantum LDPC codes \cite{PK2}. This was followed by a couple of related variants \cite{LinH22, leverrier2022quantum,DHLV}. These constructions rely on a well-known connection between the design of quantum codes and chain complexes.  In particular, all existing constructions of good qLDPC rely on virtually the same length-$3$ chain complex, the ``square complex'' from~\cite{DELLM} (which is also a special case of ``balanced product codes'' from~\cite{BE} and of ``lifted product codes'' from \cite{PK1}).


Quantum \emph{locally testable} codes (qLTC) were introduced in~\cite{aharonov2015quantum} as a natural quantization of the notion of local testability for classical codes. Although the formal definition, first given in~\cite{eldar2017local}, is somewhat technical, quantum LTCs have the same intuitive basis as their classical counterparts --- informally, whenever a word (quantum state) is at a certain relative distance $\delta$ from the codespace, this word must violate a fraction $\rho\,\delta$, for some constant $\rho>0$ referred to as the \emph{soundness} of the LTC, of the parity checks. 
The first constructions of qLTCs with soundness that scales better than inverse polynomial were given in~\cite{hastings2017quantum,leverrier2022towards}. These achieve soundness that scales as $1/\log^2(n)$ and $1/\log(n)$ respectively; however, the distance of the code is small ($\Theta(\sqrt{n})$), the codes have constant dimension, and the weight of the parity checks is logarithmic.

Prior to our work quantum LTCs with good locality and soundness and with either reasonable dimension (say, above  poly-logarithmic) or super-quadratic $(n^{\frac{1}{2}+\eps})$ distance were not known to exist. Such codes are expected to have applications in quantum complexity theory. Until recently the most prominent was the famous ``NLTS conjecture''~\cite{eldar2017local}; which was known to follow from qLTC with good distance and soundness (without regard to dimension). However, the NLTS theorem was recently proven without the use of qLTC~\cite{anshu2023nlts}, relying only on the recent qLDPC constructions from~\cite{PK2,leverrier2022quantum}.

While there is no formal connection with the long-standing quantum PCP conjecture~\cite{aharonov2013guest}, it is natural to hope that progress on the former may eventually lead to progress on the latter. One reason is that LTCs are strongly tied to PCPs in the classical world, which, perhaps more philosophically, is due to the inherent local-to-global nature of both questions, where a global property (being a codeword in the case of LTC and being a minimum energy state in the case of qPCP) is to be tested by random local tests. The local to global aspect manifests in high dimensional expanders, shedding some insight as to the connection between these objects.

Recall our main result on quantum locally testable codes, which is stated as Theorem~\ref{thm:qltc} above. This result is obtained from Theorem~\ref{thm:main-applied}, for the case $t=4$, using standard arguments in the construction of quantum (CSS) codes. This result falls short of constructing good quantum LTCs since the parameters differ from ideal parameters by an inverse poly-logarithmic factor. Still, these parameters mark a significant improvement compared to previously known qLTCs \cite{hastings2017quantum,leverrier2022towards}, which achieve $k=1$, $d = \Theta(\sqrt{n})$, and $\rho = \Omega(1/\log n)$. (Additional tradeoffs are possible between the parameters, for example increasing the dimension $k$ at the cost of a decrease of the relative distance; we refer to~\cite{wills2023general,wills2023tradeoff} and in particular Table 3 in~\cite{wills2023tradeoff} for details.

The codes from Theorem~\ref{thm:qltc} are explicit CSS codes, and in particular have an efficient encoder. Moreover, although we do not show this explicitly, we expect these codes to have linear-time decoders and noisy syndrome decoders (often referred to as single-shot decoders), along the same lines as~\cite{DHLV,gu2023single}. Although there is no formal connection known, we observe that qLTC may be of additional relevance to the problem of noisy syndrome decoding. This is because a \emph{simpler} problem than decoding from a noisy syndrome is deciding, given an approximate bound on the size of the syndrome, if the state is close to the codespace or not---which is precisely the problem solved by qLTCs. This observation suggests that, beyond the basic LDPC requirement, the LTC condition may have applications to e.g.\ fault-tolerance that have not yet been fully explored. 
\subsection{Techniques}

Our construction of a higher-dimensional cubical chain complex, when instantiated at $t=2$, is not identical to the chain complex of \cite{PK2}, but rather can be seen as a ``$90^\circ$ rotation'' of it as in \cite{DHLV}. The vector spaces participating in the chain complex are ordered by geometric dimension, a feature that facilitates generalization to $t>2$.
The price is that our construction is not self-dual, in the sense that the geometric properties of the complex are different from those of the co-complex. Therefore, the argument for showing co-cycle expansion is different from the argument for showing cycle expansion. Nevertheless, the latter is based on a reduction to the former for a \emph{different} co-complex that is, in some sense, dual to the original one.

\paragraph{Cocycle expansion} Our proof for co-cycle expansion follows the general template of earlier works on cocycle expansion in HDX \cite{kaufman2016isoperimetric,EvraK16}, and the analogous statements for qLDPC codes and for LTCs \cite{PK2,DELLM}.

Let us outline this argument for showing cocycle expansion at the highest level of the complex, i.e.\ for co-chains in dimension $k=t-1$, which is the most interesting case.
Consider an element $x\in C^{t-1}(X)$ such that $\delta(x)=0$, and assume it is locally minimal, that is, assume its weight is minimal with respect to adding coboundaries of local elements. We let $\mathcal{A}$ denote the set of \emph{active faces}, which is the subset of geometric faces $f\in X(t-1)$ such that $x(f)\neq 0$. The core of the proof consists in showing that the condition $\delta(x)=0$ implies that the set $\mathcal{A}$ does not expand according to a natural random walk $W$ on $X(t-1)$. However, the random walk is such that it \emph{is} expanding (this is where the assumption that $\Cay(G,A_i)$ expands is used); therefore either $\mA=\emptyset$ or $|\mA|$ is large, which is the desired conclusion. To show that the set $\mathcal{A}$ does not expand, we show that most neighbors of an $f\in \mathcal{A}$ according to $W$ lie in $\mathcal{A}$. This step uses the definition of $\delta$ to argue that the condition $\delta(x)=0$ implies that many neighbors of $f$ under $W$ must be in $\mA$. The robustness condition is used crucially at this step.

Indeed, one of the contributions of this work is in coming up with the appropriate definition for the random walk $W$, together with the definition for two-way robustness so that this argument goes through.

We note that the walk $W$ is defined as a mixture of $(t-1)$ different walks, each of which goes ``down'' $j\in[t-1]$ steps in the complex to a $(t-1-j)$-dimensional sub-face of $f$, takes one parallel step, and then ``up'' again $j$ steps to a ``neighbor'' of $f$ satisfying certain conditions. As such, this walk is similar to iterations of the ``down-up'' random walk that is familiar in the combinatorial analysis of high-dimensional expanders. It is often used to circumvent certain non-expanding link-like obstacles that would appear in a more naive choice of random walk $W$. In this sense the extension of the arguments from \cite{DHLV, PK2, DELLM} from $t=2$ to higher dimension $t>2$ is analogous to the move from \cite{kaufman2016isoperimetric} to \cite{EvraK16}. We refer to Section~\ref{sec:codistance} for details on this part of the argument.

\paragraph{Cycle expansion}
Our argument for cycle expansion is completely different from the argument for co-cycle expansion described above. In fact, the argument is an indirect reduction to the case of co-cycle expansion, generalizing a reduction that appeared in \cite{DHLV}, for the case $t=2$. Moving to higher dimensions $t>2$ is more involved, consisting of a nice arrow-chasing argument on a double chain complex which we describe next.

The idea is to define a new ``dual'' chain, whose coefficients are local views from geometric objects of the main chain. In the dual chain, the coefficient at a vertex tells us everything about the $k$-faces touching that vertex. We define two kinds of boundary maps for this dual chain, and show that they commute nicely with each other, giving rise to a double chain complex (shown in Figure \ref{fig:xz-2}). The argument proceeds as follows. Given a low weight $k$-chain $x$ such that $\partial x=0$, we wish to find a preimage $z$ such that $\partial z = x$. We first look at the dual chain $\set{x_v}$ obtained by local views of $x$. We take a preimage $z_v$ separately for each $x_v$. This is possible because the local maps are exact and $\partial x=0$ implies $\partial x_v=0$ for all $v$. The preimages $z_v$ are also local views, but they don't necessarily agree with each other, and this is where things become interesting.
We move to look at the differences between local views
of adjacent vertices, assigning these differences as local views to the {\em edges}. These new local views have zero boundary because of the commutativity of the arrow diagram. As a result we are in a similar situation as before, except that we moved one dimension up, from vertices to edges. We proceed inductively (again using the exactness of the (local) boundary map to find local preimages, take their differences, and so on).
This inductive arrow-chasing ends in one of two cases. Either we reach a stage where all local views agree, or we push all the way to the end of the chain without reaching agreement. In the former case, we have obtained a global ``corrections'' chain which can then be propagated all the way back down, leading to a global solution $z$ such that $\partial z=x$.  The interesting case however is when one reaches the end of the chain, without agreement. In this case the key observation is that local views, which lie on the higher-dimensional faces of the complex (i.e.\ $t$-cubes) are tensor codewords (because their boundary is $0$) and can thus be decoded. The ``decoded'' object can be interpreted as a co-cycle at the $(t-1)$-st level of a complex with dual local codes $h_1^\perp,\ldots, h_t^\perp$. Applying the co-distance argument, we find a pre-image co-cycle. When re-encoded, this pre-image co-cycle provides corrections which can once again be propagated down the chain to define $z$.

\subsection{Discussion}

In this work we construct quantum LTCs with parameters that are ``almost'' good, up to polylogarithmic factors.
One approach for removing the poly-logarithmic factors is to venture away from cubical complexes and study chain complexes based on simplicial high-dimensional expanders, as done in \cite{evra2022decodable}. The advantage is that constant degree expanding complexes are known in any dimension $t$. In contrast, our complexes are also constant-degree, but unfortunately their expansion guarantees hold only for sets of density up to $1/\poly\log n$. In \cite{evra2022decodable}, the distance of the quantum code follows from the systolic and co-systolic distance of the complex. Unfortunately, the systolic distance is inherently sublinear (in their construction it is polylogarithmic). This obstacle theoretically should also appear for cubical complexes, but it is circumvented through the addition of local codes.
Why not add local codes to high-dimensional expanders? This seems like a reasonable direction; the main difficulty is that, unlike in the cubical case, the local structure around a vertex does not have a natural product structure. Two-way robustness seems quite tricky when the structure is not product, and is currently not known.

Generally speaking the connection between HDX and quantum codes has been extremely fruitful, leading to insights such as the distance balancing method from~\cite{evra2022decodable}. This connection crucially requires one to consider both the complex and the co-complex, to reflect the symmetry between $X$ and $Z$ parity checks in the definition of a quantum CSS code. Most constructions of HDX, including ours, do not exhibit a perfect symmetry between complex and co-complex. For our complexes, we are nevertheless able to reduce cycle expansion to co-cycle expansion of a \emph{related} complex. This proof technique could be of independent interest, and lead to new constructions of HDX exhibiting expansion in both directions.

Quantum LDPC codes are intrinsically linked with the study of high-dimensional surfaces, and indeed their recent construction leads to new topological objects~\cite{freedman2021building} (and vice-versa~\cite{hastings2016quantum,portnoy2023local}). It will be interesting to determine if our higher-dimensional construction has similar consequences. Conversely, one may hope that a construction of a high-dimensional cubical complex with constant degree could be obtained by leveraging connections with group theory, a central source of constructions of simplicial complexes and HDX~\cite{lubotzky2018high}. Cubical complexes with constant degree and strong expansion properties are known to exist \cite{JL}, but these don't seem to support a system of local coefficients that would lead to a useful chain complex.

Of course one of the main focus points of the area, which we leave entirely open, is the possible application of qLTC constructions to quantum complexity theory. By providing an explicit family of candidate good qLTC we uncover a concrete object that may form the basis for later explorations of the connections with complexity.

\paragraph{Organization} We start with some preliminaries regarding notation, chain complexes, quantum codes and local testability in Section~\ref{sec:prelim}. In Section~\ref{sec:results} we give an overview of our results. In Section~\ref{sec:complex} we describe our high-dimensional cubical complex. In Section~\ref{sec:loc-chain} we introduce an auxiliary ``local'' chain complex, that will be used in the analysis. In this section we prove that a random tuple of codes, for large enough field size, is two-way robust (relying on the recent result on product expansion of such codes). In Section~\ref{sec:expansion} we describe and analyze some random walks on the cubical complex, that will also be used in the analysis. Finally, in Section~\ref{sec:codistance} and Section~\ref{sec:distance} we prove our main results, lower bounds on the locally co-minimal distance and distance respectively at each level of our chain complex.

\paragraph{Acknowledgments} We thank Louis Golowich for pointing us to \cite{JeronimoMO0T22} which leads to a construction with constant (rather than logarithmic) degree. We thank Hayata Yamasaki and Quynh T. Nguyen for pointing out an error in the previous rate analysis. ID is supported by ERC grant 772839, and ISF grant 2073/21. TV is supported by a research grant from the Center for New Scientists at the Weizmann Institute of Science, AFOSR Grant No. FA9550-22-1-0391, and ERC Consolidator Grant VerNisQDevS (101086733). TCL is supported in part by funds provided by the U.S. Department of Energy (D.O.E.) under the cooperative research agreement DE-SC0009919 and by the Simons Collaboration on Ultra-Quantum Matter, which is a grant from the Simons Foundation (652264 JM).

\section{Preliminaries}
\label{sec:prelim}

\subsection{Chain Complexes}
\label{sec:prelim-chains}

Chain complexes are algebraic constructs that have proven helpful to connect the study of quantum codes with notions of high-dimensional expansion. Although there are more general definitions, for us a chain complex $C$ will always be specified by a sequence of finite-dimensional vector spaces $\{C_i\}_{i\in\Z}$ over a finite field $\F_q$, termed \emph{chain spaces}, together with linear maps $\partial_i:C_i\to C_{i-1}$, termed \emph{boundary maps}, that satisfy the condition $\partial_{i-1} \partial_{i}=0$ for all $i$. (It will always be the case that $C_i=\{0\}$ for all $|i|$ large enough.) (We will focus on the case where $\F_q$ has characteristic $2$, which allows us to view a linear map over $\F_q$ as a linear map over $\F_2$.) Since $C_i$ is a finite vector space over $\F_q$, it always takes the form $C_i = \F_q^{D_i}$ for some integer $D_i$. Elements of $C_i$ are called \emph{$i$-chains}. Taking the  standard inner product on $\F_q^{D_i}$, we can define the dual space $C^i:=(C_i)^* \simeq \F_q^{D_i} = C_i$. Elements of $C^i$ are called \emph{$i$-cochains}, and the map $\partial_i^* : C^{i-1}\to C^i$ is denoted $\delta_{i-1}$ and called the \emph{co-boundary} map.

Given a chain complex $C$, we can associate to it subsets of its $i$-chains called $i$-cycles (elements of $\ker\partial_i$) and $i$-boundaries (elements of $\im \partial_{i-1}$). We can also define further algebraic objects such as homology groups and various notions of distance and expansion. We avoid surveying all such quantities here, referring the interested reader to e.g.~\cite{PK2}. Instead, we focus on the definitions that are essential for this paper.

\paragraph{Sheaf complexes}
We consider chain complexes that are obtained by attaching \emph{local coefficient spaces} $\mF$ (or, in the terminology of~\cite{first2022good}, a \emph{sheaf}) to a \emph{graded poset} $X$. While ultimately these are ``usual'' chain complexes as above, their definition from two distinct objects provides a convenient way to separate the ``global (geometric) structure'', provided by $X$, and the ``local (algebraic) structure'', provided by $\mF$.

We give the definitions.\footnote{We give definitions that are sufficiently general to capture our constructions, but are more restrictive than the most general setting considered in the literature, such as in~\cite{PK2,first2022good}. In particular, we emphasize that we only consider complexes defined over $\F_q$ with characteristic 2.} Recall that a \emph{poset} is a set $X$ equipped with a partial order $\preceq$. A \emph{graded poset} is a poset $X$ together with a map $\rho:X\to\mathbb{Z}$ called a \emph{rank function} such that for any $f,g\in X$, if $f\preceq g$ then $\rho(f)\leq \rho(g)$, and furthermore if $f\prec g$ and there is no $f'\in X$ such that $f\prec f'$ and $f'\prec g$ (a condition which we write as $f\precdot g$), then $\rho(g)=\rho(f)+1$.
Given a graded poset $X$ we define $X(i)=\{f\in X: r(f)=i\}$. Finally, we say that a graded poset $X$ is an \emph{incidence poset} if for every $f\preceq f''$ such that $r(f'')=r(f)+2$, there is an \emph{even} number of $f'\in X$ such that $f\precdot f'\precdot f''$.

Given a graded poset $X$ with rank function $r$, a \emph{system of local coefficients}, or \emph{sheaf} $\mF$ for $X$ is given by two collections of objects. Firstly, for each $f\in X$ we specify a \emph{local coefficient space} $V_f$. Although in general $V_f$ may have a general group structure, here we only consider the case where each $V_f$ is a finite dimensional $\F_q$-vector space $V_f\simeq \F_q^{m(f)}$ for some ``local dimension'' parameter $m(f)$. Secondly, for every $f\succeq f'$ there is a homomorphism (of vector spaces) $\mF_{f\to f'}:V_f\to V_{f'}$ such that whenever $f\succeq f'\succeq f''$, we have that $\mF_{f'\to f''}\circ \mF_{f\to f'}=\mF_{f\to f''}$.

Given a graded incidence poset $X$ and a local system of coefficients $\mathcal{F}$ on $X$ we define a \emph{chain complex} $C_*(X,\mathcal{F})$ as follows. For each $i\in \mathbb{Z}$, define $C_i(X,\mF)=\oplus_{f\in X(i)} V_f$ and $C_*(X) = \oplus_i C_i(X,\mF)$. Next, define linear maps $\partial_i: C_i(X,\mF)\to C_{i-1}(X,\mF)$ by letting, for $f\in X(i)$ and $u\in V_f$, $\partial_i(u) = \sum_{f'\precdot f} \mF_{f\to f'}(u)$. It is easily verified that these maps satisfy the chain complex condition $\partial_{i-1}\partial_i=0$ as a result of the compatibility condition for the sheaves, and the incidence condition on the poset. We write $C^*(X,\mF)=\oplus_i C^i(X,\mF)$ for the co-complex. Sometimes, when the sheaf $\mF$ is clear from context we write $C_i(X)$ and $C^i(X)$ for $C_i(X,\mF)$ and $C^i(X,\mF)$ respectively.

\paragraph{co-cycle and cycle expansion}

The definition of co-cycle expansion is based on definitions from the domain of simplicial complexes. Coboundary expansion was introduced in \cite{LinialM06, Gromov2010}, and the extension to cycle and co-cycle expansion is direct. First, we introduce the notation
\begin{equation}\label{eq:block-ham}
|x|\,=\,\sum_{f\in X}\, 1_{x(f)\neq 0}\;,
\end{equation}
where for $x\in C^*(X,\mF)$ and $f\in X$ we write $x(f)\in V_f$ for the coefficient of $x$ associated with the face $f$. Eq.~\eqref{eq:block-ham} is an analogue of the Hamming weight  which counts the number of faces on which $x$ has a nonzero coefficient. $|x|$ is called the \emph{block-Hamming weight}, or \emph{weight} for short, of $x$. Note that this quantity is in general \emph{smaller} than the Hamming weight, which would sum the Hamming weights of each vector $x(f)$, $f\in X$.
\begin{definition}[Co-cycle expansion]\label{def:cosyst-exp}
Let $C^*(X,\sheaf)$ be a co-chain sheaf complex of dimension $t$. For integer $0\leq i < t$, let $\eps_\cocyc(i)$ be the co-cycle expansion
\[  \eps_\cocyc(i) \,=\, \min \Big\{ \frac{|\delta(x)|}{  \min_{y\in\ker\delta_i}|x-y|}\,:\, x\in C^i(X) - \ker \delta_i \Big\}\;.\]

\end{definition}

An analogous definition of \emph{cycle expansion} $\eps_\cyc$ is given in the obvious way, with the co-boundary map $\delta$ replaced by the boundary map $\partial$.

\paragraph{Co-systolic and systolic distance}
We define the the co-systolic distance of $C^i(X)$ to be the smallest weight of a co-cycle that is not a coboundary,
\[ \mu_\cosyst(i) \,=\, \min\big\{ |x| \,:\, x\in \ker\delta_i - \im \delta_{i-1}\big\}\;,\]
The systolic distance $\mu_\syst$ is defined similarly by replacing the coboundary map by the boundary map $\partial$.
\paragraph{Local minimality}
An important definition, that will be used throughout, is that of \emph{co-local minimality} of a co-chain $u\in C^i(X)$, and of the \emph{locally co-minimal distance} of a space $C^i(X,\mF)$.

The following definition is essentially from~\cite{kaufman2016isoperimetric}.

\begin{definition}\label{def:co-loc-min-dist}
For $i>0$, an element $x\in C^i(X,\mF)$ is \emph{locally co-minimal} if $|x+\delta(y)|\geq |x|$ for all $y\in C^{i-1}(X,\mF)$ supported on $X_{\ge v}(i-1)= \{f \in X(i-1): f \succeq v\}$ for some $v \in X(0)$.
\end{definition}

This notion is referred to as \emph{local} co-minimality because $X_{\ge v}(i-1)= \{f \in X(i-1): f \succeq v\}$ has a finite size and is considered `local' compared to the `global' $X(i-1)$.
Notice that the weight $x$ cannot be lowered through any `local' change $\delta(y)$.

Next we define a notion of locally co-minimal distance in the natural way:
\[ \dist_\coloc(i) \,=\, \min\big\{ |x| \,:\, x\in \ker\delta_i - \{0\},\, \text{$x$ is locally co-minimal}\big\}\;.\]
A lower bound on the locally co-minimal distance immediately implies bounds on the co-systolic distance and co-cycle expansion parameters of the complex.

\begin{lemma}\label{lem:cosys-exp}
  Let $C^*(X,\sheaf)$ be a co-chain sheaf complex of dimension $t$. Then for any $0\leq i \leq t-1$,
  \[\mu_\cosyst(i) \,\geq\, \dist_\coloc(i)  \;,\]
  and for any $0\leq i \leq t-2$
  \[ \eps_\cocyc(i) \,\geq\,  \min\Big\{ \frac{1}{\max_{v\in X(0)} |X_{\geq v}(i)|}\,,\; \frac{\dist_\coloc(i+1)}{|X(i)|}\Big\}  \;.\]
\end{lemma}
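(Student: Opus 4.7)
The plan is to prove both inequalities by the same template: starting from a witness element, iteratively add a ``local coboundary'' $\delta(y)$ with $y$ supported on some $X_{\geq v}(j)$, whenever this strictly lowers the block-Hamming weight. Because weights are non-negative integers, this procedure terminates, and the terminal element is by construction either zero or locally co-minimal in the sense of \prettyref{def:co-loc-min-dist}. Crucially, throughout the procedure we only add coboundaries, so the terminal element remains in the same coset modulo $\im \delta_{j-1}$ as the starting element, and cocyclicity is preserved.

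For the first inequality, I take a minimum-weight $x \in \ker \delta_i \setminus \im \delta_{i-1}$, so $|x| = \mu_\cosyst(i)$, and apply the reduction procedure with $y \in C^{i-1}(X,\sheaf)$ supported on $X_{\geq v}(i-1)$. The resulting $x'$ satisfies $|x'|\leq |x|$, still lies in $x + \im \delta_{i-1}$, hence is a non-coboundary cocycle and in particular nonzero, and is locally co-minimal by construction. Thus $x'$ witnesses $\dist_\coloc(i) \leq |x'| \leq \mu_\cosyst(i)$.

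For the second inequality, fix $x \in C^i(X,\sheaf) \setminus \ker \delta_i$ and set $d = \min_{y\in \ker\delta_i}|x-y|$. After replacing $x$ by $x - y^*$ for a closest cocycle $y^*$, I may assume $|x| = d$, while $\delta(x)$ is unchanged. Now apply the reduction to $z_0 := \delta(x) \in \ker \delta_{i+1} \setminus \{0\}$: iteratively choose $v_k \in X(0)$ and $w_k \in C^i(X,\sheaf)$ supported on $X_{\geq v_k}(i)$ with $|z_k + \delta(w_k)| < |z_k|$, and set $z_{k+1} = z_k + \delta(w_k) = \delta(x + w_0 + \cdots + w_k)$. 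Each $z_k$ lies in $\ker \delta_{i+1}$, and the procedure terminates after some $K$ steps with $z_K$ either zero or a nonzero locally co-minimal cocycle. The two cases at termination produce the two terms of the minimum: if $z_K \neq 0$, then $|\delta(x)| = |z_0| \geq |z_K| \geq \dist_\coloc(i+1)$, and combined with $d = |x| \leq |X(i)|$ this yields $|\delta(x)|/d \geq \dist_\coloc(i+1)/|X(i)|$; if $z_K = 0$, then $x + w_0 + \cdots + w_{K-1} \in \ker \delta_i$, so $d \leq K \cdot \max_v |X_{\geq v}(i)|$, while the weight drops by at least $1$ at each step gives $|\delta(x)| = |z_0| \geq K$, hence $|\delta(x)|/d \geq 1/\max_v |X_{\geq v}(i)|$.

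The main subtlety is the careful accounting in the second inequality: both the distance $d$ and the weight $|\delta(x)|$ must be tracked in terms of the same counter $K$, so their ratio in the ``$z_K = 0$'' branch matches $1/\max_v |X_{\geq v}(i)|$. The dichotomy between ``reduction bottoms out before reaching local co-minimality'' and ``reduction reaches a genuine nonzero locally co-minimal cocycle'' is exactly what the two terms in the $\min$ encode, so once the reduction procedure is set up correctly the bounds drop out cleanly from the two termination alternatives.
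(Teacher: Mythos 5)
Your proof is correct and uses essentially the same argument as the paper: iteratively subtract local coboundaries to drive the element to a locally co-minimal state, then read off the bounds from the two possible termination outcomes. The only cosmetic difference is where the case split is placed — the paper checks the branch $|\delta(x)| \geq \dist_\coloc(i+1)$ up front before starting the reduction, whereas you run the reduction unconditionally and dichotomize on whether the terminal element $z_K$ is zero; the underlying accounting (weight drops by at least $1$ per step, each step perturbs $x$ by at most $\max_v |X_{\geq v}(i)|$ faces) is identical.
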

\begin{proof}
  By definition, $\mu_\cosyst(i)$ is the smallest norm of an element $x\in C^i(X)$ such that $\delta(x)=0$ and $x\notin \im\delta$. We claim that such an $x$ must be locally minimal. This is because if it is not locally minimal, then there is $y\in C^{i-1}(X)$ such that if $x'=x+\delta(y)$ then $|x'|<x$. But $x'$ still satisfies $\delta(x')=0$ and $x'\notin \im \delta$, which contradicts the minimality of $x$. Therefore $\mu_\cosyst(i)\geq \dist_\coloc(i)$.

  We now consider the co-cycle expansion. Let $x\in C^i(X)$ such that $x\neq 0$. If $|\delta(x)| \geq \dist_\coloc(i+1)$ then because $|x|\leq |X(i)|$ there is nothing to show. So assume $|\delta(x)|< \dist_\coloc(i+1)$. If $\delta(x)=0$ then again there is nothing to show. If $\delta(x)\neq 0$ then it cannot be locally co-minimal. So let $v\in X(0)$ and $y\in X_{\geq  v}(i)$ be such that $|\delta(x)+\delta(y)|<|\delta(x)|$. Let $x^{(1)}:=x+y$ and note that
  $|x-x^{(1)}|\leq |X_{\geq v}(i)|$.
  Repeating this process at most $|\delta(x)|$ times eventually gives some $x^{(\ell)}$ such that $|\delta(x^{(\ell)})|\leq|\delta(x)|$ and $\delta(x^{(\ell)})$ is locally co-minimal.
  Therefore $\delta(x^{(\ell)}) = 0$ and $x^{(\ell)}\in\ker\delta$. Finally, $|x-x^{(\ell)}|\leq \max_{v\in X(0)} |X_{\geq v}(i)| |\delta(x) |$, as desired.
\end{proof}

\subsection{Quantum LDPC Codes}

Since our construction falls within the framework of CSS codes~\cite{calderbank1996good, steane1996error}, we restrict our attention to such codes. A \emph{quantum CSS code} is uniquely specified by a length-$3$ chain complex
\begin{equation}\label{eq:length3-css}
 X\colon\;\F_2^{X(0)} \xrightarrow{H_X^T} \F_2^{X(1)} \xrightarrow{H_Z} \F_2^{X(2)}\;,
\end{equation}
where $X(1)$ is identified with the qubits, $|X(1)|=n$, and each element of $X(0)$ (resp.\ $X(2)$) is identified with an $X$-parity check (resp. $Z$-parity check). We represent the linear maps $H_X,H_Z$ through their matrices $H_X \in \F_2^{X(0)\times X(1)}$ and $H_Z\in \F_2^{X(2)\times X(1)}$, and note that the transpose in~\eqref{eq:length3-css} is taken according to the canonical basis of each space, the same basis that we fixed to identify $(\F_2^m)^*\simeq\F_2^m$.

Let $\sigma_X = \begin{pmatrix} 0 & 1 \\ 1 & 0 \end{pmatrix}$ and $\sigma_Z = \begin{pmatrix} 1 & 0 \\ 0 & -1\end{pmatrix}$ denote the usual Pauli matrices. The code associated with the chain complex $X$, which we denote $\mC = CSS(H_X,H_Z)$, is a subspace of the space $(\C^2)^{\otimes n}$ of $n$ qubits, with stabilizer generators $\sigma_X(a)=\sigma_X^{a_1}\otimes \cdots \otimes \sigma_X^{a_n}$ for $a$ ranging over the $n_X=|X(0)|$ rows of $H_X$ and $\sigma_Z(b)=\sigma_Z^{b_1}\otimes \cdots \otimes \sigma_Z^{b_n}$ for $b$ ranging over the $n_Z=|X(2)|$ rows of $H_Z$. The code has dimension
\begin{equation}\label{eq:css-dim}
k = \dim\ker H_Z - \dim\ \im H_X^T
\end{equation}
 and distance $d = \min(d_X, d_Z)$ where
\begin{equation}\label{eq:css-dist}
  d_X = \min_{c^1 \in \ker H_Z - \im H_X^T} |c^1|_H\;,\quad
  d_Z = \min_{c_1 \in \ker H_X - \im H_Z^T} |c_1|_H\;,
\end{equation}
where here $|c|_H$ denotes the Hamming weight. We summarize these parameters by writing that $\mC =CSS(H_X,H_Z)$ is an $[n,k,d]$ quantum (CSS) code. Finally, the code is called \emph{low-density parity check (LDPC)} if each row vector of the matrices $H_X, H_Z$ have constant Hamming weight.\footnote{Of course this notion only makes sense when considering a family of codes with asymptotically growing size $n$; in which case we mean that the row weights of $H_X$ and $H_Z$ should remain bounded by a universal constant as $n\to\infty$.}

\subsection{Local testability}

We start with the definition of local testability for a classical code $\code = \ker H$, where $H\in \F_2^{m\times n}$ is a parity check matrix. Recall that such a code is termed an $[n,k,d]$ code, where $k=\dim\ker H$ and $d = \min\{|c|:c\in \ker H -\{0\}\}$.  We give a (somewhat restricted) definition of local testability that depends on the choice of $H$, which implicitly specifies a natural ``tester'' for the code. Intuitively, a code is locally testable if, the further a word $x$ is from the code, the higher the probability of the ``natural tester'' --- which selects a row of $H$ uniformly at random and checks the corresponding parity of $x$ --- to reject.

\begin{definition}\label{def:cltc}
The (classical) code $\code=\ker H$ is called locally testable \emph{with soundness $\rho$} if for all $x\in \F_2^n$ it holds that
\begin{equation}\label{eq:lt-def}
 \frac{1}{m} |Hx|_H \,\geq\, \rho \frac{d_H(x,\code)}{n}\;,
\end{equation}
where we use $d_H$ to denote the distance function associated with the Hamming weight (to distinguish it from $d$, which is the distance function associated with the block Hamming weight $|\cdot|$).
\end{definition}

This definition can be restated in terms of expansion of the natural length-$2$ chain complex associated with $H$,
\[ X\colon \;\F_2^{X(0)} \xrightarrow{\delta=H} \F_2^{X(1)} \;,\]
where $X(0)=\{1,\ldots,n\}$ and $X(1)=\{1,\ldots,m\}$. Then the definition says that for every element $x\in \F_2^{X(0)}$, the (relative) weight $\frac{1}{m}|\delta(x)|_H$ is at least a constant times the (relative) distance of $x$ from $\ker\delta$. In the study of chain complexes, the parameter $\rho$ is referred to as the \emph{co-cycle expansion} of the complex, and denoted $h^1(X,\Z_2)$. (We will not use this notation.)

Now consider a quantum code $\code$. A general definition of local testability, and of the associated soundness parameter $\rho$, exists and is given in~\cite{eldar2017local}, which formalizes the same intuition, between distance to the code space and probability of rejection by a natural tester, as for the classical setting. This definition is somewhat cumbersome to work with and we will not need it, so we omit it. Instead, we collect the following two facts from the literature.

Firstly, it is known that local testability of a quantum CSS code is directly linked to local testability of the two associated check matrices $H_X,H_Z$. Formally, we have the following.

\begin{lemma}[Adapted from Claim 3 in~\cite{aharonov2015quantum}]
If the quantum CSS code $\code=CSS(H_X,H_Z)$ is locally testable with soundness $\rho$ (in short, $\rho$-qLTC) then the classical codes with parity-check matrices $H_X$ and $H_Z$ are each locally testable with soundness at least $\rho/2$. Conversely, if the classical codes with parity-check matrices $H_X$ and $H_Z$ are locally testable with soundness $\rho$ then the quantum CSS code $\code=CSS(H_X,H_Z)$ is locally testable with soundness at least $\rho$.
\end{lemma}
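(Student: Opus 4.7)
The plan is to reduce each direction of local testability to the other via the CSS code's Pauli error structure, using the fact that $X$-stabilizers detect only the $Z$-part of a Pauli error and vice versa.

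\textbf{Forward direction (qLTC $\Rightarrow$ classical LTC).} Given an arbitrary classical vector $x \in \F_2^n$, I would apply the quantum tester to the pure error state $|\phi\rangle := \sigma_X(x)|\psi\rangle$ for any fixed codeword $|\psi\rangle \in \code$. Because $\sigma_X(x)$ commutes with all $X$-stabilizers and anticommutes with the $Z$-stabilizer of row $b$ of $H_Z$ precisely when $(H_Z x)_b \neq 0$, the state violates exactly $|H_Z x|_H$ of the $n_X + n_Z$ total checks, and no $X$-check. At the same time, the quantum distance of $|\phi\rangle$ from $\code$ coincides with $d_H(x,\ker H_Z)$: the chain-complex identity $H_Z H_X^T = 0$ forces $\im H_X^T \subseteq \ker H_Z$, so correcting by a stabilizer-equivalent $X$-Pauli keeps the target coset inside $\ker H_Z$. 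Plugging both quantities into the qLTC soundness inequality and re-normalizing the denominator from $n_X + n_Z$ to $n_Z$ costs at most a factor of $2$ (WLOG $n_Z \geq n_X$, whence $n_X + n_Z \leq 2n_Z$), yielding the classical LTC bound for $H_Z$ with soundness $\geq \rho/2$. A symmetric argument using $\sigma_Z(x)|\psi\rangle$ handles $H_X$.

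\textbf{Reverse direction (classical LTCs $\Rightarrow$ qLTC).} Here I would first verify the qLTC inequality on pure Pauli error states $|\phi\rangle = \sigma_X(a)\sigma_Z(b)|\psi\rangle$ and then extend to arbitrary (possibly mixed) states by the linearity of the syndrome-fraction and the convexity of the quantum distance $D_\code$ in the Pauli basis. For such a pure error, the CSS structure cleanly decouples the two parities: $Z$-checks see only the $X$-part $a$ (contributing $|H_Z a|_H$ violations) while $X$-checks see only the $Z$-part $b$ (contributing $|H_X b|_H$ violations). Correspondingly, the minimum-weight Pauli correction splits into an $X$-piece and a $Z$-piece, so that the quantum distance is controlled by $d_H(a,\ker H_Z) + d_H(b,\ker H_X)$. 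Applying the two classical LTCs separately gives
\[ |H_Z a|_H \geq \rho \cdot n_Z \cdot d_H(a,\ker H_Z)/n, \qquad |H_X b|_H \geq \rho \cdot n_X \cdot d_H(b,\ker H_X)/n, \]
and summing and re-normalizing by $n_X + n_Z$ recovers the qLTC inequality with soundness at least $\rho$.

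\textbf{Main obstacle.} The principal technical point will be the careful extension from pure Pauli error states to arbitrary quantum states in the reverse direction. This is done by decomposing an arbitrary density matrix in the Pauli basis, exploiting linearity of the syndrome-fraction in the state and convexity of the Aharonov--Eldar quantum distance $D_\code$ on mixed states, so the inequality passes through. Aligning the exact definition of $D_\code$ with the sum of classical Hamming distances of the $X$- and $Z$-components is the source of the precise constant factors (the $\rho/2$ in the forward direction and the $\rho$ in the reverse direction); these reflect only the asymmetry in normalizing by the combined $n_X + n_Z$ versus the separate $n_X, n_Z$. Beyond that, the reduction is purely algebraic and rests entirely on the chain-complex identity $H_Z H_X^T = 0$, which guarantees that $X$- and $Z$-errors act independently on opposite types of stabilizers.
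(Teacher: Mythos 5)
The paper does not actually prove this lemma; it cites Claim~3 of Aharonov--Eldar verbatim and moves on. So there is no ``paper's proof'' to compare against, and what matters is whether your argument stands on its own.

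Your high-level skeleton --- reduce to Pauli error states, use the CSS structure to decouple the $X$- and $Z$-parts of the syndrome, and invoke the classical LTC property component-wise --- is indeed the standard route, and in outline it is the right one. However, there are two concrete problems with the details.

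\textbf{The factor-of-2 bookkeeping in the forward direction is backwards.} You write that re-normalizing the denominator from $n_X+n_Z$ to $n_Z$ ``costs at most a factor of $2$'' because $n_X+n_Z\le 2n_Z$. But this re-normalization \emph{increases} the violated-check fraction: $|H_Zx|_H/n_Z \ge |H_Zx|_H/(n_X+n_Z)$, and the inequality $n_X+n_Z\le 2n_Z$ is never actually needed in that direction. As written your argument would yield soundness $\ge\rho$, not $\rho/2$. The genuine source of the $\rho/2$ (to the extent it is tight) is a normalization mismatch on the \emph{distance} side, between the Aharonov--Eldar quantum distance of a pure Pauli-error state and the classical Hamming distance of its $X$-component, not a miscount of checks. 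This is exactly the quantity your argument treats as an identity without verification.

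\textbf{The extension from pure Pauli errors to general states is not as straightforward as stated.} You assert that linearity of the energy together with convexity of $D_{\mathcal{C}}$ in the Pauli basis lets the inequality ``pass through.'' But a general state is not a convex mixture of Pauli-error states: it decomposes into syndrome sectors, and within a fixed syndrome sector there are many Pauli representatives of very different weights. The relationship between $\mathrm{tr}(H\rho)$, which is block-diagonal in the syndrome decomposition, and $D_{\mathcal{C}}(\rho)$, which involves a minimization over Pauli corrections, is exactly the technical heart of the Aharonov--Eldar claim and cannot be dispatched by a one-line appeal to convexity. Relatedly, for a Pauli error $\sigma_X(a)\sigma_Z(b)$ the minimum-weight correction has Pauli weight $|\mathrm{supp}(a')\cup\mathrm{supp}(b')|$, which is not $|a'|_H + |b'|_H$; your identification ``quantum distance $= d_H(a,\ker H_Z)+d_H(b,\ker H_X)$'' is therefore only an upper bound and can be loose by up to a factor of two, which also feeds back into the constants. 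Pinning down these two normalization issues is precisely what the cited Claim~3 does, and your proposal leaves them open.
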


Because the lemma is tight up to a factor $2$, and we will not be concerned with constant factors, we can take the local testability of $H_X$ and $H_Z$ as our definition of local testability for the quantum code $CSS(H_X,H_Z)$.

\begin{definition}[Definition of local testability for quantum CSS codes]\label{def:qltc}
The (quantum) CSS code $\code=CSS(H_X,H_Z)$ is called locally testable \emph{with soundness $\rho$} if both $\code_X=\ker H_X$ and $\code_Z = \ker H_Z$ are (classical) codes with soundness at least $\rho$.
\end{definition}

We emphasize that this definition is local to our paper, and is equivalent to the general accepted definition of a quantum LTC up to a multiplicative factor $2$ in the soundness.

To translate between the notion of (co-)cycle expansion that we formulate our main analytic results in, and quantum local testability properties of the quantum codes that can be built from the underlying complex, we use the following lemma.





\begin{lemma}\label{lem:soundness-param}
Let $C(X,\mF)$ be a sheaf complex, and $i\in\Z$. Let $\code = CSS(\partial_i,\delta_i)$ be the quantum CSS code associated with the linear maps $\partial_i$ and $\delta_i$ over $\F_q$.\footnote{Here, we slightly abuse notation and use $C_i\simeq C^i$ to view both maps $\partial_i$, $\delta_i$ acting on the same space.} (When constructing codes, $\partial_i$ and $\delta_i$ are treated as $\F_2$-linear maps.) Let $D_i=\dim\, C_i(X,\sheaf)$ and let $M_i=\max_{f\in X(i)}\dim\ V_f$. Suppose further that $C(X,\mF)$ has systolic and co-systolic distance in dimension $i$ at least $\mu_\syst$ and $\mu_\cosyst$ respectively, and cycle and co-cycle expansion in dimension $i$ at least $\eps_\cyc$ and $\eps_\cocyc$ respectively.

Then $\code$ is an $[n,k,d]$ quantum code, where
\[ k\,=\,\dim_{\F_2}\, C_i(X)-\dim_{\F_2}\,\im\delta_{i}^T-\dim_{\F_2}\,\im\partial_{i}^T\]
 and
\[ d\,\geq \,\min \{\mu_\syst,\,\mu_\cosyst\}\;.\]
Moreover, the code $\code$ is $\rho$-qLTC where
\begin{equation}\label{eq:qltc-bound}
 \rho\,\geq\, \left(\frac{1}{\log_2 q}\right)\min \Big\{\frac{D_i}{D_{i-1}}\frac{\eps_\cyc}{M_i},\; \frac{D_i}{D_{i+1}}\frac{\eps_\cocyc}{M_i} \Big\}\;.\end{equation}
\end{lemma}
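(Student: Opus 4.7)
The plan is to identify $CSS(\partial_i,\delta_i)$ with the length-$3$ sub-complex $C_{i-1}\xrightarrow{\delta_{i-1}} C_i \xrightarrow{\delta_i} C_{i+1}$ of $C(X,\sheaf)$, regarded as a complex of $\F_2$-vector spaces, by setting $H_Z := \delta_i$ and $H_X := \partial_i$. Then $H_X^T = \delta_{i-1}$ (since $\delta_{i-1}=\partial_i^T$ by definition of the coboundary), and the CSS chain condition $H_Z H_X^T = \delta_i\delta_{i-1}=0$ follows from $\partial\circ\partial=0$ by transposition. The three claims of the lemma then reduce to short computations in which one translates the standard CSS formulas~\eqref{eq:css-dim},~\eqref{eq:css-dist} and Definition~\ref{def:cltc} through two dictionaries: block-Hamming versus Hamming weight on one side, and $\F_q$- versus $\F_2$-dimension on the other.

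For the dimension $k$, I would combine~\eqref{eq:css-dim} with rank--nullity to write $k = \dim_{\F_2}\ker\delta_i - \dim_{\F_2}\im\delta_{i-1} = \dim_{\F_2} C_i - \dim_{\F_2}\im\delta_i - \dim_{\F_2}\im\delta_{i-1}$, and then use the identity $\dim\im A = \dim\im A^T$ together with $\delta_{i-1}^T=\partial_i$ to rewrite the two image-dimensions in exactly the stated form. For the distance, I would apply~\eqref{eq:css-dist}: since $H_X^T = \delta_{i-1}$ and $H_Z^T = \partial_{i+1}$, the sets $\ker H_Z - \im H_X^T$ and $\ker H_X - \im H_Z^T$ are precisely $\ker\delta_i - \im\delta_{i-1}$ and $\ker\partial_i - \im\partial_{i+1}$. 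Since every nonzero block contributes at least one to the Hamming weight, $|c|_H\ge |c|$, so the minimum Hamming weights defining $d_X$ and $d_Z$ are bounded below by $\mu_\cosyst(i)$ and $\mu_\syst(i)$ respectively, giving $d\ge \min\{\mu_\syst,\mu_\cosyst\}$.

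For local testability, Definition~\ref{def:qltc} reduces matters to bounding the classical LTC soundness of each of $H_X=\partial_i$ and $H_Z=\delta_i$ separately; by symmetry I outline only the latter. Viewed as $\F_2$-linear, $\delta_i$ has input length $n=D_i\log_2 q$ and output length $m=D_{i+1}\log_2 q$. The one calculation that matters is the block-to-Hamming conversion: each block $V_f$ has $\F_2$-dimension at most $M_i\log_2 q$, so $|v|\le |v|_H\le M_i\log_2 q\cdot |v|$ for every chain $v$. Applying the upper bound to a block-minimiser $y^\star\in\ker\delta_i$ yields $d_H(x,\ker\delta_i)\le M_i\log_2 q\cdot\min_{y\in\ker\delta_i}|x-y|$, while the lower bound combined with the co-cycle expansion hypothesis gives $|\delta_i(x)|_H\ge\eps_\cocyc\cdot\min_{y\in\ker\delta_i}|x-y|$. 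Dividing, $|\delta_i(x)|_H/d_H(x,\ker\delta_i) \ge \eps_\cocyc/(M_i\log_2 q)$, which by Definition~\ref{def:cltc} with $n/m=D_i/D_{i+1}$ is exactly a classical LTC bound with soundness $(1/\log_2 q)(D_i/D_{i+1})(\eps_\cocyc/M_i)$. The symmetric argument for $\partial_i$ and $\eps_\cyc$ gives the other side of the minimum in~\eqref{eq:qltc-bound}. No step is conceptually difficult; the only thing to watch is the weight and field conversions, which is precisely where the $M_i$ and $\log_2 q$ factors enter.
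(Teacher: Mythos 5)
Your proposal is correct and follows essentially the same route as the paper: unpack the CSS definitions \eqref{eq:css-dim}, \eqref{eq:css-dist}, and Definition~\ref{def:cltc} for $H_X=\partial_i$, $H_Z=\delta_i$, and track the losses from block weight to Hamming weight ($M_i$) and from $\F_q$ to $\F_2$ ($\log_2 q$). The only difference is cosmetic—you fold the $\F_q$-to-$\F_2$ conversion into a single comparison $|v|\le |v|_H\le M_i\log_2 q\,|v|$ with the $\F_2$-Hamming weight, whereas the paper first compares to the $\F_q$-Hamming weight and converts at the end—but the two calculations yield the identical bound.
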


We note that the bound in~\eqref{eq:qltc-bound} may be somewhat loose, due to the interplay between the use of the Hamming weight and the block Hamming weight, which leads to the factor $\frac{1}{M_i}$ on the right-hand side. In our constructions, $M_i$ is thought of as polylogarithmic, or even constant, in the length of the code.

\begin{proof}
The formula for the dimension $k$ is clear from~\eqref{eq:css-dim} $k = \dim\ker \delta_i - \dim\,\im\partial_i^T$ and the fact that for the linear operator $\delta_i:\F_2^n\to\F_2^m$, $\dim\ker\delta_i=n-\dim\,\im\delta_i=n-\dim\,\im\delta_i^T$. Intuitively, this is saying that the number of logical qubits is equal to the number of physical qubits minus the number of independent stabilizer generators.

For the distance, the bound follows from the definition~\eqref{eq:css-dist} of the distance of a quantum CSS code, the definition of the parameters $\mu_\syst$ and $\mu_\cosyst$ and the fact that the block weight $|c|$ is a lower bound on the Hamming weight of $c$.

To show the bound on the soundness parameter $\rho$, consider first the soundness of the classical code $\code_X=\ker H_X=\ker\partial_i$. Recall that $\partial_i:C_i(X,\sheaf)\to C_{i-1}(X,\sheaf)$. Let $c\in C_i(X,\sheaf)$.
Then
\begin{align*}
\frac{1}{D_{i-1}} |H_X c|_H &\geq \frac{1}{D_{i-1}} |\partial_i c|\\
&\geq \frac{1}{D_{i-1}} \,\eps_\cyc\,d(c,\ker \partial_i)\\
&\geq \frac{D_{i}}{D_{i-1}} \frac{\eps_\cyc}{M_i} \frac{d_H(c,\code_X)}{D_i}\;.
\end{align*}
Here the first inequality follows since $|\cdot|_H\geq |\cdot|$, the second inequality uses the definition of $\tilde{\eps}_i$, and the third inequality uses that $|x|\geq \frac{1}{M_i}|x|_H$ for $x\in C_i(X)$.
Let $|c|_{\F_2}$ be the Hamming weight of $c$ when viewed as a vector over $\F_2$.
Because $|c|_{\F_2} \le |c|_H \le (\log_2 q) |c|_{\F_2}$, we obtain the lower bound on the soundness of the code $\code_X$.

The argument showing a lower bound on the soundness of the code $\code_Z=\ker H_Z$ is analogous, and we omit it.

\end{proof}

\section{The complex}
\label{sec:complex}

Let $t\geq 1$ be an integer. Our construction is based on a graded incidence poset that is obtained from a $t$-dimensional cubical complex
\[X = X(0)\cup\cdots \cup X(t)\;.\]
The cubical complex can be generated from a set $G$ of size $N=|G|$, and finite subsets of permutations $A_1,\ldots,A_t$ of $G$ of size $n=|A_i|$. The sets $A_i$ should be closed under inverse and such that permutations taken from different sets commute. We denote this complex $X=X(G;\{A_i\})$. The requirement that the sets $A_i$ pairwise commute can be understood as a requirement that allows us to iterated the \emph{balanced product} construction from~\cite{BE}, which a priori was only introduced for the product of two graphs (or more generally chain complexes) with naturally commuting left and right group actions.

A concrete example that one may keep in mind is the case where $G$ is a finite abelian group and $A_1\ldots,A_t\subseteq G$ are generating subsets. Another example, when $t=2$, is where $G$ is any finite group and $A_1,A_2\subset G$ act by multiplication on the left and right respectively. The best construction that we identified is based on a third example, derived from an expander construction of \cite{JeronimoMO0T22}. This construction was pointed out to us by Louis Golowich. We describe it in more detail in Section~\ref{sec:ex2}.

\subsection{Geometry} \label{sec:geometry}

We describe our general construction of a complex $X$, in the process introducing notation that will be used throughout. See also Figure \ref{fig:geometry}.

\paragraph{Faces} For $0\leq k \leq t$ the $k$-dimensional faces of $X$, i.e.\ the elements of $X(k)$, are partitioned according to their \emph{type}, which can be any subset $S\subseteq\{1,\ldots,t\}$ of size $|S|=k$. Let $\ol{S}$ denote the complement of $S$ in $\{1,\ldots,t\}$. For every such $S$, a face $f$ of type $S$ is uniquely specified as
 \begin{equation}\label{eq:face-def}
f\,=\, [\,g;\ (a_j)_{j\in S},\ (b_j)_{j\in \ol{S}}\,]\;,
\end{equation}
where $g\in G$, for each $j\in S$, $a_j\in A_j$, and for each $j\in\ol{S}$, $b_j\in \{0,1\}$. We let $X(S)$ denote the set of faces of type $S$. Thus $|X(S)|=|G| n^k 2^{t-k}$. Geometrically, the $k$-face $f=[g;a,b]$ contains the $2^{k}$ $0$-faces (vertices)
    \begin{equation}\label{eq:face-vertices}
		\big\{ \big(g\cdot\prod_{j:b'_j=1} a_j\,;\, b'\conc b\big) \,\big|\, b'\in \{0,1\}^S\big\}\;,
		\end{equation}
    where we use the notation $b'\conc b$ to denote string concatenation with re-ordering of indices in the obvious way.
		Here and in the following we use the notation $g\cdot a$ to denote the element $a(g)$, for $a\in A_j$ a permutation of $G$. Note that the element $g\cdot\prod_j a_j$ is well-defined because we assumed that permutations $a_j$ taken from different sets $A_j$ pairwise commute.

For a face $f = [g;a,b]$ we write $S=\type(f)$ for its type, and (slightly overloading notation compared to~\eqref{eq:face-def}) write
\[ f_0 = g\qquad\text{and}\qquad f_j = \begin{cases}
  a_j  & j\in S\\
  b_j & j\not\in S
\end{cases}\]
for $j=1,\ldots,t$.

\begin{figure}
    \centering
    \includegraphics[scale=0.35]{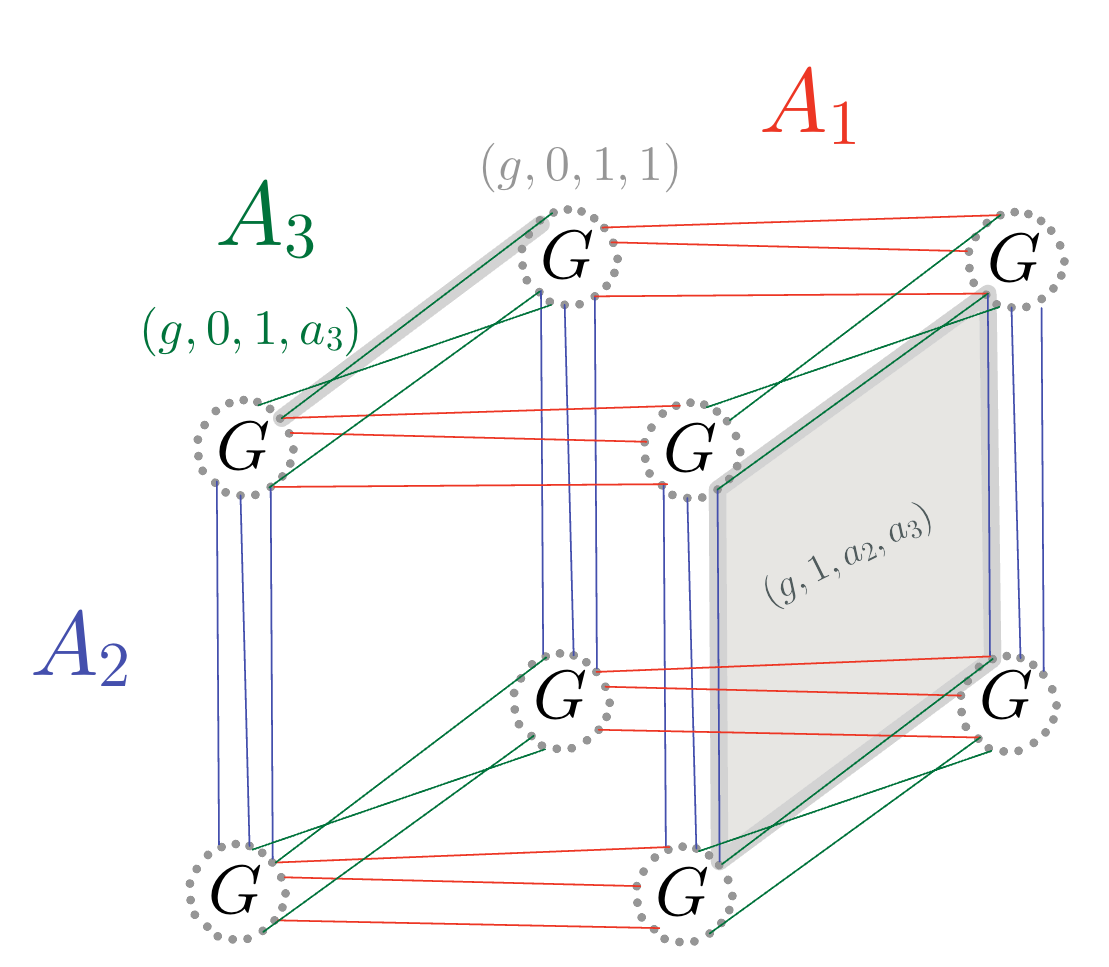}
    \caption{A cubical complex with $t=3$ and three sets of permutations $A_1,A_2,A_3$. The underlying graph can be seen to be $2^t=8$-partite. We label the faces as $(f_0,f_1,f_2,f_3)$.}\label{fig:geometry}
\end{figure}
\paragraph{Partial order}
We introduce the following partial order on faces of $X$. This partial order is the one that is induced by set inclusion, when each face is seen as a set of vertices as in~\eqref{eq:face-vertices}. Formally, given two faces $f,f'$ and $j\in\{1,\ldots,t\}$ we write $f' \precdot_j f$ if all the following hold:
    \begin{itemize}
        \item $f'_j \in \{0,1\}$ while $f_j\in  A_j$,
        \item $f_i = f'_i$ for all $i\notin \{0,j\}$,
        \item $f'_0= \begin{cases}
        f_0 & f'_j = 0\\
        f_0\cdot f_j & f'_j = 1\\
    \end{cases} $ .
        \end{itemize}
			        We sometimes suppress $j$ from the notation and write $f'\precdot f$ to indicate that there is some $j$ for which $f'\precdot_j f$. Note that if $f' \precdot f$ then necessarily $f$ is one dimension higher than $f'$. We also write $f' \prec  f$ if there is a sequence $f' \precdot  \cdots \precdot f$, and $f'\preceq f$ if $f'\prec f$ or $f'=f$.

\begin{lemma}
The relation $\prec$ defines a transitive graded partial order on faces. Moreover, with this relation $X$ is an incidence poset.
\end{lemma}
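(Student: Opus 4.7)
The plan is to verify the four constituent claims in turn: transitivity of $\prec$, antisymmetry (so $\preceq$ is a bona fide partial order), the graded property with rank function $\rho(f)=|\type(f)|$, and the incidence condition.

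Transitivity is immediate from the definition, since $\prec$ is the transitive closure of $\precdot$. For antisymmetry, I first observe that any cover $f' \precdot_j f$ satisfies $\type(f)=\type(f')\cup\{j\}$: by the first bullet in the definition of $\precdot_j$, the coordinate in position $j$ migrates from $\{0,1\}$ to $A_j$, while the ``typed-ness'' of every other position is unchanged. Consequently $\rho(\cdot)=|\type(\cdot)|$ strictly increases along any $\prec$-chain, which rules out $f\prec f$ and simultaneous $f\prec f'$, $f'\prec f$.

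For the graded property I take $\rho(f)=|\type(f)|$. Monotonicity along $\preceq$ follows from the previous paragraph. The remaining point is that the abstract poset-theoretic cover relation of $\preceq$ coincides with the concrete $\precdot$ of the construction: a $\precdot$-cover raises rank by exactly one and so admits no strict intermediate, and conversely any $\prec$-chain $f=g_0 \precdot g_1 \precdot \cdots \precdot g_m=f'$ with $\rho(f')=\rho(f)+1$ must have $m=1$ since each step raises $\rho$ by at least one.

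The main content is the incidence condition. Fix $f \preceq f''$ with $\rho(f'')=\rho(f)+2$. Iterating $\type(g)=\type(g')\cup\{j\}$ along any chain from $f$ to $f''$ yields $\type(f'')\setminus\type(f)=\{i,j\}$ for two distinct indices $i\neq j$, and also $f_\ell=f''_\ell$ for every $\ell\in\{1,\ldots,t\}\setminus\{i,j\}$ (no cover along the chain touches those coordinates). I claim there are exactly two intermediates $f'$ with $f \precdot f' \precdot f''$: one of type $\type(f)\cup\{i\}$ and one of type $\type(f)\cup\{j\}$. For each choice of intermediate type, every coordinate of $f'$ is forced: the ``newly typed'' coordinate must equal the corresponding coordinate of $f''$ (since the upper cover leaves it alone), the ``still untyped'' coordinate must equal that of $f$ (since the lower cover leaves it alone), the coordinates outside $\{i,j\}$ are common to $f$ and $f''$, and $f'_0$ is pinned by the update rule on the $0$-coordinate. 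It remains to check that the prescribed $f'$ actually satisfies both covers; the only nontrivial point is the consistency of the base-point, which in the subcase $f_i=f_j=1$ reduces to the identity $f_0\cdot f''_i\cdot f''_j = f_0\cdot f''_j\cdot f''_i$. This holds because elements of $A_i$ and elements of $A_j$ commute as permutations of $G$.

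The main (mild) obstacle is the base-point consistency in the incidence step; once commutativity of $A_i$ and $A_j$ is invoked, the count of intermediates is exactly $2$, an even number, completing the proof. Everything else is direct unpacking of the definitions.
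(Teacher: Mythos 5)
Your proof is correct and follows essentially the same approach as the paper: you verify that the poset-theoretic covering relation coincides with $\precdot$, that $\rho(f)=|\type(f)|$ is the rank function, and that $\type(f'')\setminus\type(f)=\{i,j\}$ forces exactly two candidate intermediates, one of each type $\type(f)\cup\{i\}$ and $\type(f)\cup\{j\}$. You are somewhat more explicit than the paper's terse proof in that you spell out the base-point consistency and note where commutativity of permutations in $A_i$ and $A_j$ is actually invoked, which is a detail the paper leaves implicit.
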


\begin{proof}
The relation $\preceq$ is reflexive and transitive. It is antisymmetric because if $f'\preceq f$ and $f\neq f'$ then $f$ must be of dimension at least one more than $f'$. To show that it is graded, we verify that the covering relation associated with $\preceq$ is precisely $\precdot$.

For the moreover part, note that if $f\precdot f'\precdot f''$, then there are exactly two distinct positive $i\neq j$ such that $f''_j\neq f_j$ and $f''_i\neq f_i$, and this determines exactly two possible values for $f'$.
\end{proof}

\begin{definition}\label{def:x-def}
Given the set $G$ and the sets of permutations $A_1,\ldots,A_t$, we let $X(G;\{A_i\})$ denote the graded incidence poset $X=\cup_{0\leq k \leq t} X(k)$. Here the rank function is given by $\rho(f)=|\type(f)|$, for $f\in X$.
\end{definition}

We end the section by introducing convenient notation.

\paragraph{Incidence maps}
The up incidence map $\up$ takes a $k$-face $f$ to all $(k+1)$-faces that cover it,
\[ \up(f) = \sett{f'}{ f' \succdot f }\;,\]
and is naturally extended to sets of faces by taking unions.
Similarly,
\[ \down(f) = \sett{f'}{ f' \precdot f }\;.\]

\paragraph{Links}
The upwards link of a face $v\in X$ is a sub-complex consisting of all faces above $v$,
\[X_{\geq v} = \sett{f\in X}{f\succeq v}.\]
Note that since the structure of $X$ is not simplicial, we do not remove $v$ from a face $f\succeq v$ when defining the link.
Similarly, the downwards link of a face $v\in X$ is a sub-complex consisting of faces below $v$,
\[X_{\leq v} = \sett{f\in X}{f\preceq v}.\]
For a subset $S\subseteq\{1,\ldots,t\}$ we denote $X_{\geq v}(S) = X_{\geq v}\cap X(S)$ and $X_{\leq v}(S) = X_{\leq v}\cap X(S)$. Similarly, $X_{\geq v}(k)=X_{\geq v}\cap X(k)$ and $X_{\leq v}(k)=X_{\leq v}\cap X(k)$.

\subsection{Local coefficients}
\label{sec:loc-coeffs}

To each face we associate a local coefficient space. For $i\in \{1,\ldots,k\}$ let $m_i\geq 1$ be an integer and $\hat{A}_i=\{1,\ldots,m_i\}$. The local coefficient space associated with a face of type $S$ is the vector space over a finite field $\F_q$ of characteristic $2$\footnote{Our construction can be applied more generally to arbitrary fields. However, we focus on finite fields with characteristic $2$, so that the corresponding code can be viewed as a code over $\F_2$. Moreover, the construction of the (co)boundary maps becomes more straightforward in this context, and we will comment further in Section~\ref{sec:co-maps}.}
\begin{equation}\label{eq:def-coeffs}
V_S = \F_q^{\prod_{j\in \ol{S}}\hat A_j}.
\end{equation}
So the local coefficient space of a $t$-dimensional face is $V_{\set{1,\ldots,t}}=\F_q$, and the local coefficient space of a vertex is ($\prod_{j=1}^t m_j$)-dimensional. We write $V_f = V_{\type(f)}$ for the local coefficient space of the face $f$, and $\sheaf=\{V_f\}$ for the collection of local coefficient spaces.

For $i\in\{0,\ldots,t\}$ we denote by $C_i(X,\mF)$, or $C_i(X)$ for short, the vector space of $i$-chains, that assign to every $i$-face $f$ a ``coefficient'' from $V_{f}$:
\[ C_i(X,\mF) \,=\, \bigoplus_{f\in X(i)} V_f\;.\]
Similarly, we use $C^i(X)$ to denote the $i$-cochains
\[ C^i(X,\mF) \,=\, \bigoplus_{f\in X(i)} V_f^*\;,\]
where $V_f^* = \{g:V_f \to \F\}$ is the dual vector space. Throughout we fix an identification of each $V_S^*$ with $V_S$ through the use of a self-dual basis which we label  $\{a:\,a\in \prod_{j\in \ol{S}}\hat{A}_j\}$, so $V_f^*\simeq V_f$ canonically.

\begin{definition}\label{def:x-complex}
Given a set $G$ and subsets of permutations $A_1,\ldots,A_t$ we let $C_*(G;\{A_i\};\{h_i\})$ be the length-$(t+1)$ chain complex
\[C_*(G;\{A_i\};\{h_i\}) \,=\, \bigoplus_i C_i(X,\sheaf)  \;.\]
We often write $C_*(X,\sheaf)$ or even $C_*(X)$ for this complex, when the parameters of the complex are clear from context. We use $C^*(X) = \oplus_i C^i(X)$ to denote the dual complex.
\end{definition}
\begin{figure}
    \centering
    \includegraphics[scale=0.9]{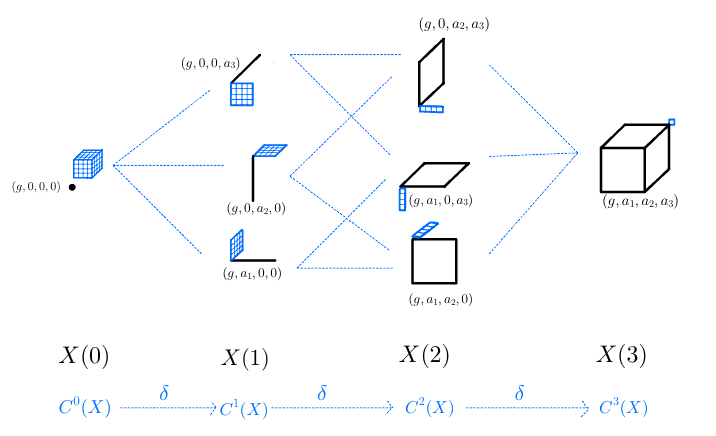}
    \caption{The co-chain complex, localized at a vertex $f=(g,0,0,0)$; with $t=3$. The geometric elements of $X_{\ge f}$ are drawn in black. The associated coefficient spaces are drawn in blue. }\label{fig:chain}
\end{figure}

The (co-)boundary maps associated with $C_*$ are defined in the next sub-section. Before proceeding, we can already compute the dimension of the space of $i$-chains $C_i(X,\sheaf)$.

\begin{lemma}\label{lem:x-dim}
Let $C_*(X)=C_*(G;\{A_i\};\{h_i\})$, $0\leq i \leq t$ and $D_i=\dim\, C_i(X,\sheaf)$. Then
\[ D_i \,=\, N \,  n^i 2^{t-i} \sum_{\substack{T\subseteq\{1,\ldots,t\}\\|T|=t-i}}\prod_{j\in T} m_j\;.\]
\end{lemma}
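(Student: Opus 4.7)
The proof is a direct counting argument based on partitioning $X(i)$ according to face types. My plan is as follows.

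First I would use the direct-sum decomposition $C_i(X,\sheaf) = \bigoplus_{f\in X(i)} V_f$ together with the partition of $X(i)$ by type: every $i$-face belongs to exactly one $X(S)$ with $S\subseteq \{1,\ldots,t\}$, $|S|=i$. This gives
\[ D_i \,=\, \sum_{\substack{S\subseteq\{1,\ldots,t\}\\|S|=i}} \,|X(S)|\cdot \dim V_S\;. \]

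Next I would plug in the two counts already established in the excerpt. From the parametrization $f=[g;(a_j)_{j\in S},(b_j)_{j\in\ol S}]$ in~\eqref{eq:face-def}, faces of type $S$ are in bijection with $G\times \prod_{j\in S} A_j \times \{0,1\}^{\ol S}$, so $|X(S)|=N\, n^i\, 2^{t-i}$ (this is stated just after~\eqref{eq:face-def}). From the definition~\eqref{eq:def-coeffs} of the local coefficient space, $V_S=\F_q^{\prod_{j\in\ol S}\hat A_j}$ has $\F_q$-dimension $\prod_{j\in\ol S} m_j$. Substituting,
\[ D_i \,=\, \sum_{\substack{S\subseteq\{1,\ldots,t\}\\|S|=i}} N\, n^i\, 2^{t-i}\, \prod_{j\in\ol S} m_j \,=\, N\, n^i\, 2^{t-i}\sum_{\substack{S\subseteq\{1,\ldots,t\}\\|S|=i}} \prod_{j\in\ol S} m_j\;. \]

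Finally I would reindex by $T:=\ol S$, which is a bijection between $i$-subsets $S$ and $(t-i)$-subsets $T$ of $\{1,\ldots,t\}$, to rewrite the sum as $\sum_{|T|=t-i}\prod_{j\in T} m_j$, yielding the claimed formula. There is no real obstacle here: the argument is purely bookkeeping, and the only thing to check is that the bijection $S\leftrightarrow \ol S$ is used consistently. No properties of the boundary maps or of the expansion structure are needed.
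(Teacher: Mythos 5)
Your proof is correct and follows essentially the same approach as the paper's: decompose $C_i(X,\sheaf)$ by face type, use $|X(S)|=N n^i 2^{t-i}$ and $\dim V_S=\prod_{j\in\ol S}m_j$, and reindex by $T=\ol S$. Nothing is missing.
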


\begin{proof}
We have that $D_i = \sum_{f\in X(i)} \dim V_f$. For each $f\in X(i)$, $|\type(f)|=i$ and hence by~\eqref{eq:def-coeffs}, $\dim\, V_f = \prod_{j\notin \type(f)} m_j$. Moreover, for $|S|=i$ we have $|X(S)|=N  n^i 2^{t-i}$, where $N$ is to choose $g\in G$, $n^i$ is to choose $a_j$ for $j\in S$, and $2^{t-i}$ is to choose $b_j$ for $j\notin S$.
\end{proof}

\subsection{(Co)boundary maps}
\label{sec:co-maps}

For $i\in\{1,\ldots,t\}$ let  $h_i \in \F_q^{m_i\times n}$. Assume that $m_i\leq n_i$ and that $h_i$ has full row rank, i.e.\  the rows of $h_i$ are linearly independent. We index the rows of $h_i$ using the set $\hat{A}_i=\{1,\ldots,m_i\}$, and the columns of $h_i$ using the set $A_i$, and view $h_i$ as a linear map $h_i:\F_q^{A_i}\to\F_q^{\hat A_i}$.

\paragraph{Restriction and co-restriction maps}

Given $i\notin \type(f')$ and $f' \precdot_i f$, define the \emph{co-restriction map} 
\[ \corest_{f',f}\,:\quad V_{f'}^*\to V_f^*\]
as follows. For any $z\in V_{f'}^*$, $\corest_{f',f}(z)$ is obtained by applying $\Id_{-i}\otimes h_{i}^T$, resulting in a vector in $\F_q^{( \prod_{j\not \in \type(f')} \hat A_j)\times A_i}$, and then setting the $A_i$-coordinate to $f_i$. Formally,
\begin{equation}\label{eq:co-rest-1}
 \corest_{f',f}(z) =
    \big((\Id_{-i}\otimes h_{i}^T)z\big) [ \cdots ,f_i, \cdots]\;.
\end{equation}

Similarly, for $i\in \type(f')$ and $f' \succdot_i f$ define the \emph{restriction map}
\[ \rest_{f',f}: V_{f'}\to V_f\]
for any $z\in V_{f'}$ by
\begin{equation}\label{eq:rest-1}
 \rest_{f',f}(z) = z \otimes (h_i f'_i)\;,
\end{equation}
    which is an element of $\F_q^{ \prod_{j\notin \type(f')} \hat A_j} \otimes \F_q^{\hat A_i}\cong \F_q^{ \prod_{j\notin \type(f)} \hat A_j}$.
    We abuse the notation and write $f'_i$ as the canonical basis vector of $\F_q^{A_i}$ corresponding to $f'_i \in A_i$.
    We verify below that the maps $\corest$ and $\rest$ are adjoint to each other (see~\eqref{eq:adjoint-1}). We extend the notation by defining, for any sequence $f_i\precdot f_{i+1}\cdots \precdot f_k$ where $f_j\in X(j)$,
    \[  \rest_{f_k,f_i}(z) \,=\, \rest_{f_{i+1},f_i}(\cdots  \rest_{f_k,f_{k-1}}((z)))\;.\]
We note that this leads to a well-defined quantity, i.e.\ $\rest_{f_k,f_i}(z) $ is independent of the path $f_i\precdot f_{i+1}\cdots \precdot f_k$, as can easily be verified from the definition~\eqref{eq:rest-1}. This is because application of a linear map along different coordinates, or restriction on different coordinates, are operations that commute. A similar extension is performed to define co-restriction maps $\corest_{f',f}$ for general $f'\prec f$.

\paragraph{The coboundary map}%
The coboundary map $\delta: C^*(X)\to C^*(X)$ is defined as
\begin{equation}\label{eq:def-delta}
\delta(z) (f) = \sum_{f'\precdot f} \corest_{f',f}( z(f'))\;,
\end{equation}
where $\corest_{f',f}$ is as in~\eqref{eq:co-rest-1}.
In words, at a face $f\in X$, we go over all faces $f'$ below $f$ and map $z(f')$ to an element in $V_f^*$ via the restriction map $\rest_{f',f}$, and then sum  all the contributions thus obtained.
Because $\F_q$ has characteristic $2$, $\delta \circ \delta = 0$, as required by the definition of the coboundary map.\footnote{For general fields $\F_q$ with characteristics not equal to $2$, the maps requires additional signs so that $\delta \circ \delta = 0$.
Specifically, $\delta(z) (f) = \sum_{f'\precdot f} (-)^{s(f,f')} \corest_{f',f}( z(f'))$ for some $s$.}
\paragraph{The boundary map}
The boundary map $\partial:C_*(X)\to C_*(X)$ is defined as
\[
\partial(z) (f) = \sum_{f'\succdot f} \rest_{f',f}( z(f')).
\]

The next lemma shows that the map $\partial$ is the adjoint of the map $\delta$ with respect to the natural inner product on $C(X)$.

\begin{lemma}\label{lem:adjoint}
For any $z\in C_*(X)$ and $z'\in C^*(X)$ it holds that
\[\langle z,\delta(z')\rangle \,=\, \langle \partial(z),z'\rangle \;.\]
\end{lemma}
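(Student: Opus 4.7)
The plan is to unfold both sides of the identity using the definitions of $\partial$ and $\delta$, swap the order of summation so that both expressions become sums over the same set of incident pairs, and then reduce to a pointwise adjointness identity between $\rest$ and $\corest$ on a single covering pair. By bilinearity and the fact that $\langle z, z'\rangle$ splits face by face as $\sum_f \langle z(f), z'(f)\rangle_{V_f}$, it suffices to treat the case $z \in C_i(X)$, $z' \in C^{i-1}(X)$, so both sides reduce to sums ranging over pairs $(f', f)$ with $f' \precdot f$, $f' \in X(i-1)$, $f \in X(i)$.

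After unfolding, the claim reduces to showing that for each covering pair $f' \precdot_j f$, one has
\[
\langle z(f),\, \corest_{f',f}(z'(f'))\rangle_{V_f} \,=\, \langle \rest_{f,f'}(z(f)),\, z'(f')\rangle_{V_{f'}}\,.
\]
So the core step is to verify that the linear maps $\corest_{f',f}\colon V_{f'}\to V_f$ and $\rest_{f,f'}\colon V_f\to V_{f'}$ (using the self-dual identifications $V^*\simeq V$) are adjoints with respect to the canonical inner products. This is where I would carry out the one real computation.

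For the adjointness, I would write out the coefficient spaces explicitly. Setting $S=\type(f)$, so $\type(f')=S\setminus\{j\}$, we have $V_f = \F_q^{\prod_{k\in\overline S}\hat A_k}$ and $V_{f'} \cong V_f\otimes \F_q^{\hat A_j}$, with basis vectors indexed by pairs $(\alpha,\beta)$ with $\alpha\in\prod_{k\in\overline S}\hat A_k$ and $\beta\in \hat A_j$. By definition, $\rest_{f,f'}(v)[\alpha,\beta] = v[\alpha]\cdot h_j[\beta,f_j]$, since $\rest_{f,f'}(v)=v\otimes (h_j e_{f_j})$. On the other side, $\corest_{f',f}(u)[\alpha] = ((I\otimes h_j^T)u)[\alpha,f_j] = \sum_{\beta\in\hat A_j}h_j[\beta,f_j]\cdot u[\alpha,\beta]$. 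Plugging these into the two inner products, both sides evaluate to
\[
\sum_{\alpha,\beta} v[\alpha]\cdot h_j[\beta,f_j]\cdot u[\alpha,\beta]\,,
\]
which yields the desired equality. Finally, summing the pointwise equality over all covering pairs $(f',f)$ recovers $\langle z,\delta(z')\rangle = \langle \partial(z), z'\rangle$.

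The only subtlety I anticipate is purely bookkeeping: the formulas~\eqref{eq:rest-1} and~\eqref{eq:co-rest-1} use the letter $f'$ for the higher- and lower-dimensional face respectively, so I would be careful to align conventions before performing the coefficient-wise computation. The extension of the adjointness along longer chains $f_i\precdot\cdots\precdot f_k$ is not needed here since $\partial$ and $\delta$ are defined only via covering pairs, but it would follow automatically by composition.
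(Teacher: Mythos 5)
Your proposal follows the same approach as the paper: reduce by bilinearity to a single covering pair $f'\precdot f$ and verify the pointwise adjointness $\langle z,\corest_{f',f}(z')\rangle = \langle\rest_{f,f'}(z),z'\rangle$. The paper carries out that last step abstractly by rewriting $\langle z,(\Id_{-i}\otimes h_i^T)z'[\cdots,f_i,\cdots]\rangle = \langle z\otimes f_i,(I_{-i}\otimes h_i^T)z'\rangle = \langle z\otimes(h_i f_i),z'\rangle$, while you expand in coordinates; both computations are correct and equivalent.
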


\begin{proof}
It is enough to verify the equality for $z\in V_f$ and $z'\in V_{f'}^*$. By definition $\delta:V_{f'}^* \to \oplus_{f''\succdot f'} V_{f''}^*$ and $\partial: V_{f}:\oplus_{f''\precdot f} V_{f''}$. Thus both sides of the equality are non-zero only if $f'\precdot f$. Suppose that this is the case, and let $i$ be such that $\type(f)=\type(f')\cup\{i\}$. It remains to verify that
\begin{equation}\label{eq:adjoint-1}
\langle z,\corest_{f',f}(z')\rangle \,=\, \langle \rest_{f,f'}(z),z'\rangle\;.
\end{equation}
According to~\eqref{eq:co-rest-1}
\[ \corest_{f',f}(z') = (\Id_{-i}\otimes h_{i}^T)z' [ \cdots, f_i, \cdots]\;,\]
and according to~\eqref{eq:rest-1}
\[ \rest_{f,f'}(z) = z \otimes (h_i f_i) \;.\]
Thus the left-hand side of~\eqref{eq:adjoint-1} evaluates to
\begin{align*}
\langle z,(\Id_{-i}\otimes h_{i}^T)z' [ \cdots, f_i, \cdots]\rangle
&= \langle z\otimes f_i, (I_{-i} \otimes h_i^T)z' \rangle\\
&=\langle (I_{-i} \otimes h_i)(z\otimes f_i), z' \rangle\\
&=\langle z \otimes (h_i f_i),z'\rangle\;,
\end{align*}
which matches the right-hand side.
\end{proof}

\subsection{Statement of results}
\label{sec:results}

Having described our construction, we can state our main result about it.

\begin{theorem}\label{thm:main}
Let $(G;\{A_i\})$ be a set and collections of pairwise commuting permutations on it such that the graphs $\Cay(G,A_i)$ are each $\lambda$-expanding up to size $r |G|$ (see Definition~\ref{def:x-expand}). Let $N=|G|$ and $n=|A_1|=\cdots=|A_t|$. Let $(h_1,\ldots,h_t)$ be a collection of matrices $h_j\in \F_q^{m_j\times n}$ such that the family $(h_1,\ldots,h_t)$ is two-way $\kappa$-robust (see Definition~\ref{def:h-robust}). Let $C=C_*(G;\{A_j\};\{h_j\})$ be the length-$(t+1)$ chain complex over $\F_q$ from Definition~\ref{def:x-complex}. Then there is a $c=\kappa^t \exp(-O(t^2))$ such that $C$ has the following properties.
\begin{enumerate}
\item The space $C_k$ of $k$-chains has dimension $D_k=  N \  n^k 2^{t-k} \sum_{\substack{T\subseteq\{1,\ldots,t\}\\|T|=t-k}}\prod_{j\in T} m_j$.
\item The co-chain complex $C^*$ has  co-systolic distance $\mu_\cosyst(k)$ for every $0\leq k \leq t-1$ and co-cycle expansion $\eps_\cocyc(k)$  for every $0\leq k \leq t-2$, where
\begin{equation}\label{eq:main-1}
  \mu_\cosyst(k) \,\geq\,   \frac{c-\lambda}{2^{2t}n^t} r |X(k)|  \qquad\text{and}\qquad \eps_\cocyc(k) \,\geq\,   \frac{c-\lambda}{(2nt)^{2t+1}} r\;.
\end{equation}
\item The chain complex $C_*$ has systolic distance  $\mu_\syst(k)$ for every $1\leq k \leq t$ and cycle expansion  $\eps_\cyc(k)$ for every $2\leq k \leq t$, where
  \begin{align}
    \mu_\syst(k) &\geq   \frac{1}{(2nt)^t} \mu_\cosyst(t-k)\;, \notag\\
    \eps_\cyc(k) &\geq \min\Big\{ \frac{ \eps_\cocyc(t-k)}{2(t^22^{2t} n^{t+1})^t}\,,\; \frac{1}{(2nt)^t}\frac{\mu_\cosyst(t-k+1)}{|X(k)|}\Big\}\;.\label{eq:main-2}
  \end{align}
\end{enumerate}
\end{theorem}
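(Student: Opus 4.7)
The first statement (the dimension formula) is simply Lemma~\ref{lem:x-dim}, which follows by direct counting, so the work is in Parts 2 and 3.

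For Part 2 (co-systolic distance and co-cycle expansion), the plan is to reduce both quantities to a lower bound on the locally co-minimal distance $\dist_\coloc(k)$ via Lemma~\ref{lem:cosys-exp}, and then prove the bound $\dist_\coloc(k) \geq \frac{c-\lambda}{2^{2t} n^t} r |X(k)|$ using a random walk argument. Concretely, I would proceed as follows. Suppose $x \in C^k(X,\sheaf)$ satisfies $\delta(x)=0$, is locally co-minimal, and nonzero; let $\mathcal{A} \subseteq X(k)$ be its set of active faces. I will define a random walk $W$ on $X(k)$ that mixes several ``down-then-parallel-then-up'' walks, going down to an $(k-1-j)$-face, stepping parallel, and going back up, for various $j \in \{0,\ldots,k-1\}$. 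Two things must be shown: (a) this walk has second eigenvalue bounded by some $\lambda'$ expressible in terms of $\lambda$, using the assumed spectral expansion of each $\Cay(G,A_i)$ together with the product structure of the complex; and (b) the condition $\delta(x)=0$ combined with local co-minimality and two-way $\kappa$-robustness of $(h_1,\dots,h_t)$ implies that $\mathcal{A}$ does not expand under $W$, i.e.\ most $W$-neighbors of an active face are active. The robustness hypothesis enters at step (b): when examining the link of a vertex, the restriction of $x$ is a near-codeword of a tensor-type code built from the $h_i$'s, and robustness forces disagreements to propagate. Combining (a) and (b) via standard expander-mixing gives either $\mathcal{A}=\emptyset$ (contradicting $x\neq 0$) or $|\mathcal{A}| \geq \Omega(r |X(k)|)$, producing the claimed bound. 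The co-cycle expansion bound then follows from Lemma~\ref{lem:cosys-exp} by plugging in $\dist_\coloc(k+1)$ and using the bound $\max_v |X_{\geq v}(k)| \leq \binom{t}{k} n^k$.

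The main obstacle in Part 2 is identifying the correct random walk $W$ so that (a) its spectral gap inherits cleanly from the one-dimensional Cayley expansion and (b) the robustness of the tuple $(h_1,\ldots,h_t)$ (rather than merely pairwise agreement testability, which is not enough for $t \geq 3$) suffices to argue non-expansion of $\mathcal{A}$. The ``mixture over $j$'' structure is needed to avoid link-like obstructions, analogous to how the passage from \cite{kaufman2016isoperimetric} to \cite{EvraK16} required more sophisticated walks.

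For Part 3 (systolic distance and cycle expansion), the plan is a reduction to Part 2 applied to a ``rotated'' chain complex built from the dual local codes $h_i^\perp$, implemented by the arrow-chasing argument sketched in the techniques section. Given a $k$-chain $x$ with $\partial x = 0$, define for each vertex $v$ the \emph{local view} $x_v$ of $x$ at $v$, living in an auxiliary local chain complex whose boundary is exact (this is the content of the local complex introduced in Section~\ref{sec:loc-chain}). Because $\partial x = 0$, each $x_v$ is a local cycle, so I can pick local preimages $z_v$ with $\partial z_v = x_v$. These local preimages generally disagree on shared faces, so I record their pairwise differences as local views on edges, which by commutativity of the double complex are again local cycles; lift them again to preimages, and iterate. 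The induction terminates in one of two ways: either at some intermediate level all differences vanish, in which case the accumulated corrections yield a global $z$ with $\partial z = x$ of controlled weight; or we reach the top dimension $t$ without agreement, at which point the top-level discrepancies form a tensor codeword that can be interpreted as a $(t-1)$-cocycle of a dual sheaf complex built from $h_i^\perp$. Applying Part 2 (co-cycle expansion) to that dual complex yields a pre-image cocycle, which when encoded back provides the needed corrections to push down the chain. Tracking weights through the $t$ rounds of lifting and chasing yields the stated bounds, with the two terms in the minimum for $\eps_\cyc(k)$ reflecting respectively the co-cycle-expansion case and the ``agreement achieved early'' case. The hardest step here is verifying that the two boundary maps of the double complex commute and that the weight bounds accumulate as factors of $(2nt)^t$ rather than blowing up multiplicatively in an uncontrolled way across the $t$ inductive rounds.
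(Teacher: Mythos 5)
Your proposal matches the paper's proof essentially step for step: Part 1 is Lemma~\ref{lem:x-dim}; Part 2 is obtained by bounding $\dist_\coloc(k)$ via the down-parallel-up random walk and the robustness of the local complexes (this is Proposition~\ref{prop:codistance-t}) and then invoking Lemma~\ref{lem:cosys-exp}; Part 3 is the arrow-chasing reduction on the double complex $C^i(X,\sheaf_k)$ to co-cycle expansion of the dual complex $\tilde{X}$ built from $\{h_i^\perp\}$ (this is Proposition~\ref{prop:distance}). The actual proof of Theorem~\ref{thm:main} in the paper is then just bookkeeping that cites those two propositions and substitutes the bounds from Lemma~\ref{lem:geom-2}, so your plan is the same argument at the level of the underlying lemmas.
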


We note that we did not attempt to overly optimize the bounds in Theorem~\ref{thm:main}; in particular the dependence on $t$ may be loose. To understand the bounds, one can consider that the main asymptotic parameter is $N=|G|$, which grows to infinity. We think of $t$ as a constant, such as $t=4$.
The parameter $n$ is also ideally a constant, but the parameter $r$ is not, $r = \Omega((\log N)^{-(t-1)})$.
This is chosen such that $\lambda$ is sufficiently small compared to $\kappa$ so that the bounds in~\eqref{eq:main-1} are positive.
This is the origin of the loss of the polylogarithmic factor in distance and soundness.


\begin{proof}[Proof of Theorem~\ref{thm:main}]
The first item is shown in Lemma~\ref{lem:x-dim}. 
 For the second item, we first note the bound
\[ \dist_\coloc(k) \,\ge\, \frac{1-\lambda C_1}{C_2} r |X(k)|\;,\]
valid for $0\leq k < t$, that follows from Proposition~\ref{prop:codistance-t}. The prefactor on the right-hand side can be re-written as
\begin{align*}
  \frac{1-\lambda C_1}{C_2} &= \big(1/C_1-\lambda \big) \frac{C_1}{C_2}\\
  &\geq \frac{1}{2^{2t}n^t} ( \kappa^{t}2^{-4t^2} - \lambda)\;,
\end{align*}
where the second line uses $t^2 2^{2t}\leq 2^{3t^2}$ for $t\geq 1$. We then use Lemma~\ref{lem:cosys-exp} to conclude the claimed bounds. For $\eps(k)$, we also use $|X(k+1)|/|X(k)| \leq 2nt$ and $|X_{\geq v}(k)|\leq (2n)^t$, which follow from Lemma~\ref{lem:geom-2}. Finally, the bounds on $\tilde{\eps}(k)$ and $\tilde{\mu}(k)$ follow immediately from Proposition~\ref{prop:distance}.
 \end{proof}

We can apply Theorem~\ref{thm:main} to obtain quantum codes with parameters described in the following corollary. Since a quantum CSS code can be defined from any length-$3$ complex, we have the choice to locate the qubits on the space of $k$-chains for any $i\in \{1,\ldots,t-1\}$. However, to obtain a bound on the local testability soundness of the quantum code, we need a complex of length at least $5$. To simplify the discussion of parameters we assume that, of the codes $\mC_i=\ker h_i$, exactly $i$ have rate close to $1$, and the remaining $(t-i)$ have rate close to $0$. This ensures that most of the dimension of the space $C_i$ is concentrated on a single type of $i$-face, and guarantees that the resulting code has positive dimension. Of course, other settings of parameters are possible.

\begin{corollary}\label{cor:main-code}
With the same setup as in Theorem~\ref{thm:main}, assume $t\geq 4$ and fix $2\leq i \leq t-2$. Assume that the parameters $m_1,\ldots,m_t$ and $n$ are chosen such that $m_j=\Theta(n)$ for all $j$, and furthermore $m_j/m_{j'}$ is at least a sufficiently large constant (depending on $t$) when $j\in\{1,\ldots,i\}$ and $j'\in \{i+1,\ldots,t\}$. Assume that, for these parameters there is a family $(h_1,\ldots,h_t)$ of two-way $\kappa$-robust codes, where $\kappa$ is small enough, as a function of $\lambda$ and $t$, so that the lower bounds in~\eqref{eq:main-1} are positive.\footnote{We emphasize that one can generally expect to achieve $\lambda$ that is an arbitrarily small constant, or even an inverse polynomial in the degree $n$. For constant $t$, the requirement on $\kappa$ is simply that it should be a small enough constant, or even just that $\kappa\geq 1/n^\eps$ for small enough $\eps>0$.}
Let $\code = CSS(\delta_i,\partial_i)$ be the CSS code associated with the space $C_i$ and the boundary and co-boundary operators on it,
  where $C_i$ is viewed as a $\F_2$-vector space and $\delta_i,\partial_i$ are viewed as $\F_2$-linear maps.
Then $\code$ is an $[\ell_i,k_i,d_i]$ CSS code with parity checks of weight $O((\log_2 q) tn)$ and such that
\begin{enumerate}
\item $\ell_i = (\log_2 q) D_i = (\log_2 q) N \cdot \Theta(n)^t$,
\item $k_i = \Omega((\log_2 q) \ell_i)$,
\item $d_i = r N\cdot (nt)^{-O(t)} \exp(-O(t^2))$.
\end{enumerate}
Moreover, $\code$ is a qLTC with soundness $\rho \geq r (\log_2 q)^{-1} (nt)^{-O(t)} \exp(-O(t^2))$.
\end{corollary}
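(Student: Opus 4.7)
My plan is to apply Theorem~\ref{thm:main} to the chain complex $C = C_*(G;\{A_j\};\{h_j\})$ to read off its expansion parameters, and then invoke Lemma~\ref{lem:soundness-param} to translate these into the claimed code parameters. The length, distance, soundness and check-weight estimates will follow by direct substitution; the only nontrivial point is the dimension bound, which requires exploiting the product structure of the local sheaf complex.

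\textbf{Length, distance, soundness, check weight.} By Lemma~\ref{lem:x-dim}, $D_i = Nn^i 2^{t-i}\sum_{|T|=t-i}\prod_{j\in T}m_j$, which under the assumption $m_j=\Theta(n)$ equals $N\cdot\Theta(n)^t$, so $\ell_i=(\log_2 q)D_i$; the maximum block dimension is $M_i=\max_{|S|=i}\prod_{j\notin S}m_j=\Theta(n^{t-i})$. For the distance, Lemma~\ref{lem:soundness-param} gives $d_i\ge\min\{\mu_\syst(i),\mu_\cosyst(i)\}$, and plugging in~\eqref{eq:main-1}--\eqref{eq:main-2} together with $|X(i)|=\Theta(Nn^t)$ yields $d_i = rN(nt)^{-O(t)}\exp(-O(t^2))$. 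For soundness, the ratios $D_i/D_{i\pm 1}$ are $\Theta(1)$ in $N$ (with $t$-dependent constant), so~\eqref{eq:qltc-bound} combined with $\eps_\cyc(i),\eps_\cocyc(i)$ from Theorem~\ref{thm:main} gives $\rho\ge r(\log_2 q)^{-1}(nt)^{-O(t)}\exp(-O(t^2))$. For check weight, a row of $H_Z=\delta_i$ indexed by $(f'',c'')$ with $f''\in X(i+1)$ receives contributions only from pairs $(f',c')$ with $f'\precdot f''$: there are $2(i+1)\le 2t$ such lower faces, and for each the co-restriction map~\eqref{eq:co-rest-1} contributes at most $n$ nonzero $\F_q$-coordinates (one per nonzero entry of the relevant column of $h_j$), so the $\F_2$-row-weight is $O((\log_2 q)tn)$; the same estimate holds for $H_X=\partial_i$.

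\textbf{Dimension (main obstacle).} The hard step is $k_i=\Omega(\ell_i)$. By Lemma~\ref{lem:soundness-param}, the $\F_q$-code-dimension equals $\dim_{\F_q}H_i(C)$, so one must lower bound the $i$-th homology. The naive estimate $\dim H_i\ge D_i-D_{i-1}-D_{i+1}$ is non-positive already for $t=4,i=2$ under $m_j=\Theta(n)$, so a finer argument is needed. My approach is to combine three ingredients: (i) the Künneth decomposition of the local vertex-link cochain complex as the $t$-fold tensor product of the length-$2$ complexes $\F_q^{\hat A_j}\xrightarrow{h_j^T}\F_q^{A_j}$, which describes the local homology exactly; (ii) the asymmetric parameter choice $m_j/m_{j'}\gg 1$ for $j\le i<j'$, which concentrates the bulk of $D_i$ on a single type $S^*\subseteq\{1,\dots,t\}$ and tilts the Euler-characteristic-type invariant $\chi(C)=\sum_k(-1)^kD_k$ so that its contribution at level $i$ is $\Omega(D_i)$; and (iii) the systolic and co-systolic expansion bounds at neighbouring levels from Theorem~\ref{thm:main}, which together with the local Künneth picture let one upper bound the ``nuisance'' homologies $\dim H_j$ for $j\neq i$. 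Feeding these into the rearranged identity $\dim H_i=(-1)^i\chi(C)+\sum_{j\neq i}(-1)^{i-j}\dim H_j$ yields $\dim H_i=\Omega(D_i)$, provided the $m_j$-ratios are calibrated large enough; this is precisely where the constraint $2\le i\le t-2$ enters, since one needs at least one expanding level on each side of $i$, and where the exponential dependence on $t$ in the implicit constants arises from the simultaneous tracking of several adjacent boundary-map ranks.
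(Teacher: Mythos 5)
You correctly identify a genuine gap in the paper's proof of the dimension bound $k_i=\Omega(\ell_i)$. The paper's argument is to lower bound $k_i$ by $(\log_2 q)(D_i-D_{i-1}-D_{i+1})$ and claim this is $\Omega(D_i)$ ``given the assumption made on the parameters.'' That claim is false for $t\ge 3$: writing $m_j=c_jn$ with $c_j\in(0,1)$ one has $D_j = Nn^t\,2^{t-j}\sigma_{t-j}(c_1,\dots,c_t)$ and $\chi = \sum_j(-1)^jD_j = (-1)^tNn^t\prod_j(1-2c_j)$, so for $t=4,i=2$
\[
D_2-D_1-D_3 \;=\; \chi - D_0 - D_4 \;=\; Nn^4\Big(\textstyle\prod_j(1-2c_j)-1-16\prod_jc_j\Big),
\]
which is strictly negative for every choice of $c_j\in(0,1)$, since $\prod_j(1-2c_j)<1$. (The paper's bound does coincide with $-\chi$ and can be positive when $t=2,i=1$, the qLDPC case, which is presumably where the argument was imported from; it simply fails to generalize.) Your observation that a finer homological argument is required is therefore correct, and this is a real error in the paper's proof, not an omitted routine step.

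That said, your proposed repair is not yet a proof. The Euler-characteristic starting point is sound: under the stated sign pattern of the $c_j$, one does get $\chi = \Theta(D_i)$ with the right sign, and the identity $\dim H_i = (-1)^i\chi - \sum_{j\ne i}(-1)^{j-i}\dim H_j$ reduces the task to upper bounding $\dim H_j$ for $j\ne i$ of the same parity as $i$ (note your sign in that identity is off by one). But your ingredient (iii)---invoking the systolic and co-systolic bounds from Theorem~\ref{thm:main} to control these ``nuisance'' homologies---does not go through as stated: systolic/co-systolic distance bounds the minimum weight of a nontrivial (co)cycle, not the number of independent ones, so expansion of the complex gives no upper bound on $\dim H_j$. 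A correct argument would have to identify these boundary homology groups directly (e.g.\ via the description of $\ker\delta_0$ and of $C^t/\im\delta_{t-1}$ in terms of connected components of the underlying Cayley graphs and of the local tensor codes) and show they contribute $o(D_i)$, which is plausible but requires work you have not supplied. Your estimates for $\ell_i$, $d_i$, $\rho$, and the check weight do match the paper's route through Lemma~\ref{lem:soundness-param} and look fine.
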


\begin{proof}
The first item follows by definition of $\code$. For the second item, we will prove the lower bound on $k_i$ in Section~\ref{sec:rate} (see end of Section~\ref{sec:shift}). As shown in Lemma~\ref{lem:soundness-param} the distance is $d_i \geq\min \{\mu_\cosyst(i),\,\mu_\syst(i)\}$, giving the stated bound by Theorem~\ref{thm:main}. Similarly, the soundness is bounded using Lemma~\ref{lem:soundness-param}.
\end{proof}

As discussed earlier, if $n,t,q,r$ are all constant then we obtain a family of quantum LTC with constant weight parity checks, constant rate, relative distance and soundness. While the parameters $n,t,q$ can be chosen to be constants, we do not know how to construct the required $4$-dimensional cubical complex, with the right expansion properties, for constant $r$. However, as discussed in Section~\ref{sec:particular-instance-of-complex} we can construct it for $r = \Omega((\log |G|)^{-(t-1)})$.

\subsection{Instantiating $G$ and the $\{A_i\}$}
\label{sec:particular-instance-of-complex}

We provide two examples of $G$ and pairwise commuting $A_1, ..., A_t$. The first example is easier to understand; the second example has better parameters, and in particular leads to our main application, stated as Theorem~\ref{thm:qltc} in the introduction.

For both constructions, given target rates $(n-m_j)/n$ for the local codes $\code_j$ we fix a family $(h_1,\ldots,h_t)$ of parity check matrices for codes $\code_1,\ldots,\code_j\subseteq\F_q^n$ such that the $\code_j$ have the desired rate and moreover the family $(h_1,\ldots,h_t)$ is two-way $\kappa$-robust according to Definition~\ref{def:h-robust}. It follows from Theorem~\ref{thm:PK-future} and Lemma~\ref{lemma:prodexp-robust} that by choosing $q$ as a large enough power of $2$, a random $t$-tuple of codes satisfies this property for some $\kappa>0$ independent of $n$.

\subsubsection{First construction}

The first construction is obtained from Cayley graphs over $(\Z/2\Z)^{\log N}$.

Let $G = (\Z/2\Z)^{\log N}$.
Let $n=\lceil c \log N\rceil$, for a sufficiently large constant $c$ such that there exist sets $A_1,\ldots,A_t$ of size $n=|A_j|$ such that the graphs $\Cay(G,A_j)$ have spectral expansion at most $\lambda$, where $\lambda$ is sufficiently small compared to the robustness parameter $\kappa$. Since $\kappa$ is a constant independent of $n$ then we can always make $\lambda$ small enough by increasing $c$, independently of $n$. This is because it is well-known that for any target $\lambda$ (possibly depending on $\kappa$) there is a constant $c(\lambda)=O(1/\lambda^2)$ and subsets $A_j$ of $G$ of size $|A_j|=c(\lambda)\log|G|$ such that $\Cay(G,A_j)$ is $\lambda$-expanding. (This can be shown by a probabilistic argument~\cite{alon1994random}, and also by explicit constructions~\cite{AlGoHaPe}.)

Note that this construction is not ``fully explicit,'' because we do not have a way to check in time $\poly(N)$ that a given choice of codes $(h_1,\ldots,h_t)$ is indeed two-way $\kappa$-robust. This is because according to Theorem~\ref{thm:PK-future} we need to choose $q=2^{(n+3)^t}$, which is super-polynomial in $N$.

\subsubsection{Second example}
\label{sec:ex2}

The second example is beyond Cayley graphs, but can be described as abelian lifts and is tightly related to Cayley graphs over $\Z/\ell \Z$.

Let $G_0=(V_0,E_0)$ be an $n$-regular Cayley expander graph on $n'$ vertices. Based on known explicit constructions of expanders, $n$ can be chosen any large enough constant, and $n'$ arbitrarily large.
In \cite[Theorem 1.3]{JeronimoMO0T22} the authors describe an $H$-lift of this graph for an abelian group $H$, $|H| = \exp(\Theta(n'))$, which is given by associating with each edge $e\in E_0$ an element $s\in H$. This graph, which we call $R$, has $N = |V_0| |H|$ vertices and has second largest eigenvalue $O(\sqrt{n} \log n)$, making it nearly-Ramanujan.
Furthermore, the graph $R$ can be constructed efficiently in time polynomial in $N$.

Let $G_1=(V_1,E_1)$ be the double cover of $G_0$, with vertex set $V_1=V_0\times\set{0,1}$. Furthermore, let us assume a fixed numbering of the $n$ edges out of each vertex, so that for each $i\in[n]$ we have an action on $H\times V_0\times \set{0,1}$ obtained by taking an element $(h,v,b)$ to $(s\cdot h,v',1-b)$ where $(v',1-b)$ is the $i$-th neighbor of $(v,b)$ in $G_1$ and $s\in H$ is associated with the $i$-th edge out of $v$. This set of $n$ actions gives a $n$-regular graph that is the $H$-lift of $G_1$ given by the association of $H$-elements to edges.

We now move to the $t$-fold product, letting $G= H\times V_0^t$, and letting $X(0) = G \times \set{0,1}^t$. We have $t$ sets $A_1,\ldots,A_t$ of permutations on $X(0)$, where $A_j= \set{s_j^1,\ldots,s_j^d}$ and
\[ (h,v_1,\ldots,v_t,b_1\ldots,b_t) \stackrel {s_j^i}{\to} (s\cdot h,v_1,\ldots ,v'_j,\ldots,v_t,b_1,\ldots ,1-b_j,\ldots,b_t)  \] where $(v'_j,1-b_j)$ is the $i$-th neighbor of $(v_j,b_j)$ in $G_1$, and $s$ is associated with the $i$-th neighbor of $v_j$.

It is easy to check that every pair of permutations $s\in A_j$ and $s'\in A_{j'}$ pairwise commute, and therefore $X_0$ together with $A_1,\ldots,A_t$ give rise to a $t$-dimensional complex $X$ as described in Section~\ref{sec:geometry}.

We make the following two observations about this construction. Firstly, for any fixed $j\in [t]$ the graph obtained by connecting every $v\in X(0)$ to the $n$ vertices obtained by the $n$ permutations from $A_j$ is a graph consisting of $V_1^{t-1}$ copies of the expander graph $R$. Since $|V_1|=O(\log N)$, we obtain that the graphs $\Cay(G,A_i)$ are $\lambda$-expanding up to size $r|G|$, where $\lambda$ is the normalized second eigenvalue of $R$ and $r=\Omega(1/(\log N)^{t-1})$. Secondly, $n$ can be chosen as any constant large enough such that if $\lambda$ is the normalized second eigenvalue of $R$ then $\lambda$ is small enough compared to $\kappa$ for the lower bounds in~\eqref{thm:main} to be positive. This is always possible because the universal constant $\kappa$ obtained from Theorem~\ref{thm:robust} does not depend on $n$ (only on $t$), whereas $\lambda$ goes to zero as $n$ grows. Because $n$ is a constant, we can exhaustively search for a tuple $(h_1,\ldots,h_t)$ that is two-way $\kappa$-robust.

Combining these observations lead to the following main instantiation of Theorem~\ref{thm:main}.

\begin{corollary}\label{cor:main-cor}
Let $t\geq 2$ be an integer. There is a polynomial-time algorithm that on input an integer $N$ (in unary) returns a description of $(G;\{A_i\})$ and $\{h_i\}$ as above such that the following hold.
\begin{enumerate}
\item The space $C_k$ of $k$-chains has dimension $D_k= \Theta(N)$\ ;
\item The co-chain complex $C^*$ has  co-systolic distance $\mu_\cosyst(k) = \Theta(|X(k)|/(\log N)^{t-1})$ for every $0\leq k \leq t-1$ and co-cycle expansion $\eps_\cocyc(k)=\Theta(1/(\log N)^{t-1})$  for every $0\leq k \leq t-2$\ ;
\item The chain complex $C_*$ has systolic distance  $\mu_\syst(k) = \Theta(|X(k)|/(\log N)^{t-1})$ for every $1\leq k \leq t$ and cycle expansion  $\eps_\cyc(k) = \Theta(1/(\log N)^{t-1})$ for every $2\leq k \leq t$.
\end{enumerate}
Here, the constants implicit in the $\Theta(\cdot)$ notation can depend on $t$.
\end{corollary}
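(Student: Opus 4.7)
The plan is to derive the corollary by instantiating Theorem~\ref{thm:main} with the explicit complex and local codes described in Section~\ref{sec:ex2}. First I would fix $t$ and choose $n$ to be a sufficiently large constant (depending on $t$ and on the universal constant $\kappa$ from Theorem~\ref{thm:robust}) so that the normalized second eigenvalue $\lambda$ of the near-Ramanujan abelian-lift expander $R$ of \cite[Theorem~1.3]{JeronimoMO0T22} satisfies $\lambda < \kappa^t 2^{-O(t^2)}$, making the prefactors in~\eqref{eq:main-1} strictly positive. Given $N$ as input, choose $n'$ so that the resulting $|G|$ matches $N$ up to constants, then run the polynomial-time procedure of \cite{JeronimoMO0T22} to produce the lift $R$, pass to its double cover, and build the $t$-fold product as in Section~\ref{sec:ex2}. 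This yields $G$ and pairwise commuting sets $A_1,\dots,A_t$ of size $n$, and by construction the algorithm runs in time $\poly(N)$.

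Next I would verify that $\Cay(G,A_j)$ is $\lambda$-expanding up to size $r|G|$ in the sense of Definition~\ref{def:x-expand}. By construction, freezing the $t-1$ coordinates in directions $j'\ne j$ decomposes $\Cay(G,A_j)$ into $|V_1|^{t-1}$ vertex-disjoint copies of the graph $R$, each having second eigenvalue at most $\lambda$. Since $|V_1|=\Theta(n')=\Theta(\log N)$ (because $|H|=\exp(\Theta(n'))$ dominates $|G|$ and hence $\log N = \Theta(n')$), each connected component has size $\Omega(|G|/(\log N)^{t-1})$, so the required expansion-up-to-size property holds with $r=\Omega(1/(\log N)^{t-1})$. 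For the sheaf, I would choose a two-way $\kappa$-robust tuple $(h_1,\dots,h_t)$ with $m_j = \Theta(n)$: existence follows from Theorem~\ref{thm:robust} and Lemma~\ref{lemma:prodexp-robust} applied with $q=2^{(n+3)^t}$, and since $n,t,q$ are all constants, such a tuple can be found in $O(1)$ time by exhaustive search (and verified by explicitly checking the product-expansion inequalities on a finite set of vectors).

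I would then invoke Theorem~\ref{thm:main} on $(G;\{A_i\};\{h_i\})$. Item~(1) of the corollary is immediate from Lemma~\ref{lem:x-dim}: since $n,t$ and the $m_j$ are constants, $D_k = N\cdot n^k 2^{t-k}\sum_{|T|=t-k}\prod_{j\in T} m_j = \Theta(N)$. Item~(2) follows from~\eqref{eq:main-1}: the prefactor $(c-\lambda)/(2^{2t}n^t)$ is a positive constant by our choice of $n$, so $\mu_\cosyst(k) = \Omega(r|X(k)|) = \Theta(|X(k)|/(\log N)^{t-1})$ and $\eps_\cocyc(k) = \Omega(r) = \Theta(1/(\log N)^{t-1})$. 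The matching upper bounds are trivial (co-cycle expansion is at most $1$, and co-systolic distance is at most $|X(k)|$, but the stated $\Theta$ asymptotics are in the $r$-regime, which is sharp up to constants). Item~(3) follows by plugging the bounds of~(2) into~\eqref{eq:main-2}: both the $\eps_\cocyc(t-k)/(\text{const})$ term and the $\mu_\cosyst(t-k+1)/|X(k)|$ term in the minimum scale as $\Theta(1/(\log N)^{t-1})$, and similarly for $\mu_\syst(k)$.

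The main obstacle I anticipate is verifying the spectral-expansion claim carefully, specifically confirming that the $H$-lift of the $t$-fold product structure really decomposes in each direction as a disjoint union of $|V_1|^{t-1}$ copies of $R$ (rather than some more entangled graph), and that the resulting $r=\Omega(1/(\log N)^{t-1})$ is exactly what is needed for $\lambda$ to dominate in Definition~\ref{def:x-expand}. The rest of the proof is bookkeeping: ensuring $n$ is large enough relative to the fixed $\kappa$ from Theorem~\ref{thm:robust}, confirming constant-time exhaustive search for the local codes, and substituting the asymptotic values of $r$ and the constants into the bounds of Theorem~\ref{thm:main}.
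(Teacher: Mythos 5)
Your proposal is correct and follows essentially the same route as the paper: the paper's argument for Corollary~\ref{cor:main-cor} is exactly the discussion in Section~\ref{sec:ex2} (the abelian-lift construction of \cite{JeronimoMO0T22}, the observation that each $\Cay(G,A_j)$ decomposes into $|V_1|^{t-1}$ disjoint copies of the near-Ramanujan lift $R$, yielding $r=\Omega(1/(\log N)^{t-1})$, plus exhaustive search for a two-way $\kappa$-robust tuple over the constant-size field $q=2^{(n+3)^t}$), followed by plugging into Theorem~\ref{thm:main}. You correctly pinpoint the one step that warrants care (that freezing the other $t-1$ coordinate directions yields vertex-disjoint copies of $R$) and correctly note that the constant field size is what makes the local-code search explicit here, unlike in the first construction of Section~3.5.1.
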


We also obtain a corresponding instantiation of Corollary~\ref{cor:main-code}, which leads to the previously stated Theorem~\ref{thm:qltc}.

\begin{corollary}\label{cor:main-code-explicit}
There is a polynomial-time algorithm that on input an integer $N$ (in unary) returns explicit descriptions of parity check matrices $H_X$ and $H_Z$ on $\Theta(N)$ qubits such that the following hold:
\begin{enumerate}
\item The weight of each row of $H_X$ and $H_Z$ is $\Theta(1)$\ ;
\item The code $\code=CSS(H_X,H_Z)$ has dimension $\Theta(N)$ and distance $\Theta(N/(\log N)^3)$\ ;
\item $\code$ is a qLTC with soundness $\rho=\Omega(1/(\log N)^3)$.
\end{enumerate}
\end{corollary}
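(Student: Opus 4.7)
The plan is to invoke Corollary~\ref{cor:main-cor} (the explicit instantiation of Theorem~\ref{thm:main}) with $t=4$ and then translate its conclusions about systolic/co-systolic distance and cycle/co-cycle expansion into quantum CSS parameters by applying Lemma~\ref{lem:soundness-param}, following the template of Corollary~\ref{cor:main-code}. Setting $t=4$ is the reason the excerpt's polylogarithmic factors come out as $(\log N)^{t-1}=(\log N)^3$.

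First I would fix $t=4$ and choose the middle level $i=2$ at which to locate the qubits, so that the length-$3$ quantum subcomplex is $C_{i-1}\xrightarrow{\partial_i^T}C_i\xrightarrow{\delta_i}C_{i+1}$. Using the construction of Section~\ref{sec:ex2}, I obtain in time $\poly(N)$ a description of $(G;\{A_j\}_{j=1}^4)$ whose Cayley graphs are $\lambda$-expanding up to size $r|G|$ with $r=\Omega(1/(\log N)^3)$ and $\lambda$ an arbitrarily small constant. For the local codes $h_1,\ldots,h_4\in\F_q^{m_j\times n}$ of constant size $n$, I would select $q=2^{(n+3)^t}$ and then by exhaustive search (in $\poly(N)$ time, since $n,t,q=O(1)$) find a tuple $(h_1,\ldots,h_4)$ that is two-way $\kappa$-robust, invoking Theorem~\ref{thm:PK-future} together with Lemma~\ref{lemma:prodexp-robust} to guarantee such a tuple exists with $\kappa=\Omega(1)$. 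I would choose the rates of the $h_j$ to satisfy the hypothesis of Corollary~\ref{cor:main-code}, namely $m_j=\Theta(n)$ for all $j$ with $m_j/m_{j'}$ a sufficiently large constant for $j\in\{1,2\}$, $j'\in\{3,4\}$; this ensures that $D_i$ is dominated by a single type of $i$-face so that the CSS code has positive rate.

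Next I would apply Corollary~\ref{cor:main-cor} to extract the quantitative parameters of the resulting chain complex $C_*(G;\{A_j\};\{h_j\})$: every $D_k=\Theta(N)$, the co-systolic distance in dimensions $0\leq k\leq 3$ satisfies $\mu_\cosyst(k)=\Theta(|X(k)|/(\log N)^3)$, the co-cycle expansion in dimensions $0\leq k\leq 2$ satisfies $\eps_\cocyc(k)=\Theta(1/(\log N)^3)$, and the corresponding bounds hold for $\mu_\syst$ and $\eps_\cyc$. Finally I would define $H_X,H_Z$ from $\partial_i,\delta_i$ at $i=2$, viewed over $\F_2$ (since $\F_q$ has characteristic $2$), and read off the CSS parameters using Lemma~\ref{lem:soundness-param}: the number of qubits is $(\log_2 q)D_i=\Theta(N)$, the dimension is $\Omega(D_i)=\Omega(N)$ by the rate assumption, the distance is at least $\min\{\mu_\syst(2),\mu_\cosyst(2)\}=\Theta(N/(\log N)^3)$, each parity check has weight $O(tn\log_2 q)=O(1)$ (since $t$, $n$, $q$ are all constants depending only on $t$), and the qLTC soundness lower bound reduces to $\rho=\Omega(\min\{\eps_\cyc(2),\eps_\cocyc(2)\})=\Omega(1/(\log N)^3)$ after absorbing constants.

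The only genuine obstacle is the cleanup at the boundary of the applicability of Lemma~\ref{lem:soundness-param}, namely verifying that the ratios $D_i/D_{i\pm 1}$ and the local-dimension factor $M_i$ are bounded by constants depending only on $t$ (they are, by Lemma~\ref{lem:x-dim} and the rate choice), and checking that the explicitness is preserved end-to-end: the description of $(G;\{A_j\})$ is produced in time $\poly(N)$ by the abelian-lift construction of Section~\ref{sec:ex2}, and the exhaustive search for $(h_1,\ldots,h_4)$ runs in time depending only on $n,t,q$, hence $O(1)$. Everything else is bookkeeping: plugging in $t=4$ to convert the $(\log N)^{t-1}$ factors into $(\log N)^3$ and collecting the three bulleted conclusions of the corollary.
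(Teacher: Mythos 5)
Your proposal is correct and follows essentially the same route as the paper: instantiate Corollary~\ref{cor:main-cor} (the abelian-lift complex of Section~\ref{sec:ex2}) with $t=4$ and $i=2$, fix a two-way robust tuple $(h_1,\ldots,h_4)$ of constant size by exhaustive search (justified by Theorem~\ref{thm:PK-future} and Lemma~\ref{lemma:prodexp-robust}), choose rates as in Corollary~\ref{cor:main-code} to guarantee constant CSS rate, and read off the $[\ell,k,d]$ and soundness parameters via Lemma~\ref{lem:soundness-param}, with all $t,n,q$-dependent factors absorbed into constants and $r=\Omega(1/(\log N)^3)$ supplying the polylog loss. This matches the paper's (terse) derivation of the corollary as the $t=4$ case of Corollary~\ref{cor:main-code}.
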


\section{A lower bound on the code rate}
\label{sec:rate}

A natural approach to bounding the code dimension is to count the number of qubits and the number of X and Z checks.
The number of qubits, X and Z checks are
  $\dim_{\F_2} C_i(X, \cF)$, $\dim_{\F_2} C_{i-1}(X, \cF)$, and $\dim_{\F_2} C_{i+1}(X, \cF)$, respectively.
This leads to a lower bound on the code dimension:
  $k \ge \dim_{\F_2} C_i(X, \cF) - \dim_{\F_2} C_{i-1}(X, \cF) - \dim_{\F_2} C_{i+1}(X, \cF)$.
However, when applied to our construction,
  this bound yields a negative value, making it trivial.

To obtain a nontrivial lower bound on the code dimension,
  we need to take into account the linear dependencies among the X and Z checks.
Rather than simply counting the total number of checks,
  we refine our estimate by counting only the number of linearly independent checks.
We formalize this refinement by establishing a correspondence between two chain complexes:
  our original complex, where the $i$-cochain consists of $i$-cells,
  and an alternative complex that combines cells of different dimensions.
This relation is analogous to the relation between
  the constructions in \cite{DHLV} and \cite{PK2}.
Conceptually, the new chain complex reorganizes the qubits and checks to remove redundant checks that are linearly dependent on others.
Within this modified framework,
  the counting of qubits and checks leads to a nontrivial lower bound on the code dimension.

\subsection{Shifting the complex}
\label{sec:shift}

We now describe the construction of the new chain complex.
Recall that our main construction defines
  $C^i(X, \cF) = \bigoplus_{f \in X(i)} V_f$
  where we take all cells of dimension $i$ to form $C^i$.
Notice that the dimension of a cell $f$ can be expressed as $|\type(f)|$
  and the space of all possible types $\type(f) \subseteq [t]$
  can be viewed as a $t$-dimensional hypercube.
Therefore, the process of forming a chain complex from a cubical complex
  can be rephrased as choosing a diagonal direction of the $t$-dimensional hypercube.
In the case of $C^i(X, \cF)$,
  the diagonal direction goes from $(0, 0, ..., 0)$ to $(1, 1, ..., 1)$.

More generally, given a subset $T \subseteq [t]$,
  it specifies a direction from $1_T$ to $1_{[t] - T}$,
  where $1_T$ is the vector whose $i$-th coordinate is $1$ if $i \in T$ and $0$ otherwise.
This induces the following chain complex:
\begin{equation}
  C^i(X, \cF, T) = \bigoplus_{f \in X, |\type(f) \Delta T| = i} V_f
\end{equation}
where $\Delta$ is the symmetric difference of the two sets.
In particular, $C^i(X, \cF) = C^i(X, \cF, \emptyset)$.

The coboundary map of this new chain complex,
  $\delta: C^i(X, \cF, T) \to C^{i+1}(X, \cF, T)$,
  is again a map between faces with incidence relations,
  just like the original chain complex.
In particular, the coboundary maps can be decomposed into maps consist of
  $\delta_{f, f'}: V_f \to V_{f'}$,
  for $f, f' \in X$,
  $|\type(f) \Delta T| = i$,
  and $|\type(f') \Delta T| = i+1$.
If $f \succdot f'$,
  we set the map to be $\corest_{f, f'}$,
  as defined in~\eqref{eq:co-rest-1}.
If $f \precdot f'$,
  we set the map to be $\rest_{f, f'}$,
  as defined in~\eqref{eq:rest-1}.
If there is no incidence relation between $f$ and $f'$,
  we set the map to be $0$.

The coboundary maps satisfy $\delta \circ \delta = 0$
  because, in some sense, the corestriction and restriction maps
  commute with themselves and with each other.
To check $\delta \circ \delta = 0$,
  it is equivalent to check
  $\sum_{f' \in X} \delta_{f', f''} \delta_{f, f'} = 0$
  for every $f, f'' \in X$.
The LHS is nontrivial only when $f$ and $f''$ are both incident to a common face and
\begin{itemize}
  \item $\type(f'') = \type(f) \cup \{j, k\}$,
  \item $\type(f'') = \type(f) \cup \{j\} - \{k\}$,
  \item $\type(f'') = \type(f) - \{j, k\}$,
\end{itemize}
for $j, k \in [t]$ and $j \ne k$.
In each case, there are exactly two $f'$, say $f_1', f_2'$, with nontrivial $\delta_{f', f''}$ and $\delta_{f, f'}$.
What remains is to verify is the identity
  $\delta_{f_1', f''} \delta_{f, f_1'} = \delta_{f_2', f''} \delta_{f, f_2'}$,
  which follows from the definitions of the restriction and corestriction maps given in Section~\ref{sec:co-maps}.

This new family of chain complexes is, in a certain sense, equivalent to
  the original chain complex from our main construction.
In particular, for any subset $T \subseteq [t]$,
  there exists a sheaf $\cF_T$ (which we will specify below)
  such that the chain complex $C^*(X, \cF)$ is chain homotopy equivalent to
  $C^*(X, \cF_T, T)$.
Furthermore, this chain homotopy equivalence is ``sparse'',
  which implies that the code parameters induced from the two chain complexes
  are the same up to constant factors.
That means if we only care about
  the asymptotic scaling of the code parameters
  it is sufficient to study $C^*(X, \cF)$.
See \cite[App A]{hastings2021fiber} and \cite[Sec 2.1]{lin2024transversal}
  for more detail on the notion of chain homotopy equivalence.

We now formally define $\cF_T$.
We write $\cF(\{h_i\}_{i=1}^t)$ to denote the sheaf constructed from
  $h_i: \F_q^{A_i} \to \F_q^{\hat A_i}$.
Recall that $h_i^\perp: \F_q^{A_i} \to \F_q^{A_i - \hat A_i}$
  is a linear map that satisfies $h_i^\perp h_i^T = 0$,
  i.e. the code with parity check matrix $h_i^\perp$ is dual to the code with parity check matrix $h_i$.
Let $\cF_T$ be the sheaf $\cF(\{h_i'\}_{i=1}^t)$,
  where $h'_i = h_i$ if $i \notin T$
    and $h'_i = h_i^\perp$ if $i \in T$.
In some sense,
  the structures are flipped when $i \in T$.

Rather than establishing a full chain homotopy equivalence,
Instead, for the purpose of deriving a lower bound on the dimension
  it suffices to prove a weaker statement:
  that the homologies of the two complexes have the same dimension. In the next subsection we show the following.

\begin{theorem} \label{thm:dimension}
  For any $0 \le i \le t$ and $T \subseteq [t]$
  \begin{equation}
    \dim\, H^i(X, \cF) = \dim\, H^i(X, \cF_T, T).
  \end{equation}
\end{theorem}

Using the theorem, we can show the lower bound on the rate $k_i$ claimed in Corollary~\ref{cor:main-code}. For this we require the following simple estimate.

\begin{lemma}
  Let $D_i' = \dim\,C^i(X, \cF(h'), T)$. Then
  \begin{equation}
    D_i' = N \sum_{S \subseteq [t], |S \Delta T| = i} n^{|S|} 2^{t-|S|} \prod_{j \notin S} m'_j
  \end{equation}
  where $m'_j$ is the dimension of the codomain of $h'_j$.
\end{lemma}
\begin{proof}
  We have that $D_i' = \sum_{f \in X, |\type(f) \Delta T| = i} \dim\, V_f$.
  For each $f$, $\dim\, V_f = \prod_{j \notin \type(f)} m'_j$.
  Moreover, we have $|X(\type(f))| = N n^{|\type(f)|} 2^{t-|\type(f)|}$,
    where $N$ is to choose $g \in G$,
    $n^{|\type(f)|}$ is to choose $a_j$ for $j \in \type(f)$,
    and $2^{t-|\type(f)|}$ is to choose $0$ or $1$ for $j \not\in \type(f)$.
\end{proof}

The results above lead to the bound on the code dimension claimed in Corollary~\ref{cor:main-code}.

\begin{proof}[Proof of the lower bound on $k_i$ in Corollary~\ref{cor:main-code}]
  \label{proof:rate}
  Choose $m_j = \nu n$ for $1 \le j \le i$
    and $m_j = (1-\nu) n$ for $i+1 \le j \le t$.
  Let $T = \{i+1, ..., t\}$.
  We have the following bound on $k_i$
  \begin{equation}
    k_i = \dim\, H^i(X, \cF) = \dim\, H^i(X, \cF_T, T) \ge D_i' - D_{i-1}' - D_{i+1}'
  \end{equation}
  Notice that $m_j' = \nu n$ for $1 \le j \le t$.
  Thus, $D_i' \ge N n^t$ from the term $S = \{1, ..., t\}$ alone.

  Notice that $\sum_{j=0}^t D_j' = N \prod_{j=1}^t (n + 2 m_j') = N n^t (1 + 2\nu)^t$.
  Thus, $D_{i-1}' + D_{i+1}' \le \sum_{j=0}^t D_j' - D_i' \le N n^t (1 + 2\nu)^t - N n^t$
    which implies $k_i \ge D_i' - D_{i-1}' - D_{i+1}' \ge 2 N n^t - N n^t (1 + 2\nu)^t$.
  Therefore, as long as $\nu < (2^{1/t} - 1) / 2$, we have $k_i = \Theta(\ell_i)$.
  When evaluating the dimension over $\F_2$,
    this leads to $\Theta((\log_2 q)\ell_i)$.
\end{proof}

\subsection{Proof of Theorem~\ref{thm:dimension}}

We prove Theorem~\ref{thm:dimension} using the following claims.

\begin{claim} \label{claim:ineq}
  For any $0 \le i \le t$ and $0 < j \le t$,
  \begin{equation}
    \dim\, H^i(X, \cF_{[j-1]}, [j-1]) \le \dim\, H^i(X, \cF_{[j]}, [j]).
  \end{equation}
\end{claim}

\begin{claim} \label{claim:eq}
  For any $0 \le i \le t$ and $0 \le j \le t$,
  \begin{equation}
    \dim \,H^i(X, \cF_{[j]}, [j]) = \dim\, H^{t-i}(X, \cF^\perp_{[t] - [j]}, [t] - [j]),
  \end{equation}
  where $\cF^\perp$ is the sheaf where every $h_k$ is replaced with $h_k^\perp$.
\end{claim}

\begin{proof}[Proof of Theorem~\ref{thm:dimension}]
  We first show the following corollary.
  \begin{corollary} \label{cor:eq}
    For any $0 \le i \le t$ and $0 < j \le t$,
    \begin{equation*}
      \dim\, H^i(X, \cF_{[j-1]}, [j-1]) = \dim\, H^i(X, \cF_{[j]}, [j]).
    \end{equation*}
  \end{corollary}
  \begin{proof}[Proof of Corollary~\ref{cor:eq}]
    By Claim~\ref{claim:ineq},
      we have
      $\dim H^i(X, \cF_{[j-1]}, [j-1]) \le \dim H^i(X, \cF_{[j]}, [j])$.
    Additionally, we have
      \begin{equation*}
        \dim H^{t-i}(X, \cF^\perp_{[t]-[j]}, [t]-[j]) \le \dim H^{t-i}(X, \cF^\perp_{[t]-[j-1]}, [t]-[j-1]),
      \end{equation*}
      where we reverse the ordering of the $t$ directions of the cubical complex,
      which maps $[t-j]$ to $[t] - [j]$ and $[t-j+1]$ to $[t]-[j-1]$.
    Together with Claim~\ref{claim:eq}, we have
      \begin{equation*}
        \dim H^i(X, \cF_{[j]}, [j]) \le \dim H^i(X, \cF_{[j-1]}, [j-1]).
      \end{equation*}
    Thus, $\dim H^i(X, \cF_{[j-1]}, [j-1]) = \dim H^i(X, \cF_{[j]}, [j])$
      as desired.
  \end{proof}

  By applying the corollary inductively, for any $0 \le j \le t$, we have
  \begin{equation*}
    \dim H^i(X, \cF) = \dim H^i(X, \cF_{\emptyset}, \emptyset) = \dim H^i(X, \cF_{[j]}, [j]).
  \end{equation*}
  Rearranging the order of the $t$ directions of the cubical complex
    maps $[j]$ to $T$,
    which gives the desired result,
    $\dim H^i(X, \cF) = \dim H^i(X, \cF_{T}, T)$.
\end{proof}

We now prove the two claims.
We first prove Claim~\ref{claim:eq}, which is more straightforward.
\begin{proof}[Proof of Claim~\ref{claim:eq}]
  The main observation is that
    the two chain complexes $C^*(X, \cF_{[j]}, [j])$
    and $C_{t-*}(X, \cF^\perp_{[t]-[j]}, [t]-[j])$
    are exactly the same.
  In particular,
  \begin{align*}
    &\phantom{{}={}} C^i(X, \cF_{[j]}, [j]) \\
    &= \bigoplus_{f \in X, |\type(f)\symdiff [j]| = i} V_f \\
    &= \bigoplus_{f \in X, |\type(f)\symdiff [j]| = i}
      \F_q^{\prod_{k \in \type(f) \setminus [j]} \hat A_k}
      \otimes \F_q^{\prod_{k \in \type(f) \cap [j]} A_k - \hat A_k}.
  \end{align*}
  On the other hand,
  \begin{align*}
    &\phantom{{}={}} C_{t-i}(X, \cF^\perp_{[t]-[j]}, [t]-[j]) \\
    &= \bigoplus_{f \in X, |\type(f)\symdiff ([t]-[j])| = t-i} V^\perp_f \\
    &= \bigoplus_{f \in X, |\type(f)\symdiff [j]| = i} V^\perp_f \\
    &= \bigoplus_{f \in X, |\type(f)\symdiff [j]| = i}
      \F_q^{\prod_{k \in \type(f) \setminus ([t]-[j])} A_k - \hat A_k}
      \otimes \F_q^{\prod_{k \in \type(f) \cap ([t]-[j])} \hat A_k} \\
    &= \bigoplus_{f \in X, |\type(f)\symdiff [j]| = i}
      \F_q^{\prod_{k \in \type(f) \cap [j]} A_k - \hat A_k}
      \otimes \F_q^{\prod_{k \in \type(f) \setminus [j]} \hat A_k}.
  \end{align*}
  Thus, $C^i(X, \cF_{[j]}, [j]) \cong C_{t-i}(X, \cF^\perp_{[t]-[j]}, [t]-[j])$.
  Similarly, one can verify that the coboundary maps and the boundary maps agree.
  Therefore,
  \begin{equation*}
    \dim H^i(X, \cF_{[j]}, [j]) = \dim H_{t-i}(X, \cF^\perp_{[t] - [j]}, [t] - [j]).
  \end{equation*}
  Because the dimension of homology equals to the dimension of cohomology,
    we have the desired result,
    $\dim H^i(X, \cF_{[j]}, [j]) = \dim H^{t-i}(X, \cF^\perp_{[t] - [j]}, [t] - [j])$.
\end{proof}

We now prove Claim~\ref{claim:ineq}.
\begin{proof}[Proof of Claim~\ref{claim:ineq}]
  To establish the result,
    we first construct a family of maps
      $\alpha^i: C^i(X, \cF_{[j-1]}, [j-1]) \to C^i(X, \cF_{[j]}, [j])$,
    and then show that these maps form a chain map, i.e. $\delta \alpha^i = \alpha^{i+1} \delta$.
    Finally, we prove that the induced map on cohomology is injective,
    which implies the desired result
    $\dim H^i(X, \cF_{[j-1]}, [j-1]) \le \dim H^i(X, \cF_{[j]}, [j])$.

  We start by constructing the maps
    $\alpha^i: C^i(X, \cF_{[j-1]}, [j-1]) \to C^i(X, \cF_{[j]}, [j])$,
    i.e. $\bigoplus_{f \in X, |\type(f) \Delta [j-1]| = i} V_{[j-1], f} \to \bigoplus_{f' \in X, |\type(f') \Delta [j]| = i} V_{[j], f'}$,
      where $V_{[j-1], f}$ is the local coefficients induced from $\cF_{[j-1]}$.
  The map can be decomposed into maps consist of $V_{[j-1], f} \to V_{[j], f'}$
    between $|\type(f) \Delta [j-1]| = i$ and $|\type(f') \Delta [j]| = i$.
  Most of the maps are the zero maps
    except for a few pairs of $f$ and $f'$ with a certain geometric relation.

  To describe such a geometric relation,
    we establish a natural coordinate system in $\R^t$,
    where we assign each cell a point $p_f$,
    which is its center and takes value in $\{0, \frac{1}{2}, 1\}^t$.
  In particular,
    for the cell $f = [g; (a_k)_{k \in S}, (b_k)_{k \in \overline{S}}]$,
    the $k$-th coordinate of its center is
    \begin{itemize}
      \item $0$ if $k \notin S, b_k = 0$;
      \item $\frac{1}{2}$ if $k \in S$;
      \item $1$ if $k \notin S, b_k = 1$.
    \end{itemize}

  The map $\alpha: V_{[j-1], f} \to V_{[j], {f'}}$ is nonzero
    only when the faces $f, f'$ are incident to each other
    and $p_{f'} - p_f = \frac{1}{2} \cdot e_j$,
    where $e_j$ is the standard vector with value $0$ at every coordinate except the $j$-th coordinate which has value $1$.
  More explicitly,
    suppose $f = [g; (a_k)_{k \in S}, (b_k)_{k \in \overline{S}}]$.
  If $j \in S$,
    then $f' = [g \cdot a_j; (a_k)_{k \in S - \{j\}}, (b_k)_{k \in \overline{S}}, b_j = 1]$.
  Otherwise, if $j \notin S$,
    then $f' = [g; (a_k)_{k \in S}, a'_j, (b_k)_{k \in \overline{S} - \{j\}}]$
    for some $a'_j \in A_j$.
  In the first case $f \succdot f'$ and $p_{f, j} = \frac{1}{2}$, $p_{f', j} = 1$;
  In the second case $f \precdot f'$ and $p_{f, j} = 0$, $p_{f', j} = \frac{1}{2}$.


  For each pair $f, f'$ that satisfies the above condition,
    we now define the map $\alpha: V_{[j-1],f} \to V_{[j], {f'}}$
    i.e. $\F_q^{\prod_{k \in \overline{S}} \hat A_{[j-1], k}} \to \F_q^{\prod_{k \in \overline{S \Delta \{j\}}} \hat A_{[j], k}}$,
    where $\hat A_{[j], k}$ is $\hat A_k$ if $k \notin [j]$ and $A_k - \hat A_k$ if $k \in [j]$.
  Note that
    $\hat A_{[j-1], k} = \hat A_{[j], k}$ when $k \ne j$
    and $\hat A_{[j-1], j} = \hat A_j$, $\hat A_{[j], j} = A_j - \hat A_j$.
  We again split into two cases.
  If $j \in S$,
    the domain and codomain simplify into
    $\F_q^{\prod_{k \in \overline{S}} \hat A_{[j], k}} \to \F_q^{\prod_{k \in \overline{S}} \hat A_{[j], k}} \otimes \F_q^{A_j - \hat A_j}$,
    and the map is obtained by
    extending the $A_j$-coordinate from $a_j$ to $\F_q^{A_j}$ and applying $h_j^\perp: \F_q^{A_j} \to \F_q^{A_j - \hat A_j}$.
  Otherwise, if $j \notin S$,
    the domain and codomain simplify into
    $\F_q^{\prod_{k \in \overline{S}} \hat A_{[j], k}} \otimes \F_q^{\hat A_j} \to \F_q^{\prod_{k \in \overline{S}} \hat A_{[j], k}}$,
    and the map is obtained by
    applying $h_j^T: \F_q^{\hat A_i} \to \F_q^{A_i}$ and restricting the $A_j$-coordinate to $a'_j$.
  The first case is like applying the restriction map,
    except that it uses $h^\perp$ instead of $h$,
  while the second case is like applying the co-restriction map.
  This completes the construction of the maps $\alpha$.

  Next, we show that these maps form a chain map.
  Specifically, we need to verify that
  \begin{equation*}
    \delta \alpha = \alpha \delta.
  \end{equation*}
  The proof is similar to the proof for $\delta \circ \delta = 0$.
  What we need to check is
  \begin{equation*}
    \sum_{f' \in X} \delta_{f', f''} \alpha_{f, f'} = \sum_{f' \in X} \alpha_{f', f''} \delta_{f, f'}
  \end{equation*}
    for every $f, f'' \in X$.
  Since the maps are nonzero only when the faces are incident to each other,
    there are three possible relative positions between $f$ and $f''$
    \begin{itemize}
      \item $p_{f''} - p_f = \frac{1}{2} \cdot e_j \pm \frac{1}{2} \cdot e_k$ for some $k \ne j$,
      \item $p_{f''} - p_f = e_j$,
      \item $p_{f''} - p_f = 0$.
    \end{itemize}
  The first case was analyzed in the proof of $\delta \circ \delta = 0$.
  For the second case, there is only one possible $f'$ on both sides.
  So we need to check $\delta_{f', f''} \alpha_{f, f'} = \alpha_{f', f''} \delta_{f, f'}$ for that $f'$.
  Indeed, both sides are written as $\rest^{\perp}_{f', f''} \corest_{f, f'}$,
    where $\rest^{\perp}_{f', f''}$ is the restriction map using $h^\perp$.

  For the third case,
    we will show that both sides are $0$.
  Let us first focus on the left hand side $\sum_{f' \in X} \delta_{f', f''} \alpha_{f, f'}$.
  Suppose that for some $f'$, $\delta_{f', f''} \alpha_{f, f'}$ is nonzero,
    then $p_{f'} - p_f = \frac{1}{2} \cdot e_j$ by the construction of $\alpha$.
  Since $p_{f''} - p_f = 0$, this implies $p_{f''} - p_{f'} = - \frac{1}{2} \cdot e_j$.
  If we review the definition of $\delta$ for $C^*(X, \cF_{[j]}, [j])$,
    either $p_{f''} - p_{f'} = \pm \frac{1}{2} \cdot e_k$ for some $k \ne j$
    or $p_{f',j} = \frac{1}{2}$ to $p_{f'',j} = 0$ or $1$.
  Thus, the map could be nonzero only when $p_{f,j} = p_{f'',j} = 0, p_{f',j} = \frac{1}{2}$.
  In this case,
    $\alpha$ acts as a co-restriction map using $h_j$
    and $\delta$ acts as a restriction map using $h^\perp_j$.
  Overall,
    we have $\sum_{f' \in X} \delta_{f', f''} \alpha_{f, f'} = (h^\perp_j h_j^T) \otimes I_{...} = 0$,
    where $I_{...}$ is the identity operator on the other components.

  The analysis is similar for the right hand side $\sum_{f' \in X} \alpha_{f', f''} \delta_{f, f'}$.
  Suppose that for some $f'$, $\alpha_{f', f''} \delta_{f, f'}$ is nonzero,
    then $p_{f''} - p_{f'} = \frac{1}{2} \cdot e_j$ by the construction of $\alpha$.
  Since $p_{f''} - p_f = 0$, this implies $p_{f'} - p_{f} = - \frac{1}{2} \cdot e_j$.
  If we review the definition of $\delta$ for $C^*(X, \cF_{[j-1]}, [j])$,
    either $p_{f'} - p_{f} = \pm \frac{1}{2} \cdot e_k$ for some $k \ne j$
    or $p_{f',j} = \frac{1}{2}$ to $p_{f,j} = 0$ or $1$.
  Thus, the map could be nonzero only when $p_{f,j} = p_{f'',j} = 1, p_{f',j} = \frac{1}{2}$.
  In this case,
    $\delta$ acts as a co-restriction map using $h_j$
    and $\alpha$ acts as a restriction map using $h^\perp_j$.
  Overall,
    we have $\sum_{f' \in X} \alpha_{f', f''} \delta_{f, f'} = (h^\perp_j h_j^T) \otimes I_{...} = 0$.

  Finally, we show that the induced map
    $\alpha^i: H^i(X, \cF_{[j-1]}, [j-1]) \to H^i(X, \cF_{[j]}, [j])$ is injective.
  In particular, we need to show that
    for any $x \in Z^i(X, \cF_{[j-1]}, [j-1])$,
    if $\alpha(x) \in B^i(X, \cF_{[j]}, [j])$,
    then $x \in B^i(X, \cF_{[j-1]}, [j-1])$.
  The remainder of the proof relies on two claims.
  \begin{claim}\label{claim:1}
    For $0 \le i \le t$.
    If $\alpha(x) \in B^i(X, \cF_{[j]}, [j])$,
      then there exists $z \in C^{i-1}(X, \cF_{[j-1]}, [j-1])$,
      such that $\alpha(x) = \delta(\alpha(z))$.
  \end{claim}
  \begin{claim}\label{claim:2}
    For $0 \le i \le t$.
    If $w \in Z^i(X, \cF_{[j-1]}, [j-1])$ satisfies $\alpha(w) = 0$,
    then $w \in B^i(X, \cF_{[j-1]}, [j-1])$.
  \end{claim}
  (Note that $C^i(...)$ is defined to be $\{0\}$ for $i < 0$.)

  We first prove the theorem assuming the two claims.
  By Claim~\ref{claim:1},
    there exists $z$ such that
    $\alpha(x) = \delta(\alpha(z)) = \alpha(\delta(z))$,
    where we use the fact that $\alpha$ is a chain map.
  This implies, $\alpha(w) = 0$
    for $w = x - \delta(z) \in C^i(X, \cF_{[j-1]}, [j-1])$.
  Because $x \in Z^i(X, \cF_{[j-1]}, [j-1])$
    and $\delta(z) \in B^i(X, \cF_{[j-1]}, [j-1])$,
    we have $w \in Z^i(X, \cF_{[j-1]}, [j-1])$.
  By Claim~\ref{claim:2},
    this implies $w \in B^i(X, \cF_{[j-1]}, [j-1])$.
  Since $x = w + \delta(z)$,
    we have the desired result $x \in B^i(X, \cF_{[j-1]}, [j-1])$.

  We now prove the two claims in order.
  \begin{proof}[Proof of Claim~\ref{claim:1}]
    For $i = 0$, $B^i(X, \cF_{[j]}, [j]) = \{0\}$,
      which means $\alpha(x) = 0$.
    Thus, we set $z = 0$.

    For $i > 0$, since $\alpha(x) \in B^i(X, \cF_{[j]}, [j])$,
      there exists $y \in C^{i-1}(X, \cF_{[j]}, [j])$
      such that $\alpha(x) = \delta(y)$.
    We claim that $y$ can be expressed as $y' + \delta(s)$,
      where $s \in C^{i-2}(X, \cF_{[j]}, [j])$
      and $y'$ is supported only on faces $f$
      such that $p_{f, j} \ne 0$.
    (When $i = 1$, $y$ already has the desired support.
      So we can set $y' = y$.)
    This follows from the fact that part of the $\delta$ map
      that relates faces with the $j$-th coordinate equal to $\frac{1}{2}$
      to those faces with the $j$-th coordinate equal to $0$
      is surjective.
    In particular,
      for every $f$ with $p_{f, j} = 0$,
      consider the map $\bigotimes_{f' \succdot_j f} V_{[j], f'} \to V_{[j], f}$
      restricted from $\delta$.
    By the definition of $\delta$ on $C^*(X, \cF_{[j]}, [j])$,
      the map can be written as $h^\perp_j \otimes I_{...}$.
    Such a map is surjective, because $h^\perp_j$ is surjective.
    Therefore, we can find $s \in C^{i-2}(X, \cF_{[j]}, [j])$,
      that cancels out the contribution of $y$
      on faces $f$ where $p_{f, j} = 0$.

    Since $\alpha(x) = \delta(y) = \delta(y')$,
      it suffices to find $z \in C^{i-1}(X, \cF_{[j-1]}, [j-1])$
      such that $\alpha(z) = y'$.
    In our construction, $z$ is chosen to be supported on the faces
      where the $j$-th coordinate is either $0$ or $\frac{1}{2}$.
    We first describe the faces with $p_{f, j} = \frac{1}{2}$.
    To define the value of $z$ on those faces,
      we utilize the restriction of $\alpha$ to
      $\bigotimes_{f \succdot_j f'} V_{[j-1], f} \to V_{[j], f'}$
      where $p_{f', j} = 1$.
    By the definition of $\alpha$,
      this map can be written as $h^\perp_j \otimes I_{...}$,
      which is surjective.
    Therefore, we can demand $z|_{f \succdot_j f'}$ to be the preimage
      of $y'|_{f'}$.

    We now describe the faces with $p_{f, j} = 0$.
    To define the value of $z$ on those faces,
      we utilize the restriction of $\alpha$ to
      $V_{[j-1], f} \to \bigotimes_{f' \succdot_j f} V_{[j], f'}$
      where $p_{f', j} = \frac{1}{2}$.
    By the definition of $\alpha$,
      this map can be written as $h^T_j \otimes I_{...}$,
      which is not surjective.
    Nevertheless, we claim that $y'|_{f' \succdot_j f}$ is in the image of $h^T_j \otimes I_{...}$,
      which allows us to define $z|_f$ as its preimage.

    To show that $y'|_{f' \succdot_j f}$ is in the image of $h^T_j \otimes I_{...}$,
      we use the property that $\alpha(x) = \delta(y')$
      and the fact about their support.
    By construction $\alpha(x)$ and $y'$ are not supported on
      the faces $f$ where $p_{f, j} = 0$.
    That means $\alpha(x)|_f = (h^\perp_j \otimes I_{...}) (y'|_{f' \succdot_j f})$
      which is equal to $0$.
    Therefore, $y'|_{f' \succdot_j f}$ is in the kernel of $h^\perp_j \otimes I_{...}$,
      which means it is in the image of $h^T_j \otimes I_{...}$,
      as claimed.
  \end{proof}
  \begin{proof}[Proof of Claim~\ref{claim:2}]
    The goal is to construct $v \in C^{i-1}(X, \cF_{[j-1]}, [j-1])$
      such that $w = \delta(v)$.
    For $i = 0$, we set $v = 0$.
    For $i > 0$,
      $v$ is chosen to be supported on the faces where the $j$-th coordinate is $1$.
    For every $f$ with $p_{f, j} = 1$,
      consider the map $V_{[j-1], f} \to \bigotimes_{f' \succdot_j f} V_{[j-1], f'}$
      restricted from $\delta$.
    By the definition of $\delta$ on $C^*(X, \cF_{[j-1]}, [j-1])$,
      the map can be written as $h_j^T \otimes I_{...}$,
      which is not surjective.
    Nevertheless, we claim that $w|_{f' \succdot_j f}$ is in the image of $h_j^T \otimes I_{...}$,
      which allows us to define $v|_f$ as its preimage.

    To show that $w|_{f' \succdot_j f}$ is in the image of $h_j^T \otimes I_{...}$,
      we use the property that $\alpha(w) = 0$.
    The restriction of the domain of $\alpha$ to $\{f' \succdot_j f\}$ gives the map
      $\bigotimes_{f' \succdot_j f} V_{[j-1], f'} \to V_{[j], f}$.
    By the definition of $\alpha$,
      the map can be written as $h_j^\perp \otimes I_{...}$.
    Therefore, $w|_{f' \succdot_j f}$ is in the kernel of $h_j^\perp \otimes I_{...}$,
      which means it is in the image of $h_j^T \otimes I_{...}$,
      as claimed.

    We now show that $\delta(v) = w$.
    We check that the equality at faces with $j$-th coordinate equal to $0, \frac{1}{2}, 1$ in order.
    For faces with $p_{f,j} = 0$, we claim that both sides have a value of $0$.
    We have $\delta(v)|_f = 0$
      because $v$ is supported on faces where the $j$-th coordinate is $1$.
    This implies that $\delta(v)$ is supported on faces where the $j$-th coordinate is either $1$ or $\frac{1}{2}$.
    To check $w|_f = 0$, we use the fact that $\alpha(w) = 0$.
    Consider the map $V_{[j-1], f} \to \bigotimes_{f' \succdot_j f} V_{[j], f'}$
      restricted from $\alpha$.
    By the definition of $\alpha$,
      the map can be written as $h_j^T \otimes I_{...}$,
      which is injective.
    Since $\alpha(w)|_{f' \succdot_j f} = 0$,
      we obtain $w|_f = 0$.

    We now study the faces with $p_{f,j} = \frac{1}{2}$.
    For every $f'$ with $p_{f',j} = 1$,
      by the construction of $v$,
      $w|_{f \succdot_j f'} = (h_j^T \otimes I_{...})(v|_{f'})$.
    By the definition of $\delta$,
      $(h_j^T \otimes I_{...})(v|_{f'}) = \delta(v)|_{f \succdot_j f'}$.
    Thus, $\delta(v)$ and $w$ agrees on the faces with $p_{f,j} = \frac{1}{2}$.

    Finally, we study faces with $p_{f,j} = 1$.
    Let $w' = w - \delta(v)$.
    To show $w'|_f = 0$, we utilize the fact that $\delta(w') = 0$,
      and that $w'$ is supported within the faces with $p_{f,j} = 1$.
    $\delta(w') = 0$ follows from $w \in Z^i(X, \cF_{[j-1]}, [j-1])$,
      and the second statement follows from the previous parts of the proof.
    Consider the map $V_{[j-1], f} \to \bigotimes_{f' \succdot_j f} V_{[j-1], f'}$
      restricted from $\delta$.
    By the definition of $\delta$,
      the map can be written as $h_j^T \otimes I_{...}$,
      which is injective.
    Since $\delta(w')|_{f' \succdot_j f} = 0$,
      we obtain $w'|_f = 0$,
      which implies $\delta(v)$ and $w$ agrees on faces with $p_{f,j} = 1$.
    This completes the proof.
  \end{proof}
\end{proof}

\section{The local chain }
\label{sec:loc-chain}

In this section we define a ``local'' chain complex $C(L_S)$, for each subset $S\subseteq\{1,\ldots,t\}$. This chain complex will be used as a tool in the analysis of the complex $C_*(X,\mF)$ introduced in the previous section. In particular, as we will show in Lemma~\ref{lem:loc-glob}, the chain complex $C(L_S)$ is isomorphic to the ``local view'' of $C_*(X,\mF)$ from any face $f$ such that $\type(f)=\ol{S}$, i.e.\ the subcomplex $C_*(X_{\geq f};\mF)$. In Section~\ref{sec:robustness} we introduce an important property of the local chain complex, \emph{robustness}.

\subsection{The complex}
\label{sec:loc-complex}

We start by defining a $1$-dimensional complex $L_i=L_i(0)\cup L_i(1)$, where $i\in\{1,\ldots,t\}$, as follows. There is a single $0$-dimensional face $L_i(0)=\{\emptyset\}$ and there are $n_i$ $1$-dimensional faces which we label using elements of $A_i$, $L_i(1)=A_i$. The incidence structure is the obvious one, $\emptyset \precdot a$ for any $a\in A_i$.

Next we introduce coefficient spaces $V_\emptyset = \F_q^{\hat{A}_i}$ and $V_a=\F_q$ for each $a\in A_i$.
 The coboundary map is defined, for $x\in \F_q^{\hat{A}_i}$ and (with a slight abuse of notation, which we will repeat) $a_i$ the $a_i$-th canonical basis vector of $\F_q^{A_i}$, as
\[ \delta_{\{i\}}(x)(a_i) \,=\, (h_i^T(x))(a_i)\;.\]

The boundary map $\partial_{\{i\}}=\delta_{\{i\}}^T$ is then
\[ \partial_{\{i\}}(y)(\emptyset) \,=\, h_i\, y\;,\]
where $y \in \oplus_a V_a \simeq \F_q^{A_i}$.

Above we used a subscript $\{i\}$ to distinguish them from the (co)boundary maps associated with the complex $C(X)$. We denote the resulting complex as $C(L_{\{i\}})$.

More generally, for $S\subseteq \{1,\ldots,t\}$ we define an $|S|$-dimensional complex $C(L_S)$ as follows. For $k\in\{0,\ldots,|S|\}$ the $k$-faces are $L_S(k)=\cup_{T\subseteq S, |T|=k} L_S(T)$, where the faces $L_S(T)$ \emph{of type $T$} are in bijection with elements ${\prod_{i\in T} A_i}$.
 The incidence structure is given by $f'\succdot f$ iff $f'$ is of type $T'$ such that $|T'|=|T|+1$ and $T\subseteq T'$. The local coefficient space associated to a face $f$ of type $T$ is $V_f = \F_q^{\prod_{i\in S- T} \hat{A}_i}$. For $x\in V_f$, the coboundary map is defined as
\[ \delta_S(x) \,=\, \sum_{j\in S- T} (I_{-j} \otimes h_j^T)(x) \,\in \bigoplus_{f'\succdot f} V_{f'}\;,\]
and the boundary map is
\[ \partial_S(x) \,=\, \sum_{j\in T} (I_{-j} \otimes h_j)(x) \,\in \bigoplus_{f'\precdot f} V_{f'}\;.\]

\begin{remark}
One can verify that the complex $C(L_{S})$ is the homological product~\cite{bravyi2014homological} of the $1$-dimensional complexes $C(L_{\{i\}})$, for $i\in S$. In particular, we see that $C_0(L_S) \simeq \otimes_{i\in S} C_0(L_{\{i\}})$, $C_1(L_S) \simeq \oplus_{i\in S} (C_1(L_{\{i\}})\otimes_{j\in S-\{i\}} C_0(L_{\{j\}}))$, etc.
\end{remark}

\begin{lemma}[Local and global]\label{lem:loc-glob}
  For any face $f=[g;a,b]$ of type $S$ there is a natural isomorphism between $C(X_{\geq f})$ and $C(L_{\ol{S}})$. In particular, for every $T\subseteq \ol{S}$, $C(X_{\geq f}(S\cup T))\simeq C(L_{\ol{S}}(T))$.
\end{lemma}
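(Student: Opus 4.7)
The plan is to exhibit the natural isomorphism on each level separately, and then verify that it commutes with the (co)boundary maps. The bijection on faces is the content of the ``in particular'' clause, and the compatibility with (co)boundary operators will follow by inspection of the definitions.

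First I would set up the bijection between faces. Fix $f=[g;(a_j)_{j\in S},(b_j)_{j\in\ol S}]$ of type $S$ and let $T\subseteq \ol{S}$. Any face $f'\succeq f$ of type $S\cup T$ is specified by $f' = [g';(a'_j)_{j\in S\cup T},(b'_j)_{j\in\ol{S}-T}]$ subject to the covering relations that link $f'$ down to $f$. Unwinding the definition of $\precdot_i$ through a chain from $f'$ to $f$, these constraints force $a'_j = a_j$ for $j\in S$, $b'_j = b_j$ for $j\in \ol{S}-T$, and $g'$ is determined from $g$ and the coordinates $(b_j)_{j\in T}$ and $(a'_j)_{j\in T}$ by $g' = g\cdot \prod_{j\in T:\,b_j=0} (a'_j)^{-1}\cdot\prod_{\text{nothing}}$ (with the appropriate factor depending on which $b_j=1$). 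Hence the only free data in $f'$ is the tuple $(a'_j)_{j\in T}\in\prod_{j\in T}A_j$, which gives a bijection
\[\Phi_T\colon X_{\geq f}(S\cup T)\;\longrightarrow\; L_{\ol{S}}(T)\;,\quad f'\longmapsto (a'_j)_{j\in T}\;.\]
This bijection is visibly compatible with the covering relation $\succdot_i$ for $i\in \ol{S}$, since adding a direction $i\in \ol{S}-T$ to $T$ corresponds to replacing $b_i$ by an element $a'_i\in A_i$ on both sides.

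Next I would check that the local coefficient spaces match under $\Phi_T$. On the global side, a face $f'$ of type $S'=S\cup T$ carries the space $V_{S'}=\F_q^{\prod_{j\in \ol{S'}}\hat{A}_j}=\F_q^{\prod_{j\in\ol{S}-T}\hat{A}_j}$. On the local side, a face of type $T\subseteq\ol{S}$ in $L_{\ol{S}}$ carries exactly the space $\F_q^{\prod_{j\in \ol{S}-T}\hat{A}_j}$. Hence $V_{f'}$ and $V_{\Phi_T(f')}$ are canonically the \emph{same} vector space, so the identity map gives an isomorphism that we use to define the global isomorphism $C(X_{\geq f})\simeq C(L_{\ol S})$ level by level.

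Finally I would check commutation with the (co)boundary maps. For a covering pair $f''\precdot_i f'$ inside $X_{\geq f}$, one necessarily has $i\in \ol{S}$ (because for $j\in S$ both $f''_j$ and $f'_j$ already equal $a_j\in A_j$, so no covering can occur in direction~$j$). The global corestriction $\corest_{f'',f'}$ is $(\Id_{-i}\otimes h_i^T)$ followed by evaluation at the $A_i$-coordinate $f'_i=a'_i$; under $\Phi$ this is literally the $i$-th summand of $\delta_{\ol{S}}$ restricted to the face $\Phi_{T}(f'')$ evaluated at the covering face $\Phi_{T\cup\{i\}}(f')$. The analogous identification holds for $\rest$ and $\partial_{\ol{S}}$. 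Summing over covering faces gives the commutation $\delta \circ \Phi = \Phi \circ \delta_{\ol{S}}$ and $\partial \circ \Phi = \Phi\circ \partial_{\ol{S}}$, completing the proof.

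The only mildly delicate point will be tracking the group element $g'$ of $f'$ correctly in terms of $g$ and the chosen $(a'_j)_{j\in T}$, to make sure the bijection $\Phi_T$ is well-defined and independent of the chosen covering chain from $f'$ down to $f$; this uses exactly the pairwise commutation of permutations from distinct $A_j$'s that is built into the definition of $X$. Everything else is bookkeeping in tensor indices.
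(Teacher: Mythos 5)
Your proposal is correct and follows essentially the same route as the paper: set up the face bijection $X_{\ge f}(S\cup T)\leftrightarrow\prod_{j\in T}A_j$, note that the coefficient space at each such face is literally $V_{S\cup T}=\F_q^{\prod_{j\in\ol S\setminus T}\hat A_j}$, and check compatibility with (co)boundary maps, which the paper dispatches with "easy to check" and you justify directly. One small slip: the group element should be $g'=g\cdot\prod_{j\in T:\,b_j=1}(a'_j)^{-1}$ (the factor appears when $b_j=1$, not $b_j=0$), though your parenthetical hedge shows you were aware of the case split and this does not affect the argument.
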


\begin{proof}
A face $g\in X_{\geq f}(S\cup T)$ is given by $a_T\in A_T$ and can be written as $g = (g_i)$ where
$g_i=\begin{cases}
  f_i& i\not\in T\\
  a_i& i\in T
\end{cases}$. In shorthand, $g=(f_{-T}||a_T)$. By definition, for a face $g$ of type $S\cup T$, $V_g = \F_q^{\prod _{j\notin (S\cup T)} \hat{A}_j}=V_{S\cup T}$ so
  \[C(X_{\geq f}(S\cup T)) = \bigoplus_{g\in X_{\geq f}(S\cup T)}V_g = \bigoplus_{a\in A_T} V_{(f_{-T}||a_T)}
  = \bigoplus_{a\in A_T} V_{S\cup T}\;.
  \]
    An element of $C(X_{\geq f}(S\cup T))$ is thus a tuple $x=(x_{\bar a})_{\bar a\in A_T}$ where $x_{\bar a} \in \F_q^{\prod_{j\notin (S\cup T)}\hat{A}_j}$. Such a tuple is naturally identified with $x'\in \F_q^{\prod_{j\in T} A_j \prod_{j\in \ol{S}\setminus T} \hat{A}_j} = C(L_{\ol{S}}(T))$ by setting $x'_{\bar a,\bar a'}=(x_{\bar a})_{\bar a'}$ for $\bar a\in\prod_{j\in T} A_j $ and $\bar a'\in \prod_{j\in \ol{S}\setminus T} \hat{A}_j$.

    It is easy to check that under this isomorphism the boundary and coboundary maps of $C(L_{\ol{S}})$ and of $C(X_{\geq f})$ coincide.
\end{proof}

We will use the following.

\begin{lemma}\label{lem:exactness}
For any $S\subseteq \{1,\ldots,t\}$ the chain complex $C(L_S)$ is exact at $i=0,\ldots,|S|-1$, i.e.\ for any $x\in C_i(L_S)$ such that $\partial_S(x)=0$ there is a $y\in C_{i+1}(L_S)$ such that $\partial_S(y)=x$.
\end{lemma}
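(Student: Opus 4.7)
My plan is to reduce the claim to an application of the Künneth formula for tensor products of chain complexes over a field. As observed in the remark following the definition of $C(L_S)$, the complex $C(L_S)$ is the homological (tensor) product of the $|S|$ one-dimensional complexes $C(L_{\{i\}})$ for $i\in S$. Each one-dimensional complex $C(L_{\{i\}})$ consists of $C_0(L_{\{i\}}) = \F_q^{\hat{A}_i}$ and $C_1(L_{\{i\}}) = \F_q^{A_i}$ with boundary $\partial_{\{i\}} = h_i$. The assumption, stated at the start of Section~\ref{sec:co-maps}, that each $h_i$ has full row rank is exactly the statement that $h_i\colon \F_q^{A_i}\to \F_q^{\hat{A}_i}$ is surjective. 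Consequently $H_0(C(L_{\{i\}})) = \F_q^{\hat{A}_i}/\im h_i = 0$, while $H_1(C(L_{\{i\}})) = \ker h_i$.

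Now I would invoke the Künneth formula, which, over a field $\F_q$, gives the decomposition
\[
H_n\bigl(C(L_S)\bigr) \;\simeq\; \bigoplus_{(n_i)_{i\in S}\,:\,\sum_{i\in S} n_i = n}\; \bigotimes_{i\in S} H_{n_i}\bigl(C(L_{\{i\}})\bigr).
\]
For any $n<|S|$ and any index tuple $(n_i)$ with $\sum n_i = n$, since each $n_i\in\{0,1\}$, at least one $n_i$ must equal $0$. That factor contributes $H_0(C(L_{\{i\}})) = 0$, so the whole tensor product vanishes. Therefore $H_n(C(L_S))=0$ for every $0\leq n\leq |S|-1$, which is precisely the asserted exactness at levels $0,\ldots,|S|-1$.

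I do not anticipate any real obstacle here; the result is a structural consequence of surjectivity of each $h_i$ combined with the tensor-product structure of $C(L_S)$. If one preferred to avoid citing Künneth explicitly, an equivalent but slightly longer argument proceeds by induction on $|S|$: fix any $j\in S$ and write $C(L_S)\simeq C(L_{\{j\}})\otimes C(L_{S\setminus\{j\}})$; for $x\in C_i(L_S)$ with $\partial_S(x)=0$ and $i<|S|$, decompose $x$ along the $j$-coordinate, use exactness of $C(L_{\{j\}})$ at level $0$ (i.e.\ surjectivity of $h_j$) to lift the ``$j$-coefficient-free'' part, and use the inductive hypothesis on $C(L_{S\setminus\{j\}})$ to lift the remaining piece; combining the two lifts produces the desired $y\in C_{i+1}(L_S)$. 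Either route delivers the lemma.
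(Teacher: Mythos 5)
Your argument is correct and is essentially identical to the paper's: the paper also proves exactness by observing that $C(L_S)$ is the homological product of the $C(L_{\{i\}})$, each of which has $H_0=0$ because $h_i$ is surjective, and then applying the K\"unneth formula. Your write-up simply spells out the details (and offers an optional inductive variant) that the paper leaves implicit.
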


\begin{proof}
This follows from the K\"unneth formula, using that $C(L_S)$ is the homological product of $1$-dimensional complexes $C(L_{\{i\}})$ which satisfy $\dim H_0(L_{\{i\}}) = 0$.
\end{proof}

\begin{lemma}\label{lem:tensor-code}
For any $S\subseteq \{1,\ldots,t\}$ and for any $x\in C_{|S|}(L_S)$ such that $\partial_S(x)=0$, $x$ must be in the tensor code $\bigotimes_{j\in S} \ker(h_j)$.
\end{lemma}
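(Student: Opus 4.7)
The plan is to unpack the definitions at the top level of $C(L_S)$ and read off the conclusion directly. At level $k=|S|$ there is only one subset $T\subseteq S$ of size $|S|$, namely $T=S$, and the faces of type $S$ are indexed by $\prod_{i\in S}A_i$ with local coefficient space $V_f=\F_q^{\prod_{i\in S\setminus S}\hat A_i}=\F_q$. So assembling the coefficients across all top-level faces canonically identifies
\[ C_{|S|}(L_S)\;\simeq\;\F_q^{\prod_{i\in S}A_i}\;\simeq\;\bigotimes_{i\in S}\F_q^{A_i}\mper\]
Under this identification, a chain $x\in C_{|S|}(L_S)$ is simply a tensor in $\bigotimes_{i\in S}\F_q^{A_i}$.

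Next, I will unpack $\partial_S(x)$ at the top level. The boundary formula
\[\partial_S(x)\,=\,\sum_{j\in S}(I_{-j}\otimes h_j)(x)\]
produces one summand per $j\in S$, where the $j$-th summand applies $h_j\from\F_q^{A_j}\to\F_q^{\hat A_j}$ to the $j$-th tensor slot and leaves the others unchanged. The $j$-th summand lands in $C_{|S|-1}(L_S(S\setminus\{j\}))\simeq\F_q^{\prod_{i\in S,i\neq j}A_i\otimes\hat A_j}$, and these subspaces are pairwise independent summands of $C_{|S|-1}(L_S)=\bigoplus_{j\in S}C_{|S|-1}(L_S(S\setminus\{j\}))$. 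Thus $\partial_S(x)=0$ forces each summand to vanish separately, i.e.\ $(I_{-j}\otimes h_j)(x)=0$ for every $j\in S$.

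Finally, I will observe that $(I_{-j}\otimes h_j)(x)=0$ is equivalent to saying that every slice of $x$ along the $A_j$-direction (obtained by fixing all other tensor indices) lies in $\ker(h_j)$. Applying this for each $j\in S$ and using the elementary fact that a tensor in $\bigotimes_{i\in S}\F_q^{A_i}$ all of whose slices in each direction lie in a subspace $\mC_j\subseteq\F_q^{A_j}$ must itself lie in $\bigotimes_{j\in S}\mC_j$, we conclude $x\in\bigotimes_{j\in S}\ker(h_j)$, as required. There is no real obstacle here — the only slightly subtle step is the separability of the summands of $\partial_S(x)$ by target type, which is immediate from the grading on $C_{|S|-1}(L_S)$.
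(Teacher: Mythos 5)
Your proposal is correct and follows essentially the same route as the paper's proof: identify $C_{|S|}(L_S)\simeq\bigotimes_{i\in S}\F_q^{A_i}$, observe that $\partial_S x=0$ forces every axis-$j$ slice of $x$ into $\ker(h_j)$, and conclude $x\in\bigotimes_j\ker(h_j)$. You spell out one step the paper leaves implicit — that the $j$-indexed summands of $\partial_S(x)$ land in pairwise-distinct type-graded direct summands of $C_{|S|-1}(L_S)$, so $\partial_S(x)=0$ forces $(I_{-j}\otimes h_j)(x)=0$ for each $j$ separately — and you invoke the slice characterization of the tensor code as an ``elementary fact'' where the paper treats it as the definition; both readings are standard and equivalent.
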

\begin{proof}
Observe that $C_{|S|}(L_S)\cong \F_q^{\prod_{j\in S}A_j}$, so an element $x\in C_{|S|}(L_S)$ is an $|S|$-dimensional tensor. The condition $\partial_S x=0$ for $x\in C_{|S|}(L_S)$ implies that for every $j\in S$ and every $(a_j)_{j\in S}\in \prod_{j \in S} A_j$, the column vector $x(a_{-j},\cdot)$ where we fix all coordinates except the $j$-th belongs to $\ker(h_j)$. This is exactly the definition of the space $\bigotimes_{j\in S} \ker(h_j)$.
\end{proof}

\subsection{Robustness}
\label{sec:robustness}

Let $S\subseteq\{1,\ldots,t\}$ and $C(L_S)$ be the dimension-$|S|$ complex described in the previous section. For $i\in S$ let
\[ \code_i^\perp = \im(h_i^T) \subseteq \F_q^{A_i}\;.\]
Let $d_i^\perp$ be the minimum distance of $\code_i^\perp$, i.e.\ $d_i^\perp = \min\{|x|_H: x\in \code_i^\perp\}$, where recall that $|\cdot|_H$ denotes the Hamming weight. For a $k$-face $f\in L_S(k)$ of type $T$ and $x\in V_f = \F_q^{\prod_{i\in S\setminus T} \hat{A}_i}$ we let $|x| = 1_{x\neq 0}$. For $x\in C^k(L_S)$ we let $|x| = \sum_{f\in L_S(k)} |x(f)|$, i.e.\ the number of faces $f$ such that $x(f)$ is nonzero. We refer to $|\cdot|$ as the \emph{block-(Hamming)-weight}.

\begin{definition}\label{def:min}
Let $k\in\{1,\ldots,|S|\}$. An element $x\in C^k(L_S)$ is called \emph{minimal} if for any $y\in C^{k-1}(L_S)$, $|x+\delta_S(y)|\geq |x|$.
\end{definition}

\begin{definition}[Robustness]\label{def:robust}
Let $S\subseteq\{1,\ldots,t\}$ and $\ell=|S|$. For $0\leq k \leq \ell-1$ let $\kappa_{\ell,k}$ be a positive real. We say that $C(L_S)$ is \emph{$\kappa_{\ell,k}$-robust} (implicitly, \emph{at level $k$}) if for any $x\in C^{k}(L_S)$ such that $x$ is minimal it holds that
\[ |\delta_S(x)| \,\geq\, \kappa_{\ell,k}  \,n\, |x|\;.\]
\end{definition}

\begin{remark}
The condition of $\kappa$-robustness is stronger  than the notion of product-expansion from~\cite{kalachev2022two} because it applies to all levels $k< |S|$ as opposed to only $k = |S|-1$. Nevertheless, in Lemma \ref{lemma:prodexp-robust} below we show a reduction from the later to the former. Furthermore, both notions are equivalent to \emph{co-boundary} expansion of the complex $C(L_S)$. For $k\in\{1,\ldots,|S|-1\}$ the co-boundary expansion coefficient $h^k(L_S)$ is defined as
\[ h^k(L_S)\,=\, \min_{x\in  C^k(L_S) - \im(\delta_S)} \frac{ | \delta_S(x)|}{\min_{y\in \im(\delta_S)} |x+y|}\;.\]
We have that $h^{k}(L_S)\leq \kappa_{\ell,k} n$ because if $x$ is minimal then $\min_{y\in \im(\delta_S)} |x+y|=|x|$. Conversely, if $x$ achieves the minimum in the definition of $h^k(L_S)$ then $x'=x+y$ for any $y\in \im(\delta_S)$ that minimizes $|x+y|$ will be such that $|\delta_S(x')| = h^k(L_S) |x'|$.
\end{remark}

\begin{remark}\label{rk:robust-distance}
For the case $|S|=1$, the condition of $x\in C^{0}(L_S)$ being locally minimal is vacuous, because there are no $(-1)$-faces. Moreover, in that case $S=\{i\}$ for some $i$, $\delta_S(x)=h_i^T(x)$, $|x|=1_{x\neq 0}$ and $|\delta_S(x)|=|\delta_S(x)|_H$, the Hamming weight. Thus in this case the condition of robustness is equivalent to the condition of distance, i.e.\ for each $i\in\{1,\ldots,t\}$ we have that $L_{\{i\}}$ is $(d_i^\perp / n)$-robust, where $d_i^\perp$ is the distance of the dual code $\code_i^\perp$.
\end{remark}

\begin{definition}\label{def:h-robust}
Let $\kappa>0$. We say that the family $\{h_i\}$ is \emph{two-way $\kappa$-robust} if the following conditions hold.
\begin{itemize}
\item When based on the matrices $\{h_1,\ldots,h_t\}$, the complex $C(L_S)$ is $\kappa$-robust for each $S\subseteq\{1,\ldots,t\}$ and at each level $k=0,\ldots,|S|-1$.
\item When based on the matrices $\{\hp_1,\ldots,\hp_t\}$, the complex $C(L_S)$ is $\kappa$-robust for each $S\subseteq\{1,\ldots,t\}$ and at each level $k=0,\ldots,|S|-1$.
\end{itemize}
\end{definition}

We show the following:

\begin{theorem}\label{thm:robust}
    For each collection of intervals $I_1,\ldots,I_\ell\subseteq(0,1)$, there exists $\kappa>0$ such that for all $n\in \N$ there exist parity check matrices $h_i \in \F_q^{(1-\rho_i)n\times n}$, where $q = 2^{(n+3)^\ell}$, such that $\rho_i\in I_i$ and $\{h_1,\ldots,h_t\}$ is two-way $\kappa$-robust.
\end{theorem}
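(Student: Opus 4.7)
The plan is to invoke the recent result of Panteleev--Kalachev (Theorem~\ref{thm:PK-future}) twice — once for $(h_1,\ldots,h_t)$ and once for the dual family $(h_1^\perp,\ldots,h_t^\perp)$ — and then to promote product expansion to two-way robustness via Lemma~\ref{lemma:prodexp-robust}.

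First, fix rates $\rho_i\in I_i$, and let $d_i=\lceil(1-\rho_i)n\rceil$. Consider the probability space in which the parity-check matrices $h_i\in \F_q^{d_i\times n}$ are drawn independently and uniformly at random, with $q=2^{(n+3)^\ell}$ as in the statement. By Theorem~\ref{thm:PK-future}, with positive probability (in fact with probability tending to $1$ as $n\to\infty$) the tuple $(h_1,\ldots,h_t)$ is product expanding with some parameter $\kappa_0>0$ that depends only on the intervals $I_i$ (i.e.\ only on the rate constraints), but not on $n$. In parallel, I would show that the dual tuple $(h_1^\perp,\ldots,h_t^\perp)$, viewed as parity-check matrices of the dual codes of rates $\rho_i$ (now lying in the complementary intervals $1-I_i$), is also distributed uniformly (up to row-operations that preserve the row-span, and hence preserve product expansion), and therefore satisfies product expansion with positive probability, again with a parameter $\kappa_0'>0$ depending only on the rates. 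Taking the minimum of $\kappa_0,\kappa_0'$ and using a union bound on the two events gives a single random choice of $(h_1,\ldots,h_t)$ for which both families are product expanding, as long as the two probabilities sum to strictly less than $2$.

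Second, product expansion of $(h_1,\ldots,h_t)$ as defined in~\cite{kalachev2022two} is a statement about coboundary expansion of $C(L_{\{1,\ldots,t\}})$ at the top level $k=|S|-1$, while two-way robustness (Definition~\ref{def:h-robust}) requires coboundary expansion of every sub-complex $C(L_S)$, $S\subseteq\{1,\ldots,t\}$, at every level $0\le k\le |S|-1$. Here I invoke Lemma~\ref{lemma:prodexp-robust}, which is the reduction promised in Section~\ref{sec:product-robust}: product expansion for the full tuple implies robustness for every subset and every level, with only a bounded loss in the expansion constant (bounded in terms of $t$). Applying this reduction to both $(h_i)$ and $(h_i^\perp)$ yields a single constant $\kappa>0$ (depending on $t$ and on the intervals $I_i$, but not on $n$) witnessing two-way $\kappa$-robustness.

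The main obstacle is the dual step: one needs to know that a random parity-check matrix $h_i$ has a dual $h_i^\perp$ whose distribution is such that Theorem~\ref{thm:PK-future} still applies. This follows because product expansion depends only on the row-spans of the matrices (equivalently, on the codes $\ker h_i$), and for a uniformly random full-rank $h_i$ of fixed dimensions, the code $\ker h_i$ is a uniformly random subspace of dimension $n-d_i$; its orthogonal complement is then a uniformly random subspace of dimension $d_i$, so any full-rank parity-check matrix $h_i^\perp$ for it has the same distribution (up to left-multiplication by an invertible matrix) as a uniformly random matrix of the corresponding shape. Hence Theorem~\ref{thm:PK-future} applies with the rates $1-\rho_i\in 1-I_i$, producing the same type of guarantee and closing the argument.
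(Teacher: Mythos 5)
Your overall plan — invoke Panteleev--Kalachev to get product expansion, then promote it to two-way robustness via Lemma~\ref{lemma:prodexp-robust} — is exactly the paper's route. But you have misread Theorem~\ref{thm:PK-future}: the theorem \emph{already} asserts, in a single existential statement, that one can find codes $\C_1,\ldots,\C_\ell$ such that \emph{both} the collection $\C_1,\ldots,\C_\ell$ \emph{and} the dual collection $\C_1^\perp,\ldots,\C_\ell^\perp$ are $\rho$-product expanding. The entire middle portion of your argument — drawing random parity-check matrices, arguing that $h_i^\perp$ is distributed like a random full-rank matrix because duals of random subspaces are random, and union-bounding the two events — re-derives something that Theorem~\ref{thm:PK-future} simply hands you. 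Your heuristic about duality is exactly why the P--K statement comes with the dual bundled in; there is no need to redo it.

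Moreover, the probabilistic framing you impose on Theorem~\ref{thm:PK-future} is not what the theorem says. The paper cites it as a pure existence statement, not as a with-high-probability statement; the remark you added in parentheses (``in fact with probability tending to $1$ as $n\to\infty$'') is an extra assumption. If you really did have to combine two separate probabilistic guarantees, your union bound is also misphrased: for both good events to hold with positive probability you need the two \emph{failure} probabilities to sum to strictly less than $1$, not for the two (success) probabilities to ``sum to strictly less than $2$,'' which is vacuously true. And the theorem asks for all $n\in\N$, so a ``probability $\to 1$ as $n\to\infty$'' argument alone would not cover small $n$. None of this is needed once you read Theorem~\ref{thm:PK-future} as stated: fix the codes it supplies, take $h_i$ a full-rank parity-check for $\C_i$ and $h_i^\perp$ a full-rank parity-check for $\C_i^\perp$, and apply Lemma~\ref{lemma:prodexp-robust} to each of the two families. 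That is the whole proof.
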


We note that the theorem was previously known for $\ell=1$ and $\ell=2$ over fixed $q$ (for example $\F_2$). The case for $\ell=1$ is known as the Gilbert–Varshamov bound and the case for $\ell=2$ was shown recently in \cite{kalachev2022two,DHLV} (and earlier for a sub-constant robustness parameter in \cite{PK2}).
Our proof crucially relies on a recent result by Panteleev and Kalachev on the existence of code tuples with product expansion, provided that the field size $q$ is large enough.

\begin{theorem}[Panteleev and Kalachev \cite{PK-future}]\label{thm:PK-future}
    For each collection of intervals $I_1,\ldots,I_\ell\subseteq(0,1)$, there exists $\rho>0$ such that for all $n\in \N$ there exist codes $\C_1\ldots,C_\ell\subset\F_q^n$, where $q = 2^{(n+3)^\ell}$, such that $\frac 1 n \dim(\C_i)\in I_i$ and both collections $\C_1,\ldots,\C_\ell$ and $\C_1^\perp,\ldots, C^\perp_\ell$ are $\rho$-product expanding.
\end{theorem}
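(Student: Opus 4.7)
The plan is to apply the probabilistic method over the large field $\F_q$ with $q = 2^{(n+3)^\ell}$. Sample each code $\C_i \subseteq \F_q^n$ uniformly at random among linear codes of dimension $\lfloor \rho_i n\rfloor$, with $\rho_i \in I_i$. I will show that with positive probability both $(\C_1,\ldots,\C_\ell)$ and $(\C_1^\perp,\ldots,\C_\ell^\perp)$ are $\rho$-product expanding for some universal $\rho>0$ depending only on the intervals. The precise field size $2^{(n+3)^\ell}$ is tuned so that the ensuing union bound can be closed; smaller fields (say $\F_2$) would in general fail for $\ell\geq 3$, as observed in \cite{kalachev2022two}.

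The first step is a standard reformulation via local minimality: product expansion of $(\C_1,\ldots,\C_\ell)$ fails iff there exists a locally minimal ``bad'' element $w$ in the sum-space $W = \sum_{i=1}^\ell \F_q^n\otimes\cdots\otimes\C_i\otimes\cdots\otimes\F_q^n$ whose block-support $S\subseteq [n]^\ell$ is too small relative to the size of the decomposition forced upon it. For each candidate support $S$ of size $s$, I estimate (i) the number of candidate $w$ with support exactly $S$ is at most $q^{s}$; (ii) for each such $w$, the probability over the random codes that $w\in W$ in the locally minimal sense is at most $q^{-r(s)}$ for a rank quantity $r(s)$ coming from the random linear constraints imposed by each $\C_i$. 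The crucial inequality is $r(s) \ge s + c\, n$ for some constant $c>0$ depending only on $I_1,\ldots,I_\ell$, which says that the random codes impose $\Omega(n)$ ``fresh'' linear constraints on any candidate bad support pattern.

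Given this rank bound, the expected number of bad witnesses is at most $\sum_{s} \binom{n^\ell}{s}\, q^{s-r(s)} \le \sum_{s} \exp(\ell s \log n - c n (n+3)^\ell \ln 2)$, which is $\ll 1$ term-by-term because $(n+3)^\ell$ dominates $s\log n$ by an overwhelming factor for every relevant $s$. An identical argument with the dual codes $\C_i^\perp$ in place of $\C_i$ bounds the dual failure event, and a union bound over the two events leaves positive probability that a random tuple satisfies both product expansion statements simultaneously, establishing existence for every $n$.

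The main obstacle is the rank lower bound $r(s) \ge s + c\, n$, which must hold uniformly over all support patterns $S$. For $\ell = 2$ this was obtained in \cite{kalachev2022two,DHLV} by exploiting the interaction of row- and column-supports of low-weight matrices in $W$. For $\ell>2$ the supports become higher-order tensors, and I expect the proof to proceed by induction on $\ell$: slicing $S$ along one coordinate direction, each slice either inherits enough structure to apply the inductive $(\ell-1)$-dimensional statement, or is ``thick'' enough that one can invoke a minimum distance / generic position property of the random code $\C_i$ in the sliced direction, a property that holds with overwhelming probability precisely because $q$ is super-exponentially large. Carrying this induction through with constants uniform in $n$, and doing so symmetrically for the dual tuple so that one does not lose in the combined union bound, is the technically delicate step that most critically leverages the chosen field size $q = 2^{(n+3)^\ell}$.
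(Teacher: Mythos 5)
This theorem is not proven in the paper at all: it is imported as an external result of Panteleev and Kalachev (\cite{PK-future}, a personal communication, with the companion manuscript on maximally extendable product codes), so there is no internal proof to compare your argument against. Your task was therefore to supply a proof from scratch, and what you have written is a plan rather than a proof. The entire content of the theorem is concentrated in the step you defer: the uniform rank lower bound $r(s)\ge s+c\,n$ over all candidate supports $S$, which you introduce as ``the crucial inequality'' and then address only with ``I expect the proof to proceed by induction on $\ell$.'' Once that bound is granted, the union bound $\sum_s\binom{n^\ell}{s}q^{s-r(s)}\ll 1$ is routine arithmetic; without it, nothing has been established. Note that this is precisely where the known difficulty lies: for $\ell=2$ the analogous rank/robustness estimates were obtained in \cite{kalachev2022two,DHLV}, but for $\ell\ge 3$ the statement that random codes are product expanding was left as a conjecture in \cite{kalachev2022two}, and the obstruction is exactly that the linear constraints imposed by a single code $\C_i$ across the many parallel slices of a tensor support are heavily correlated, so one cannot count ``fresh'' constraints slice by slice. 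Your proposed induction (slice $S$ along one direction, apply the $(\ell-1)$-dimensional statement or a distance property of $\C_i$) is the natural first attempt, but it is not clear it closes: a slice can be neither structured enough for the inductive hypothesis nor thick enough for the distance argument, and the local-minimality hypothesis on $w$ does not obviously localize to slices.

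It is also worth flagging that the actual Kalachev--Panteleev argument does not take this first-moment route. They isolate a deterministic, generic-position property of code tuples (\emph{maximal extendability}), show that maximally extendable tuples are good coboundary expanders by a structural argument, and then establish existence of maximally extendable tuples over fields of size $2^{(n+3)^\ell}$ via a polynomial-identity (Schwartz--Zippel-type) argument --- this is the only place the field size enters. So even the role you assign to $q$ (making a union bound over bad witnesses close) differs from its role in the known proof (making a nonzero polynomial have a nonvanishing point). In summary: the approach is plausible in outline but the decisive lemma is asserted, not proven, and there is no evidence in your write-up that it can be proven along the lines you sketch.
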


In Lemma \ref{lemma:prodexp-robust} in the following section we show that any $\rho$-product expanding family is also $\rho'$-robust, for positive $\rho'$. This proves Theorem~\ref{thm:robust}.

\subsection{Product expansion implies robustness}\label{sec:product-robust}

In this section we show that product expansion  implies robustness. Recall the definition.
\begin{definition}[Product expansion, \cite{kalachev2022two}]\label{def:prexp}
    A family of codes $\set{\C_i}_{i\in [\ell]}$ is said to have $\rho$-product expansion iff the co-cycle expansion of $L_{[\ell]}$ at level $\ell-1$ is at least $\rho$, namely, if $\eps_{\cocyc}(\ell-1)\geq n\cdot \rho$.
\end{definition}
To be precise, the definition in \cite{kalachev2022two} is given in slightly different terms, but it is exactly equivalent to the definition above, as shown in \cite[Appendix B]{kalachev2022two}.

For our results, we need a bit more: as stated in Definition~\ref{def:h-robust}, we need not only coboundary expansion at level $\ell-1$, but also for {\em all} levels $k\leq \ell-1$. In the next lemma we show that the latter follows from the former. For convenience, assume for the remainder of this section that the family $\set{\C_i}_i$ is fixed, and we omit it from our notation.

\begin{lemma}\label{lemma:prodexp-robust}
  Let $\ell\geq 1$ and $S$ such that $|S|=\ell$.
For every $\rho>0$ there is some $\rho'>0$ (depending on $\rho$ and $\ell$) such that the following holds. Suppose that for every $S'\subsetneq S$ and $\ell'=|S'|\geq 1$ the complex $L_{S'}$ satisfies
$\kappa_{\ell',\ell'-1} \geq \rho$. Then for all $0\leq k\leq \ell-1$, $\kappa_{\ell,k} > \rho'$.
\end{lemma}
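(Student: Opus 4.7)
The plan is to proceed by induction on $\ell = |S|$. For the base case $\ell=1$, the hypothesis applied with $S'=S=\{i\}$ directly gives $\kappa_{1,0}\geq \rho$, and we take $\rho' \leq \rho$. For the inductive step, I fix $\ell\geq 2$ and $S$ with $|S|=\ell$, assuming the lemma for all smaller sub-families. I would handle the top level $k=\ell-1$ directly from the hypothesis (with $S'=S$), and the bottom level $k=0$ via dual-code distance: for any nonzero $x\in C^0(L_S) \cong \F_q^{\prod_{j\in S}\hat A_j}$, treating $(I\otimes h_j^T)x$ as a nonzero element of the tensor code $\code_j^\perp\otimes \F_q^{\prod_{j'\neq j}\hat A_{j'}}$ gives at least $d_j^\perp \geq \rho n$ nonzero rows, hence $|\delta_S x|\geq \rho n|x|$.

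The substantive case is the intermediate levels $1\leq k \leq \ell-2$. Here I would fix any $j^*\in S$, let $S'=S\setminus\{j^*\}$, and exploit the Künneth decomposition
\[ C^k(L_S) \;\cong\; \F_q^{\hat A_{j^*}}\otimes C^k(L_{S'}) \;\oplus\; \F_q^{A_{j^*}}\otimes C^{k-1}(L_{S'})\;, \]
writing $x = x^{(0)} + x^{(1)}$ accordingly. The coboundary then splits as $\delta_S x = (I\otimes \delta_{S'})x^{(0)} + (\delta_{\{j^*\}}\otimes I)x^{(0)} + (I\otimes \delta_{S'})x^{(1)}$, with the first summand landing in the ``$j^*\notin T'$'' part of $C^{k+1}(L_S)$ and the other two in the ``$j^*\in T'$'' part.

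The central technical ingredient I would prove is an \emph{enlargement lemma}: if $L_{S'}$ is $\kappa$-robust at some level $k'$, then the enlarged complex $L_{S'}^{(M)}$ obtained by replacing every local coefficient space $V_f$ with $V_f^M$ is still $\kappa$-robust at level $k'$, with parameter independent of $M$, provided $q$ is sufficiently large compared to $|L_{S'}(k')|$ (which is ensured by the choice $q = 2^{(n+3)^\ell}$). Given a minimal $\tilde x\in C^{k'}(L_{S'}^{(M)})$ decomposed as $\tilde x = \sum_{m=1}^M \tilde x^{(m)} \otimes e_m$, I would first use minimality of $\tilde x$ together with slice-wise coboundary modifications to replace each $\tilde x^{(m)}$ by its minimal representative in $C^{k'}(L_{S'})$ without decreasing $|\tilde x|$. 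Then, since for each face $f$ with $\tilde x(f)\neq 0$ the set of ``bad'' $v\in \F_q^M$ satisfying $\sum_m v_m \tilde x^{(m)}(f) = 0$ is a proper $\F_q$-subspace, a union bound over the $|L_{S'}(k')|$ faces produces a $v$ with $|\tilde x_v| = |\tilde x|$ for $\tilde x_v := \sum_m v_m \tilde x^{(m)}$. Because $|\delta_{S'}\tilde x|\geq |\delta_{S'}\tilde x_v|$ automatically, applying $\kappa$-robustness of $L_{S'}$ to the minimal representative of $\tilde x_v$ concludes this lemma.

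With the enlargement lemma available, the inductive hypothesis on $L_{S'}$ will give $\rho''$-robustness at every level $0\leq k'\leq \ell-2$, which the enlargement lemma will promote to the $V_f^M$-coefficient versions relevant to $x^{(0)}$ (with $M=m_{j^*}$) and $x^{(1)}$ (with $M=n$). The final step is to combine these with the three-term formula for $\delta_S x$ to conclude $|\delta_S x|\geq \rho' n|x|$. The hard part will be the \emph{coupling} between $x^{(0)}$ and $x^{(1)}$: a coboundary $\delta_S y$ for $y\in C^{k-1}(L_S)$ generally perturbs both parts simultaneously (the $C^0(L_{\{j^*\}})\otimes C^{k-1}(L_{S'})$ component of $y$ contributes to $x^{(0)}$ via $I\otimes \delta_{S'}$ and to $x^{(1)}$ via $\delta_{\{j^*\}}\otimes I$), so minimality of $x$ in $L_S$ does not directly factor through minimality of the two parts in the enlarged $L_{S'}$. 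I expect resolving this coupling --- through a case analysis on which of $|x^{(0)}|$, $|x^{(1)}|$ dominates, together with careful bookkeeping of cancellations in the shared coboundary image --- to be the main technical obstacle.
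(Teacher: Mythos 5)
Your overall plan matches the paper's: induction on $|S|$, splitting $C^k(L_S)$ according to whether the distinguished direction $j^*$ appears in the type of a face, and an enlargement lemma promoting robustness of $L_{S'}$ from coefficients $\sheaf$ to $\sheaf^M$. Your ``enlargement lemma'' is exactly Proposition~\ref{prop:DPcbdry}, and your decomposition $x=x^{(0)}+x^{(1)}$ is the paper's partition $L_{[\ell+1]}=L_0\cup L_1$.

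However, there is a genuine gap in your sketch of the enlargement lemma. After replacing each slice $\tilde x^{(m)}$ by its minimal representative and choosing a generic $v\in\F_q^M$ so that $\tilde x_v:=\sum_m v_m\tilde x^{(m)}$ has support of size $|\tilde x|$, you write that one ``applies $\kappa$-robustness to the minimal representative of $\tilde x_v$'' and concludes. This does not close the argument: robustness yields $|\delta_{S'}\tilde x_v|\geq\kappa n\cdot d(\tilde x_v,\im\delta_{S'})$, so you would need $d(\tilde x_v,\im\delta_{S'})$ comparable to $|\tilde x_v|$, but slice-wise minimality of each $\tilde x^{(m)}$ does not make the linear combination $\tilde x_v$ minimal, and the union bound only controls the \emph{support} of $\tilde x_v$, not its distance to $\im\delta_{S'}$. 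The paper handles this quite differently: for each $\alpha\in\F_q^M$ one takes $\hat y_\alpha$ to be a \emph{minimum-weight preimage} of the slice $\alpha\cdot x$, and the key technical point is to show, via the heavy/light-face machinery (Claim~\ref{claim:H-set} and Claim~\ref{claim:heavy}), that $\alpha\mapsto\hat y_\alpha(s)$ is linear except on a small exceptional set $H$ of faces. That argument incidentally works for any field size, whereas yours additionally requires $q$ large relative to $|L_{S'}(k')|$.

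Finally, the ``coupling'' issue you flag as the main remaining obstacle is resolved in the paper not by a case analysis on which of $|x^{(0)}|,|x^{(1)}|$ dominates, but sequentially: starting from $x=\delta y$, one first finds a short $\tilde y_0$ supported on $L_0$ with $\dt\tilde y_0=x|_{L_0}$ using the enlargement lemma at level $k$; then $x'=x-\delta\tilde y_0$ is supported on $L_1$ and still lies in $\im\delta$, and exactness of $\dt$ lets one decompose $x'$ as a direct sum over $j\in A_{j^*}$ of coboundaries $\dt z_j$, to which the level-$(k-1)$ robustness of $L_{S'}$ from the inductive hypothesis applies slot by slot. No minimality assumption on the global cochain is needed for this step, so the coupling you worry about does not arise.
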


The proof of the lemma is given in the next two subsections.

\subsubsection[Induction]{Proof of Lemma \ref{lemma:prodexp-robust}}

We prove Lemma~\ref{lemma:prodexp-robust} by induction on $\ell$. For $\ell=1$ it is true because the only relevant case, $k=0$, is covered by product expansion.  We assume for convenience that $S= \set{1,\ldots,\ell} = [\ell]$.
Assume that we have a lower bound on $\kappa_{\ell,k}$ for all $0\leq k<\ell$. We show a lower bound on $\kappa_{\ell+1,k}$, for any $k<\ell$ (the case $k=\ell$ is covered by product expansion).

Recall the sheaf  $\sheaf_\ell$, which is such that for any $s\in L_{[\ell]}$, $\sheaf_\ell(s) = \F_q^{\prod_{i\in [\ell]\setminus \type(s)} \hat A_i}$. Let $C(L_{[\ell+1]},\sheaf_{\ell+1})$ denote the chain complex. We partition $L_{[\ell+1]}$ into two parts, $L_0$ and $L_1$, as follows.
\begin{enumerate}
    \item $L_0$ consists of all faces whose type is contained in $[\ell]=\{1, \ldots, \ell\}$. Clearly $L_0 \cong L_{[\ell]}$. Moreover, for each face $s\in L_0$,
    \[ (\sheaf_\ell(s))^{\hat A_{\ell+1}}= (\F_q^{\prod_{i\in [\ell]\setminus\type(s)} \hat A_i})^{\hat A_{\ell+1}} \cong\F_q^{\prod_{i\in [\ell+1]\setminus \type(s)}\hat A_i}=\sheaf_{\ell+1}(s).\]

    So
    \begin{equation}\label{eq:ind-equ-1}
       C^k(L_{0},\sheaf_{\ell+1})\,\cong\, C^k(L_{[\ell]},\sheaf_\ell^M)\;,
    \end{equation}
    for $M=|\hat A_{\ell+1}|$. Here, $\sheaf_\ell^M$ is the sheaf where each space $\sheaf_\ell(s)$ is replaced by the cartesian product $\sheaf_\ell(s)^M$.
    We will rely on Proposition \ref{prop:DPcbdry} below that upgrades coboundary expansion with respect to coefficients from $\sheaf$ to the same with respect to $\sheaf^M$ for any $M\in\N$.
    \item $L_1$ consists of faces whose type contains $\ell+1$. These faces can be written as $(s;j)$ for a face $s\in L_{[\ell]}$ and an element $j\in A_{\ell+1}$. Thus $L_{1}(k)$ is partitioned into $n=|A_{\ell+1}|$ copies of $L_{[\ell]}(k)$. Put differently, for each $s\in L_{[\ell]}$ we have an $n$-tuple $(\sheaf_{\ell+1}((s;j)))_{j\in A_{\ell+1}}$ of coefficients that naturally lives in $\sheaf_\ell(s)^n$. So
    \begin{equation*}
        C^k(L_1,\sheaf_{\ell+1}) \cong C^{k}(L_{[\ell]},\sheaf_\ell^n).
    \end{equation*}
    
    Here we are using the fact that there is a face $(s;j)\in L_1$ for each $j\in A_{\ell+1}$, and the coefficient space of this face satisfies
    \[ \sheaf_{\ell+1}((s;j)) = \F_q^{\prod_{i\in [\ell+1]\setminus \type((s;j))}\hat A_i} = \F_q^{\prod_{i\in [\ell]\setminus \type(s)}\hat A_i} = \sheaf_{\ell}(s).\]
    For  $x\in C^k(L_1,\sheaf_{\ell+1})$, we can decompose $x=\sum_{s\in L_1} x(s)$ according to the value $j\in A_{\ell+1}$ that $s$ takes on the $(\ell+1)$-st coordinate, as
    \begin{equation}\label{eq:directsum}
        x = \sum_{j\in A_{\ell+1}} y_j \circ j\;,\qquad y_j\in C^{k}(L_{[\ell]},\sheaf_\ell)
    \end{equation}
where $y_j \circ j\in C^k(L_{[\ell+1]},\sheaf_{\ell+1})$ is defined by
    \[\forall s\in L_{[\ell]}(k)\;,\quad \forall i\in A_{\ell+1}\;,\qquad (y_j\circ j )(s;i) =\begin{cases}
  y_j(s) & i=j\\
  0& i\neq j
\end{cases}.\]
\end{enumerate}

Let $x = \delta y \in C^{k+1}(L_{[\ell+1]},\sheaf_{\ell+1})$ be such that $|x|=\epsilon |X(k+1)|$. Write $x_0 = x|_{L_0}\in C^k(L_0,\sheaf_{\ell+1})$ and observe that $x_0= (\delta y) |_{L_0} = \dt ( y|_{L_0} )$, where $\dt$ denotes the coboundary map in $C(L_{[\ell]}, \sheaf_{\ell+1})\cong C(L_{[\ell]}, \sheaf_\ell^{M})$. (This is because $\delta_\ell$ only adds coordinates in one of the first $\ell$ directions.)
%
%
To complete the argument we need a bound on $\bar\kappa_{\ell,k}$, the coboundary expansion constant of the chain complex $C^k(L_{[\ell]},\sheaf_\ell^M)$. Let $\bar\kappa_{\ell ,k}$ be the infimum of all coboundary expansion constants of $C^k(L_{[\ell]},\sheaf_\ell^M)$, taken over all natural integers $M$.
We call $\bar\kappa_{\ell ,k}$ the \emph{collective robustness}
  because of its similarity to the notion of \emph{collective cosystolic expansion} studied in \cite[Definition 1.2]{kaufman2021new}.

In Section~\ref{sec:proof-sheaf} below we show the following.

\begin{proposition}\label{prop:DPcbdry}
    Let $\kappa = \min_{i<\ell'\leq \ell} \kappa_{\ell',i}$. There is a constant $A>0$ depending only on $\ell$ such that $ \bar \kappa_{\ell,k} \geq A\cdot \kappa^{2\ell}$.
\end{proposition}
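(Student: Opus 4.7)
The plan is to proceed by induction on the dimension $\ell$, exploiting the product decomposition $L_{[\ell]} \cong L_{[\ell-1]} \times L_{\{\ell\}}$. The base case $\ell = 1$ is immediate: $C(L_{\{i\}}, V^M)$ is the chain complex of the code associated with $h_i$ with $V^M$-valued coefficients, and its coboundary expansion at level $0$ reduces to the block-distance of the $M$-fold direct sum of the code, which coincides with the distance of a single copy (any nonzero element of the direct sum has a nonzero component in some coordinate, which by itself already realizes the single-copy distance). Thus $\bar\kappa_{1,0} \geq \kappa_{1,0}$, establishing the base case with a constant $A=1$.

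For the inductive step, fix $k < \ell$ and let $x \in C^k(L_{[\ell]}, \sheaf_\ell^M)$ be a minimal element whose ratio $|\delta_{[\ell]} x|/|x|$ approaches $\bar\kappa_{\ell,k}$. I would decompose $x$ by slicing along direction $\ell$: let $x_h$ denote the restriction of $x$ to faces of type $T \subseteq [\ell-1]$ (type does not contain $\ell$), and $x_v$ the restriction to faces of type $T \cup \{\ell\}$ for $T \subseteq [\ell-1]$ (type contains $\ell$). After reindexing the $A_\ell$-coordinate into the coefficient space, both $x_h$ and $x_v$ can be regarded as cochains over $L_{[\ell-1]}$ with coefficients in product sheaves of the form $(\sheaf_{\ell-1})^{M'}$ for some $M'$ depending on $M$ and $n$; since the inductive hypothesis is uniform in the number of copies, it still applies. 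A case analysis follows: if $|x_h|$ dominates $|x|$, apply the inductive hypothesis directly to $x_h$ in $C^k(L_{[\ell-1]}, (\sheaf_{\ell-1})^{M'})$; if $|x_v|$ dominates, combine the single-sheaf robustness $\kappa_{\ell,k}$ along direction $\ell$ with the inductive bound on $L_{[\ell-1]}$ applied to $x_v$. The compatibility between the $\ell$-coboundary of $x_h$ and the $[\ell-1]$-coboundary of $x_v$, both of which appear in the mixed-type components of $\delta_{[\ell]} x$, allows transferring the coboundary bound between the two halves of the decomposition.

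The main obstacle is the transfer of minimality: a globally minimal $x$ need not restrict to minimal slices $x_h$ and $x_v$. I would address this via a local re-minimization step, replacing each slice by a canonical minimal representative in its coboundary coset and bounding the weight discrepancy using the already-established lower-dimensional robustness bounds. Each such correction incurs a $\kappa$-factor loss, which combined with the $\kappa^{2(\ell-1)}$ factor from the inductive hypothesis yields the bound $\bar\kappa_{\ell,k} \geq A \kappa^{2\ell}$ after absorbing dimension-dependent constants into $A = A(\ell)$. A secondary technical point, which I would handle at the outset, is verifying that the ambient product sheaves $(\sheaf_{\ell-1})^{M'}$ arising from the decomposition are genuinely of the form covered by the inductive hypothesis; this reduces to checking that ``coordinate''-wise weight and coboundary in $\sheaf_\ell^M$ factor compatibly through the $L_{[\ell-1]}$-structure.
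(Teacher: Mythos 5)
Your approach is genuinely different from the paper's, and it has a real gap that would be hard to close.

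The paper's proof of Proposition~\ref{prop:DPcbdry} does \emph{not} decompose $L_{[\ell]}$ along a direction. Instead it works with the coboundary formulation: given $x = \delta \y_1$, for each $\alpha \in \F_q^M$ it projects $\alpha \cdot \y_1$ to a scalar cochain, finds a minimal scalar $\hat y_\alpha$ with $\delta \hat y_\alpha = \alpha \cdot x$, and then assembles $\y(s) = (\hat y_{\e_1}(s), \ldots, \hat y_{\e_M}(s))$. The crux is that the map $\alpha \mapsto \hat y_\alpha(s)$ is \emph{not} linear a priori, so one cannot naively bound $|\y|$. The paper shows (Claim~\ref{claim:H-set}, via the heavy-face propagation argument of Claim~\ref{claim:heavy}) that linearity holds outside a small ``heavy'' set $H$, and then an averaging argument over random $\alpha$ gives $|\y| \le 2|\hat y_\alpha| + |H|$ for some $\alpha$, yielding the stated bound. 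None of this machinery appears in your proposal.

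Your proposal tries a dimension induction with the slicing $L_{[\ell]} = L_0 \sqcup L_1$ (faces not containing $\ell$ vs. containing $\ell$), which is the decomposition the paper uses for Lemma~\ref{lemma:prodexp-robust}, not for Proposition~\ref{prop:DPcbdry}. The gap you acknowledge --- minimality transfer --- is the genuine obstacle, and the ``local re-minimization'' sketch does not resolve it. Concretely: if $x$ is minimal in $C^k(L_{[\ell]}, \sheaf_\ell^M)$ and $y_h \in C^{k-1}(L_{[\ell-1]})$ lowers $|x_h|$ when viewed as a cochain on $L_0$, then lifting $y_h$ into $C^{k-1}(L_{[\ell]})$ and applying $\delta_{[\ell]}$ also contributes $h_\ell^T$-terms landing in $L_1$, which can \emph{increase} $|x_v + (\delta y_h)|_{L_1}|$ arbitrarily. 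So global minimality of $x$ gives no control on minimality of $x_h$. The paper avoids this entirely by using the equivalent ``small preimage of a coboundary'' formulation, where one constructs a preimage piece by piece rather than appealing to minimality of restrictions. A secondary problem: reindexing $x_v$ as a $(k-1)$-cochain over $L_{[\ell-1]}$ with $\sheaf_{\ell-1}^{nM}$ coefficients merges the $n$ copies (one per $a_\ell$) into a single block, which changes the block-Hamming weight: the block weight in $L_{[\ell]}$ counts each $a_\ell$-slot separately, while your reindexed block weight counts them as one. The robustness bounds from the inductive hypothesis apply to the wrong weight function. The paper's proof of Lemma~\ref{lemma:prodexp-robust} (where this slicing \emph{is} used) handles this by decomposing $x'|_{L_1} = \sum_j (\dt z_j) \circ j$ with each $z_j$ a \emph{separate} cochain over $L_{[\ell-1]}$ and minimizing them individually --- keeping the $n$ copies unmerged. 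Your base case $\ell = 1$ is fine, but the inductive step would need the coboundary formulation and the per-$j$ decomposition to have any chance, and even then you would be reproving Lemma~\ref{lemma:prodexp-robust} rather than establishing the independent ingredient it needs, namely the ability to pass from $\sheaf$ to $\sheaf^M$, which is exactly what the paper's projection-and-averaging argument supplies.
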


In words, this proposition roughly says that if the chain complex $C^*(L_{[\ell]}, \sheaf_\ell)$ has constant robustness $\kappa_{\cdot,\cdot}$, then the chain complex has constant collective robustness $\bar\kappa_{\cdot,\cdot}$.

By the inductive hypothesis we have a lower bound for $\kappa_{\ell',i}$ for all $i<\ell'\leq \ell$, so by the proposition, there is $\tilde{y}_0\in C^{k}(L_{[\ell]}, \sheaf_\ell^{M})$ such that
$\dt\tilde{y}_0 = x_0$ and
\begin{equation}\label{eq:sheaf-3}
    \bar\kappa_{\ell,k} n\cdot |\tilde y_0  |  \leq  |x_0 | \leq  |x| .
\end{equation}
Let $x' = x-\delta\tilde y_0\in C^k(L_1,\sheaf_{\ell+1})$. By construction $x'\in \im \delta$ since $x = \delta ( y-\tilde y_0)$, and also, using $n=|A_{\ell+1}|$,
\begin{equation}\label{eq:sheaf-4}
|x'| \leq |x| + |\delta \tilde y_0| \leq |x| + n\ell\cdot |\tilde y_0| \leq |x| + \frac {n\ell}{\bar\kappa_{\ell,k}n}|x| \leq (1+ \frac {\ell}{\bar\kappa_{\ell,k}})|x| \;.
\end{equation}
\begin{claim}
There are chains $z_j \in C^{{k-1}}(L_0,\sheaf_{\ell})$ for each $j\in A_{\ell+1}$, such that
\begin{equation}\label{eq:sheaf-xprime}
  x' = \sum_{j\in A_{\ell+1}} (\dt z_j) \circ j\;.
\end{equation}
\end{claim}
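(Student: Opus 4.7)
The plan is to produce the $z_j$ by decomposing a coboundary preimage of $x'$ along the $L_0/L_1$ split, and then applying the implication ``robustness implies exactness'' inside $L_{[\ell]}$ to its $L_0$-part.

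First, I would set $w := y - \tilde y_0 \in C^k(L_{[\ell+1]}, \sheaf_{\ell+1})$, so that $\delta w = x - \delta \tilde y_0 = x'$, and split $w = w_0 + w_1$ with $w_0$ supported on $L_0$ and $w_1$ on $L_1$. Every $k$-face lying below a face of $L_0$ is itself in $L_0$ (types shrink downward), so $(\delta w)|_{L_0} = \dt w_0$. Since $x'|_{L_0}=0$, the cochain $w_0$ is a cocycle in $C^k(L_0,\sheaf_{\ell+1}) \cong C^k(L_{[\ell]}, \sheaf_\ell^{M})$. Decomposing $w_0(s) = (w_0(s)_i)_{i \in \hat A_{\ell+1}}$ componentwise, each $w_0(\cdot)_i \in C^k(L_{[\ell]}, \sheaf_\ell)$ is itself a cocycle.

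Next, I would invoke the inductive hypothesis. Since $k \leq \ell - 1$, by IH the complex $C^*(L_{[\ell]}, \sheaf_\ell)$ is $\kappa$-robust at level $k$. Robustness implies exactness by a standard argument (of the kind implicit in the remark following Definition \ref{def:robust}): passing any cocycle $\zeta$ to a minimal representative $\zeta + \dt \eta$ preserves the cocycle property, so by robustness the minimal representative has weight $0$, yielding $\zeta \in \im \dt$. Applying this to each $w_0(\cdot)_i$ produces $\tilde z_i \in C^{k-1}(L_{[\ell]}, \sheaf_\ell)$ with $w_0(\cdot)_i = \dt \tilde z_i$.

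Finally, I would expand $w_1 = \sum_{j'\in A_{\ell+1}} u_{j'} \circ j'$ with $u_{j'} \in C^{k-1}(L_{[\ell]}, \sheaf_\ell)$, and compute $x'(s;j) = (\delta w)(s;j)$ directly. For a face $(s;j) \in L_1(k+1)$, the $k$-faces beneath it are $s \in L_0$ (obtained by removing direction $\ell+1$) and $(s';j)$ for $s' \precdot s$ in $L_{[\ell]}$. The $L_1$-contributions collapse to $(\dt u_j)(s)$, while by the definition of $\corest$ the $L_0$-contribution equals $\sum_{i\in\hat A_{\ell+1}} (h_{\ell+1}^T)_{j,i}\, w_0(s)_i$. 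Substituting $w_0(s)_i = (\dt \tilde z_i)(s)$ and pulling $\dt$ out of the finite linear combination gives $y_j(s) = (\dt z_j)(s)$ for
\[ z_j \,:=\, u_j + \sum_{i \in \hat A_{\ell+1}} (h_{\ell+1}^T)_{j,i}\, \tilde z_i \,\in\, C^{k-1}(L_0, \sheaf_\ell)\;,\]
which is exactly \eqref{eq:sheaf-xprime}. No real obstacle is expected; the degenerate case $k=0$ (where $L_1(0)=\emptyset$ and $C^{-1}$ is trivial) is covered because robustness at level $0$ of $L_{[\ell]}$ forces each $w_0(\cdot)_i=0$, so $x'=0$ and the claim holds vacuously with $z_j=0$.
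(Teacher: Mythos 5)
Your proposal is correct and follows essentially the same route as the paper's proof: both observe that the $L_0$-part of the preimage $y' = y - \tilde y_0$ is a $\delta_\ell$-cocycle (because $x'|_{L_0}=0$), invoke exactness of the $L_{[\ell]}$-cochain complex to pull it back to a degree-$(k-1)$ element (your $\tilde z_i$, the paper's $z'$), and then combine with the $L_1$-part of $y'$ to obtain the $z_j$. The small differences are stylistic: the paper packages the computation as a coboundary correction $y''=y'-\delta z'$ which vanishes on $L_0$ and then decomposes $y''$ along $A_{\ell+1}$, whereas you compute $x'(s;j)$ face-by-face and combine the $L_0$- and $L_1$-contributions explicitly; unwinding shows the two yield the same $z_j$. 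You derive exactness of $\delta_\ell$ from the inductive robustness bound, while the paper invokes exactness directly (which is also available unconditionally via the K\"unneth argument of Lemma~\ref{lem:exactness}, applied to the cochain complex, since the $h_i^T$ are injective); both are valid.
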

\begin{proof}
We have $x' = \delta y'$ for $y' = y-\tilde y_0$. Since $x'|_{L_0}=0$, we deduce that $y'|_{L_0} \in \ker \dt$. By exactness of $\dt$ there is some $z'$ such that $y'|_{L_0} = \dt z'$. Letting $y'' = y'-\delta z'$, we still have $\delta y'' = x'$ but now $y''|_{L_0} = y'|_{L_0} - \delta z'|_{L_0} = \dt z' - \delta z'|_{L_0}=0$. In conclusion, $y''\in C^{k}(L_1)$, so by \eqref{eq:directsum} it can be written as $y'' = \sum_{j\in A_{\ell+1}} z_j\circ j$, and then we can write $x' = \delta y'' = \sum_{j\in A_{\ell+1}} \dt z_j\circ j$.
\end{proof}
Let $y'' = \sum_{j\in A_{\ell+1}} \tilde z_j\circ j$, where
$\tilde z_j$ are chosen to have smallest weight while satisfying $\dt \tilde z_j = \dt z_j$. By induction $\kappa_{\ell,{k-1}} n \cdot|\tilde z_j |\leq |\dt z_j|$, and
\begin{align}
  |y''| = \sum_j |\tilde z_j| &\leq \sum_j \frac{|\dt z_j| }{ n\kappa_{\ell,k-1}} \nonumber \\
  &= \frac{|x'|}{\kappa_{\ell,k-1} n} \nonumber \\
  &  \leq \frac 1 {\kappa_{\ell,k-1} n}(1+ \frac {\ell}{\bar \kappa_{\ell,k}}) |x|\;, \label{eq:sheaf-5}
\end{align}
where the second line is by~\eqref{eq:sheaf-xprime} and the third is by~\eqref{eq:sheaf-4}.
This allow us to define $\tilde{y} = \tilde{y}_0 + y''$, which satisfies $\delta \tilde{y} = x$ and
\begin{equation*}
  |\tilde{y}| \le |\tilde{y}_0| + |y''| \le \left(\frac{1}{\bar \kappa_{\ell,k} n} + \frac 1 {\kappa_{\ell,k-1} n}(1+ \frac {\ell}{\bar \kappa_{\ell,k}})\right) |x|
\end{equation*}
which follows from~\eqref{eq:sheaf-3} and~\eqref{eq:sheaf-5}.
Thus,
\begin{equation*}
  \kappa_{\ell+1,k} \geq  \frac{1}{\frac{1}{\bar \kappa_{\ell,k}} + \frac 1 {\kappa_{\ell,k-1}}(1+ \frac {\ell}{\bar \kappa_{\ell,k}})}
\end{equation*}
which concludes the induction step.
\qed

\subsubsection[From $\sheaf$ to $\sheaf^M$]{Proof of Proposition \ref{prop:DPcbdry}}\label{sec:proof-sheaf}

In this subsection we prove Proposition
\ref{prop:DPcbdry}, which lower bounds the coboundary expansion of a direct product sheaf based on the coboundary expansion of the original sheaf. To lighten the notation, we fix $\ell$ and let $X=L_{\ell}$, and we fix $M\in \mathbb{N}$ and denote by $\bar\sheaf$ the sheaf that to each face $f$ of $L_{\ell}$ associates the coefficient space $\bar\sheaf(f) = (\sheaf(f))^M$. The coboundary operators in this complex simply operate component-wise.
Moreover, the weight of a chain is {\em still} the number of faces with non-zero values. Let $\bar\kappa_{\ell,k}$ be the robustness of $C(X,\bar\sheaf)$ at level $k$, as in Definition~\ref{def:robust}.

\begin{definition}\label{def:heavy}
  Let $0<p\leq 1$ and $0\leq \r\leq i$.
Given an $i$-chain $z$ we say that an $\r$-face $e$ is {\em $p$-heavy} for $z$ if
\[|z(X_{\geq e})| \,\geq\, p\cdot |X_{\geq e}(i)|\;.\]
If $e$ is not $p$-heavy then we say that it is {\em $p$-light}.
\end{definition}

\begin{proof}[Proof of Proposition \ref{prop:DPcbdry}]
    Let $\y_1 \in C^{k}(L_{[\ell]},\bar\sheaf)$ and let $x = \delta \y_1$. We will find $ \y\in C^{k}(L_{[\ell]},\bar\sheaf)$ such that $\delta\y = x$ and $|\y| \leq \frac \epsilon { \bar\kappa_{\ell,k}}\cdot |X(k)|$, where $|x| = \epsilon\cdot |X(k+1)|$. We may assume wlog that $\epsilon \leq \frac 1 3 (\frac \kappa 4)^{2\ell}$. This is because if not, then as long as $A \leq \frac 1{3(16)^\ell}$ the proposition follows trivially as
    \[\bar \kappa_{\ell,k} n\cdot |\y_1| \leq \bar \kappa_{\ell,k} |X(k+1)|\leq \epsilon|X(k+1)|=|x| \;.\]
    For each $\alpha \in \F_q^M$, we denote $\alpha\cdot \y_1$ the chain in $C^{k}(X,\sheaf)$ that satisfies $(\alpha\cdot \y_1)(s) = \alpha\cdot( \y_1(s))$ for every $s\in X(k)$. By linearity, $\delta(\alpha \cdot \y_1) = \alpha \cdot (\delta \y_1) = \alpha \cdot x$. Observe that for any face $s$, $x(s)=0$ implies $\alpha\cdot x(s)=0$ so $|\alpha\cdot x| \leq |x|$. Let $\hat y_\alpha \in C^{k}(L_{[\ell]},\sheaf)$ be the minimum-weight chain such that $\delta \hat y_\alpha = \alpha \cdot x$. By coboundary expansion of $C(L_{[\ell]},\sheaf)$,
    \begin{equation}\label{eq:yalpha}
     \forall \alpha,\quad \kappa_{\ell,k} n\cdot |\hat y_\alpha | \leq |\delta \hat y_\alpha| = |\alpha\cdot x| \leq |x|.
        \end{equation}
    Let $\e_1,\ldots,\e_M \in \F_q^M$ be the standard basis and define
    \begin{equation}\label{eq:defy}\forall s\in X(k),\quad
        \y(s) = (\hat y_{\e_1}(s),\ldots,\hat y_{\e_M}(s)).
    \end{equation}
    Clearly $\delta \y=(\e_1\cdot x,\ldots,\e_M\cdot x)=x$.
        To conclude the proof we will show the following claim.

        \begin{claim}\label{claim:H-set}
          Let $\rho = \frac 1 3 (\frac {\kappa}4)^{2\ell}$. Let $H\subseteq X(k)$ be the set of $k$-faces that are above some $\rho$-heavy face of $x$:
          \begin{equation}\label{eq:def-heavy}
              H \,=\, \sett{s\in X(k)}{\exists  e \prec s, \; e \hbox{ is $\rho$-heavy for }x}\;.
          \end{equation}
    Then
    \begin{equation}\label{eq:linear}
        \forall s\not\in H,\forall \alpha,\beta\in \F_q^M,\qquad \hat y_\alpha (s) + \hat y_\beta (s) =\hat y_{\alpha+\beta} (s)\;.
    \end{equation}
    Moreover,
    \begin{equation}\label{eq:H-size}
    |H| \,\leq\,   2^k \ \frac\epsilon{\rho}\cdot |X(k)|\;.
    \end{equation}
  \end{claim}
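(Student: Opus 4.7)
The plan is to treat the two parts of the claim separately.

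For the bound \eqref{eq:H-size} on $|H|$, I proceed by a direct double-counting. In the cubical complex $L_{[\ell]}$, each $(k+1)$-face has exactly $\binom{k+1}{2}$ many $(k-1)$-subfaces (parametrized by the pair of directions dropped from its type), so
\[ \sum_{e\text{ is }\rho\text{-heavy}} \rho \cdot |X_{\geq e}(k+1)| \;\leq\; \sum_{e\in L_{[\ell]}(k-1)} |x(X_{\geq e})| \;=\; \sum_{s'\in \supp(x)} |\{e\prec s'\}| \;\leq\; \binom{k+1}{2}|x|\;. \]
This gives a bound on the number of $\rho$-heavy $(k-1)$-faces; multiplying by the uniform value $|X_{\geq e}(k)| = (\ell-k+1)n$ and plugging in the standard formulas for $|X_{\geq e}(k+1)|$, $|X(k)|$ and $|X(k+1)|$ yields $|H| \leq k(\epsilon/\rho)|X(k)|$, which is stronger than the stated factor $2^k$.

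For \eqref{eq:linear}, I define the cocycle $w := \hat y_\alpha + \hat y_\beta + \hat y_{\alpha+\beta} \in \ker \delta$; by \eqref{eq:yalpha} it satisfies $|w| \leq 3|x|/(\kappa n)$, and the goal is to show $\supp(w) \subseteq H$. The key observation is that for every $(k-1)$-face $e$, the restriction of $\hat y_\gamma$ to the link $X_{\geq e}\cong L_{[\ell]\setminus\type(e)}$ is itself a locally-minimum-weight preimage (in the link) of the corresponding restriction of $\gamma x$: modifying $\hat y_\gamma$ by a coboundary $\delta v$ with $v$ supported on $(k-1)$-faces above $e$ changes $\hat y_\gamma$ only within $X_{\geq e}(k)$, so global min-weight forces local min-weight. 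Applying the inductive hypothesis to this link (of dimension $\ell-k+1\leq \ell$), local coboundary expansion then yields $|\hat y_\gamma (X_{\geq e})|\leq|x(X_{\geq e})|/(\kappa n)$, which for light $e$ is at most $\rho|X_{\geq e}(k+1)|/(\kappa n)$.

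The containment $\supp(w)\subseteq H$ is established by a swapping argument. Fix $s\notin H$ together with a witness light $(k-1)$-face $e \prec s$. By exactness of the link cochain complex at level $k$, the local cocycle obtained by restricting $w$ to $X_{\geq e}$ is a coboundary there: pick $v$ supported on $(k-1)$-faces above $e$ such that $\delta v$ matches $w$ on $X_{\geq e}$. The chain $\hat y'_{\alpha+\beta} := \hat y_{\alpha+\beta} + \delta v$ is then a preimage of $(\alpha+\beta)x$ that coincides with $\hat y_\alpha + \hat y_\beta$ on $X_{\geq e}$, so global min-weight of $\hat y_{\alpha+\beta}$ forces $|\hat y_{\alpha+\beta}(X_{\geq e})| \leq |(\hat y_\alpha + \hat y_\beta)(X_{\geq e})|$. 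Running the symmetric comparisons across the three triples $(\alpha,\beta,\alpha+\beta)$, together with the smallness of the local weights from local robustness, is expected to pin down the pointwise vanishing $w(s)=0$.

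The main technical obstacle is this last implication: passing from aggregate local-weight equalities to pointwise vanishing at the specific face $s$. Because minimum-weight preimages need not be unique, $\hat y_{\alpha+\beta}$ cannot be identified with $\hat y_\alpha + \hat y_\beta$ on $X_{\geq e}$ purely from weight equality. I intend to handle this either by adaptively selecting the witness face $e\prec s$ so that the local robustness bound forces a unique preimage at $s$, or by first replacing the $\hat y_\gamma$ with canonical versions produced via local decoding on all light $(k-1)$-faces (and then verifying that the resulting chain still has weight within a constant factor of the minimum). Either route will require careful bookkeeping to preserve consistency with the original global min-weight property.
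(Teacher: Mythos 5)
Your proposal has one mechanical issue and one substantive flaw.

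The mechanical issue is in the bound \eqref{eq:H-size}. The set $H$ is defined by $\exists e\prec s$ with $e$ $\rho$-heavy, where $e$ may have \emph{any} dimension $r\le k-1$, not only dimension $k-1$. Your double-counting bounds only the number of heavy $(k-1)$-faces. The correct argument applies the same counting to each $r$ (giving a contribution $\binom{k}{r}(\eps/\rho)|X(k)|$ per dimension) and sums over $r$, which is exactly where the factor $2^k=\sum_r\binom{k}{r}$ comes from. Your claimed sharpening to a factor $k$ is an artifact of dropping the lower dimensions.

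The substantive flaw is in your step 4. Coboundary expansion of the link $L_{\ol S}$ (with $S=\type(e)$), applied to the locally-minimal chain $\hat y_\gamma(X_{\geq e}(k))$, gives
\[\kappa n\,|\hat y_\gamma(X_{\geq e}(k))| \,\leq\, |\delta_{\ol S}(\hat y_\gamma(X_{\geq e}(k)))|\;,\]
but $\delta_{\ol S}(\hat y_\gamma(X_{\geq e}(k)))$ is \emph{not} the restriction of $\gamma x=\delta\hat y_\gamma$ to $X_{\geq e}(k+1)$. For $f\in X_{\geq e}(k+1)$, the global coboundary $\delta\hat y_\gamma(f)$ receives contributions from all codimension-one subfaces $f'\precdot f$, but only those with $f'\succeq e$ appear in the local coboundary of the link; the remaining cross terms (the exact analogue of the $\Op_v,\Nb_v$ contributions in Section~\ref{sec:codistance}) are unaccounted for. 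So $|\hat y_\gamma(X_{\geq e})|\leq|x(X_{\geq e})|/(\kappa n)$ does not follow. Beyond this, as you observe yourself, the swapping argument delivers only aggregate local-weight inequalities, and non-uniqueness of minimum-weight preimages blocks the desired pointwise conclusion; neither of your proposed repairs is worked out, and it is not clear either would close the gap.

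The paper proceeds differently. Fixing a minimum-weight $z$ with $\delta z=\hat y_\alpha+\hat y_\beta+\hat y_{\alpha+\beta}$ (so that the goal is $\supp(\delta z)\subseteq H$), it assumes $\delta z(s_0)\neq 0$ and runs a two-stage descent through the face poset governed by Claim~\ref{claim:heavy}: if a face is $p$-heavy for a chain but all its sub-faces are appropriately light, then it is heavy for the coboundary of that chain. A first descent, starting from some $e_0\precdot s_0$ with $z(e_0)\neq 0$, produces $s_1\preceq s_0$ that is $\Omega(\kappa^\ell)$-heavy for $\delta z$. By the triangle inequality one of $\hat y_\alpha,\hat y_\beta,\hat y_{\alpha+\beta}$ carries a third of that weight on $X_{\geq s_1}$, and a second descent with that chain produces $s_2\preceq s_1$ that is $\Omega(\kappa^{2\ell})$-heavy for $x$, so $s_0\in H$. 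It is precisely Claim~\ref{claim:heavy} that correctly absorbs the cross terms you have dropped.
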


  We first show how the claim can be used to conclude the proof.
    For any face $s\not \in H$ and for any $\alpha\in\F_q^M$, \eqref{eq:linear} implies that $\hat y_{\alpha}(s) = \sum_i \alpha_i \hat y_{\e_i}(s)$. If $\y(s)\neq 0$ then there is some $i$ such that $\hat y_{\e_i}(s)\neq 0$ which means that $\Pr_{\alpha}[\hat y_\alpha(s) \neq 0] \geq 1/2$, where the probability is taken over a uniformly random choice of $\alpha$. We conclude that for every $s\in \supp(\y)\setminus H$, at least half of the $\alpha$ have $\hat y_\alpha(s)\neq 0$. Reversing the order of the quantifiers, there is some $\alpha$ such that $\hat y_\alpha(s)\neq 0$ for at least half of $s\in \supp(\y)\setminus H$. Namely $\frac 1 2|\supp(\y)\setminus H| \leq |\hat y_\alpha|$, or
    \begin{equation}\label{eq:bary}
        |\y| \leq 2 |\hat y_\alpha| + |H|.
    \end{equation}
    Using the bound~\eqref{eq:H-size} on the size of $H$,
    \begin{align*}
      n|\y| \leq 2n|\hat y_\alpha| + n|H| &\leq 2(\kappa_{\ell,k})^{-1} |x| + 2^k  n \frac\epsilon{\rho} |X(k)|\\
      & \leq  \Big(\frac{2}{\kappa_{\ell,k}} + \frac{2^k t}{\rho}\Big) |x|\;,
  \end{align*}
where the last line uses $(n/t) |X(k)|\leq |X(k+1)|$
and the definition of $\eps$. Since by definition $\kappa_{\ell,k}\geq \kappa$, this shows that $\bar\kappa_{\ell,k}=\Omega_k(1)\kappa^{2\ell}$, as desired.

It remains to prove Claim~\ref{claim:H-set}. We start by showing the upper bound~\eqref{eq:H-size} on the size of $H$. Towards this, consider the random process of choosing a face in $X(k+1)$ by first choosing uniformly at random an $\r$-face $e$ and then a random $(k+1)$-face containing it, $s\succeq e$. If $e$ is $\rho$-heavy for $x$, then the probability that $x(s)\neq 0$ is at least $\rho$. The fraction of $\r$-faces that are heavy for $x$ is thus at most $\epsilon/\rho$.
    A $k$-face has exactly $\binom{k}{\r}$ $\r$-faces below it, so it has this many chances to contain a heavy $\r$-face. By a union bound, the probability of this event is at most $\binom{k}{\r}\cdot\eps/\rho$. Summing over all $\r\leq k$,
    \[
    |H| \leq \sum_{\r=1}^{k} \binom{k}\r \frac\epsilon{\rho}\cdot |X(k)| = 2^k\ \frac{\epsilon}{\rho}\cdot |X(k)|\;.
    \]
    It remains to prove \eqref{eq:linear}.
    First observe that for all $\alpha,\beta$,
    \[\delta \hat y_\alpha  + \delta \hat y_\beta  -\delta \hat y_{\alpha+\beta}  = \alpha\cdot x + \beta \cdot x - (\alpha+\beta)\cdot x = 0.\]
    So
    $ \hat y_\alpha +  \hat y_\beta  - \hat y_{\alpha+\beta}\in \ker \delta_{k} = \im\delta_{k-1}$ and therefore we can write
    \begin{equation}\label{eq:abc}
            \hat y_\alpha  +  \hat y_\beta - \hat y_{\alpha+\beta} = \delta z
    \end{equation}
 for some $z\in C^{{k-1}}(X,\sheaf)$. Let $z$ be such a chain with minimal weight.
We assume that $\delta z(s_0)\neq 0$ for some $s_0\in X(k)$, and show that $s_0\in H$. This will show~\eqref{eq:linear}.

We begin with the following claim that will help find a face $e$ below $s_0$ that is heavy for $\delta z$.

\begin{claim}\label{claim:heavy}
Let $1\leq \r\leq i<\ell$ and let $z\in C^i(X)$. Define $f_{i,r},g_{i,r}:[0,1]\to[0,1]$ by
\[f_{i,r}(p)\,=\, \frac{ \kappa_{\ell-r,i-\r}}{2} \cdot
\frac{i-\r+1}{\r(\ell-\r+1)}\cdot p\;,\quad\text{and}\quad g_{i,r}(p) \,=\,\frac{ \kappa_{\ell-r,i-\r}}{2} \cdot \frac{i-\r +1}{\ell-i} \cdot p \;.\]
Then for any $0<p\leq 1$, if $e\in X(\r)$ is $p$-heavy for $z$ yet each $e'\precdot e$ is $f_{i,r}(p)$-light for $z$, then $e$ is $g_{i,r}(p)$-heavy for $\delta z$.
\end{claim}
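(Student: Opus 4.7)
The plan is to apply $\kappa_{\ell-r, i-r}$-robustness of the link complex $L_{\ol{\type(e)}}$ at $e$, and then convert the resulting lower bound on a \emph{local} coboundary into a lower bound on the \emph{global} coboundary $\delta z$ restricted to $X_{\geq e}$. The lightness hypothesis will enter precisely to control the discrepancy between these two quantities.

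First I would use Lemma~\ref{lem:loc-glob} to identify $C(X_{\geq e})$ with $C(L_{\ol{\type(e)}})$, which has dimension $\ell - r$, and let $z_e \in C^{i-r}(L_{\ol{\type(e)}})$ denote the image of $z|_{X_{\geq e}}$. The heaviness hypothesis gives $|z_e| \geq p \cdot |X_{\geq e}(i)|$. I would then argue that $z_e$ is locally minimal in the sense of Definition~\ref{def:min}. In the intended application, $z$ is a minimum-weight representative of its coset modulo $\im \delta$ (as chosen just above in the proof of Proposition~\ref{prop:DPcbdry}), and a short verification shows that any local coboundary $\delta_{\ol{\type(e)}} y_e$ can be zero-extended to a global coboundary $\delta \tilde y_e$ supported entirely in $X_{\geq e}$ --- the key combinatorial fact being that whenever $s' \in X_{\geq e}$ and $s' \precdot s$ in $X$, the incidence rules force $s \in X_{\geq e}$. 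Hence any local reduction of $|z_e|$ would induce a global reduction of $|z|$, contradicting minimality. Once local minimality is in hand, $\kappa_{\ell-r, i-r}$-robustness at level $i-r$ gives
\[ |\delta_{\ol{\type(e)}}(z_e)| \,\geq\, \kappa_{\ell-r, i-r} \cdot n \cdot |z_e| \,\geq\, \kappa_{\ell-r, i-r} \cdot n \cdot p \cdot |X_{\geq e}(i)|\;. \]

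The last step is to transfer this local bound into a bound on $|(\delta z)|_{X_{\geq e}}|$. For each $s \in X_{\geq e}(i+1)$ I would partition the predecessors $s' \precdot s$ according to whether the removed direction lies in $\type(e)$ or not, writing $(\delta z)(s) = \delta_{\ol{\type(e)}}(z_e)(s) + E(s)$ with $E(s) = \sum_{j \in \type(e)} \corest_{s^{-j}, s}(z(s^{-j}))$, where $s^{-j}$ is the unique predecessor of $s$ obtained by removing direction $j \in \type(e)$. Each such $s^{-j}$ lies in $X_{\geq e_{-j}}(i) \setminus X_{\geq e}(i)$, where $e_{-j}$ is the unique predecessor of $e$ of type $\type(e) \setminus \{j\}$; a direct check shows that each nonzero $z(s^{-j})$ contributes to $E(s)$ for \emph{exactly one} $s$. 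Hence the number of $s \in X_{\geq e}(i+1)$ with $E(s) \neq 0$ is bounded by $\sum_{e' \precdot e} |z|_{X_{\geq e'}}|$. The lightness bound $|z|_{X_{\geq e'}}| < f_{i,r}(p)\, |X_{\geq e'}(i)|$, combined with the identity $|X_{\geq e'}(i)| = \frac{(\ell - r + 1)\, n}{i - r + 1}\, |X_{\geq e}(i)|$, makes this sum at most $\frac{\kappa_{\ell - r, i - r}\, n p}{2}\, |X_{\geq e}(i)|$. Subtracting from the robustness bound gives $|(\delta z)|_{X_{\geq e}}| \geq \frac{\kappa_{\ell - r, i - r}\, n p}{2}\, |X_{\geq e}(i)|$, and the identity $|X_{\geq e}(i+1)| = \frac{(\ell - i)\, n}{i - r + 1}\, |X_{\geq e}(i)|$ rewrites this as $g_{i,r}(p)\, |X_{\geq e}(i+1)|$, as required.

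The main obstacle will be the arithmetic bookkeeping in the last step, namely ensuring that the precise constants hidden in $f_{i,r}$ cancel exactly the right factor so that half of the robustness bound survives the external-contribution subtraction. A subtler point is the justification of local minimality of $z_e$: in the intended use inside Proposition~\ref{prop:DPcbdry} this comes for free from global minimality of $z$, but applying the claim to a non-minimal $z$ would force one to first replace $z_e$ by a locally minimal representative $z'_e$, which can have much smaller weight and thereby weaken the robustness-based lower bound.
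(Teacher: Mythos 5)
Your proposal is correct and follows essentially the same route as the paper: apply the robustness (coboundary expansion) of the local complex at $e$ to get a lower bound on $\lvert\delta_e z\rvert$ proportional to $\lvert z_e\rvert$, then observe that $\delta z$ and $\delta_e z$ can differ on a face $s\in X_{\geq e}(i+1)$ only through a contribution from a predecessor $s'\precdot s$ with $s'\notin X_{\geq e}$, each of which lies above exactly one $e'\precdot e$ and contributes to exactly one $s$, so the lightness of all $e'\precdot e$ eats away at most half the robustness bound. Your arithmetic (the two ratio identities and the factor of $2$) checks out and matches the paper's choice of $f_{i,r}$ and $g_{i,r}$.

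One thing you do more carefully than the paper: you explicitly verify that $z_e$ is locally minimal before invoking robustness, by zero-extending a hypothetical local improvement and using $s'\in X_{\geq e},\ s'\precdot s\Rightarrow s\in X_{\geq e}$. The paper invokes ``coboundary expansion'' to write $\lvert\delta_e z\rvert\geq n\kappa\lvert z_e\rvert$ directly, which strictly speaking requires that minimality; and indeed the claim is only ever applied in Proposition~\ref{prop:DPcbdry} to globally minimum-weight representatives $z$ and $\hat y_\alpha$, for which your argument shows local minimality holds. Your closing remark that the claim as stated has an implicit minimality hypothesis is a fair criticism of the paper's phrasing.
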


We defer the proof of the claim to the end of this section. Since $\delta z(s_0)\neq 0$, there must be some $e_0\precdot s_0$ such that $z(e_0)\neq 0$. Clearly $e_0\in X(k-1)$ itself is $p_0=1$-heavy for $z \in C^{k-1}(X)$.
If all $e\precdot e_0$ are $p_1=f_{k-1,k-2}(p_0)$-light for $z$, we let $s_1=e_0$. By Claim~\ref{claim:heavy}, $s_1$ is $q_1=g_{k-1,k-2}(1)$-heavy for $\delta z$.
Otherwise, choose $e_1\in X(k-2)$, $e_1\precdot e_0$ a $p_1$-heavy face for $z$. If all $e\precdot e_1$ are $p_2=f_{k-1,k-3}(p_1)$-light, we let $s_1=e_1$. By Claim~\ref{claim:heavy}, $s_1$ is $q_2=g_{k-1,k-3}(p_1)$-heavy for $\delta z$.
Otherwise we continue taking $e_j\precdot \cdots e_1\precdot e_0$ such that
$e_j\in X(k-1-j)$ is $p_{j}$-heavy for z, where
\begin{align*}
  p_{j} &= f_{k-1,k-j-1} \circ\cdots\circ f_{k-1,k-2}(1)\\
  &= \kappa_{\ell-k+2,1}\cdots\kappa_{\ell-k+j+1,j}\cdot\frac{1}{2^j} \cdot\frac{2\cdots (j+1)}{\ell \cdots j(\ell-j+1)}\\
  &=\frac{1}{2^j}\cdot \kappa_{\ell-k+2,1}\cdots\kappa_{\ell-k+j+1,j} \cdot \frac{j+1}{(\ell-j+1)\cdots \ell}\;.
\end{align*}
 We can continue this process until either $j=k$ or for some $j< k$ all $e\precdot e_j$ are $p_{j+1}$-light for $z$. In the latter case Claim~\ref{claim:heavy} applies and we find $s_1$ that is $q_{j+1}=g_{k-1,k-j-2}(p_{j})$-heavy for $\delta z$. Note that if $j=k$ then there is a single face $e_{j+1}\in X(0)$ such that $e_{j+1}\preceq e_0$, and being $p_{k-1}$-light is the same as the fractional weight of $x$ being smaller than $p_{k-1}$, which holds by assumption (given the assumption made on $\eps$ at the start of the proof).

 To summarize the argument, letting $q$ be the minimum of all $q_j$, for $j\in\{k-1,\ldots,1\}$, we have found $s_1 \preceq e_0$ such that $s_1$ is at least $q$-heavy for $\delta(z)$. Moreover, $q=\Omega(\kappa^\ell)$, with the implicit constant depending on $\ell$ only.


By \eqref{eq:abc}, and relying on the  triangle inequality, one of $\hat y_\alpha,\hat y_\beta$, or $\hat y_{\alpha+\beta}$ have at least $1/3$ of the  weight of $\delta z$ above $s_1$. Assume without loss of generality that it is $\hat y_\alpha$ , so $s_1$ is $q/3$-heavy for $\hat y_\alpha$.

We now repeat the same argument as before. This yields a sequence $s_2=e_{j'}\precdot e_{j'-1} \precdot\cdots \precdot s_1$ such that $s_2$ is $\Omega(\kappa^\ell q) = \Omega(\kappa^{2\ell})$-heavy for $\delta\hat y_\alpha=\alpha\cdot x$. Thus $s_2$ is also heavy for $x$. Since $s_0\succeq s_1\succeq s_2$ this implies $s_0\in H$,as desired.
\end{proof}
\begin{proof}[Proof of Claim \ref{claim:heavy}]
By definition,
\begin{equation}\label{eq:heavy-z}
    |z(X_{\geq e}(i))| \geq p \cdot|X_{\geq e}(i)| = p\cdot\binom{\ell-\r}{i-\r}\cdot n^{i-\r}.
\end{equation}
Let $\delta_e$ denote the coboundary operator of $X_{\geq e}$ and notice that $X_{\geq e}(i)\cong L_{[\ell-\r]}(i-\r)$. (Note that we allow $\r=i$, in which case this complex has a single element $\phi$ in level $i-\r=0$.) By coboundary expansion of $L_{[\ell-\r]}(i-\r)$,
\begin{align}
|\delta_{e}z(X_{\geq e}(i))| &\geq n\cdot \kappa_{\ell-\r,i-\r}\cdot |z(X_{\geq e}(i))| \notag\\
&\geq n\cdot \kappa_{\ell-\r,i-\r} \cdot p \cdot |X_{\geq e}(i)|\notag\\
&= \kappa_{{\ell-\r},i-\r} \cdot \frac{\binom{\ell-\r}{i-\r}}{\binom{\ell-\r}{i-\r+1} }\cdot p \cdot | X_{\geq e}(i+1)|\notag\\
&= 2 \cdot g_{i,r}(p)\cdot | X_{\geq e}(i+1)|\;. \label{eq:heavy-bound}
\end{align}
%
%
We need to compute the weight of $\delta z$ inside $X_{\geq e}(i+1)$. Recall that for $z\in C^{i}(X)$ and $s\in X_{\geq e}(i+1)$ the coboundary map is defined as
\[ \delta z (s) \,=\, \sum_{s'\precdot s} \corest_{s',s}(z) \,=\,\sum_{s'\precdot_j s} (I_{-j} \otimes h_j^T)(z(s'))[...,s_j,...]\;.\]
Separating $i$-faces into $s'\succ e$ and $s'\not\succ e$,
\begin{equation}\label{eq:cob-link}
\delta z(s) = \sum_{e\not\prec s'\precdot s} \corest_{s',s}z + \sum_{e\prec s'\precdot s} \corest_{s',s}z = \sum_{e\not\prec s'\precdot s} \corest_{s',s}z + \delta_{e}z_{e}(s)\;.
\end{equation}
We claim that the first summand on the right hand side is non-zero only for few $s\in X_{\geq e}(i)$. The key is that for each fixed $s'\not\succ e$, $\corest_{s',s}(z)$ can be non-zero for exactly one $s\in X_{\geq e}(i+1)$, namely, the one that is above both $s'$ and $e$.
Each such $s'$ can be `charged' to $e'$, the $(\r-1)$-face directly below both $e$ and $s'$ (this is the empty face in case $\r=1$). However, by assumption on $e$, no $e'\precdot e$ is $f_{i,r}(p)$-heavy for $z$. So the number of faces $s'$ that are charged to $e'$ is at most
$|z_{e'}| < f_{i,r}(p)|X_{\geq e'}(i)|$.
Summing over all $\r$ possible $e'\precdot e$, this gives at most
\[\r\cdot  f_{i,r}(p) \cdot |X_{\geq e'}(i)| =  |X_{\geq e}(i + 1)|\cdot g_{i,r}(p)\]
non-zero faces that come from the first summand in \eqref{eq:cob-link}. %
Observe that we are comparing two numbers of the same order of magnitude: the number of $(i+1)$-faces above $e$, $|X_{\geq e}(i+1)|=\binom{\ell-\r}{i+1-\r}n^{i+1-\r}$, and the number of $i$-faces above $e'$, $|X_{\geq e'}(i)|=\binom{\ell-\r+1}{i-\r}n^{i+1-\r}$.

Plugging in \eqref{eq:heavy-bound} we deduce
\begin{align*}
    |(\delta z)(X_{\geq e}(i+1))|&\geq |\delta_e (z(X_{\geq e}(i)))| - \r f_{i,r}(p) |X_{\geq e'}(i)| \\
    &\geq 2g_{i,r}(p)|X_{\geq e}(i+1)| - \r f_{i,r}(p) |X_{\geq e'}(i)| \\
    &=   g_{i,r}(p)\cdot  | X_{\geq e}(i+1)|\;,
\end{align*}
so $e$ is $g_{i,r}(p)$-heavy for $\delta z$.
\end{proof}


%
%

\section{Expansion properties of the geometric complex}
\label{sec:expansion}

In this section we we define some random walks on $X$ that will play an important role in the analysis, and show that all these random walks have good expansion properties, assuming only that the ``$1$-dimensional'' random walks on the Cayley graphs $\Cay(G,A_i)$ are expanding. Before proceeding, we start by formulating some useful incidence properties of the geometric complex $X$.

\subsection{Incidence properties}

\begin{lemma}\label{lem:geom-2}
Let $0\leq i \leq t$. Let $f\in X(i)$. Then
\begin{equation}\label{eq:down-up-val-0a}
|\down(f)|=2i \qquad \text{and}\qquad |\up(f)|= (t-i) n\;.
\end{equation}
More generally, if $0\leq \ell <i<k\leq t$ then
\begin{equation}\label{eq:down-up-val-0b}
|X_{\geq f}(k)|={t-i \choose k-i} n^{k-i}\qquad\text{and}\qquad |X_{\leq f}(\ell)| = {i \choose \ell}2^{i-\ell}\;.
\end{equation}
As a consequence,
\[ |X(k)| \,=\, \frac{1}{2^k} \sum_{v\in X(0)}  |X_{\geq v}(k)| \,=\, {t \choose k} 2^{t-k} n^k |G|\;.\]
\end{lemma}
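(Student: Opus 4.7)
The plan is to read off all four quantities directly from the parametrization of faces in Section~\ref{sec:geometry}, then derive the final identity from a standard double count.

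First I will unpack the covering relation at $f=[g;(a_j)_{j\in S},(b_j)_{j\in \ol S}]$, where $S=\type(f)$ has size $i$. A face $f'\precdot_j f$ exists precisely for $j\in S$ (one turns an ``active'' coordinate into a $\{0,1\}$ coordinate) and it is determined by the single bit $f'_j\in\{0,1\}$ (the value of $f_0$ in $f'$ is then forced by the rules in Section~\ref{sec:geometry}). This immediately yields $|\down(f)|=2|S|=2i$. Symmetrically, $f\precdot_j f'$ exists precisely for $j\in\ol S$ (an inactive coordinate becomes active), and such an $f'$ is determined by the choice $f'_j\in A_j$ (again $f'_0$ is then forced), giving $|\up(f)|=(t-i)n$. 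This proves~\eqref{eq:down-up-val-0a}.

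To iterate, I will use that the covering moves in different directions commute, thanks to the commutation assumption on the permutation sets $\{A_j\}$. For $|X_{\geq f}(k)|$ one repeatedly ``activates'' inactive coordinates: one chooses a subset $T\subseteq\ol S$ of size $k-i$ of directions to activate ($\binom{t-i}{k-i}$ choices) and for each $j\in T$ one picks the new $a_j\in A_j$ ($n$ choices each); since the two commutation rules together force the value of $f'_0$, each such data produces a distinct $f'\in X(k)$ with $f'\succeq f$ and every such $f'$ arises this way. Hence $|X_{\geq f}(k)|=\binom{t-i}{k-i}n^{k-i}$. Dually, for $|X_{\leq f}(\ell)|$ one chooses a subset of $i-\ell$ active coordinates of $f$ to deactivate ($\binom{i}{i-\ell}=\binom{i}{\ell}$ choices) and, for each such direction, a bit in $\{0,1\}$ ($2^{i-\ell}$ choices in total), giving $|X_{\leq f}(\ell)|=\binom{i}{\ell}2^{i-\ell}$. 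This proves~\eqref{eq:down-up-val-0b}.

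Finally, for the closing formula I will count incidence pairs $\{(v,f)\colon v\in X(0),\, f\in X(k),\, v\preceq f\}$ in two ways. On one hand, every $f\in X(k)$ has $|X_{\leq f}(0)|=\binom{k}{0}2^k=2^k$ vertices below it, so the total is $2^k|X(k)|$. On the other hand, the same total equals $\sum_{v\in X(0)}|X_{\geq v}(k)|$, which by the $i=0$ case of~\eqref{eq:down-up-val-0b} equals $|X(0)|\cdot\binom{t}{k}n^k=2^t|G|\binom{t}{k}n^k$. Equating the two and solving gives $|X(k)|=\binom{t}{k}2^{t-k}n^k|G|$. I expect no real obstacle here: everything is a clean enumeration from the definitions, and the only point requiring a little care is checking that iterated covering moves in distinct directions do not collide or double-count, which is ensured by the pairwise commutation of the $A_j$'s.
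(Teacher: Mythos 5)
Your proof is correct and follows essentially the same approach as the paper: all four counts come from directly reading off the parametrization of faces (which coordinates to activate/deactivate and what values to assign), and the final identity is the same double-count over vertex-face incidence pairs. The extra remarks you make about commutativity guarding against double-counting are a fine elaboration but do not change the argument.
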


\begin{proof}
An $i$-face is $f=[g;a,b]$ of type $S$ such that $|S|=i$. To specify an element in $\down(f)$ we specify (i) an index $j\in S$ ($i$ choices) and (ii) a value $b_i$ ($2$ choices). To specify an element in $\up(f)$ we specify (i) an index $j\notin S$ ($t-i$ choices) and (ii) a value for $a_i$ ($n$ choices).

More generally, to specify an element of $X_{\geq f}(k)$ we specify (i) a subset $S'\subseteq \ol{S}$ of size $|S'|=k-\ell$, and (ii) a value for $a_{S'}$. To specify an element of $X_{\leq f}(\ell)$ we specify (i) a subset $S'\subseteq S$ of size $|S'|=i-\ell$ and (ii) an element $b\in \{0,1\}^{S'}$.

Finally for the ``consequence'', for the first equality note that each $f\in X(k)$ lies in $X_{\geq v}(k)$ for exactly $2^k$ vertices $v\in X(0)$. For the second equality, we have $|X(0)|=2^t|G|$.
\end{proof}

\subsection{Markov operators}

We recall some basic notation and facts.

\begin{definition}
Let $V$ be a finite set. Let $\R^V$ denote the vector space of real functions on $V$ and $\langle \cdot,\cdot \rangle$ the standard inner product. For $M:\R^V\to\R^V$ linear we say that
\begin{enumerate}
\item $M$ is \emph{symmetric} if $\langle M\phi,\psi\rangle = \langle \phi,M\psi\rangle$ for any $\phi,\psi\in \R^V$;
\item $M$ is \emph{Markov} if $M\phi \geq 0$ whenever $\phi\geq 0$ (entrywise) and $M 1_V=1_V$, with $1_V$ the constant one function;
\item $M$ is \emph{$\lambda$-expanding}, for some $0<\lambda<1$, if $\langle M \phi,\phi\rangle \leq \lambda \langle \phi,\phi\rangle$ for all $\phi\in \R^V$ such that $\langle \phi,1_V\rangle=0$.
\end{enumerate}
\end{definition}

We will use the following well-known lemma.

\begin{lemma}\label{lem:ac}
Let $M:\R^V\to\R^V$ be a symmetric, Markov, $\lambda$-expanding linear operator. Then for any $x\in \R^V$,
\[ \langle x,Mx\rangle \leq \lambda \|x\|_2^2 + \frac{\|x\|_1^2}{|V|}\;,\]
where $\|x\|_1 = \sum_{i\in V}|x_i|$ and $\|x\|_2 = (\sum_{i\in V} x_i^2)^{1/2}$ denote the $1$- and $2$-norm of $x$ respectively. 
\end{lemma}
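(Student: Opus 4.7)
The plan is to decompose $x$ into its projection onto the space of constant functions and its mean-zero part, and then apply the expansion bound only to the mean-zero component. Concretely, I would write $x = \bar{x} + y$, where $\bar{x} = \frac{\langle x,1_V\rangle}{|V|}\, 1_V$ and $y = x - \bar{x}$, so that $\langle y,1_V\rangle = 0$.

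Since $M$ is Markov we have $M 1_V = 1_V$, hence $M\bar{x} = \bar{x}$. Since $M$ is symmetric, it follows that $\langle y,M\bar{x}\rangle = \langle y,\bar{x}\rangle = 0$ and symmetrically $\langle \bar{x},My\rangle = 0$. Expanding the inner product thus gives
\[
\langle x,Mx\rangle \,=\, \langle \bar{x},\bar{x}\rangle + \langle y,My\rangle\;.
\]
For the first term, $\langle \bar{x},\bar{x}\rangle = \langle x,1_V\rangle^2/|V|$. For the second term, by the $\lambda$-expansion assumption applied to $y \perp 1_V$, we get $\langle y,My\rangle \leq \lambda \|y\|_2^2 \leq \lambda \|x\|_2^2$ (using the orthogonal decomposition $\|x\|_2^2 = \|\bar x\|_2^2 + \|y\|_2^2$).

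Finally, by Cauchy--Schwarz (or simply the triangle inequality), $|\langle x,1_V\rangle| = |\sum_{i\in V} x_i| \leq \sum_{i\in V} |x_i| = \|x\|_1$, so $\langle \bar{x},\bar{x}\rangle \leq \|x\|_1^2/|V|$. Combining the two bounds yields the claimed inequality. There is no real obstacle here; the whole argument is the standard ``project off the top eigenvector'' trick, and the only subtle point is replacing $\langle x,1_V\rangle^2$ by the weaker but more convenient $\|x\|_1^2$.
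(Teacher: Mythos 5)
Your proof is correct and follows essentially the same route as the paper: both decompose $x$ into its projection onto the constant vector plus an orthogonal remainder, use $M 1_V = 1_V$ and symmetry to kill the cross terms, apply $\lambda$-expansion to the mean-zero part, and bound the constant part via $|\langle x,1_V\rangle| \le \|x\|_1$. The only cosmetic difference is that the paper writes the decomposition as $x = \alpha u + \beta v$ with unit vectors and briefly tracks the $(1-\lambda)\alpha^2$ term before discarding it, whereas you discard the slack a step earlier; the arguments are otherwise identical.
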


\begin{proof}
Decompose $x$ as $x=\alpha u + \beta v$, where $u,v$ are orthogonal vectors of norm $1$ such that $u$ is colinear to the all-$1$ vector. Then
\begin{align*}
\langle x,Mx\rangle &= \alpha^2 +\beta^2 \langle v,Mv \rangle \\
&\leq \alpha^2 + \lambda \beta^2\\
&= \alpha^2(1-\lambda)+\lambda\|x\|_2^2\;,
\end{align*}
where the inequality uses that $v\bot u$ and the assumption that $M$ is $\lambda$-expanding. Moreover,
\begin{align*}
\alpha^2 &= |\langle x,u\rangle|^2 \\
&\leq \frac{\|x\|_1^2}{|V|}\;,
\end{align*}
because $u_i = 1/\sqrt{|V|}$ for all $i$. Combining both inequalities completes the proof.
\end{proof}

Given a graph $G=(V,E)$, we say that $G$ is $\lambda$-expanding if the Markov operator that is given by the normalized adjacency matrix of $G$ is $\lambda$-expanding.
This notion is only suitable for graphs that are connected.
However, the constant degree complex described in Section~\ref{sec:particular-instance-of-complex} is not connected, so we have to generalize the notion.
We say $G$ is $\lambda$-expanding up to size $r |V|$ if the graph is a disjoint union of graphs of size $\ge r |V|$ where each graph is $\lambda$-expanding.
In particular, the constant degree complex satisfies $r = \Omega((\log |G|)^{-(t-1)})$.
For each set $A_j$, we let $G_j=\Cay(G,A_j)$ denote the graph whose vertex set is $V_j = G$, and such that there is an edge between $(g,g')$ if and only if $g'= g\cdot a_j$ for some $a_j\in A_j$. Since we assumed that $A_j$ is closed under inverse, $\Cay(G,A_j)$ is an $n$-regular undirected graph (which for simplicity one may assume does not contain any self-loops, although our arguments extend verbatim if there are).

\begin{definition}\label{def:x-expand}
For a set of permutations $A$ on $G$, we say that the pair $(G;A)$ is \emph{$\lambda$-expanding} if the Markov operator $M:\R^G \to \R^G$ such that
\begin{equation}\label{eq:def-Markov-cay}
  M(e_g)\,=\,\frac{1}{|A|}\sum_{a\in A} e_{g\cdot a}
\end{equation} is $\lambda$-expanding, where $e_g$ denotes the $g$-th canonical basis vector of $\R^G$.

More generally, for $0<r\leq 1$ we say that $(G;A)$ is \emph{$\lambda$-expanding up to size $r |G|$} if the graph is a disjoint union of graphs of size $\ge r |G|$ where each graph is $\lambda$-expanding.
\end{definition}

Let $M_j$ be the Markov operator associated with the normalized adjacency matrix of $G_j=\Cay(G,A_j)$ as in~\eqref{eq:def-Markov-cay}.
Hereafter, we assume that $G_j$ is $\lambda$-expanding up to size $r |G|$ for all $j$.

Let $D:\R^X\to \R^X$ be the normalized unsigned boundary operator, i.e.\
\[ D\phi(f) \,=\, \frac{1}{|\up\{f\}|}\sum_{f'\succdot f} \phi(f')\;.\]
Note that $|\up\{f\}|$ is independent of $f$ (see Lemma~\ref{lem:geom-2}), and so this is a linear operator.
Let $U$ be the normalized unsigned coboundary operator, i.e.\
\[ U\phi(f) \,=\, \frac{1}{|\down\{f\}|}\sum_{f'\precdot f} \phi(f')\;.\]

\begin{claim}\label{claim:d-u-adjoint}
The operators $D$ and $U$ satisfy that for any $\ell$ and $\phi:X(\ell)\to \R$, $\phi':X(\ell-1)\to\R$,
\[ \E_{f\in X(\ell-1)}  \phi'(f) D \phi(f) \,=\, \E_{f'\in X(\ell)}  U\phi'(f') \phi(f')\;,\]
where both expectations are uniform.
\end{claim}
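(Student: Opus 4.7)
The plan is to rewrite both sides as a single sum over incident pairs $(f,f')$ with $f\in X(\ell-1)$, $f'\in X(\ell)$, $f\precdot f'$, and then verify that the normalizations match by a double-counting argument.

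First I would unfold the definitions of $D$ and $U$ to obtain
\[ \E_{f\in X(\ell-1)}  \phi'(f) D \phi(f) \,=\, \frac{1}{|X(\ell-1)|\cdot |\up(f)|}\sum_{f\precdot f'} \phi'(f)\phi(f')\;,\]
\[ \E_{f'\in X(\ell)}  U\phi'(f') \phi(f') \,=\, \frac{1}{|X(\ell)|\cdot |\down(f')|}\sum_{f\precdot f'} \phi'(f)\phi(f')\;.\]
Pulling $|\up(f)|$ and $|\down(f')|$ out of the sums is justified by Lemma~\ref{lem:geom-2}, which tells us these quantities depend only on the dimension of the face, not on the face itself: for $f\in X(\ell-1)$ we have $|\up(f)| = (t-\ell+1)n$, and for $f'\in X(\ell)$ we have $|\down(f')| = 2\ell$.

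It then suffices to show that $|X(\ell-1)|\cdot (t-\ell+1)n = |X(\ell)|\cdot 2\ell$. Using the formula $|X(k)| = \binom{t}{k}2^{t-k}n^k|G|$ from Lemma~\ref{lem:geom-2}, both sides simplify and this reduces to the elementary identity $\binom{t}{\ell-1}(t-\ell+1) = \binom{t}{\ell}\ell$, which is standard. Conceptually, both quantities are just counting the total number of covering pairs $(f,f')$ with $f\in X(\ell-1)$ and $f'\in X(\ell)$, once from below and once from above.

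There is no real obstacle: the claim is a standard ``bipartite double-counting / adjoint of averaging operators'' statement, relying only on the regularity of the incidence structure between $X(\ell-1)$ and $X(\ell)$ established in Lemma~\ref{lem:geom-2}.
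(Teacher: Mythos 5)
Your proof is correct and follows essentially the same route as the paper: both proofs swap the order of summation over the bipartite incidence graph between $X(\ell-1)$ and $X(\ell)$ and match the normalizations via the double-counting identity $|X(\ell-1)|\cdot|\up(f)| = |X(\ell)|\cdot|\down(f')|$, which the paper phrases as $\frac{|X(\ell-1)|}{|X(\ell)|}=\frac{|\down\{f'\}|}{|\up\{f\}|}$ and which you verify explicitly from the formulas in Lemma~\ref{lem:geom-2}.
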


\begin{proof}
  To verify this, for any $\phi,\phi'$ we have
  \begin{align*}
  \frac{1}{X(\ell-1)}\sum_{f\in X(\ell-1) } \phi'(f) D\phi(f)
  &=  \frac{1}{X(\ell-1)}\sum_{f\in X(\ell-1) } \frac{1}{\up\{f\}} \sum_{ f'\succdot f} \phi'(f) \phi(f') \\
  &= \frac{1}{X(\ell)} \sum_{f'\in X(\ell)} \frac{1}{\down\{f'\}}\sum_{f\precdot f'}\phi'(f) \phi(f') \\
  &= \frac{1}{X(\ell)}\sum_{f'\in X(\ell) } U\phi'(f') \phi(f')
  \;.
  \end{align*}
  Here the second line is because for any $f\precdot f'$, $\frac{X(\ell-1)}{X(\ell)} = \frac{\down\{f'\}}{\up\{f\}}$.
  \end{proof}

\subsection{Expanding random walks}
\label{sec:rw}

We define and analyze some random walks on the vertex set $X(k)$.

\begin{definition}\label{def:walk-W}
  For $0\leq \ell \leq k$ define the following walk $W^{(k,\ell)}$ on $X(k)$. Starting at $f\in X(k)$,
  \begin{enumerate}
  \item Choose a uniformly random $v \in X(\ell)$ such that $v\in \down^{(k-\ell)}\{f\}$. (If $\ell=k$ then set $v=f$.)
  \item Write $v = [g;a,b]$ and $S=\type(v)$.
  Choose a uniformly random $i \in \ol{S}$ and $a_i \in A_i$.
  Let $g' = g a_i$, $a'_j=a_j$ for $j\in S$, $b'_i=1-b_i$, and $b'_j=b_j$ for $j\in \ol{S} - \{i\}$. Let $v'=[g';a',b']$.
  \item Return a uniformly random $f'\in \up^{(k-\ell)}\{v'\}$.
  \end{enumerate}
  We write $W^{(k,\ell)} : \R^{X(k)}\to\R^{X(k)}$ for the Markov operator associated with this walk
    and $W^{(k,\ell)}_{adj} : \R^{X(k)}\to\R^{X(k)}$ for the adjacency matrix associated with this walk.
\end{definition}

\begin{definition}
  Let $0 \le \ell \le k < t$ and $v \in X(\ell)$.
  Recall the definition of the set
  \[(\down\circ\up)X_{\geq v}(k) = \big\{ f\in X(k)\;\big|\; \exists f'\in X_{\geq v}(k+1)\,,\; f\precdot f'\big\}\;.\]
  In words, these are $k$-faces that are included in a $(k+1)$-face that contains $v$.
  Such faces can either contain $v$, or intersect but not contain $v$, or not intersect $v$.
  We write the set of $k$-faces that intersect but do not contain $v$ as
  \[\Nb_v(k) = \big\{f\in X(k)\;\big| \; \exists f'\in X_{\geq v}(k+1)\,,\; f\precdot f'\,,\; f\cap v \neq \emptyset, v \nprec f \big\}\;,\]
  and those that do not intersect $v$ as
  \[\Op_{v}(k) = \big\{ f\in X(k)\;\big|\;  \exists f'\in X_{\geq v}(k+1)\,,\; f\precdot f'\,,\; f\cap v = \emptyset\big\}\;.\]
\end{definition}

From the definition, it is clear that
\begin{equation}\label{eq:d-u}
  (\down\circ\up)X_{\geq v}(k) \,=\, X_{\geq v}(k) \cup \Nb_v(k) \cup \Op_{v}(k)\;.
\end{equation}
The following lemma states a useful ``covering'' property for the neighborhoods $\Nb_v(k)$. First, for any $0\leq \ell < k < t$ we define
\begin{equation}\label{eq:def-akl}
   a_{k, \ell} \,=\, \max_{\substack{v_{\ell}\in X(\ell),\, v_k,v'_k\in X(k):\\ v_\ell \prec v_k,\,v_\ell\prec v'_k}} \big|\big\{v_{\ell+1}\in X(\ell+1):\, v_{\ell} \prec v_{\ell+1} \preceq v_k\, \hbox{ and }\, v_k' \in \Nb_{v_{\ell+1}}(k)\big\}\big|\;.
\end{equation}
We observe the crude bound
\begin{equation}\label{eq:akl-bound}
  a_{k,\ell}\,\leq\, |X_{\leq v_k}(\ell+1)| \,\leq\, 2^t
\end{equation}
for all $\ell,k$. Here the first inequality is because $v_{\ell+1}$ in~\eqref{eq:def-akl} must be in $X_{\geq v_k}(\ell+1)$, and the second inequality is by Lemma~\ref{lem:geom-2}. Furthermore, we observe the following claim (which will not be used in the proof).

\begin{claim}\label{claim:akl}
If $k=t-1$ then $a_{t-1,\ell}=1$ for all $\ell<t-1$.
\end{claim}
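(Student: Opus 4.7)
The plan is to case-analyze on the types $S_k=\type(v_k)$ and $S'_k=\type(v'_k)$. Since $|S_k|=|S'_k|=t-1$, each misses a unique direction, say $i^*\notin S_k$ and $i'^*\notin S'_k$. From $v_\ell\prec v_k$ and $v_\ell\prec v'_k$ we get $S_\ell\subseteq S_k\cap S'_k$, so $i^*,i'^*\notin S_\ell$. Any candidate $v_{\ell+1}$ has type $S_{\ell+1}=S_\ell\cup\{j_0\}$ for some $j_0\in S_k\setminus S_\ell$, and once $j_0$ is chosen the face $v_{\ell+1}$ is uniquely determined by $v_\ell\prec v_{\ell+1}\preceq v_k$ (in particular $a(v_{\ell+1})_{j_0}=a(v_k)_{j_0}$ and the base/$b$ values are inherited from $v_\ell$).

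In the case $i^*=i'^*$ (so $S_k=S'_k$): here $j_0\in S'_k$, so for a common $t$-face $f'$ above both $v_{\ell+1}$ and $v'_k$ to exist we need $a(f')_{j_0}=a(v_{\ell+1})_{j_0}=a(v_k)_{j_0}$ and also $a(f')_{j_0}=a(v'_k)_{j_0}$, forcing $a(v_k)_{j_0}=a(v'_k)_{j_0}$. But then $S_{\ell+1}\subseteq S'_k$, all $a$'s of $v_{\ell+1}$ agree with those of $v'_k$ on $S_{\ell+1}$, the $b$-value at $i'^*$ matches through $v_\ell$, and base compatibility follows from $v_\ell\prec v_{\ell+1}$ and $v_\ell\prec v'_k$; hence $v_{\ell+1}\prec v'_k$, violating the $\Nb$ condition $v_{\ell+1}\not\prec v'_k$. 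So no valid $v_{\ell+1}$ exists.

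In the case $i^*\neq i'^*$: then $i'^*\in S_k$. If $j_0\neq i'^*$ then $j_0\in S'_k$ and the same argument as above shows that $f'$ can exist only when it forces $v_{\ell+1}\prec v'_k$, so again no valid $v_{\ell+1}$. Hence $j_0=i'^*$ is forced, pinning $v_{\ell+1}$ down uniquely. I would then verify all three defining conditions of $v'_k\in\Nb_{v_{\ell+1}}(k)$: take $f'$ to be the $t$-face extending $v'_k$ in direction $i'^*$ using $a(v_k)_{i'^*}$; by construction $v'_k\prec f'$ and the parameters of $v_{\ell+1}$ (which agree with $v_\ell$ on $S_\ell$ and with $v_k$ at $i'^*$) make $v_{\ell+1}\prec f'$; the intersection $v_{\ell+1}\cap v'_k$ contains the vertices of $v_\ell$; and $v_{\ell+1}\not\prec v'_k$ because $j_0=i'^*\in S_{\ell+1}\setminus S'_k$.

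These two cases together give $a_{t-1,\ell}\leq 1$. The matching lower bound is obtained by exhibiting any instance of the second case; since $\ell<t-1$, we can take $S_\ell=\{1,\ldots,\ell\}$, $S_k=\{1,\ldots,t-1\}$, $S'_k=\{1,\ldots,t-2,t\}$ with any compatible representatives. The main technical obstacle is the bookkeeping of base compatibility: the $G$-element base of a face shifts according to the $b$-values of each subface in the chain $\precdot_j$, so rigorously checking that both $v_{\ell+1}\prec f'$ and $v'_k\prec f'$ use a consistent base of $f'$ requires unwinding the incidence relation along both chains $v_\ell\prec v_{\ell+1}\preceq v_k$ and $v_\ell\prec v'_k$, leveraging the assumed commutativity of permutations from different $A_i$.
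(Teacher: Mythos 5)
Your proposal is correct and follows essentially the same approach as the paper's proof: the paper also observes that the $\Nb$-condition forces $\type(v_{\ell+1})=\type(v_\ell)\cup\big([t]\setminus\type(v'_k)\big)$, after which $v_{\ell+1}$ is uniquely pinned down by $v_\ell\prec v_{\ell+1}\preceq v_k$. Your case split on whether $i^*=i'^*$, the explicit use of the shared $t$-cube $f'$ to force $a$-coordinate agreement, and the remark on verifying the lower bound are all just fleshed-out details of the paper's terser argument.
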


\begin{proof}
Fix $v_\ell\in X(\ell)$ and $v_k,v'_k\in X(k)$ such that $v_\ell \prec v_k$ and $v_\ell\prec v'_k$. We first show that any $v_{\ell+1}\in X(\ell+1)$ such that $v_\ell\prec v_{\ell+1} \preceq v_k$ and $v'_k\in \Nb_{v_{\ell+1}}(k)$ has $\type(v_{\ell+1})=\type(v_\ell)\cup ([t]\setminus\type(v'_k))$. This is because using the first condition necessarily $\type(v_{\ell+1})\supseteq \type(v_{\ell})$ and has exactly one more element; furthermore, using the second condition that element must be the unique element of $[t]$ that is not in $\type(v'_k)$. Once its type has been fixed, $v_{\ell+1}$ is uniquely determined by the condition $v_\ell\prec v_{\ell+1}\preceq v_k$.
\end{proof}

\begin{lemma}\label{lem:Nb-bound}
  For all $v_k \in X(k)$ and $\ell<k$,
    \begin{equation*}
    \bigsqcup_{v_{\ell+1} \preceq v_k} \Nb_{v_{\ell+1}}(k) \,\subseteq\, a_{k, \ell} \cdot \bigsqcup_{v_\ell \prec v_k} X_{\ge v_\ell}(k)\;,
  \end{equation*}
  where $\sqcup$ denotes the union of multisets, i.e.\ elements are counted with multiplicity, and $a\cdot S$ is all elements of the multiset $S$ with multiplicity multiplied by $a$.
\end{lemma}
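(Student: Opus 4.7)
The plan is to construct an explicit combinatorial map $\phi$ from the left-hand multiset to the right-hand multiset with the property that each element of the right-hand side has at most $a_{k,\ell}$ preimages, which directly witnesses the desired multiset inclusion. Given a pair $(v_{\ell+1},f)$ on the left (so $v_{\ell+1}\in X(\ell+1)$ with $v_{\ell+1}\preceq v_k$ and $f\in \Nb_{v_{\ell+1}}(k)$), I would define $v_\ell := f\cap v_{\ell+1}$, viewing faces as sub-cubes via the vertex representation~\eqref{eq:face-vertices}, and set $\phi(v_{\ell+1},f) := (v_\ell,f)$.

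The first key step is verifying that $v_\ell$ is a well-defined face of dimension exactly $\ell$. By the definition of $\Nb_{v_{\ell+1}}(k)$, there is a common $(k+1)$-face $f'$ with $v_{\ell+1}\prec f'$ and $f\precdot f'$. Writing $\type(f')=S'$, the relation $f\precdot f'$ means that $\type(f)=S'\setminus\{j_0\}$ for some $j_0\in S'$ with the $j_0$-coordinate of $f$ fixed to a value in $\{0,1\}$, while $\type(v_{\ell+1})=T'\subseteq S'$. The combined assumptions $v_{\ell+1}\nprec f$ and $v_{\ell+1}\cap f\neq \emptyset$ force $j_0\in T'$: otherwise $T'\subseteq \type(f)$ with matching fixed coordinates, giving $v_{\ell+1}\prec f$ (a contradiction), or some fixed coordinate disagrees and the intersection is empty. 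Thus $v_\ell$ is the face of type $T'\setminus\{j_0\}$, of dimension $\ell$, obtained by restricting the $j_0$-coordinate of $v_{\ell+1}$ to the value it takes on $f$.

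The second step confirms that $\phi(v_{\ell+1},f)=(v_\ell,f)$ lies in the right-hand multiset: the chain $v_\ell\prec v_{\ell+1}\preceq v_k$ gives $v_\ell\prec v_k$, and $v_\ell\preceq f$ (by construction) combined with $\dim v_\ell < \dim f$ gives $f\in X_{\geq v_\ell}(k)$. The third step bounds the fiber size: for a fixed $(v_\ell,f)$ on the right, every preimage $(v_{\ell+1},f)$ satisfies $v_\ell\prec v_{\ell+1}\preceq v_k$ together with $f\in \Nb_{v_{\ell+1}}(k)$. Instantiating $v_k':=f$ in the definition~\eqref{eq:def-akl} of $a_{k,\ell}$ shows that the number of such $v_{\ell+1}$ is at most $a_{k,\ell}$, completing the argument.

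The main obstacle lies in carefully justifying Step~1, because the intersection of two sub-cubes is not set-theoretically defined in Section~\ref{sec:geometry}; one must translate between the algebraic covering relation $\precdot_j$ and the intuitive sub-cube picture (via the vertex-set representation~\eqref{eq:face-vertices}) to argue that the ``intersection'' is indeed a face of the complex and to pin down its type. Once this correspondence is made precise, the remainder of the proof is straightforward bookkeeping.
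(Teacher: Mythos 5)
Your proposal is correct and follows essentially the same approach as the paper's proof: both identify $v_\ell := v_{\ell+1}\cap f$ as the $\ell$-face on which the two cubes meet (using that $v_{\ell+1}$ and $f$ are sub-cubes of a common $(k+1)$-face $f'$, that the intersection is nonempty by definition of $\Nb$, and that it cannot be $(\ell+1)$-dimensional since $v_{\ell+1}\nprec f$), and both then bound the number of $v_{\ell+1}$ mapping to a given $(v_\ell,f)$ by instantiating $v_k':=f$ in the definition of $a_{k,\ell}$. The paper packages this as a multiset identity with indicator weights, whereas you present it as an explicit map $\phi$ and count fibers, but these are the same argument.
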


\begin{proof}
Let $v_k\in X(k)$. For $v_{\ell+1}\in X(\ell+1)$, each face $v_k' \in \Nb_{v_{\ell+1}}(k)$ intersects $v_{\ell+1}$ on some $\ell$-dimensional face.
  This is because, since $\Nb_{v_{\ell+1}}(k) \subseteq (\down\circ\up)X_{\geq v_{\ell+1}}(k)$ by definition,
    the types of $v_k$ and $v_k'$ must differ by at most one element.
  Additionally, the intersection could not be an $\ell+1$-dimensional face, otherwise $v_{\ell+1} \preceq v_k'$ which implies $v_k' \notin \Nb_{v_{\ell+1}}(k)$.
  Therefore,
  \begin{align*}
    \bigsqcup_{v_{\ell+1} \preceq v_k} \Nb_{v_{\ell+1}}(k)
    &= \bigsqcup_{v_{\ell+1} \preceq v_k} \bigsqcup_{v_k' \in \Nb_{v_{\ell+1}}(k)} \{v_k'\} \\
    &= \bigsqcup_{v_{\ell+1} \preceq v_k} \bigsqcup_{v_k' \in \Nb_{v_{\ell+1}}(k)} \bigsqcup_{v_{\ell} \prec v_{\ell+1}} 1_{ v_{\ell+1} \cap v_k' = v_{\ell}} \cdot \{v_k'\} \\
    &= \bigsqcup_{v_{\ell} \prec v_k} \bigsqcup_{v_k' \succ v_{\ell}} \bigsqcup_{v_{\ell+1}: v_{\ell} \prec v_{\ell+1} \preceq v_k}  1_{ v_{\ell+1} \cap v_k' = v_{\ell}} \cdot 1_{v_k' \in \Nb_{v_{\ell+1}}(k)}\cdot \{v_k'\} \\
    &\subseteq \bigsqcup_{v_{\ell} \prec v_k} \bigsqcup_{v_k' \succ v_{\ell}} a_{k, \ell} \cdot \{v_k'\} \\
    &= a_{k, \ell} \cdot \bigsqcup_{v_\ell \prec v_k} X_{\ge v_\ell}(k)\;.
  \end{align*}
\end{proof}

\begin{definition}\label{def:opp-walk}
  For $0\leq \ell \leq k$ define the following walk $\Op^{(k,\ell)}$ on $X(k)$. Starting at $f\in X(k)$,
  \begin{enumerate}
  \item Choose a uniformly random $v \in X(\ell)$ such that $v\in \down^{(k-\ell)}\{f\}$. (If $\ell=k$ then set $v=f$.)
  \item Return a uniformly random $f'\in \Op_{v}(k)$.
  \end{enumerate}
  We write $\Op^{(k,\ell)} : \R^{X(k)}\to\R^{X(k)}$ for the Markov operator associated with this walk and $\Op^{(k,\ell)}_{adj} : \R^{X(k)}\to\R^{X(k)}$ for the adjacency matrix associated with this walk.
\end{definition}

The random walk $\Op^{(k,\ell)}$ is used in the proof of co-systolic distance. For the analysis, it will be convenient to relate it to the walk $W^{(k,\ell)}$. This is done through the following claim.

\begin{claim}\label{claim:Op-to-W}
For any $\mA \subseteq X(k)$,
  \begin{equation*}
    \langle 1_\mA, \Op^{(k,\ell)}_{adj} 1_\mA\rangle
    \,\le\, \langle 1_\mA, W^{(k,\ell)}_{adj} 1_\mA\rangle
    \,=\, {k \choose \ell}{t-\ell \choose k-\ell} (t-\ell) 2^{k-\ell} n^{k+1-\ell} \langle 1_\mA, W^{(k,\ell)} 1_\mA\rangle\;.
  \end{equation*}
\end{claim}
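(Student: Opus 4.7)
The claim has two parts: an equality between the adjacency and Markov forms of $W^{(k,\ell)}$, and an inequality comparing $\Op^{(k,\ell)}_{adj}$ and $W^{(k,\ell)}_{adj}$. I would handle them separately.

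For the equality, my plan is a direct degree count. Because the walk $W^{(k,\ell)}$ is regular --- the number of tuples $(v,i,a_i,f')$ emanating from any starting face $f$ depends only on $k$, $\ell$, $t$, and $n$ --- one has $W^{(k,\ell)}_{adj} = d \cdot W^{(k,\ell)}$ for a single constant $d$, and summing over $\mA\times\mA$ gives the equality. The degree $d$ factors as the product of the number of choices in each step of the walk: Step 1 selects $v\in X_{\leq f}(\ell)$, giving $\binom{k}{\ell}2^{k-\ell}$ options by Lemma~\ref{lem:geom-2}; Step 2 selects $i\in\ol{\type(v)}$ and $a_i\in A_i$, giving $(t-\ell)n$ options; Step 3 selects $f'\in X_{\geq v'}(k)$, giving $\binom{t-\ell}{k-\ell}n^{k-\ell}$ options (again by Lemma~\ref{lem:geom-2}, using that $\type(v')=\type(v)$). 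Multiplying yields exactly the constant $\binom{k}{\ell}\binom{t-\ell}{k-\ell}(t-\ell)2^{k-\ell}n^{k+1-\ell}$ from the claim.

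For the inequality, I would prove the pointwise bound $\Op^{(k,\ell)}_{adj}(f,f')\leq W^{(k,\ell)}_{adj}(f,f')$ by constructing an injection from $\Op$-paths to $W$-paths. An $\Op$-path from $f$ to $f'$ is specified by a single $v\preceq f$ with $v\in X(\ell)$ and $f'\in\Op_v(k)$, while a $W$-path is specified by a triple $(v,i,a_i)$ with $v\preceq f$ and $v'\preceq f'$. The plan is to map each $\Op$-path $v$ to a triple $(v,i^*,a_{i^*})$ determined by a particular $f''\in X(k+1)$ witnessing $f'\in\Op_v(k)$. Injectivity is immediate since the first coordinate of the triple is $v$, and both endpoints $f,f'$ are fixed; summing the pointwise inequality over $\mA\times\mA$ yields the claim.

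The geometric content lies in the existence of the triple $(v,i^*,a_{i^*})$ in the first place: if $f'\in\Op_v(k)$, then there exist $i^*\in\ol{\type(v)}$ and $a_{i^*}\in A_{i^*}$ such that the corresponding shift $v'$ of $v$ satisfies $v'\preceq f'$. To prove this, I would fix $f''\in X(k+1)$ with $v\preceq f''$ and $f'\precdot f''$, then analyze within the local coordinates of $f''$. A case split on whether $\type(v)\subseteq\type(f')$ inside $\type(f'')$ shows that the alternative (the unique missing direction of $f'$ inside $f''$ belongs to $\type(v)$) forces a common vertex and hence $f'\cap v\neq\emptyset$, contradicting $f'\in\Op_v(k)$. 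Thus $\type(v)\subseteq\type(f')$, the unique $i^*\in\type(f'')\setminus\type(f')$ lies in $\ol{\type(v)}$, and shifting $v$ by the $f''$-edge in direction $i^*$ (an element of $A_{i^*}$) produces exactly a valid $v'\preceq f'$. The main obstacle I anticipate is keeping the local $f''$-coordinates and the ambient $[t]$-coordinates straight throughout this case analysis; a minor subtlety is that multiple $f''$'s can witness $f'\in\Op_v(k)$, which is harmless since the injection only requires a canonical choice per $(v,f')$ pair.
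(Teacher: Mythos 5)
Your proof is correct and follows the paper's approach: the equality is the same degree count, multiplying $|X_{\le f}(\ell)|=\binom{k}{\ell}2^{k-\ell}$, $|\ol{S}|\cdot|A_i|=(t-\ell)n$, and $|X_{\ge v'}(k)|=\binom{t-\ell}{k-\ell}n^{k-\ell}$ from Lemma~\ref{lem:geom-2}, while for the inequality the paper merely asserts it is "not hard to check from the definitions," which your injection $v\mapsto(v,i^*,a_{i^*})$ makes precise. Your geometric argument is sound — in particular the observation that if $i^*\in\type(v)$ then $v$ and $f'$ share no commonly-fixed coordinate in $f''$ and therefore must intersect, and that $A_{i^*}$ being closed under inverse ensures the required shift element lies in $A_{i^*}$ regardless of the value of $b_{i^*}$; as you note, the choice of witnessing $f''$ is harmless since injectivity follows from the $v$-coordinate alone.
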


\begin{proof}
  It is not hard to check that $\langle 1_\mA, \Op^{(k,\ell)}_{adj} 1_\mA\rangle \le \langle 1_\mA, W^{(k,\ell)}_{adj} 1_\mA\rangle$ from the definitions.
  To show that $W^{(k,\ell)}_{adj} = {k \choose \ell}{t-\ell \choose k-\ell} (t-\ell) 2^{k-\ell} n^{k+1-\ell} W^{(k,\ell)}$,
  we observe that the normalization factor for $W^{(k,\ell)}$ is, using notations from Definition~\ref{def:walk-W},
  \begin{align*}
    |X_{\le f}(\ell)| |\ol{S}| |A_i| |X_{\ge v'}(k)|
    &= \left({k \choose \ell} 2^{k-\ell}\right) (t-\ell) (n) \left({t-\ell \choose k-\ell} n^{k-\ell}\right) \\
    &= {k \choose \ell}{t-\ell \choose k-\ell} (t-\ell) 2^{k-\ell} n^{k+1-\ell}\;,
  \end{align*}
  where the first equality uses bounds from Lemma~\ref{lem:geom-2}.
\end{proof}

One would hope that $W^{(k,\ell)}$ has spectral expansion.
However, $W^{(k,\ell)}$ shares similarities to the random walk on a hypercube,
  which is not a good spectral expander.
Nevertheless, $W^{(k, \ell)}$ has small-set expansion, as the next lemma shows.

\begin{lemma}\label{lem:graph-expansion}
  Assume that for each $j\in \{1,\ldots,t\}$ the pair $(G;A_j)$ is $\lambda$-expanding. Let $0\leq \ell \leq k$ and $\mA\subseteq X(k)$.
  Then
  \begin{equation}\label{eq:graph-expansion}
    \langle1_\mA, W^{(k,\ell)} 1_\mA\rangle
    \,\le\, \lambda |\mA| + {t \choose \ell} 2^{t-\ell-1} n^\ell \frac{|\mA|^2}{r|X(k)|}\;.
  \end{equation}
\end{lemma}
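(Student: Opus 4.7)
\; The plan is to reduce the quadratic form $\langle 1_\mA, W^{(k,\ell)} 1_\mA\rangle$ to Cayley-graph quadratic forms and apply Lemma~\ref{lem:ac} to each. First, unfolding the definition of $W^{(k,\ell)}$ and swapping the order of summation gives
\[
\langle 1_\mA, W^{(k,\ell)} 1_\mA\rangle \,=\, \frac{1}{C_1 C_2 (t-\ell) n}\sum_{v \in X(\ell)}\sum_{i\in\ol{\type(v)}}\sum_{a_i\in A_i} q(v)\,q(v')\,,
\]
where $C_1 = |X_{\ge v}(k)| = \binom{t-\ell}{k-\ell}n^{k-\ell}$ and $C_2 = |X_{\le f}(\ell)| = \binom{k}{\ell}2^{k-\ell}$ are the constants from Lemma~\ref{lem:geom-2}, $q(v) := |\{f\in\mA : v\preceq f\}|$, and $v'$ is the Cayley neighbor of $v$ as in Definition~\ref{def:walk-W}.

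Next I partition the sum by ``slab'': fixing a type $T\subseteq\{1,\ldots,t\}$ with $|T|=\ell$ and a tuple $a_T\in\prod_{j\in T}A_j$, the faces $v$ with $\type(v)=T$ and matching $a_T$ are parameterized by $(g,b)\in H_T := G\times\{0,1\}^{\ol T}$. Writing $q_{T,a_T}$ for the restriction of $q$ to the slab, the triple sum rewrites as $(t-\ell)n \sum_{T,a_T}\langle q_{T,a_T}, \ol M_T q_{T,a_T}\rangle$, where $\ol M_T$ is the Markov chain on $H_T$ defined by $(g,b)\mapsto (g\cdot a_i, b\oplus e_i)$ with uniform $i\in\ol T$ and $a_i\in A_i$.

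The main step is to bound each slab contribution by
\[
\langle q_{T,a_T}, \ol M_T q_{T,a_T}\rangle \,\le\, \lambda\|q_{T,a_T}\|_2^2 + \frac{\|q_{T,a_T}\|_1^2}{r|H_T|}\,.
\]
To achieve this I decompose $\ol M_T = \tfrac{1}{t-\ell}\sum_{i\in\ol T}M_i^H$, where $M_i^H$ is the single-direction walk $(g,b)\mapsto(ga_i, b\oplus e_i)$, and factor $M_i^H \simeq \mathrm{Flip}_{b_i}\otimes I\otimes M_i^G$ where $M_i^G$ is the Markov operator of $\Cay(G,A_i)$. Performing Fourier analysis on the $b_i$ coordinate and applying Lemma~\ref{lem:ac} to each $M_i^G$ (which is $\lambda$-expanding up to size $r|G|$) controls the symmetric Fourier component, and averaging over $i\in\ol T$ yields the slab-level bound above. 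The key subtlety is that each transition of $\ol M_T$ flips exactly one $b$-coordinate, so $\ol M_T$ is bipartite and has a $-1$ eigenvalue on the orthogonal complement of constants, which prevents a naive application of Lemma~\ref{lem:ac}; this is resolved by the observation that, when $k>\ell$, the antisymmetric sum $\sum_{b_{S\setminus T}}(-1)^{|b_{S\setminus T}|}$ over the $2^{k-\ell}\ge 2$ ``corners'' of each $f\in\mA$ vanishes, so $q_{T,a_T}$ (being a down-projection from $X(k)$) has zero projection onto the problematic parity eigenspace of $\ol M_T$; the boundary case $k=\ell$ is handled by a direct spectral argument on the bipartite walk.

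Finally, summing over slabs uses the identities $\sum_{T,a_T}\|q_{T,a_T}\|_1 = C_2|\mA|$ (from double-counting pairs $(f, v)$ with $v\preceq f$), the pointwise bound $q\le C_1$ giving $\sum_{T,a_T}\|q_{T,a_T}\|_2^2 \le C_1 C_2|\mA|$, the slab size $|H_T| = 2^{t-\ell}|G|$, and $|X(k)|=\binom{t}{k}2^{t-k}n^k|G|$. Plugging into the prefactor $1/(C_1 C_2)$ and simplifying produces the claimed bound. The main obstacle throughout is the bipartite structure of $\ol M_T$, which the down-projection orthogonality argument above is designed to bypass.
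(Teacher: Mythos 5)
Your overall decomposition --- by type $T$, then by direction $i\in\ol{T}$, reducing the quadratic form to down-projected counting functions $q$ and applying Lemma~\ref{lem:ac} per direction, then averaging over $i$ --- is the same skeleton as the paper's proof; the paper packages the per-direction step as the random walk $P_i$ on the double cover of $\Cay(G,A_i)$, whereas you first Fourier-transform in the $b_i$ coordinate.

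The paragraph you call ``the key subtlety,'' however, misdiagnoses the difficulty and offers a fix that does not address it. Lemma~\ref{lem:ac} is a \emph{one-sided} statement: its hypothesis is $\langle M\phi,\phi\rangle\leq\lambda\|\phi\|^2$ for $\phi\perp\mathbf{1}$, so a $-1$ eigenvalue of a bipartite Markov chain is no obstacle at all ($-1\leq\lambda$). What would genuinely block a direct application of Lemma~\ref{lem:ac} to the averaged operator $\ol{M}_T=\frac{1}{t-\ell}\sum_{i\in\ol{T}}M_i^H$ is the presence of large \emph{positive} eigenvalues $1-2|c|/(t-\ell)$ on the hypercube characters $(-1)^{\langle c,b\rangle}$ (constant in $g$) with $0<|c|<t-\ell$. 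Your orthogonality computation only shows $q_{T,a_T}\perp(-1)^{|b|}\cdot\text{const}$, i.e.\ the $c=(1,\dots,1)$ eigenspace, eigenvalue $-1$, harmless. It does \emph{not} give orthogonality to the problematic $c=e_i$ eigenspaces: for $f\in\mA$ with $i\notin\type(f)$ the contribution to $\sum_{g,b}(-1)^{b_i}q_{T,a_T}(g,b)$ is $(-1)^{b_i(f)}\,2^{k-\ell}\neq 0$. Likewise, the $b_i$-antisymmetric Fourier component you would need to control lives on $(-1)^{b_i}$, not on $(-1)^{|b|}$, so the orthogonality you derive does not kill it either.

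The correct and simpler resolution, which the paper uses and which your ``averaging over $i$'' already implicitly requires, is to never apply Lemma~\ref{lem:ac} to $\ol{M}_T$ as a whole. Apply it separately to each $M_i^H$: as a graph this is a disjoint union of copies of the double cover of $\Cay(G,A_i)$, whose spectrum is $\{\pm\nu:\nu\in\mathrm{spec}(M_i^G)\}$, hence it is one-sided $\lambda$-expanding as soon as $\Cay(G,A_i)$ is \emph{two-sided} $\lambda$-expanding --- a strengthening that the paper's $P_i$ argument quietly relies on and that your bound on $-\langle q^-,M_i^G q^-\rangle$ needs just as much. With that, averaging over $i$ gives the slab estimate with denominator $2r|G|$ (not $r|H_T|=r\,2^{t-\ell}|G|$, which is larger and would be an overclaim; the $2r|G|$ is exactly what produces the $2^{t-\ell-1}$ in the lemma's constant). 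No $k>\ell$ orthogonality is needed, and $k=\ell$ is not a special case.
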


Note that $\lambda |\mA|$ is the dominating term when $\frac{|\mA|}{|X(k)|}$ is a small constant.

\begin{proof}
  Let $M$ be the Markov operator associated to step 2 of the walk $W^{(k,\ell)}$ introduced in Definition~\ref{def:walk-W}.
  We first discuss the structural properties of $M$.
  Notice that the type of the face does not change after a step of the walk $M$.
  Thus, $M$ is a direct sum of Markov operators $M_S$, for $S$ a subset of size $\ell$, each of which corresponds to the restriction of $M$ to faces of type $S$.
  $M_S$ can be further decomposed into a sum of operators $M_{S, i}$, for $i\in\ol{S}$, associated to the walk in the $i$-th direction. This decomposition is not a direct sum, but an average.
  Finally, $M_{S, i}$ is isomorphic to the direct sum of $n^\ell 2^{t-\ell-1}$ disjoint copies of the walk on the double cover of $\Cay(G, A_i)$. This is because $a_S$ and $b_{\ol{S}-i}$ are not changed by $M_{S,i}$; only $g$ and $b_i$ are.
  Denote the random walk on the double cover of $\Cay(G, A_i)$ as $P_i$.
  To summarize,
  \[M \,=\, \sum_{|S| = \ell} M_{S}\;,\qquad M_{S} \,=\, \frac{1}{t-\ell} \sum_{i \in \ol{S}} M_{S, i}\;,\qquad\text{and}\qquad M_{S,i} \,=\, P_i^{\oplus n^\ell 2^{t-\ell-1}}\;.\]
We now prove the lemma. For $P_i$,
    by applying the expander mixing lemma, Lemma~\ref{lem:ac}, on the double cover of $\Cay(G, A_i)$
    we have
  \begin{equation*}
    \langle x, P_i x\rangle \,\le\, \lambda \norm{x}_2^2 + \frac{\norm{x}_1^2}{2r|G|}
  \end{equation*}
  for all real functions $x$ on the vertices.
  The factor $2r|G|$ appears in the denominator because each connected component is $\lambda$-expanding with size at least $2r|G|$.
  Because $M_{S,i}$ is a direct sum of $P_i$, the same inequality carries over:
  \begin{equation*}
    \langle x, M_{S, i} x\rangle \,\le\, \lambda \norm{x}_2^2 + \frac{\norm{x}_1^2}{2r|G|}\;.
  \end{equation*}
  Because $M_{S} = \frac{1}{t-\ell} \sum_{i \in \ol{S}} M_{S, i}$,
  \begin{equation*}
    \langle x, M_{S} x\rangle
    \,=\, \frac{1}{t-\ell} \sum_{i \in \ol{S}} \langle x, M_{S, i} x\rangle
    \,\le\, \lambda \norm{x}_2^2 + \frac{\norm{x}_1^2}{2r|G|}\;.
  \end{equation*}
Finally, since $M$ is a direct sum of $M_S$, we obtain that for any real function $x$ on the faces $X(\ell)$,
  \begin{equation}\label{eq:m-bound-1}
    \langle x, M x\rangle \,\le\, \lambda \norm{x}_2^2 + \frac{\norm{x}_1^2}{2r|G|}\;.
  \end{equation}
We can now evaluate
  \begin{align*}
    \langle 1_\mA, W^{(k,\ell)} 1_\mA\rangle
    &= \langle 1_\mA, U^{\circ(k-\ell)} M D^{\circ(k-\ell)} 1_\mA\rangle \\
    &= \frac{|X(k)|}{|X(\ell)|} \langle D^{\circ(k-\ell)} 1_\mA, M D^{\circ(k-\ell)} 1_\mA\rangle \\
    &\le \frac{|X(k)|}{|X(\ell)|} \lambda \langle D^{\circ(k-\ell)} 1_\mA, D^{\circ(k-\ell)} 1_\mA\rangle + \frac{|X(k)|}{|X(\ell)|} \frac{\langle 1_{X(\ell)}, D^{\circ(k-\ell)} 1_\mA\rangle^2}{2|G|} \\
    &=\frac{|X(k)|}{|X(\ell)|} \lambda \langle D^{\circ(k-\ell)} 1_\mA, D^{\circ(k-\ell)} 1_\mA\rangle + \frac{|X(k)|}{|X(\ell)|} \frac{\langle D^{\circ(k-\ell)} 1_{X(k)}, D^{\circ(k-\ell)} 1_\mA\rangle^2}{2r|G|} \\
    &\le \lambda \langle 1_\mA, 1_\mA\rangle + \frac{|X(\ell)|}{|X(k)|} \frac{\langle1_{X(k)}, 1_\mA\rangle^2}{2r|G|} \\
    &\le \lambda |\mA| + \frac{|X(\ell)|}{|X(k)|} \frac{|\mA|^2}{2r|G|} \\
    &\le \lambda |\mA| + \frac{{t \choose \ell} 2^{t-\ell} n^\ell |G|}{2r|G|} \frac{|\mA|^2}{|X(k)|}\;.
  \end{align*}
Here the second line uses Claim~\ref{claim:d-u-adjoint}.
  The third line uses~\eqref{eq:m-bound-1} and the fact that $D^{\circ(k-\ell)} 1_\mA$ has nonnegative entries,
    which implies that $\norm{D^{\circ(k-\ell)} 1_\mA}_1 = \langle1_{X(\ell)}, D^{\circ(k-\ell)} 1_\mA\rangle$. The fourth and fifth lines use that $D$ is an averaging operator, and the regularity properties of the complex.
  The last line uses $|X(\ell)| = {t \choose \ell} 2^{t-\ell} n^\ell |G|$ by Lemma~\ref{lem:geom-2}.
\end{proof}

\section{Locally co-minimal distance}
\label{sec:codistance}

The main result of this section is that for any $0\leq k <t$, the co-chain complex $C^*(X,\sheaf)$ is a co-systolic expander in dimension $k$ according to Definition~\ref{def:cosyst-exp}, assuming the expansion parameter $\lambda$ (see~\ref{def:x-expand}) is small enough as a function of $t$ and the robustness parameters introduced in Section~\ref{sec:robustness}.

Quantitatively, we first prove a lower bound on the locally co-minimal distance of $C^k(X)$. This is introduced in Section~\ref{sec:prelim-chains}. For convenience we recall the definition
\begin{equation}\label{eq:def-coloc}
   \dist_\coloc(k) \,=\, \min\big\{ |x| \,:\, x\in \ker\delta_k - \{0\},\, \text{$x$ is locally co-minimal}\big\}\;.
\end{equation}
Bounds on the co-cycle expansion parameters follow immediately from a lower bound on $\dist_\coloc(k)$, as shown in Lemma~\ref{lem:cosys-exp}.

\begin{proposition}\label{prop:codistance-t}
  Let $0\leq k < t$.
Assume that there exists $\kappa_{t-i,k-i} >0$, for each $0\leq i \leq k$, such that for each $S\subseteq \{1,\ldots,t\}$ with $|S|=t-i$ the local chain complex $C(L_S)$ is $\kappa_{t-i,k-i}$-robust (at level $k-i$) according to Definition~\ref{def:robust}.  Then
\begin{equation*}
  \dist_\coloc(k) \,\ge\, \frac{1-\lambda C_1}{C_2} r |X(k)|\;,
\end{equation*}
where
\begin{align*}
  C_1 \,=\, \frac{t^2 2^{t^2+3t}}{\prod_{i=0}^{k} \kappa_{t-i,k-i}}\qquad\text{and}\qquad
  C_2 \,=\, \frac{t^2 2^{t^2+5t} n^t}{\prod_{i=0}^{k} \kappa_{t-i,k-i}}\;.
\end{align*}
\end{proposition}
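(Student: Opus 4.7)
The plan is to adapt the local-to-global cocycle expansion strategy of Kaufman--Kazhdan--Lubotzky and Evra--Kaufman to the present sheaf-valued cubical setting, generalizing the $t=2$ arguments of Panteleev--Kalachev and DELLM. Let $x \in \ker\delta_k - \{0\}$ be locally co-minimal and set $\mathcal{A} = \{f \in X(k): x(f) \neq 0\}$. The goal is to lower-bound $|\mathcal{A}|$ by $\frac{1-\lambda C_1}{C_2} r\,|X(k)|$ via a ``cascade'' descent from dimension $k$ down to dimension $0$.

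For any $v \in X(0)$, Lemma~\ref{lem:loc-glob} identifies $x_v := x|_{X_{\ge v}(k)}$ with an element of $C^k(L_{[t]},\sheaf)$, and the local coboundary $\delta_L$ of $L_{[t]}$ agrees with the restriction of $\delta$ to chains supported on $X_{\ge v}$. Local co-minimality of $x$ then forces $x_v$ to be minimal in the sense of Definition~\ref{def:min}, so the assumed $\kappa_{t,k}$-robustness of $C(L_{[t]})$ yields $|\delta_L(x_v)| \ge \kappa_{t,k}\,n\,|x_v|$ whenever $x_v \neq 0$. On the other hand, $\delta(x) = 0$ implies that for every $f' \in X_{\ge v}(k+1)$,
\[ \delta_L(x_v)(f') \;=\; -\!\!\sum_{f \precdot f',\, f \not\succeq v} \corest_{f,f'}(x(f))\;, \]
so each coordinate where $\delta_L(x_v)$ is nonzero is witnessed by an active face in $\Nb_v(k) \cap \mathcal{A}$. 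Combining the two bounds, for each $v \in X(0)$ with $x_v \neq 0$,
\[ \kappa_{t,k}\,n\,|X_{\ge v}(k) \cap \mathcal{A}| \;\le\; \bigl|\{f' \in X_{\ge v}(k+1) : \exists f \in \Nb_v(k) \cap \mathcal{A},\, f \precdot f'\}\bigr|\;. \]

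The same construction applies more generally with $v$ ranging over $X(i)$ for $0 \le i \le k$: the local view $x_v$ lives in $C^{k-i}(L_{\ol{\type(v)}},\sheaf)$, after local-coboundary reduction it can be arranged to be minimal, and the $\kappa_{t-i,k-i}$-robustness assumption together with $\delta(x) = 0$ yields an analogous ``robustness $\Rightarrow$ off-link neighbors in $\mathcal{A}$'' inequality at level $i$. Summing over $v \in X(i)$ and invoking Lemma~\ref{lem:Nb-bound} with the bound $a_{k,i} \le 2^t$ from~\eqref{eq:akl-bound} to account for multiplicities, one converts the local inequalities into a recursion relating the concentration of $\mathcal{A}$ on $i$-links to the quadratic form $\langle 1_\mathcal{A}, W^{(k,i)}_{adj}\,1_\mathcal{A}\rangle$ of the down-up random walk of Section~\ref{sec:rw}.

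Iterating this descent from $i=k$ down to $i=0$ yields the product $\prod_{i=0}^{k}\kappa_{t-i,k-i}$ in the denominator of the accumulated constant, together with a combinatorial overhead of size $2^{O(t^2)}$ and polynomial-in-$n$ factors coming from the incidence counts of Lemma~\ref{lem:geom-2}. The cascade terminates in an inequality of the schematic form $(\text{const}_1)\,|\mathcal{A}| \le \langle 1_\mathcal{A},\, W^{(k,\ell_\star)}\,1_\mathcal{A}\rangle\cdot(\text{const}_2)$ for some suitable $\ell_\star$ (possibly via the $\Op^{(k,\ell)}$ of Definition~\ref{def:opp-walk} and Claim~\ref{claim:Op-to-W}). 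Applying the small-set expansion bound of Lemma~\ref{lem:graph-expansion} to the right-hand side and rearranging yields $|\mathcal{A}| \ge \frac{1-\lambda C_1}{C_2}\,r\,|X(k)|$, with the constants matching the stated $C_1$ and $C_2$ after folding in the exponential-in-$t^2$ combinatorial factors across the $k+1$ cascade levels.

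The main obstacle will be the bookkeeping of the descent: ensuring that the ``replacement of $x_v$ by a minimal representative'' remains consistent across the $k+1$ levels (local co-minimality is \emph{a priori} only given for $v \in X(0)$, so at levels $i>0$ one must decompose local corrections into vertex-supported pieces and re-assemble them without inflating the support), and verifying that the accumulated combinatorial overheads match the precise exponents $t^2+3t$ and $t^2+5t$ in $C_1$ and $C_2$. This is the step where the extension beyond $t=2$ is genuinely more delicate than in DELLM and PK2 and where the full strength of two-way robustness (at every level $k<|S|$, not just the top) is invoked.
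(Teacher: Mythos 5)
Your proposal follows the paper's high-level strategy closely and correctly identifies the key ingredients: the local-to-global ``cascade'' descent from dimension $k$ to dimension $0$, the role of robustness of the local complexes $C(L_S)$, the covering Lemma~\ref{lem:Nb-bound}, the reduction to the walk $W^{(k,\ell)}$ via Claim~\ref{claim:Op-to-W}, and the small-set expansion of $W^{(k,\ell)}$ from Lemma~\ref{lem:graph-expansion}. However, there are two substantive misconceptions that would derail a full write-up.

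First, your inequality for a vertex $v\in X(0)$ refers to active faces in $\Nb_v(k)\cap\mathcal A$, but for a vertex $v$ one always has $\Nb_v(k)=\emptyset$: a $k$-face $f$ with $f\cap v\neq\emptyset$ must contain $v$ since $v$ has a single vertex. As written, your bound would force $x|_{X_{\geq v}(k)}=0$ whenever it holds. The correct inequality (Claim~\ref{claim:local-code-expansion}) is
\begin{equation*}
\kappa_{t-\ell,k-\ell}\, n\, |x(X_{\geq v}(k))| \leq |x(\Op_v(k))| + |x(\Nb_v(k))|
\end{equation*}
for $v\in X(\ell)$, and the $\Op$ terms are the ones that feed into the expanding walk while the $\Nb$ terms are pushed down via Lemma~\ref{lem:Nb-bound}, the recursion terminating precisely because $\Nb_{v_0}(k)=\emptyset$ at $\ell=0$.

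Second, and more seriously, the ``main obstacle'' you identify -- that local co-minimality is only given for $v\in X(0)$ and must therefore be propagated upward via a decompose-and-reassemble scheme -- is not an obstacle, and your proposed fix would likely fail. The point is that for any $v\in X(\ell)$, any $y$ supported on $X_{\geq v}(k-1)$ is automatically supported on $X_{\geq v'}(k-1)$ for every vertex $v'\preceq v$, since $X_{\geq v}(k-1)\subseteq X_{\geq v'}(k-1)$. So Definition~\ref{def:co-loc-min-dist} applied at $v'$ already gives $|x+\delta y|\geq |x|$, and since $x+\delta y$ agrees with $x$ outside $X_{\geq v}(k)$, the restriction $x(X_{\geq v}(k))$ is automatically minimal in the local complex -- this is exactly Claim~\ref{claim:active-rob}, and it is a two-line argument. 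Replacing $x_v$ by a different minimal representative, as you suggest, would destroy consistency between overlapping local views and introduce exactly the bookkeeping headache you hoped to avoid. A minor point: the cascade terminates in a \emph{sum} over all $\ell\in\{0,\ldots,k\}$ of quadratic forms $\langle 1_{\mathcal A},W^{(k,\ell)}1_{\mathcal A}\rangle$, not a single $\ell_\star$, with the $\ell$-th term carrying weight $\prod_{i=\ell}^k \kappa_{t-i,k-i}^{-1}$ times combinatorial factors; this is where the sum $\sum_{\ell=0}^k$ defining $C_1'$ and $C_2'$ in~\eqref{eq:c1-c2} comes from.
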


\begin{remark}\label{remark:better-bound}
  The constants $C_1$ and $C_2$ are simplifications
    of constants $C_1'$ and $C_2'$ which provide a slightly tighter bound, see~\eqref{eq:c1-c2} in the proof of Proposition~\ref{prop:codistance-t}, where the coefficients $a_{k,\ell}$ are defined in~\eqref{eq:def-akl}. To simplify these bounds, we
    we apply the estimate $a_{k, \ell} \le 2^t$ from~\eqref{eq:akl-bound}.\footnote{The bound is often much tighter, for example $a_{t-1,\ell} = 1$ for all $\ell$ as shown in Claim~\ref{claim:akl}. Such tighter bounds could be of interest if one cares about the dependence of $C'_1$, $C'_2$ on the dimension $t$.}
  Then,
  \begin{equation*}
    C'_1
    = \sum_{\ell=0}^k \frac{{k \choose \ell}{t-\ell \choose k-\ell} (t-\ell) 2^{k-\ell} \prod_{i=\ell}^{k-1} a_{k, i}}{\prod_{i=\ell}^{k} \kappa_{t-i,k-i}}
    \le t \frac{2^t 2^t t 2^t (2^t)^t}{\prod_{i=0}^{k} \kappa_{t-i,k-i}}
    = C_1\;,
  \end{equation*}
  and
  \begin{equation*}
    C'_2
    = \sum_{\ell=0}^k \frac{{t \choose \ell} 2^{t-\ell-1} n^\ell {k \choose \ell}{t-\ell \choose k-\ell} (t-\ell) 2^{k-\ell} \prod_{i=\ell}^{k-1} a_{k, i}}{\prod_{i=\ell}^{k} \kappa_{t-i,k-i}}
    \le t \frac{2^t 2^t n^t 2^t 2^t t 2^t (2^t)^t}{\prod_{i=0}^{k} \kappa_{t-i,k-i}}
    = C_2\;.
  \end{equation*}
\end{remark}

\begin{remark}
  We explicitly work out the requirements on $\lambda$ and the $\kappa_{\ell,k}$ that suffice to guarantee a linear (in $X(k)$) co-systolic distance, i.e.\ such that $1-\lambda C'_1 >0$, for certain cases of interest.
    We use the shorthand $\kappa_i = \kappa_{i, i-1}$ for the robustness at the last level of the chain complex.
  Note that when $k = t-1$, by Claim~\ref{claim:akl} we have that $a_{k, i} = 1$.

  When $t = 2, k = 1$ the sufficient condition is
  \begin{equation}
    \frac{1}{\kappa_1}
    + \frac{8}{\kappa_1 \kappa_2} \,<\, \frac{1}{\lambda}\;.
  \end{equation}
  This is the sufficient condition for obtaining good qLDPC using our construction.
  When $t = 4, k = 3$ the sufficient condition is
  \begin{equation}
    \frac{1}{\kappa_1}
    + \frac{24}{\kappa_1 \kappa_2}
    + \frac{108}{\kappa_1 \kappa_2 \kappa_3}
    + \frac{128}{\kappa_1 \kappa_2 \kappa_3 \kappa_4} < \frac{1}{\lambda}\;.
  \end{equation}
  This is the sufficient condition for obtaining good (again, up to polylog factors) qLTC using our construction.
  \end{remark}

Let $k< t$ and $x\in C^k(X)$ be such that $\delta(x)=0$. For $f\in X(k)$ we say that $f$ is \emph{active} if the projection $x(f)$ of $x$ on $V_f$ is nonzero. Let $\mA\subseteq X(k)$ be the set of {active faces}. At a high level, the proof of Proposition~\ref{prop:codistance-t} amounts to showing that, whenever $x$ is locally co-minimal, the set $\mA$ only expands a little under a suitable expanding random walk, which is based on the robustness of the local codes. However, when $\mA\neq\emptyset$, the spectral expansion of the graph implies to have such small expansion, the set must be large, leading to a lower bound on $|x|$.






\begin{proof}[Proof of Proposition~\ref{prop:codistance-t}]
Let $x\in C^k(X)$ be such that $\delta(x)=0$ and $x$ is locally co-minimal. We start with a preliminary claim, which shows that the locally co-minimal condition implies the following.

\begin{claim}\label{claim:active-rob}
  Let $\ell<k$ and $v\in X(\ell)$. Using the isomorphism from Lemma~\ref{lem:loc-glob}, the restriction $x(X_{\geq v}(k))$ can be seen as a $(k-\ell)$-chain of the dimension $(t-\ell)$ complex $C(L_{\ol{S}})$ where $S = \type\,{v}$. Let $z$ be that element. Then $z$ is minimal (according to Definition~\ref{def:min}).
\end{claim}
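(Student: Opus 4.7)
The plan is to argue by contradiction: suppose $z$ is not minimal in $C^{k-\ell}(L_{\overline{S}})$, and then lift a local correction to a correction on $X$ that contradicts the local co-minimality of $x$ at any vertex below $v$.

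Concretely, by Lemma~\ref{lem:loc-glob} the isomorphism $\Phi: C^*(X_{\geq v}) \to C^*(L_{\overline{S}})$ identifies $x|_{X_{\geq v}(k)}$ with $z\in C^{k-\ell}(L_{\overline{S}})$ and intertwines the coboundary maps. Suppose, toward a contradiction, that $z$ is not minimal in the sense of Definition~\ref{def:min}, so there exists $w\in C^{k-\ell-1}(L_{\overline{S}})$ with $|z+\delta_{\overline{S}}(w)|<|z|$. Let $\tilde y:=\Phi^{-1}(w)\in C^{k-1}(X_{\geq v})$, and extend $\tilde y$ by zero to an element of $C^{k-1}(X)$ supported on $X_{\geq v}(k-1)$.

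Next I would verify that the global correction $\delta(\tilde y)$ is supported on $X_{\geq v}(k)$, and that its restriction there matches the local correction. The first statement follows from~\eqref{eq:def-delta}: for $f\in X(k)$, $\delta(\tilde y)(f)=\sum_{f'\precdot f}\corest_{f',f}(\tilde y(f'))$ is nonzero only if some $f'\precdot f$ lies in $X_{\geq v}(k-1)$, which forces $f\succeq v$. For the second, given $f\in X_{\geq v}(k)$, only sub-faces $f'\precdot f$ with $f'\succeq v$ contribute, and these are precisely the sub-faces of $f$ inside $X_{\geq v}$; therefore under $\Phi$ the restriction $\delta(\tilde y)|_{X_{\geq v}(k)}$ corresponds to $\delta_{\overline{S}}(w)$.

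Finally, splitting the block-Hamming weight according to whether a $k$-face lies in $X_{\geq v}(k)$ or not, and using that $\delta(\tilde y)$ vanishes outside $X_{\geq v}(k)$, gives
\[
|x+\delta(\tilde y)|-|x|\,=\,|z+\delta_{\overline{S}}(w)|-|z|\,<\,0\;.
\]
Choosing any vertex $v'\in X(0)$ with $v'\preceq v$ (which exists since $\ell\geq 0$), the support of $\tilde y$ satisfies $X_{\geq v}(k-1)\subseteq X_{\geq v'}(k-1)$. This contradicts the local co-minimality of $x$ at $v'$ from Definition~\ref{def:co-loc-min-dist}, completing the proof. There is no substantive obstacle here; the only point requiring care is to check that the ``off-link'' sub-faces $f'\precdot f$ do not contribute to $\delta(\tilde y)(f)$, which is immediate from the support of $\tilde y$.
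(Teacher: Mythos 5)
Your proof is correct and follows essentially the same line of reasoning as the paper's, merely repackaged as a contradiction: both arguments rest on the observation that a lift $\tilde y$ (or $y$) supported on $X_{\geq v}(k-1)$ has coboundary supported only on $X_{\geq v}(k)$, allowing the block-weight to split along $X_{\geq v}(k)$ and its complement, after which local co-minimality of $x$ at a vertex $v'\preceq v$ gives the needed inequality. The only superficial difference is that the paper presents the direct inequality $\big|x(X_{\geq v}(k))+\delta_{\ol S}(y)\big|\geq \big|x(X_{\geq v}(k))\big|$ rather than deriving a contradiction, and leaves the intertwining of the coboundary maps (which you spell out) implicit via Lemma~\ref{lem:loc-glob}.
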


\begin{proof}
  It suffices to show that
  \begin{equation*}
   \forall y\text{ supported on } X_{\geq v}(k-1)\;,\qquad \big| x\big(X_{\geq v}(k)\big) + \delta_{\ol{S}}(y) \big| \,\geq\, \big| x\big( X_{\geq v}(k)\big)\big|\;.
  \end{equation*}
  Let $y$ be supported on $X_{\geq v}(k-1)$. Then  $\delta_{\ol{S}}(y) = \delta(y)(X_{\geq v}(k))$. Thus $x+\delta(y)$ decomposes as a sum
  \[ x+ \delta(y)\,=\, \big(x\big(X_{\geq v}(k)\big) + \delta_{\ol{S}}(y)\big) + \big(x-x\big(X_{\geq v}(k)\big)\big)\;,\]
  where the two elements on the right-hand side are supported on a disjoint set of faces: $X_{\geq v}(k)$ for the first and its complement in $X(k)$ for the second.
  Thus
  \begin{align*}
   \big| x\big(X_{\geq v}(k)\big) + \delta_{\ol{S}}(y) \big|
  &=   \big| x+ \delta(y) \big| - \big| x-x\big(X_{\geq v}(k)\big)\big|\\
  &\geq \big| x\big|- \big| x-x\big(X_{\geq v}(k)\big) \big|\\
  &=  \big| x\big(X_{\geq v}(k)\big)\big|\;,
  \end{align*}
  where the second line uses local minimality of $x$ (according to Definition~\ref{def:co-loc-min-dist}).
\end{proof}

The key step in the proof is the following claim, which uses the robustness assumption on the local chain complex $C(L_S)$.

\begin{claim}\label{claim:local-code-expansion}
Let $\ell\leq k$.  For all $v \in X(\ell)$,
  \begin{equation}\label{eq:local-code-expansion}
    \kappa_{t-\ell, k-\ell} n\ |x(X_{\ge v}(k))| \,\le\, |x(\Op_{v}(k))| + |x(\Nb_{v}(k))|\;.
  \end{equation}
\end{claim}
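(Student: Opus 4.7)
The plan is to reduce the global statement about $x$ to a local robustness statement inside the link $X_{\geq v}$. Let $S = \type(v)$ and use Lemma~\ref{lem:loc-glob} to identify the restriction $z := x|_{X_{\ge v}(k)}$ with an element of $C^{k-\ell}(L_{\ol{S}})$; under this identification $|z|$ equals $|x(X_{\ge v}(k))|$. By the preceding Claim~\ref{claim:active-rob}, $z$ is minimal in the sense of Definition~\ref{def:min}, so applying the robustness hypothesis at level $k-\ell$ of the $(t-\ell)$-dimensional local complex immediately gives
\[
|\delta_{\ol{S}}(z)| \,\ge\, \kappa_{t-\ell,\,k-\ell}\cdot n\cdot |x(X_{\ge v}(k))|\;.
\]

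Next, I would use the hypothesis $\delta(x)=0$ to re-express $\delta_{\ol{S}}(z)$ in terms of coefficients of $x$ supported outside $X_{\ge v}(k)$. For any $f' \in X_{\ge v}(k+1)$, split the defining sum of $\delta(x)(f')$ according to whether $v \preceq f$ or not. The contributions with $f \succeq v$ assemble, under the local identification, into $\delta_{\ol{S}}(z)(f')$, so
\[
\delta_{\ol{S}}(z)(f') \,=\, -\sum_{\substack{f \precdot f' \\ v \not\preceq f}} \corest_{f,f'}(x(f))\;.
\]
Whenever this is nonzero, the sum must contain some active $f$, and by~\eqref{eq:d-u} such an $f$ lies in $(\down\circ\up)X_{\ge v}(k)$ but not in $X_{\ge v}(k)$, hence in $\Op_v(k) \cup \Nb_v(k)$.

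The final step is to upgrade this into the counting bound $|\delta_{\ol{S}}(z)| \le |x(\Op_v(k))| + |x(\Nb_v(k))|$. For this I would establish the combinatorial claim that every $f \in \Op_v(k) \cup \Nb_v(k)$ is covered by \emph{exactly one} face of $X_{\ge v}(k+1)$; sending each active such $f$ to its unique $f'$ then yields a surjection onto $\{f' : \delta_{\ol{S}}(z)(f') \neq 0\}$, and the counting bound follows. The uniqueness splits into cases on how $T_v = S$ sits inside $T_f$: if $T_v \not\subseteq T_f$, then the requirement $T_v \subseteq T_{f'} = T_f \cup \{j\}$ forces $|T_v \setminus T_f|=1$, uniquely pinning down both the added direction $j$ and the permutation value $f'_j = v_j$; if $T_v \subseteq T_f$, then the hypothesis $v \not\preceq f$ together with existence of some $f' \succ f$ in $X_{\ge v}(k+1)$ forces a unique bit-mismatch direction $i^* \in \ol{T_f}$ at which $v_{i^*} \neq f_{i^*}$, and this $i^*$ is then the only admissible $j$, with $f'_j$ uniquely determined by matching the remaining base-point constraints.

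The combinatorial uniqueness in the last step is what I expect to be the main obstacle, since it requires carefully disentangling the type-level versus position-level (bit and permutation) structure of faces in the cubical complex. The case $T_v \subseteq T_f$ is the more delicate one: a priori multiple extension directions $j \in \ol{T_f}$ could appear admissible, and only after imposing bit-matching against the hypothesis $v \not\preceq f$ does one isolate a single viable $j$. Combining the resulting counting inequality with the robustness bound from the first step yields the claim.
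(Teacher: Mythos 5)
Your proof is correct and takes essentially the same approach as the paper's: apply local robustness to the minimal restriction $z$, then use $\delta x=0$ to bound $|\delta_{\ol{S}}(z)|$ by the number of active faces in $\Op_v(k)\cup\Nb_v(k)$. The paper states the counting step tersely (``has to be canceled by some nonzero entries'') without spelling out the required injectivity; your combinatorial claim---that each $f\in\Op_v(k)\cup\Nb_v(k)$ lies below a \emph{unique} $f'\in X_{\ge v}(k+1)$, argued via the type-mismatch case when $\type(v)\not\subseteq\type(f)$ and the bit-mismatch case when $\type(v)\subseteq\type(f)$---is indeed correct and supplies exactly the missing piece that makes the block-weight bound rigorous.
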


\begin{proof}
  We apply the robustness condition to $z = x(X_{\ge v}(k))$,
    seen as a $(k-\ell)$-chain of $C(L_{\ol{S}})$
    where $S = \type\,{v}$.
  By Claim~\ref{claim:active-rob}, $z$ is minimal.
  Therefore,
  \begin{equation*}
    \kappa_{t-\ell, k-\ell} n\ |x(X_{\geq v}(k))|\, \le\, |\delta_{\ol{S}}(x(X_{\geq v}(k)))|\;.
  \end{equation*}
  Because $\delta x = 0$ and $(\down\circ\up)X_{\geq v}(k) = X_{\geq v}(k) \cup \Nb_v(k) \cup \Op_{v}(k)$ by~\eqref{eq:d-u},
  each nonzero entry in $\delta_{\ol{S}}(x(X_{\geq v}(k)))$
  has to be canceled by some nonzero entries in $x(\Op_{v}(k))$ or $x(\Nb_{v}(k))$. Moreover, each nonzero entry of $x(\Op_{v}(k))$ or $x(\Nb_{v}(k))$ cancels at most one nonzero entry of $\delta_{\ol{S}}(x(X_{\geq v}(k)))$.
  Hence,
  \begin{equation*}
    |\delta_{\ol{S}}(x(X_{\geq v}(k)))| \le |x(\Op_{v}(k))| + |x(\Nb_{v}(k))|\;.
  \end{equation*}
  Combining the two inequalities above gives~\eqref{eq:local-code-expansion}.
\end{proof}

Informally, Claim~\ref{claim:local-code-expansion} for $\ell=k$ states that whenever a face $v\in X(k)$ is active, i.e.\ $|x(X_{\geq v}(k))|=|x(v)|=1$, it must have either many neighbors that are active, or many opposite neighbors that are active. In the latter case, we can make a step according to the random walk $\Op^{(k,k)}$ introduced in Defition~\ref{def:opp-walk}, and use the expansion properties of that walk to conclude. However, in the former case a step to $\Nb_v(k)$ would not be expanding (because neighbors share a lower-dimensional face). In that case, we instead first sample a lower-dimensional sub-face of $v$, and move to an opposite neighbor of the latter face. The following claim analyses this recursive process. Recall the definition of the coefficients $a_{k,\ell}$ in~\eqref{eq:def-akl}.

\begin{claim}\label{claim:global-code-expansion}
Let $v_k \in X(k)$. Then
  \begin{equation}\label{eq:global-code-expansion}
    |x(v_k)| \,\le\, \sum_{\ell=0}^k \sum_{v_\ell \preceq v_k} \frac{\prod_{i=\ell}^{k-1} a_{k, i}}{\prod_{i=\ell}^{k} (\kappa_{t-i,k-i} n)} |x(\Op_{v_\ell}(k))|\;.
  \end{equation}
\end{claim}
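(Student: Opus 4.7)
The strategy is a downward induction on $\ell$ starting from $\ell = k$, iteratively peeling off the ``neighbor'' contribution using Lemma~\ref{lem:Nb-bound} and the local expansion given by Claim~\ref{claim:local-code-expansion}. Concretely, fix $v_k \in X(k)$ and define, for $0 \le \ell \le k$,
\[ S_\ell \,=\, \sum_{v_\ell \preceq v_k} |x(X_{\ge v_\ell}(k))| \qquad\text{and}\qquad T_\ell \,=\, \sum_{v_\ell \preceq v_k} |x(\Op_{v_\ell}(k))|\,. \]
Note $S_k = |x(v_k)|$ and $T_k = |x(\Op_{v_k}(k))|$, so the goal is to bound $S_k$ by the $T_\ell$'s with the stated coefficients.

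The key recursive step is to show, for each $1 \le \ell \le k$, the inequality
\[ \kappa_{t-\ell,\, k-\ell}\, n\, S_\ell \,\le\, T_\ell \,+\, a_{k,\ell-1}\, S_{\ell-1}\,. \]
Applying Claim~\ref{claim:local-code-expansion} at each $v_\ell \preceq v_k$ and summing gives
\[ \kappa_{t-\ell,\, k-\ell}\, n\, S_\ell \,\le\, T_\ell \,+\, \sum_{v_\ell \preceq v_k} |x(\Nb_{v_\ell}(k))|\,, \]
and then Lemma~\ref{lem:Nb-bound} (applied with the outer index equal to $\ell$, so $v_{\ell+1}$ there becomes our $v_\ell$) bounds the last sum by $a_{k,\ell-1} \sum_{v_{\ell-1} \prec v_k} |x(X_{\ge v_{\ell-1}}(k))| = a_{k,\ell-1}\, S_{\ell-1}$. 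For the base case $\ell = 0$, there are no $(-1)$-faces so $\Nb_{v_0}(k) = \emptyset$ for any vertex $v_0$ (as any $k$-face meeting a vertex $v_0$ with $k \ge 1$ must have $v_0$ as a sub-face), and Claim~\ref{claim:local-code-expansion} directly gives $\kappa_{t,k}\, n\, S_0 \le T_0$.

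Unrolling this recursion from $\ell = k$ down to $\ell = 0$ yields
\[ S_k \,\le\, \sum_{\ell=0}^k \frac{\prod_{i=\ell}^{k-1} a_{k,i}}{\prod_{i=\ell}^{k} (\kappa_{t-i,\,k-i}\, n)}\, T_\ell\,, \]
with the convention that an empty product over $a_{k,i}$ (occurring at $\ell = k$) equals $1$. Substituting the definitions of $S_k$ and $T_\ell$ gives exactly~\eqref{eq:global-code-expansion}. The only subtlety is verifying that the Lemma~\ref{lem:Nb-bound} bound correctly translates the sum $\sum_{v_\ell \preceq v_k} |x(\Nb_{v_\ell}(k))|$ into an $S_{\ell-1}$ term with coefficient $a_{k,\ell-1}$ --- this is where the definition of $a_{k,\ell}$ in~\eqref{eq:def-akl} and the disjointness carried by the $\sqcup$-notation play their role. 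No new expansion or robustness input is needed beyond what was already used for Claim~\ref{claim:local-code-expansion}; the argument is purely a combinatorial bookkeeping of how the ``neighbor'' error at level $\ell$ is re-absorbed into the ``opposite'' sum at level $\ell-1$.
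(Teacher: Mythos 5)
Your proposal is correct and is essentially the same argument as in the paper: you apply Claim~\ref{claim:local-code-expansion} at each level and absorb the $\Nb$-term via Lemma~\ref{lem:Nb-bound} with the index shift (the lemma's $\ell$ is your $\ell-1$), terminating at $\ell=0$ because $\Nb_{v_0}(k)=\emptyset$. The paper presents this as a telescoping chain of inequalities starting from $|x(v_k)|$, while you package the same recursion using the auxiliary sums $S_\ell$ and $T_\ell$; the coefficients $\prod_{i=\ell}^{k-1}a_{k,i}/\prod_{i=\ell}^{k}(\kappa_{t-i,k-i}n)$ unroll identically in both.
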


\begin{proof}
  We apply Claim~\ref{claim:local-code-expansion} recursively, yielding the following sequence of inequalities.
  \begin{align*}
    |x(v_k)|
    &\le \frac{|x(\Op_{v_k}(k))|}{\kappa_{t-k,0} n} + \frac{|x(\Nb_{v_k}(k))|}{\kappa_{t-k,0} n} \\
    &\le \frac{|x(\Op_{v_k}(k))|}{\kappa_{t-k,0} n} + a_{k, k-1} \sum_{v_{k-1} \prec v_k} \frac{|x(X_{\ge v_{k-1}}(k))|}{\kappa_{t-k,0} n} \\
    &\le \frac{|x(\Op_{v_k}(k))|}{\kappa_{t-k,0} n}
    + a_{k, k-1} \sum_{v_{k-1} \prec v_k} \left(\frac{|x(\Op_{v_{k-1}}(k))|}{\kappa_{t-k,0} n \cdot \kappa_{t-k+1,1} n}
    + \frac{|x(\Nb_{v_{k-1}}(k))|}{\kappa_{t-k,0} n \cdot \kappa_{t-k+1,1} n}\right) \\
    &\le \frac{|x(\Op_{v_k}(k))|}{\kappa_{t-k,0} n}
    + a_{k, k-1} \sum_{v_{k-1} \prec v_k} \frac{|x(\Op_{v_{k-1}}(k))|}{\kappa_{t-k,0} n \cdot \kappa_{t-k+1,1} n} \\
    &\hspace{14em}+ a_{k, k-1} a_{k, k-2} \sum_{v_{k-2} \prec v_k} \frac{|x(X_{\ge v_{k-2}}(k))|}{\kappa_{t-k,0} n \cdot \kappa_{t-k+1,1} n} \\
    &\le \cdots \\
    &\le \sum_{\ell=0}^k \sum_{v_\ell \prec v_k} \frac{\prod_{i=\ell}^{k-1} a_{k, i}}{\prod_{i=\ell}^{k} (\kappa_{t-i,k-i} n)} |x(\Op_{v_\ell}(k))|\;.
  \end{align*}
  Here the second inequality, and every other subsequent step thereafter, uses Lemma~\ref{lem:Nb-bound}.
  Note that the recursion terminates because $\Nb_{v_0}(k) = \emptyset$ for all $v_0 \in X(0)$.
\end{proof}

The previous claim draws the consequences of the robustness assumption on the local chains $C(L_S)$, which can be understood as a ``local'' expansion property. We now conclude the proof by combining this with the global expansion properties of the cubical complex, which are manifest in the expansion properties of the walks $W^{(k,\ell)}$
introduced in Section~\ref{sec:rw}, see Lemma~\ref{lem:graph-expansion}.
For any $v_k\in X(k)$, applying Claim~\ref{claim:global-code-expansion} we have
\begin{align*}
  \langle1_\mA, 1_{v_k}\rangle
  = |x(v_k)|
  &\le
  \sum_{\ell=0}^k \sum_{v_\ell \preceq v_k} \frac{\prod_{i=\ell}^{k-1} a_{k, i}}{\prod_{i=\ell}^{k} (\kappa_{t-i,k-i} n)} |x(\Op_{v_\ell}(k))| \\
  &=
  \sum_{\ell=0}^k \langle1_\mA, \Op^{(k,\ell)}_{adj} 1_{v_k}\rangle \frac{\prod_{i=\ell}^{k-1} a_{k, i}}{\prod_{i=\ell}^{k} (\kappa_{t-i,k-i} n)} \\
  &\le
  \sum_{\ell=0}^k \langle1_\mA, W^{(k,\ell)} 1_{v_k}\rangle \frac{{k \choose \ell}{t-\ell \choose k-\ell} (t-\ell) 2^{k-\ell} \prod_{i=\ell}^{k-1} a_{k, i}}{\prod_{i=\ell}^{k} \kappa_{t-i,k-i}}\;,
\end{align*}
where the third line follows from Claim~\ref{claim:Op-to-W}.
By averaging over $v_k \in X(k)$, we obtain
\begin{align*}
  \frac{|\mA|}{|X(k)|} = \frac{1}{|X(k)|} \langle1_\mA, 1_\mA\rangle
  &\le \frac{1}{|X(k)|} \sum_{\ell=0}^k \langle1_\mA, W^{(k,\ell)} 1_\mA\rangle \frac{{k \choose \ell}{t-\ell \choose k-\ell} (t-\ell) 2^{k-\ell} \prod_{i=\ell}^{k-1} a_{k, i}}{\prod_{i=\ell}^{k} \kappa_{t-i,k-i}} \\
  &\le \frac{|\mA|}{|X(k)|} \left(\lambda \sum_{\ell=0}^k \frac{{k \choose \ell}{t-\ell \choose k-\ell} (t-\ell) 2^{k-\ell} \prod_{i=\ell}^{k-1} a_{k, i}}{\prod_{i=\ell}^{k} \kappa_{t-i,k-i}}\right) \\
  &\qquad+ \frac{1}{r} \left(\frac{|\mA|}{|X(k)|}\right)^2 \left(\sum_{\ell=0}^k \frac{{t \choose \ell} 2^{t-\ell-1} n^\ell {k \choose \ell}{t-\ell \choose k-\ell} (t-\ell) 2^{k-\ell} \prod_{i=\ell}^{k-1} a_{k, i}}{\prod_{i=\ell}^{k} \kappa_{t-i,k-i}}\right)\;,
\end{align*}
where the last line follows from Lemma~\ref{lem:graph-expansion}.
Let
\begin{align}
  C_1' &= \sum_{\ell=0}^k \frac{{k \choose \ell}{t-\ell \choose k-\ell} (t-\ell) 2^{k-\ell} \prod_{i=\ell}^{k-1} a_{k, i}}{\prod_{i=\ell}^{k} \kappa_{t-i,k-i}}\;,\notag \\
  C_2' &= \sum_{\ell=0}^k \frac{{t \choose \ell} 2^{t-\ell-1} n^\ell {k \choose \ell}{t-\ell \choose k-\ell} (t-\ell) 2^{k-\ell} \prod_{i=\ell}^{k-1} a_{k, i}}{\prod_{i=\ell}^{k} \kappa_{t-i,k-i}}\;.\label{eq:c1-c2}
\end{align}
It follows that either $|\mA| = 0$ or $\frac{1-\lambda C_1'}{C_2'} \le \frac{|\mA|}{r|X(k)|}$, i.e.
\begin{equation*}
  |x| \,\ge\, \frac{1-\lambda C_1'}{C_2'} r|X(k)|\;.
\end{equation*}
The proof follows using $C'_1\leq C_1$ and $C'_2\leq C_2$, as discussed in Remark~\ref{remark:better-bound}.
\end{proof}

\begin{remark}\label{rem:decoding}
The proof of Proposition~\ref{prop:codistance-t} can be generalized to show that the chain complex has
  small-set co-boundary expansion, (sequential) co-decoder, and co-decoder with syndrome error by modifying Eq.~\eqref{eq:local-code-expansion}.
The rest of the proof follows similarly.

For small-set co-boundary expansion, we want to show that
  for $x$ with small weight,
  $|\delta x| \ge \beta\, dist(x, \im \delta)$ for some $\beta$.
We can reduce the problem to the case of locally co-minimal $x$ in a way simile to the proof of $\mu_\cosyst(i)\geq\dist_\coloc(i)$ in Lemma~\ref{lem:cosys-exp}. Here we want to show that for locally co-minimal $x$ with small weight,
  $|\delta x| \ge \beta\, |x|$.
This follows from the same argument as in the proof of Proposition~\ref{prop:codistance-t}.
The only modification needed is to replace Eq.~\eqref{eq:local-code-expansion} with
\begin{equation}
  \kappa_{t-\ell, k-\ell} n |x(X_{\ge v_\ell}(k))| \,\le\, |x(\Op_{v_\ell}(k))| + |x(\Nb_{v_\ell}(k))| + |y(X_{\ge v_\ell}(k+1))|\;,
\end{equation}
where $y = \delta x$.
The reason is that
\[\kappa_{t-\ell, k-\ell} n |x(X_{\ge v_\ell}(k))| \,\le\, |\delta_{\ol{S}}(x(X_{\geq v_\ell}(k)))| \,\le\, |x(\Op_{v_\ell}(k))| + |x(\Nb_{v_\ell}(k))| + |y(X_{\ge v_\ell}(k+1))|\;.\]
The first inequality follows from the local co-minimality of $x$ as before.
The second inequality is modified because now $y = \delta x \ne 0$;
  therefore, the remaining nonzero entries in $|\delta_{\ol{S}}(x(X_{\geq v_\ell}(k)))|$ can now be canceled from $x(\Op_{v_\ell}(k))$, $x(\Nb_{v_\ell}(k))$, or $|y(X_{\ge v_\ell}(k+1))|$.

For (sequential) co-decoder, we want to show that
  for a syndrome $z = \delta x$ obtained from a (qu)bit error $x$ with small weight,
  one can efficiently find $\tilde{x}$ which is homologous to the actual error,
  $\tilde{x} - x \in \im \delta$.
A natural co-decoder is the generalized small-set flip discussed in \cite{DHLV}.
The main challenge is to show that the generalized small-set flip co-decoder only terminates when $z = 0$
  and the key lemma is to show that a flip that reduces the syndrome always exists when the error has small weight.
To show such lemma, we suppose the flip does not exist, which implies
\begin{equation}\label{eq:co-decoder}
  \kappa_{t-\ell, k-\ell} n |x(X_{\ge v_\ell}(k))| \le 2 (|x(\Op_{v_\ell}(k))| + |x(\Nb_{v_\ell}(k))|)\;.
\end{equation}
By replacing Eq.~\eqref{eq:local-code-expansion} with the inequality above, the same argument in the proof of Proposition~\ref{prop:codistance-t} implies that $x = 0$.

The inequality~\eqref{eq:co-decoder} follows from
\[\kappa_{t-\ell, k-\ell} n |x(X_{\ge v_\ell}(k))| \,\le\, \delta_{\ol{S}}(x(X_{\geq v_\ell}(k))) \le 2 (|x(\Op_{v_\ell}(k))| + |x(\Nb_{v_\ell}(k))|)\;.\]
The first inequality follows from the local co-minimality of $x$ as before.
If the second inequality does not hold, then flipping $x(X_{\geq v_\ell}(k))$ reduces the weight of $z = \delta x$.
In particular, flipping $x(X_{\geq v_\ell}(k))$ only affects $z(X_{\geq v_\ell}(k+1))$.
Additionally,
  $z(X_{\geq v_\ell}(k+1))$ is induced from $\delta_{\ol{S}}(x(X_{\geq v_\ell}(k)))$, $x(\Op_{v_\ell}(k))$, and $x(\Nb_{v_\ell}(k))$.
Therefore, the current syndrome on $X_{\geq v_\ell}(k+1)$ has weight
  \[|z(X_{\geq v_\ell}(k+1))| \ge |\delta_{\ol{S}}(x(X_{\geq v_\ell}(k)))| - (|x(\Op_{v_\ell}(k))| + |x(\Nb_{v_\ell}(k))|)\;,\]
  while the new syndrome after the flip of $x(X_{\geq v_\ell}(k))$ has weight
  \[|z(X_{\geq v_\ell}(k+1)) + \delta_{\ol{S}}(x(X_{\geq v_\ell}(k)))| \le |x(\Op_{v_\ell}(k))| + |x(\Nb_{v_\ell}(k))|\;.\]
This means that if $|\delta_{\ol{S}}(x(X_{\geq v_\ell}(k)))| > 2 (|x(\Op_{v_\ell}(k))| + |x(\Nb_{v_\ell}(k))|)$
  then $|z(X_{\geq v_\ell}(k+1)) + \delta_{\ol{S}}(x(X_{\geq v_\ell}(k)))| < |z(X_{\geq v_\ell}(k+1))|$,
  which implies that the flip exists.
This concludes the proof of Eq.~\eqref{eq:co-decoder}.
More details of the co-decoder can be found in \cite{DHLV}.

For co-decoder with noisy syndrome, we want to show that
  given the noisy syndrome $z = \delta x + y$
    from (qu)bit error $x$ and syndrome error $y$ with small weights,
  one can efficiently find $\tilde{x}$ which is homologous to the actual (qu)bit error,
  $\tilde{x} - x \in \im \delta$.
We again use the generalized small-set flip decoder.
To guarantees a flip, we apply a similar argument but replace Eq.~\eqref{eq:co-decoder} with
\begin{equation}
  \kappa_{t-\ell, k-\ell} n |x(X_{\ge v_\ell}(k))| \le 2 (|x(\Op_{v_\ell}(k))| + |x(\Nb_{v_\ell}(k))| + |y(X_{\ge v_\ell}(k+1))|)
\end{equation}
to account for the syndrome error.
\end{remark}

\section{Distance}
\label{sec:distance}

\newcommand{\dis}{}

For integer $1\leq k \leq t$ we let $\mu_\syst(k)$ be the systolic distance of $C_k(X)$, that is
\[ \mu_\syst(k) \,=\, \min\big\{ |x| \,:\, x\in \ker\partial_k - \im \partial_{k+1}\big\}\;.\]
We also let $\eps_\cyc(k)$ be the cycle expansion,
\[ \eps_\cyc(k) \,=\, \min\Big\{ \frac{|\partial(x)|}{\min_{y\in\ker\partial_{k}}|x-y|} \,:\, x\in C_k(X)-\ker \partial_{k}\Big\}\;.\]

\def\xgb{x^{(0)}_{v}}
\def\zgb{z^{(0)}_{v}}

We show a lower bound on the systolic distance, and the cycle expansion, of $C_k(X)$ by reduction to the co-systolic distance at level $k'=t-k$ of a $t$-dimensional complex $C(\tilde{X})$ which we now define. The complex $C(\tilde{X})$ is defined exactly as $C(X)$, except that the matrices $h_i^T$ and $h_i$ that are used in the definition of the co-boundary and boundary maps for $C(X)$ respectively are replaced by matrices $({h}_i^\perp)^T$ and $h_i^\perp$, where ${h}_i^\perp\in \F_q^{k_i\times n_i}$ is a parity check matrix for the dual code $\cC_i^\perp$ (with linearly independent rows).

\begin{proposition}\label{prop:distance}
  For every $0\leq k'\leq t-1$, let $\mu_\cosyst(k')$ and $\eps_\cocyc(k')$ denote the co-systolic distance and co-cycle expansion of $C^{k'}(\tilde{X})$ respectively.
Then for any $1\leq k \leq t$,
\[ \mu_\syst(k)\,\geq\, \frac{1}{(2nt)^{t}} \cdot \mu_\cosyst(t-k)\;.\]
Furthermore, for $2\leq k \leq t$,
\[ \eps_\cyc(k) \,\geq\, \min\Big\{ \frac{1}{2(t^2 2^{2t}n^{t+1})^{t}}\cdot \eps_\cocyc(t-k)\,,\; \frac{1}{(2nt)^{t}}\cdot \frac{\mu_\cosyst(t-k+1)}{|X(k)|}\Big\}\;.\]
\end{proposition}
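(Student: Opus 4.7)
The strategy is to build a ``dual'' double chain complex whose coefficients on a geometric face of $X$ encode the local view of a chain from that face, and then run an arrow-chasing argument, generalizing the $t=2$ reduction of \cite{DHLV}. I will describe the argument for the systolic-distance bound first; the cycle-expansion bound then follows by a quantitative refinement.

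First I would fix $x\in C_k(X)-\{0\}$ with $\partial x = 0$ of minimal weight and, for every vertex $v\in X(0)$, associate the local view $x_v=(x(f))_{f\in X_{\geq v}(k)}$. By Lemma~\ref{lem:loc-glob}, $x_v$ lives in $C_{k}(L_{\overline{\type(v)}})$, and $\partial x = 0$ implies $\partial_{\overline{\type(v)}} x_v = 0$. Exactness of $C(L_S)$ (Lemma~\ref{lem:exactness}) then yields, for each $v$, a local preimage $z_v$ of $x_v$ chosen to be of minimum weight. More generally, for each $\ell$-face $f\in X(\ell)$ I define a level-$\ell$ local datum on $X_{\geq f}$ by taking the (signed) sum over sub-vertices $v\preceq f$ of the restrictions of the level-$0$ preimages $z_v$ to $X_{\geq f}$. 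Thanks to the commutativity of the geometric (co)boundary of $X$ with the local boundary of $L_{\overline{\type(f)}}$, these level-$\ell$ data have vanishing local boundary; whenever they do not already vanish identically, one applies Lemma~\ref{lem:exactness} again to obtain level-$(\ell+1)$ data supported on $X(\ell+1)$, and iterates.

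This recursion has two possible outcomes. Either at some level $\ell<t$ all local data vanish, in which case the local preimages can be assembled, floor by floor, into a global $z\in C_{k+1}(X)$ with $\partial z = x$, contradicting $x\notin\im\partial_{k+1}$. Or the recursion reaches $\ell=t$, in which case the data sit on the top cells $X(t)$. But each such datum has zero local boundary on a $t$-cube and hence, by Lemma~\ref{lem:tensor-code}, lies in the tensor code $\bigotimes_j \ker h_j = \bigotimes_j \im(h_j^\perp)^T$. Using the parity-check matrices $h_j^\perp$ of the duals, I reinterpret these data as a $(t-k)$-co-chain $\omega$ of $\widetilde X$; commutativity of the double complex forces $\tilde\delta\,\omega=0$, and the construction guarantees $\omega\notin\im\tilde\delta$ (otherwise one could close up the diagram and still produce a global preimage of $x$, contradicting minimality). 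Tracing through the construction, the weight of $\omega$ is at most a product of $t$ overhead factors, each of the form $|X_{\geq f}(\ell+1)|/|X_{\geq f}(\ell)|\leq 2nt$ (by Lemma~\ref{lem:geom-2}), so $|\omega|\leq (2nt)^{t}|x|$. Applying $|\omega|\geq \mu_\cosyst(t-k)$ to $\widetilde X$ gives $|x|\geq (2nt)^{-t}\mu_\cosyst(t-k)$, as claimed.

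For cycle expansion, let $x\in C_k(X)-\ker\partial_k$. Run the same arrow-chase on $x$; now the top-level obstruction $\omega$ is no longer a co-cycle but a co-chain with $|\tilde\delta\omega|$ controlled by $|\partial x|$ (up to an overhead $(t^22^{2t}n^{t+1})^{t}$ absorbing the weight-blowups and the averaging used at each recursion step). Apply co-cycle expansion of $\widetilde X$ at level $t-k$: if $|\tilde\delta\omega|$ is small enough to stay below $\mu_\cosyst(t-k+1)$, then there is $\eta$ with $|\omega-\tilde\delta\eta|\leq \eps_\cocyc(t-k)^{-1}|\tilde\delta\omega|$; re-encoding $\eta$ through the double complex yields a $y\in\ker\partial_k$ with $|x-y|\leq \tfrac{1}{2}(t^22^{2t}n^{t+1})^{t}\eps_\cocyc(t-k)^{-1}|\partial x|$. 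Otherwise $|\partial x|$ is already so large relative to $\mu_\cosyst(t-k+1)/|X(k)|$ that the second term in the stated minimum wins trivially. The main technical obstacle is the careful bookkeeping at each level of the arrow-chase: both the weight blowup and the fact that the terminal datum is a \emph{genuine} $\widetilde X$-cocycle (resp.\ a cochain with controlled coboundary weight) must be verified using the compatibility of the local dual-encoding $h_i\leftrightarrow h_i^\perp$ with the geometric incidence structure of $X$, which is the heart of the reduction.
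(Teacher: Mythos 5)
Your systolic-distance argument matches the paper's strategy closely: form local views $x_v$ at vertices, iterate exactness to get preimages, track inconsistency via the auxiliary coboundary $\Delta$, and decode the terminal tensor codewords into a cocycle of $\tilde{X}$ to invoke $\mu_\cosyst$. A few descriptive slips do not affect the logic but should be corrected: the recursion terminates at level $t-k$, not $t$, with the obstruction living in $C^{t-k}(X,\sheaf_t)$ (a $(t-k)$-cochain whose coefficient on each $(t-k)$-face is indexed by the $t$-cells above it); the relevant tensor code on a face $f\in X(t-k)$ is $\bigotimes_{j\in\overline{\type(f)}}\ker h_j$, which is over $k$ factors, not $t$; and the intermediate data at level $\ell\ge 2$ are \emph{not} literally a signed sum of the level-$0$ preimages over sub-vertices (that would be $\Delta^{\circ\ell}$ applied to $z^{(0)}$, which vanishes for $\ell\ge 2$) but must be built by alternating $\Delta$ with fresh preimages under $\partial_L$, as in Definition~\ref{def:xz-2}. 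The contrapositive framing — that $\omega\notin\im\tilde\delta$ because otherwise one closes up the diagram — is a legitimate logical restatement of Claim~\ref{claim:dist-2}.

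The cycle-expansion part has a genuine gap. You propose to run the same arrow-chase directly on $x$ with $\partial x\neq 0$, and to treat the terminal obstruction $\omega$ as a mere cochain with $|\tilde\delta\omega|$ controlled by $|\partial x|$. But the chase cannot begin: if $\partial x\neq 0$, then $\partial_{[t]}x_v\neq 0$ for some vertex $v$, and since $\ker\partial_{[t]}=\im\partial_{[t]}$ at each intermediate level by Lemma~\ref{lem:exactness}, such an $x_v$ has no preimage under $\partial_{[t]}$, so $z^{(0)}$ is undefined. The paper instead applies the entire systolic machinery to $x'\coloneqq\partial x$, which is a genuine $(k-1)$-cycle; the preimage $z'$ produced satisfies $\partial z'=x'=\partial x$, so $z'-x\in\ker\partial_k$, and the normalization $|x|=\min_{y\in\ker\partial_k}|x-y|$ forces $|z'|\geq|x|$, which is exactly the cycle-expansion bound. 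This one-dimension downward shift is also why $\mu_\cosyst(t-k+1)$ and $\eps_\cocyc(t-k)$ (rather than $t-k$ and $t-k-1$) appear in the statement. Without running the chase on $\partial x$, or without a carefully tracked error term threaded through every level (which you do not supply), your proposal does not establish the ``Furthermore'' part.
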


The proof of Proposition~\ref{prop:distance} is given in Section~\ref{sec:distance-proof}. Before giving it, we introduce a new sheaf on $X$, and associated linear maps. The sheaf will play an important role in the proof, facilitating a reduction from the distance back to the case of co-distance.

\subsection{Preliminaries}

We begin by introducing a new sheaf $\sheaf_k$ over $X$.

\begin{definition}\label{def:newsheaf}
Let $0\leq k \leq t$. We define a sheaf over $X$, $\sheaf_{k}$, by letting $\sheaf_k({f}) = C_k(X_{\geq f})$ for each $f\in X$. If $\dim(f)>k$ we have $\sheaf_k({f})=\set{0}$.
We define co-restriction maps
in the natural way, by restriction. Namely, for any $f\prec f'$, we let $\corest_{f,f'}:\sheaf_k({f})\to \sheaf_k({f'})$ be defined by
\[\corest_{f,f'}(x) := x|_{X_{\geq f'}(k)}\;. \]

The coboundary map $\Delta_k: C^i(X,\sheaf_k)\to C^{i+1}(X,\sheaf_k)$ is defined as follows.
For $y\in C^i(X,\sheaf_k)$
define $\Delta_k y\in C^{i+1}(X,\sheaf_k)$
by setting, for each $f'\in X(i+1)$
and $u\in X_{\geq f'}(k)$,
\begin{equation}\label{eq:def-deltak} \Delta_k y(f')[u] \, =\, \sum_{f\precdot f'} \corest_{f,f'}(y(f))[u]=\, \sum_{f\precdot f'} y(f)[u]\;,
\end{equation}
where we have used that $\corest_{f,f'}(y(f))[u] = y(f)[u]$ for all $f\prec f'\prec u $.
\end{definition}

Note that we define the (block) norm for $x \in C^i(X, \sheaf_k)$ as usual by $|x| = \sum_{f \in X(i)} |x(f)|$ and $|x(f)| = 1_{x(f) \ne 0}$.

The coefficient space assigned by $\sheaf_k$ to each face of $X$ is itself a space of $k$-chains. One possible way to think about $C^i(X,\sheaf_k)$ is that it is similar to the space $C_k(X)$, except that now each face $u\in X(k)$ may have different values, or ``opinions'', about it that are obtained from different $i$-faces $f\preceq u$.
 The map $\Delta_k$ measures the amount of inconsistency among the local views. In particular,

\begin{claim}\label{claim:cons}
  Let $z\in C^0(X,\sheaf_k)$ be such that $\Delta_{k} z=0$. Then there is $z'\in C_{k}(X)$ such that $z'|_{X_{\ge v}(k)}=z_v$ for all $v\in X(0)$.
    Furthermore, we can choose such a $z'$ which satisfies $|z'| \le 2^t n^t |z|$.
\end{claim}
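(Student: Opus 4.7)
The plan is to define $z'$ directly and explicitly: for each $u \in X(k)$, set $z'(u) := z_v(u)$, where $v$ is any vertex with $v \prec u$. I then need to verify (i) well-definedness (the choice of $v$ does not matter), (ii) the restriction identity $z'|_{X_{\geq v}(k)} = z_v$, and (iii) the weight bound.

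The only nontrivial step is (i), which is where $\Delta_k z = 0$ enters. Fix $u \in X(k)$; the downward link $X_{\leq u}$ is a $k$-dimensional Boolean cube (with $2^k$ vertices) whose $1$-skeleton is connected. So any two $v, v' \prec u$ are joined by a path $v = v_0, v_1, \ldots, v_m = v'$ in $X_{\leq u}(0)$, with each consecutive pair linked by an edge $f'_i \in X_{\leq u}(1)$. Since every edge in the cubical complex has exactly two vertices below it (this is a basic property of our incidence poset), the condition $\Delta_k z(f'_i)[u] = \sum_{w \precdot f'_i} z_w(u) = 0$ reads, in characteristic $2$, as $z_{v_i}(u) = z_{v_{i+1}}(u)$. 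Telescoping along the path gives $z_v(u) = z_{v'}(u)$, so $z'(u)$ is unambiguously defined.

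Granted (i), step (ii) is immediate: for $v \in X(0)$ and $u \in X_{\geq v}(k)$ we have $z'(u) = z_v(u)$ by choosing the vertex below $u$ to be $v$ itself, hence $z'|_{X_{\geq v}(k)} = z_v$. For (iii) I plan to use a simple double count: if $z'(u) \neq 0$, then by well-definedness every one of the $2^k$ vertices $v \prec u$ satisfies $z_v(u) \neq 0$, and in particular $z_v \neq 0$. Counting incidences $(v, u)$ with $v \prec u$ and $z'(u) \neq 0$, and using $|X_{\geq v}(k)| = \binom{t}{k} n^k$ from Lemma~\ref{lem:geom-2}, we obtain
\[ 2^k\, |z'| \,\leq\, \sum_{v:\, z_v \neq 0} |X_{\geq v}(k)| \,\leq\, \binom{t}{k} n^k\, |z|\;, \]
which yields $|z'| \leq \binom{t}{k} n^k 2^{-k} |z| \leq 2^t n^t |z|$, as claimed.

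I do not anticipate any substantive obstacle. The entire argument is a direct consequence of the cubical structure (each edge has exactly two vertices below it, and $X_{\leq u}$ is a connected Boolean cube), combined with the definition of $\Delta_k$ and the working characteristic being $2$.
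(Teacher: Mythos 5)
Your proposal is correct and follows essentially the same argument as the paper: define $z'(u) := z_v(u)$ for an arbitrary $v\prec u$, verify well-definedness by telescoping the condition $\Delta_k z(e)[u]=0$ along a path in the connected $1$-skeleton of $X_{\leq u}$ (the paper invokes this connectivity in one line; you spell out the path), and bound $|z'|$ by counting incidences $(v,u)$ with $v\prec u$. Your weight bound is marginally sharper since the incidence count retains the factor $2^{-k}$, but both routes reach the stated $|z'|\leq 2^t n^t |z|$.
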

\begin{proof}
    For each $u\in X(k)$ we choose some vertex $v\prec u$ and define $z'(u) := z_v(u)$. We next show that the definition is independent of the choice of $v$. Let $v,v'$ be vertices such that $v,v'\prec u$. If $e=\set{v,v'}\in X(1)$ then necessarily $z_v(u) = z_{v'}(u)$, since
    \[0=\Delta_{k} z(e)[u] = z_v(u) + z_{v'}(u)\;.\]
    The equality holds true also for any pair $v,v'\prec u$ that are not neighbors, because the graph $(X(0),X(1))$ induced on $\set{v:v\prec u}$ is connected.

    The factor $2^t n^t$ originates from the difference in the definition $|z'| = \sum_{u \in X(k)} |z'(u)|$ and $|z| = \sum_{v \in X(0)} |z(v)|$.
    Indeed,
    \begin{align*}
      |z'|
      = \sum_{u \in X(k)} |z'(u)|
      &= \sum_{v \in X(0)} \sum_{u \in X(k): u \succ v} |z(v)[u]| \\
      &\le \sum_{v \in X(0)} |X_{\ge v}(k)| |z(v)|\\
      &\le 2^t n^t \sum_{v \in X(0)} |z(v)|
      = 2^t n^t |z|
    \end{align*}
    where the last inequality uses that by Lemma~\ref{lem:geom-2}, $X_{\ge v}(k) = {t \choose k} n^{k} \le 2^t n^t$.
\end{proof}

The next lemma verifies that the sequence of spaces $C^i(X,\sheaf_k)$ for $i=0,\ldots,k$, together with the coboundary map $\Delta_k$, forms a co-chain complex.

\begin{lemma}\label{lem:delta-exact}
The map $\Delta_k$ satisfies $\Delta_k \circ \Delta_k =0$. Moreover, $\Delta_k$ is exact on $i$-chains for $0<i\leq k$, i.e. for any $0<i\leq k$ and $y\in C^i(X,\sheaf_k)$ such that $\Delta_k(y)=0$ there is a $z\in C^{i-1}(X,\sheaf_k)$ such that $\Delta_k(z)=y$.
Furthermore, there exists such a $z$ that satisfies $|z|\leq 2^{2t}n^t |y|$.\footnote{In reality, the bound is much tighter $|z|\leq |y|$. The proof is however more complicated, so we adopt the loose bound.}

\end{lemma}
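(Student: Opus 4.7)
The plan is to handle the two parts of the statement separately. For $\Delta_k^2 = 0$, I will unfold the double coboundary pointwise: for $f'' \in X(i+1)$ and $u \in X_{\geq f''}(k)$,
\[ \Delta_k^2 y(f'')[u] \,=\, \sum_{f' \precdot f''} \sum_{f \precdot f'} y(f)[u] \,=\, \sum_{f \prec f'',\,\rho(f)=\rho(f'')-2} \big|\{f' : f \precdot f' \precdot f''\}\big| \cdot y(f)[u]\;. \]
Since $X$ is an incidence poset, the inner count is even, and since $\F_q$ has characteristic $2$ the sum vanishes.

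For exactness, the key idea is to decompose $y$ coordinate-by-coordinate in the top dimension $k$. For each $u \in X(k)$ define the ``layer at $u$'' as the tuple $y^u = (y(f)[u])_{f \in X(i),\, f \preceq u} \in \bigoplus_{f \in X_{\leq u}(i)} V_u$. The downward link $X_{\leq u}$ is isomorphic to the face poset of a combinatorial $k$-cube, and $y^u$ is naturally an $i$-cochain on this cube with coefficients in the constant $\F_q$-vector space $V_u$. I will check that the hypothesis $\Delta_k y = 0$ translates exactly into $y^u$ being a cocycle in the $k$-cube for every $u$: indeed, for any $f' \in X_{\leq u}(i+1)$ we have $\sum_{f \precdot f'} y^u(f) = \Delta_k y(f')[u] = 0$. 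The $k$-cube is contractible and therefore has trivial cohomology in all positive dimensions $0 < i \leq k$, so there exists $z^u \in \bigoplus_{f \in X_{\leq u}(i-1)} V_u$ with $\delta_u z^u = y^u$. I then define $z \in C^{i-1}(X, \sheaf_k)$ by $z(f)[u] := z^u(f)$ for each $f \preceq u$; the identity $\Delta_k z = y$ is immediate since the condition is local at each pair $(f', u)$ with $f' \preceq u$.

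The main obstacle, and the reason one has to be slightly careful, is the weight bound: the local solutions $z^u$ are chosen independently for each $u$, so a single face $f \in X(i-1)$ may receive nonzero contributions from many $u$'s. The bound will follow from a two-sided count. First, one may choose $z^u = 0$ whenever $y^u = 0$, and any nonzero $z^u$ is supported on at most $|X_{\leq u}(i-1)| = \binom{k}{i-1} 2^{k-i+1} \leq 2^t$ faces. Second,
\[ \#\{u \in X(k) : y^u \neq 0\} \,\leq\, \sum_{f \in X(i):\, y(f) \neq 0} |X_{\geq f}(k)| \,\leq\, 2^{t-i} n^{k-i}\, |y| \,\leq\, n^t\, |y|\;. \]
Combining the two estimates via the union bound
\[ |z| \,=\, \#\{f : \exists u,\, z(f)[u] \neq 0\} \,\leq\, \sum_{u} |z^u| \,\leq\, 2^t \cdot n^t |y| \,=\, 2^{t} n^t |y|\;, \]
which is even better than the claimed $2^{2t} n^t |y|$, giving slack to absorb looser bounds at any step if needed.
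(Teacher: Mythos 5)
Your exactness argument takes essentially the same route as the paper: decompose the cochain complex fiber-wise over each top face $u\in X(k)$, observe that the fiber is the cellular cochain complex of a combinatorial $k$-cube with constant coefficients $V_u$, and invoke triviality of cube cohomology in positive degree. The paper phrases this via the homological product of $k$ elementary $1$-dimensional complexes and the K\"unneth formula, whereas you simply appeal to contractibility; these are the same observation. Your proof of $\Delta_k^2=0$ is a slightly more elementary variant, counting directly through the incidence-poset parity condition rather than reading it off the tensor-product structure; that is a correct and arguably cleaner alternative.

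The one place you should be careful is the weight bound, where your arithmetic is too optimistic in two spots. First, $|X_{\ge f}(k)| = \binom{t-i}{k-i}\,n^{k-i}$, so the clean bound is $|X_{\ge f}(k)|\le 2^t n^t$, not $\le n^t$; your chain $2^{t-i}n^{k-i}\le n^t$ requires $k\le 2i$ (or $n$ large compared with $2$), which is not always available. Second, $|X_{\le u}(i-1)|=\binom{k}{i-1}2^{k-i+1}$, whose maximum over $i$ is on the order of $3^k$, not $2^k$; for example $k=t\ge 3$, $i=2$ already gives $t\,2^{t-1}>2^t$. Fixing both, your counting gives $|z|\le 3^t\cdot 2^t n^t|y|\le 2^{3t}n^t|y|$, not the $2^t n^t|y|$ you claim and also slightly larger than the stated $2^{2t}n^t|y|$ (the paper's own write-up of this step is similarly loose). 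None of this matters for the result --- the lemma footnote notes $|z|\le|y|$ is actually attainable, and downstream the constant only enters as $\exp(O(t^2))$ anyway --- but you should not advertise your bound as having ``slack'' when it in fact understates both factors.
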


\begin{proof}
To show the lemma,
we first observe that
$\Delta_k$ is the coboundary map of a local chain complex induced from $X_{\leq u}$.
In particular,
\[ C^i(X,\sheaf_k) = \bigoplus_{f\in X(i)}\bigoplus_{u\in X(k), u\succeq f} V_u = \bigoplus_{u\in X(k)} C^i(X_{\leq u}, V_u) \]
where the set $X_{\leq u} = \{f\in X, f\preceq u\}$ is a graded incidence poset, consisting of all faces that have an opinion about the $k$-face $u$.
The map $\Delta_k$ then decomposes as the direct sum of maps $\Delta_{u}:  C^i(X_{\leq u}, V_u) \to C^{i+1}(X_{\leq u}, V_u)$ for each $u\in X(k)$.
This means that to check $\Delta_k \circ \Delta_k = 0$, it suffices to check $\Delta_u \circ \Delta_u = 0$ for each $u\in X(k)$.

The chain complex $C^{*}(X_{\leq u})$ is isomorphic to the following ``hypercube'' construction. Let $S=\type(u)$. For $i\in S$, define a $1$-dimensional chain complex $C^*(L_i)$ over $\F_q$, where $L_i(0)=\{0,1\}$, $L_i(1)=\{e_i\}$ and $\delta:C^0(L_i)\to C^1(L_i)$ is defined by $\delta x(e_1) = x(0) + x(1)\in \F_q$.
Let $C^*(L_S)$ be obtained by taking the homological product of all $C^*(L_i)$ for $i\in S$. Then $C^*(L_S)$ is an $|S|$-dimensional complex, with a single $|S|$-dimensional face that can be seen as a hypercube, the $(|S|-1)$-dimensional faces are the $(|S|-1)$-dimensional faces of the hypercube, etc. Moreover, because $\dim H^0(L_i)=1$ and $\dim H^1(L_i)=0$ then by the K\"unneth formula, $\dim H^0(L_S)=1$ and $\dim H^j(L_S)=0$ for all $0<j\leq |S|$. In particular, $C^*(L_S)$ is exact at all $i>0$. Since each of the maps $\Delta_u$ is isomorphic to a copy of $\delta_{\type(u)}$, it follows that $\Delta_k$ is exact at all levels $i>0$.

Finally, we have the trivial size bound $|z[u]|\leq 2^t |y[u]|$, which follows from $|C^i(L_S)|\leq 2^t$ for all $i$.
Therefore,
\[|z| \le \sum_{u \in X(k)} |z[u]| \leq \sum_{u \in X(k)} 2^t |y[u]| \le \sum_{f \in X(i)} \sum_{u \in X(k): u \succeq f} 2^t |y(f)[u]| \le\sum_{f \in X(i)} 2^t |X_{\ge f} (k)| |y(f)| \le 2^{2t} n^t |y|\]
where the fifth inequality follows from Lemma~\ref{lem:geom-2}, $|X_{\ge f}(k)| = {t-i \choose k-i} n^{k-i}\le 2^t n^t$.
\end{proof}

We have defined the coboundary maps $\Delta_k$. We now move to define boundary maps $\partial_L$ (where the subscript $L$ stands for ``local''). For $S\subseteq\{1,\ldots,t\}$ recall the local chain complex $C(L_S)$ introduced in Section~\ref{sec:loc-chain}, and its boundary map $\partial_S$.
Fix $f\in X(i)$ and let $S=\type(f)$, so that $|S|=i$ and $|\ol S|=t-i$.
By Lemma~\ref{lem:loc-glob}, the space $\sheaf_k(f)=C_k(X_{\geq f})$ is naturally isomorphic to $C_k(L_{\ol{S}})$.
Hence there is a naturally defined map $\partial_L:\sheaf_k(f)\to\sheaf_{k-1}(f)$ given by
\begin{equation}\label{eq:boundary-L}
   \sheaf_k(f) = C_k(X_{\geq f}) \cong C_{k-i}(L_{\ol{S}}) \xrightarrow{\partial_L} C_{k-i-1}(L_{\ol{S}}) \cong C_{k-1}(X_{\geq f})=\sheaf_{k-1}(f)\;.
\end{equation}
This map extends to
$C^i(X,\sheaf_k) \xrightarrow{\partial_L} C^i(X,\sheaf_{k-1})$, $i$-face by $i$-face.

The next lemma shows that the maps $\Delta_k$ and $\partial_L$ commute.

\begin{lemma}\label{lem:partial-delta}
For any $0\leq i < j \leq t$ and $y\in C^i(X,\sheaf_k)$,
\[ \partial_L\circ \Delta_k (y)\,=\, \Delta_{k-1} \circ\partial_L(y)\,\in C^{i+1}(X,\sheaf_{k-1})\;.\]
In other words, the following diagram is commutative:
		\begin{center}
		\begin{tikzcd}[arrows=rightarrow]
    x\in C^{i+1}(X,\sheaf_{k}) \ar[r,"\partial_L"]  & C^{i+1}(X,\sheaf_{k-1})\ni z,z' \\
		 y\in C^i(X,\sheaf_{k}) \ar[u,"\Delta_k"]\ar[r,"\partial_L"]  & C^{i}(X,\sheaf_{k-1})\ni x'  \ar[u,"\Delta_{k-1}"]
				\end{tikzcd}
				\end{center}
\end{lemma}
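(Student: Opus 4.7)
The plan is to prove the identity pointwise, by fixing $f'\in X(i+1)$ and $u'\in X_{\geq f'}(k-1)$, then unfolding each side into an explicit double sum and observing that the two sums are identical up to swapping the order of summation. The key conceptual point is that $\Delta_k$ acts on the ``outer'' geometric index (by summing over faces $f\precdot f'$) while $\partial_L$ acts on the ``inner'' chain coefficient (by summing over $u\succdot u'$ in $X_{\geq f}$), and these two operations are independent.

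Concretely, I would first compute the left-hand side. Applying the definition of $\partial_L$ face-by-face to $\Delta_k y \in C^{i+1}(X,\sheaf_k)$ gives
\[ (\partial_L\Delta_k y)(f')(u') \,=\, \sum_{u\succdot u',\,u\in X_{\ge f'}(k)} \rest_{u,u'}\big((\Delta_k y)(f')(u)\big)\;. \]
Then using the definition~\eqref{eq:def-deltak} of $\Delta_k$ and linearity of $\rest_{u,u'}$, this becomes a double sum over $f\precdot f'$ and $u\succdot u'$ with $u\succeq f'$. Next I would compute the right-hand side: applying $\partial_L$ face-by-face to $y$ yields $(\partial_L y)(f)(u') = \sum_{u\succdot u',\,u\in X_{\geq f}(k)} \rest_{u,u'}(y(f)(u))$, and then $\Delta_{k-1}$ sums this over $f\precdot f'$, again producing a double sum.

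The final step is to check that the two double sums range over the same set of $(f,u)$ pairs. In the LHS we require $u\succeq f'$, in the RHS we require $u\succeq f$. But both are automatic from the constraints already present: once $u\succdot u'$ and $u'\succeq f'$ (which holds because $u'\in X_{\ge f'}(k-1)$), transitivity of the partial order forces $u\succeq f'$, and a fortiori $u\succeq f$ since $f\prec f'$. Hence the index sets coincide, and the two double sums agree term by term.

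I do not anticipate a genuine obstacle: this lemma is essentially a bookkeeping statement expressing that a ``Koszul-style'' differential built from two independent directions (the geometric one controlled by $\Delta_k$ and the local-chain one controlled by $\partial_L$) commutes on the nose, because $\Delta_k$ is defined purely as a sum over covering faces of $X$ with no sign or twist, and $\partial_L$ acts only on the coefficient space at a fixed face. The only mildly subtle point is keeping straight which faces lie in $X_{\ge f}(k)$ versus $X_{\ge f'}(k)$, but as observed above these coincide under the given constraints. No deeper property (such as exactness of $\partial_L$ or robustness of the local codes) is needed for this commutativity.
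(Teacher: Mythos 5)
Your proof is correct and follows the same approach as the paper's: a pointwise computation at a fixed $f'\in X(i+1)$ and $u'\in X_{\geq f'}(k-1)$, unfolding both sides into a double sum over $u\succdot u'$ and $g\precdot f'$ and observing they agree by exchanging the order of summation. You are in fact slightly more careful than the paper's own proof, which suppresses the local restriction maps (writing $\partial_L x(f)[v]=\sum_{u\succdot v}x(f)[u]$ with the $\rest_{u,v}$ implicit) and does not explicitly address whether the range of the $u$-sum coincides on both sides; your observation that $u\succdot u'\succeq f'\succeq g$ automatically forces $u\succeq f'$ and $u\succeq g$ disposes of that cleanly.
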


\begin{proof}
Let $y\in C^i(X,\sheaf_k)$, and define the following chains as in the figure: $x = \Delta_k y \in C^{i+1}(X,\sheaf_k)$, $x'= \partial_L y\in C^i(X,\sheaf_{k-1})$, and let $z=\partial_L x=\partial_L\Delta_k y$ and $z'=\Delta_{k-1}x'=\Delta_{k-1}\partial_L y$ both in $C^{i+1}(X,\sheaf_{k-1})$.  The goal is to prove that $z=z'$.
For any faces $f\prec v$  of dimension $i+1,k-1$ respectively, we have, (using the definition of $\partial_L$ and  \eqref{eq:def-deltak})
\[
z(f)[v] = \partial_L x(f)[v] = \sum_{u\succdot v} x(f)[u] = \sum_{u\succdot v} \Delta_k y(f)[u] = \sum_{u\succdot v}\sum_{g\precdot f} y(g)[u]\;,
\]
which equals
\[
z'(f)[v] = \Delta_{k-1} x'(f)[v] = \sum_{g\precdot f} x'(f)[v] = \sum_{g\precdot f} \partial_L y(g)[v] = \sum_{g\precdot f} \sum_{u\succdot v} y(g)[u]\;.
\]
\end{proof}

\subsection{Proof of Proposition~\ref{prop:distance}}
\label{sec:distance-proof}

Let $1\leq k \leq t$ and $x\in C_k(X)$ be such that $\partial(x)=0$ and $|x|<(2nt)^{-t} \cdot  \mu_\cosyst(t-k)$. For any vertex $v\in X(0)$, let
\[ \xgb\,=\, x(X_{\geq v})\]
be the local view of $x$ restricted to the $k$-faces above the vertex $v$. The set of local views $\set{\xgb}$ can be seen as a cochain  $x^{(0)}\in C^0(X,\sheaf_k)$ where $\sheaf_k$ is introduced in Definition~\ref{def:newsheaf}. By construction
\begin{equation}\label{eq:bound-on-x0}
  |x^{(0)}| \le 2^k|x| \le 2^t|x|\;,
\end{equation}
because each value of $x$ on a $k$-face is copied $2^k$ times, to the face's $2^k$ vertices, to form their local views. (Note that here and as always, $|\cdot|$ denotes the block norm taken in the corresponding coefficient space.)
Furthermore, from the definition it follows that
\begin{equation}\label{eq:delta-x-0}
  \Delta_k(x^{(0)}) \,=\,0\;,
\end{equation}
where $\Delta_k$ is introduced in Definition~\ref{def:newsheaf}.

Using the isomorphism from Lemma~\ref{lem:loc-glob}, $\xgb$ is naturally seen as an element of $C_k(L_{[t]})$. The condition $\partial(x)=0$ implies that for all $v\in X(0)$,
\begin{equation}\label{eq:partial-x-0}
  \partial_{[t]}(\xgb)=0\;,
\end{equation}
where $\partial_{[t]}$ is the ``local'' boundary map defined in Section~\ref{sec:loc-chain}.

If $k=t$ then the two conditions~\eqref{eq:delta-x-0} and~\eqref{eq:partial-x-0} already suffice to conclude the argument. This is done in Claim~\ref{claim:dist-2} below. If $k<t$ then we reduce to this case by identifying an element $x^{(t-k)}$ that satisfies both conditions and is, in some sense, a ``lift'' of $x^{(0)}$ to $C^{t-k}(X,\sheaf_t)$. The details follow.

Assume $k<t$. Using exactness for $\partial_{[t]}$ (Lemma~\ref{lem:exactness}),~\eqref{eq:partial-x-0} implies that there is a $z^{(0)}= \{\zgb\}_{v}$ where $\zgb \in C_{k+1}(L_{[t]})$ for all vertices $v$, such that
\begin{equation}\label{eq:partial-y-x}
 \partial_{[t]}(\zgb)\,=\,\xgb\;.
\end{equation}
Furthermore, if $\xgb=0$ we take $\zgb=0$. Note that $z^{(0)}$ is an element of $C^0(X,\sheaf_{k+1})$, and~\eqref{eq:partial-y-x} can be rewritten as
\begin{equation}\label{eq:partial-y-x-2}
  \partial_{L}(z^{(0)})\,=\,x^{(0)}\;,
 \end{equation}
 where $\partial_L$ is the map introduced around~\eqref{eq:boundary-L}.
 We observe the size bound
\begin{equation}\label{eq:partial-y-x-1}
  |z^{(0)}| \,=\, \sum_{v\in X(0)} |z^{(0)}_{v}| \,\leq\,\sum_{v\in X(0)} |\xgb| \,=\, |x^{(0)}|\;.
\end{equation}

If all $z_v^{(0)}$ are consistent, meaning that whenever $v,v' \prec u$ then $\zgb(u) = z_{v'}^{(0)}(u)$, then the $\set{\zgb}$ can be ``stitched'' together into one $z\in C_{k+1}(X)$ such that for every $v\in X(0)$,  $z|_{X_{\geq v}}=\zgb$. More precisely, by Claim \ref{claim:cons}, if $\Delta_{k+1}(z^{(0)})=0$ then there is a  $z\in C_{k+1}(X)$ such that for every $v\in X(0)$, $z_{|X_{\geq v}(k+1)}=z^{(0)}_{v}$, and moreover $z$ satisfies $\partial(z)=x$ with $|z|\le 2^t n^t |z^{(0)}|$ which implies the size bound $|z|\le 2^t n^t |z^{(0)}|\leq 2^t n^t |x^{(0)}| \le 2^{2t} n^t |x|$ using~\eqref{eq:bound-on-x0} and \eqref{eq:partial-y-x-1}.

However, the different $\zgb$ may be inconsistent with each other. In such a case we will have to look into the differences between local views along edges of $X$, and then continue to move into higher and higher dimensions. Towards this we make the following definition.

\begin{definition}\label{def:xz-2}
  Let $0\leq r\leq t-k-1$.
  A sequence $z^{(0)},\ldots,z^{(r)}$, where for each $i\in\{1,\ldots,r\}$, $z^{(i)}\in C^{i}(X,\sheaf_{k+i+1})$, is said to \emph{explain} $x^{(0)}$ if the following hold. There are $x^{(1)},\ldots,x^{(r+1)}$ such that $x^{(i)} \in C^{i}(X,\sheaf_{k+i})$, $x^{(r+1)} = 0$,  and for all $i=0,\ldots,r-1$,
  \begin{equation}\label{eq:ind-2}
      \Delta_{k+i+1}(z^{(i)}) = x^{(i+1)}, \qquad
      x^{(i)} = \partial_L(z^{(i)})\;.
  \end{equation}
  Furthermore, we require the size bounds
  \begin{equation}\label{eq:ddz-2}
    |z^{(i)}| \,\leq\, |x^{(i)}|, \qquad
    |x^{(i+1)}| \,\leq\, nt |z^{(i)}|\;.
  \end{equation}
  for all $i=0,\ldots,r$.
  \end{definition}

  Figure~\ref{fig:xz-2} gives an illustration of the conditions given in Definition~\ref{def:xz-2}. We first show that having such a sequence lets us find a small chain $z$ whose boundary $\partial(z)=x$.


    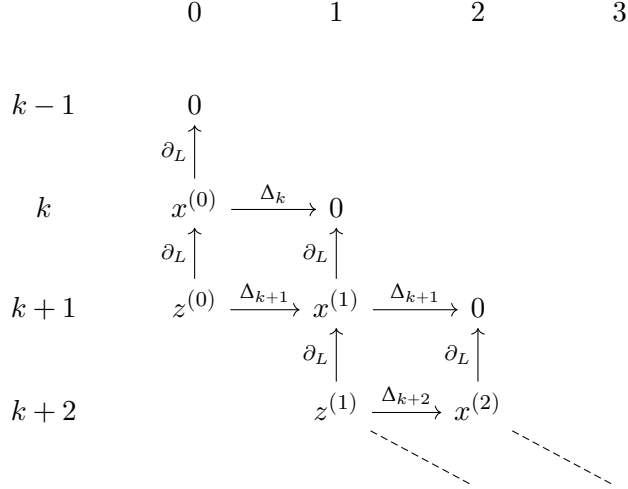
\begin{figure}\begin{center}
\begin{tikzcd}[arrows=rightarrow]
&0&1&2&3&\\
k-1 & 0 &&&&\\
k&     x^{(0)} \ar[r,"\Delta_k"] \ar[u,"\partial_L"] & 0 &  &&\\
k+1&     z^{(0)} \ar[u,"\partial_L"]\ar[r,"\Delta_{k+1}"]& x^{(1)} \ar[u,"\partial_L"]\ar[r,"\Delta_{k+1}"] & 0& &\\
k+2&     & z^{(1)} \ar[rd, dashed, rightarrow, no head] \ar[u,"\partial_L"]\ar[r,"\Delta_{k+2}"]& x^{(2)}\ar[rd, dashed, rightarrow, no head] \ar[u,"\partial_L"]&&\\
 & & & ~& \makebox[\widthof{$x^{(3)}$}]{~}&~
\end{tikzcd}\end{center}
\caption{An illustration of the requirements on the sequences $x^{(i)}$ and $z^{(i)}$.}\label{fig:xz-2}
\end{figure}

\begin{claim}\label{claim:dist-1}
Suppose that $1\le k<t$ and there is some $r\leq t-k-1$ such that $z^{(0)},\ldots,z^{(r)}$ explains $x^{(0)}$. Then there is some $z\in C_{k+1}(X)$ such that $\partial(z)=x$ and $|z|\le (t 2^{2t}n^{t+1})^t |x|$.
\end{claim}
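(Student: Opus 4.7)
The plan is to construct $z$ by propagating ``corrections'' from the top of the explaining sequence downward, using exactness of the coboundary operators $\Delta_\bullet$ (Lemma~\ref{lem:delta-exact}) in tandem with the commutativity relation $\partial_L \circ \Delta = \Delta \circ \partial_L$ (Lemma~\ref{lem:partial-delta}). At each dimension $i$, the game is: we have a chain $\tilde z^{(i+1)}$ that is a $\Delta$-cocycle and whose $\partial_L$-image is the prescribed $x^{(i+1)}$; from it, we define a correction that we add to $z^{(i)}$ to kill the defect $\Delta_{k+i+1}(z^{(i)})=x^{(i+1)}$. If the sequence terminates at $i=0$ with $\Delta_{k+1}(\tilde z^{(0)})=0$, then by Claim~\ref{claim:cons} the local views $\tilde z^{(0)}_v$ can be stitched into a global $(k+1)$-chain $z$ with $\partial(z)=x$.

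Concretely, set $\tilde z^{(r)} := z^{(r)}$, which by the defining condition with $x^{(r+1)}=0$ satisfies $\Delta_{k+r+1}(\tilde z^{(r)})=0$ and $\partial_L(\tilde z^{(r)})=x^{(r)}$. Inductively, suppose $\tilde z^{(i+1)}$ has been built with $\Delta_{k+i+2}(\tilde z^{(i+1)})=0$ and $\partial_L(\tilde z^{(i+1)})=x^{(i+1)}$. Since $i+1\geq 1$, Lemma~\ref{lem:delta-exact} produces $w^{(i+1)}\in C^{i}(X,\sheaf_{k+i+2})$ with $\Delta_{k+i+2}(w^{(i+1)})=\tilde z^{(i+1)}$ and $|w^{(i+1)}|\leq 2^{2t}n^{t}|\tilde z^{(i+1)}|$. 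Define
\[ \tilde z^{(i)} \,:=\, z^{(i)} + \partial_L(w^{(i+1)})\,\in\, C^{i}(X,\sheaf_{k+i+1})\;. \]
Then, using Lemma~\ref{lem:partial-delta}, $\partial_L^2=0$, and characteristic two,
\[ \Delta_{k+i+1}(\tilde z^{(i)}) = x^{(i+1)} + \partial_L(\Delta_{k+i+2}(w^{(i+1)})) = x^{(i+1)} + \partial_L(\tilde z^{(i+1)}) = x^{(i+1)} + x^{(i+1)} = 0\;, \]
and $\partial_L(\tilde z^{(i)}) = \partial_L(z^{(i)}) + \partial_L^2(w^{(i+1)}) = x^{(i)}$, maintaining the invariants. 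After reaching $i=0$, Claim~\ref{claim:cons} yields $z\in C_{k+1}(X)$ with $z|_{X_{\geq v}(k+1)}=\tilde z^{(0)}_v$ and $|z|\leq 2^{t}n^{t}|\tilde z^{(0)}|$, and then $\partial(z)|_{X_{\geq v}(k)} = \partial_L(\tilde z^{(0)}_v) = x^{(0)}_v = x|_{X_{\geq v}(k)}$, so $\partial(z)=x$. The edge case $r=0$ needs no exactness step: $\Delta_{k+1}(z^{(0)})=0$ already holds, and Claim~\ref{claim:cons} applies directly to $z^{(0)}$.

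The main technical obstacle is the size bookkeeping, since each inductive step multiplies the previous norm by $2^{2t}n^{t}$ (from the exactness step), and the sizes of the $z^{(j)}$ themselves grow along the sequence. Using the defining bounds $|z^{(j)}|\leq|x^{(j)}|$ and $|x^{(j)}|\leq nt\,|z^{(j-1)}|$, one obtains $|z^{(j)}|\leq (nt)^{j}|x^{(0)}|$; unrolling $|\tilde z^{(i)}|\leq|z^{(i)}|+2^{2t}n^{t}|\tilde z^{(i+1)}|$ gives
\[ |\tilde z^{(0)}| \,\leq\, \sum_{j=0}^{r} (2^{2t}n^{t})^{j}|z^{(j)}| \,\leq\, (r+1)(2^{2t}n^{t+1}t)^{r}|x^{(0)}|\;. \]
Combining with $|x^{(0)}|\leq 2^{t}|x|$ (from \eqref{eq:bound-on-x0}) and the factor $2^{t}n^{t}$ from Claim~\ref{claim:cons}, and using $r\leq t-k-1\leq t-1$, produces $|z|\leq (t\cdot 2^{2t}n^{t+1})^{t}|x|$ as required. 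The algebraic content is essentially a diagram chase on the double complex of Figure~\ref{fig:xz-2}; the work lies in verifying that the constants telescope cleanly into the claimed geometric bound.
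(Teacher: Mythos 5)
Your proof is correct and takes essentially the same route as the paper: an exactness step on $\Delta$ produces a preimage, $\partial_L$ of that preimage is the correction, and the commutation relation $\partial_L\circ\Delta=\Delta\circ\partial_L$ together with $\partial_L^2=0$ preserves the two invariants ($\Delta$-cocycle and $\partial_L$-image equal to $x^{(i)}$); the recursion bottoms out at $r=0$ via Claim~\ref{claim:cons}. The only cosmetic difference is organizational — you run a single downward induction with corrected chains $\tilde z^{(i)}$, whereas the paper repeatedly invokes the reduction to shorten the explaining sequence by one — and your size bookkeeping unrolls the recurrence $|\tilde z^{(i)}|\le|z^{(i)}|+2^{2t}n^t|\tilde z^{(i+1)}|$ into a geometric sum in the same way, landing on the same final estimate.
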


\begin{proof}
The special case where $r=0$ was handled directly above, relying on Claim \ref{claim:cons}.
We reduce the general case to this case. For general $r>0$, $x^{(r+1)}=\Delta_{k+r+1}(z^{(r)})$ by~\eqref{eq:ind-2}, so $x^{(r+1)}=0$ implies (by exactness, Lemma~\ref{lem:delta-exact}) that there is an $u^{(r-1)}\in C^{r-1}(X,\sheaf_{k+r+1})$ such that $\Delta_{k+r+1}(u^{(r-1)})=z^{(r)}$. Furthermore, we can choose $u^{(r-1)}$ in such a way that $|u^{(r-1)}| \leq 2^{2t} n^t |z^{(r)}|$.

Let $w^{(r-1)}=\partial_L(u^{(r-1)})$. Then $w^{(r-1)}\in C^{r-1}(X,\sheaf_{k+r})$, and $|w^{(r-1)}|\leq |u^{(r-1)}|$ (because both vectors are labeled by faces in $X(r-1)$). Furthermore,
\begin{align*}
\Delta_{k+r}(w^{(r-1)}) &= \Delta_{k+r}(\partial_L(u^{(r-1)}))\\
&= \partial_L(\Delta_{k+r+1}(u^{(r-1)}))\\
&= \partial_L(z^{(r)})\\
&=x^{(r)}\;,
\end{align*}
where the second equality is by Lemma~\ref{lem:partial-delta} and the last is by~\eqref{eq:ind-2}. Now let $\tilde{z}^{(r-1)}=z^{(r-1)}+w^{(r-1)}$. We easily verify that this element satisfies $\Delta_{k+r+1}(\tilde{z}^{(r-1)})=0$ and $\partial_L(\tilde{z}^{(r-1)})=x^{(r-1)}$.
This means that $z^{(0)},\ldots,z^{(r-2)},\tilde z^{(r-1)}$ explains $x^{(0)}$. We went from the assumption that $x^{(r+1)}=0$ to the same situation, but now $x^{(r)}=0$, and furthermore
\begin{equation}\label{eq:tildez-b1}
  |\tilde{z}^{(r-1)}| \,\leq\, |z^{(r-1)}| + 2^{2t}n^t |z^{(r)}| \,\leq\, (1+nt2^{2t}n^t)|z^{(r-1)}|\;,
\end{equation}
where the second inequality uses~\eqref{eq:ddz-2}. Iterating, we reduce the case of general $r$ to the case where $r=0$, which as shown above (see Claim~\ref{claim:cons}) completes the proof of Proposition~\ref{prop:distance}.
The weight of $\tilde{z}^{(0)}$ before applying Claim~\ref{claim:cons} satisfies
\[|\tilde{z}^{(0)}| \le \Big(1 + (nt 2^{2t}n^t) + ... + (nt 2^{2t}n^t)^r\Big) |x^{(0)}| \le (nt 2^{2t}n^t)^{r+1} |x^{(0)}| \le (t 2^{2t}n^{t+1})^{t-1} |x^{(0)}|\]
where the second inequality follows from $2 \le nt 2^{2t}n^t$ and the third inequality follows from $r \le t - k - 1 \le t - 2$.
Thus,
\[|z| \le 2^t n^t |\tilde z^{(0)}| \le 2^t n^t (t 2^{2t}n^{t+1})^{t-1} |x^{(0)}| \le 2^t 2^t n^t (t 2^{2t}n^{t+1})^{t-1} |x| \le (t 2^{2t}n^{t+1})^t |x|\;.\]
\end{proof}

Now we show how to construct a sequence that satisfies the conditions of Definition~\ref{def:xz-2}. We do this inductively. Suppose that $z^{(j)}$ and $x^{(j)}$ have been defined for all $0\leq j<r$, such that they satisfy both conditions in the definition. Suppose first that $r<t-k$. We show how to extend the sequence.

Let $x^{(r)}=\Delta_{k+r}(z^{(r-1)})$. We wish to define $z^{(r)}$. By assumption, $\partial_L(z^{(r-1)})=x^{(r-1)}$, which implies
\begin{align}
 \partial_L(x^{(r)})&=\partial_L\circ \Delta_{k+r}(z^{(r-1)})\notag \\
&=\Delta_{k+r-1}\circ \partial_L(z^{(r-1)})\notag \\
&= \Delta_{k+r-1}(x^{(r-1)})\notag\\
&=0\;,\label{eq:prev-1}
\end{align}
where the first equality is because $x^{(r-1)}=\Delta_{k+r}(z^{(r-2)})$, the second equality is by Lemma~\ref{lem:partial-delta} and the last because $\Delta^2=0$ by Lemma~\ref{lem:delta-exact}.
Therefore, using the exactness condition for the local complex we find $z^{(r)}\in C^r(X,\sheaf_{k+r+1})$ such that $\partial_L (z^{(r)})=x^{(r)}$, as desired.

Moreover, because $|z^{(r)}|$ and $|x^{(r)}|$ are labeled by faces in $X(r)$ we have $|z^{(r)}| \leq |x^{(r)}|$.
And because each nonzero face $f \in X(r)$ in $z^{(r)}$
  results in at most $|\up(f)|$ nonzero faces in $x^{(r+1)} = \Delta_k z^{(r)}$,
$|x^{(r+1)}| \leq |\up(f)| |z^{(r)}| \leq nt |z^{(r)}|$
where we apply Lemma~\ref{lem:geom-2}, $|\up(f)|= (t-i) n \le nt$.

It remains to treat the case where we have defined $x^{(0)},\ldots,x^{(t-k)}$, but $x^{(t-k)}\neq 0$. This is handled by our last claim.

\begin{claim}\label{claim:dist-2}
Suppose that $x^{(t-k)}\neq 0$ and $|x^{(t-k)}| < \mu_\cosyst(t-k)$. Then there is some $v\in C_{k+1}(X)$ such that $\partial(v)=x$ and $|v|\leq 2(t^22^{2t}n^{t+1})^t \frac{1}{\eps_\cocyc(t-k-1)} |x|$.
\end{claim}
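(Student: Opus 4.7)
The plan is to transport the problem to the dual complex $C(\tilde X)$ (built with the dual parity check matrices $h_j^\perp$), where the hypothesis $|x^{(t-k)}| < \mu_\cosyst(t-k)$ will force $x^{(t-k)}$ (suitably reinterpreted) to be a coboundary. Pulling this back provides the correction needed to reduce to the already-handled case $x^{(t-k)} = 0$ of Claim~\ref{claim:dist-1}. First I would observe that $\Delta_t(x^{(t-k)}) = \Delta_t(\Delta_t(z^{(t-k-1)})) = 0$ by Lemma~\ref{lem:delta-exact}, and, repeating the computation~\eqref{eq:prev-1} with $r = t-k-1$, that $\partial_L(x^{(t-k)}) = \Delta_{t-1}(\partial_L(z^{(t-k-1)})) = \Delta_{t-1}(x^{(t-k-1)}) = 0$ by Lemma~\ref{lem:partial-delta}. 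Vanishing of $\partial_L$ means that on each $(t-k)$-face $f'$ with $S = \type(f')$, the coefficient $x^{(t-k)}(f') \in C_t(X_{\geq f'}) \cong C_k(L_{\ol S})$ lies in the tensor code $\bigotimes_{j \in \ol S}\ker h_j$ by Lemma~\ref{lem:tensor-code}. Using the identification $\ker h_j = \im((h_j^\perp)^T)$, each $x^{(t-k)}(f')$ admits a unique dual parameterization $\tilde x^{(t-k)}(f') \in \F_q^{\prod_{j \in \ol S}(n-m_j)}$, which is precisely the coefficient space of a $(t-k)$-face in $\tilde X$. Assembling these, one obtains $\tilde x^{(t-k)} \in C^{t-k}(\tilde X)$ with $|\tilde x^{(t-k)}| = |x^{(t-k)}|$.

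The key step is to check that this ``decoding'' is compatible with the coboundary operators, in the sense that it intertwines $\Delta_t$ on the primal side with the dual coboundary $\tilde\delta$ on $\tilde X$. Unwinding the definitions of the corestriction map~\eqref{eq:co-rest-1} and of $\Delta_t$~\eqref{eq:def-deltak} and reorganizing tensor factors, one expects the identity
\[\Delta_t(x^{(t-k)})(f'')[u] = \Bigl[\Bigl(\bigotimes_{j \notin \type(f'')}(h_j^\perp)^T\Bigr)\tilde\delta(\tilde x^{(t-k)})(f'')\Bigr][u]\]
for every $f'' \in X(t-k+1)$ and every $u \in X_{\geq f''}(t)$. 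Since each $(h_j^\perp)^T$ is injective, the tensor product map appearing above is injective, so $\Delta_t(x^{(t-k)}) = 0$ forces $\tilde\delta(\tilde x^{(t-k)}) = 0$; that is, $\tilde x^{(t-k)}$ is a cocycle in $\tilde X$. The hypothesis $|\tilde x^{(t-k)}| < \mu_\cosyst(t-k)$ then forces it to be a coboundary, and the co-cycle expansion bound supplies a preimage $\tilde y^{(t-k-1)} \in C^{t-k-1}(\tilde X)$ with $\tilde\delta(\tilde y^{(t-k-1)}) = \tilde x^{(t-k)}$ and $|\tilde y^{(t-k-1)}| \leq |\tilde x^{(t-k)}|/\eps_\cocyc(t-k-1)$.

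Finally, I would encode $\tilde y^{(t-k-1)}$ back to $C^{t-k-1}(X,\sheaf_t)$ face-by-face via the same formula $y^{(t-k-1)}(f) := (\bigotimes_{j \in \ol{\type(f)}}(h_j^\perp)^T)\tilde y^{(t-k-1)}(f)$. Because each encoded value lies in the tensor code $\bigotimes_j \ker h_j$, we get $\partial_L(y^{(t-k-1)}) = 0$; the same intertwining identity upgrades to $\Delta_t(y^{(t-k-1)}) = x^{(t-k)}$. Setting $\tilde z^{(t-k-1)} := z^{(t-k-1)} + y^{(t-k-1)}$ (using that $\F_q$ has characteristic $2$) gives $\Delta_t(\tilde z^{(t-k-1)}) = 0$ and $\partial_L(\tilde z^{(t-k-1)}) = x^{(t-k-1)}$, so the modified sequence $z^{(0)},\dots,z^{(t-k-2)},\tilde z^{(t-k-1)}$ explains $x^{(0)}$ with $x^{(t-k)} = 0$ and Claim~\ref{claim:dist-1} applies. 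Tracking the weights, the new contribution $|y^{(t-k-1)}| \leq |x^{(t-k)}|/\eps_\cocyc(t-k-1) \leq (nt)^{t-k}2^t|x|/\eps_\cocyc(t-k-1)$ (using~\eqref{eq:ddz-2} and~\eqref{eq:bound-on-x0}) gets multiplied through the descent~\eqref{eq:tildez-b1} by at most $(2nt\cdot 2^{2t}n^t)^{t-k-1}$, and then by a further $2^t n^t$ from Claim~\ref{claim:cons} when stitching into $v \in C_{k+1}(X)$, yielding a bound of the form $2(t^2 2^{2t}n^{t+1})^t\frac{1}{\eps_\cocyc(t-k-1)}|x|$ claimed in the statement. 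The principal obstacle is the explicit verification of the intertwining identity between $\Delta_t$ and $\tilde\delta$: this is a careful but mechanical tensor computation exploiting the factorization $\bigotimes_{j \in \ol S}(h_j^\perp)^T = (\bigotimes_{j \in \ol{\type(f'')}}(h_j^\perp)^T) \otimes (h_i^\perp)^T$ along the unique direction $i$ in which $f' \precdot_i f''$, combined with the restriction to the $f''_i$-th coordinate that appears both in the definition of $\corest$ for $\tilde X$ and in the evaluation $[u]$ with $u_i = f''_i$.
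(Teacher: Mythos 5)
Your proposal follows essentially the same route as the paper's own proof: derive $\Delta_t(x^{(t-k)})=0$ and $\partial_L(x^{(t-k)})=0$, use Lemma~\ref{lem:tensor-code} to interpret each $x^{(t-k)}(f')$ as a tensor codeword, decode it into a cochain $\tilde x^{(t-k)}$ on $\tilde X$, verify the intertwining identity to conclude $\tilde\delta(\tilde x^{(t-k)})=0$, use the co-systolic distance and co-cycle expansion of $\tilde X$ to find a short preimage, re-encode, and fold it into $z^{(t-k-1)}$ to reduce to Claim~\ref{claim:dist-1}. The only thing you omit is the base case $k=t$: there is no $z^{(t-k-1)}$ then, and the two vanishing conditions on $x^{(0)}$ come directly from~\eqref{eq:delta-x-0} and~\eqref{eq:partial-x-0} rather than from the explaining sequence; the paper handles this separately, but the remainder of your argument goes through unchanged once you note it.
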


\begin{proof}
If $k=t$, then we have that $\Delta_t(x^{(0)})=0$ as shown in~\eqref{eq:delta-x-0}, and $\partial_L(x^{(0)})=0$ which follows from~\eqref{eq:partial-x-0}.
If $k<t$, then both equalities hold as well, as we now show. Using $x^{(t-k)} = \Delta_t(z^{(t-k-1)})$ and $\Delta_t\circ \Delta_t=0$ by Lemma~\ref{lem:delta-exact}, we get that
\begin{equation}\label{eq:dist-2-1}
\Delta_t(x^{(t-k)})\,=\,0\;.
\end{equation}
Moreover, using a similar chain of equalities as for~\eqref{eq:prev-1},
\begin{align*}
  \partial_L(x^{(t-k)})&=\partial_L\circ \Delta_{t}(z^{(t-k-1)})\\
 &=\Delta_{t-1}\circ \partial_L(z^{(t-k-1)})\\
 &= \Delta_{t-1}(x^{(t-k-1)})\\
 &=0\;.
 \end{align*}
Note that for each $f\in X(t-k)$, $x^{(t-k)}(f) \in \F_q^{\prod_{j\notin \type(f)} A_j}$. 
So $\partial_L(x^{(t-k)})=0$ means that $x^{(t-k)}(f)\in \otimes_{j\notin\type(f)}\code_j$ by Lemma~\ref{lem:tensor-code}. In particular, by replacing the tensor codeword $x^{(t-k)}(f)$ by a decoding $\tilde{x}^{(t-k)}(f) \in \F_q^{\prod_{j\notin\type(f)} k_j}$ of it, we obtain a $(t-k)$-cochain $\tilde{x}^{(t-k)}\in C^{t-k}(\tilde{X})$ such that, for any ${f}\in \tilde{X}(t-k+1)$,
\begin{align*}
\Big(\bigotimes_{j\notin \type(f)} (h_j^\perp)^T\Big) \tilde{\delta}(\tilde{x}^{(t-k)})({f})
&= \Big(\bigotimes_{j\notin \type(f)} (h_j^\perp)^T\Big) \Big(\sum_{{f}'\precdot {f}} \widetilde{\corest}_{{f}',{f}}(\tilde{x}^{(t-k)}({f}'))\Big)\\
&= \sum_{{f}'\precdot {f}}\Big(\Big(\bigotimes_{j\notin \type(f')} (h_j^\perp)^T\Big)  \tilde{x}^{(t-k)}({f}')\Big)\\
&= \sum_{{f}'\precdot {f}} x^{(t-k)}({f}')\\
&= \Delta_t x^{(t-k)} (f)\\
&=0\;.
\end{align*}
Here, the first equality is by definition of $\tilde{\delta}$, where $\widetilde{\corest}$ is defined as $\corest$ in~\eqref{eq:co-rest-1} but with the map $h_i^T$ replaced by $(h_i^\perp)^T$. The second equality is by definition of $\widetilde{\corest}$. The third equality is by the definition of $(h_i^\perp)^T$, which re-encodes $\tilde{x}^{(t-k)}$ in $x^{(t-k)}$. The fourth equality is by the definition of $\Delta_t$ in~\eqref{eq:def-deltak}. The last equality is by~\eqref{eq:dist-2-1}.
Finally, because $(h_i^\perp)^T$ are injective, this implies $\tilde{\delta}(\tilde{x}^{(t-k)})({f}) = 0$ for all $f$ which means $\tilde{\delta}(\tilde{x}^{(t-k)}) = 0$.

Because $\tilde{\delta}(\tilde{x}^{(t-k)}) = 0$, as long as $|\tilde{x}^{(t-k)}| < \mu_\cosyst(t-k)$, $\tilde{x}^{(t-k)} \in \im\,\tilde{\delta}$, and we can find an $\tilde{u}^{(t-k-1)} \in C^{t-k-1}(\tilde{X})$ such that $\tilde{\delta}(\tilde{u}^{(t-k-1)})=\tilde{x}^{(t-k)}$ and furthermore
\[ |\tilde{u}^{(t-k-1)}|\leq \frac{1}{\eps_\cocyc(t-k-1)}|\tilde{x}^{(t-k)}|=\frac{1}{\eps_\cocyc(t-k-1)}|{x}^{(t-k)}|\]
where the last equality follows from the construction of $\tilde{x}^{(t-k)}$, which is the decoding of ${x}^{(t-k)}$.
Note that $|\tilde{x}^{(t-k)}| <  \mu_\cosyst(t-k)$ is satisfied when $|x| < (2nt)^{-t} \mu_\cosyst(t-k)$, because $|\tilde{x}^{(t-k)}| = |x^{(t-k)}| \le (nt)^{t-k}|x^{(0)}| \le (nt)^{t-k}2^t |x|$ where the last inequality follows from~\eqref{eq:bound-on-x0}.

By re-encoding $\tilde{u}$, i.e.\ applying $\otimes_{j\notin\type(f)} \tilde{h}_j^T$ to each $\tilde{u}^{(t-k-1)}(f)$, we obtain an element $u^{(t-k-1)}\in C^{t-k-1}(X,\mF_t)$ such that $\Delta_{t}(u^{(t-k-1)})=x^{(t-k)}$ (because $\tilde{\delta}(\tilde{u}^{(t-k-1)})=\tilde{x}^{(t-k)}$) and $\partial_L(u^{(t-k-1)})=0$ (because $u^{(t-k-1)}$ is obtained by applying $\otimes_{j\notin\type(f)} \tilde{h}_j^T$). From there we can reduce the problem to Claim~\ref{claim:dist-1}, by setting $\tilde{z}^{(t-k-1)}=z^{(t-k-1)}+u^{(t-k-1)}$ and observing that it satisfies the conditions of the claim, where $\tilde{x}^{(t-k)}=\Delta_t (\tilde{z}^{(t-k-1)})=0$ and $\partial_L (\tilde{z}^{(t-k-1)})=x^{(t-k-1)}$.

Similar to Claim~\ref{claim:dist-1}, for $r = t-k-1$ we have
\begin{align*}
|\tilde{z}^{(r-1)}|
&\le |z^{(r-1)}| + 2^{2t}n^t|\tilde{z}^{(r)}| \\
&\le |z^{(r-1)}| + 2^{2t}n^t (|z^{(r)}|+|u^{(r)}|) \\
&\le (1+nt 2^{2t}n^t)|z^{(r-1)}| + \frac{2^{2t}n^t}{\eps_\cocyc(t-k)(r)}|x^{(r+1)}| \\
&\le (1+nt 2^{2t}n^t)|z^{(r-1)}| + \frac{2^{2t}n^t}{\eps_\cocyc(r)} (nt)^t 2^t |x|\;,
\end{align*}
and
\begin{align*}
|\tilde{z}^{(0)}|
\le (t2^{2t}n^{t+1})^{t-1} |x^{(0)}| + \frac{(t2^{2t}n^{t+1})^{t-2}}{\eps_\cocyc(r)} (nt)^t 2^t |x|\;.
\end{align*}
Finally,
\begin{align*}
|z|
\le 2^t n^t |\tilde{z}^{(0)}|
\le (t2^{2t}n^{t+1})^t|x| +
\frac{(t2^{2t}n^{t+1})^{t-2}}{\eps_\cocyc(r)} t^t n^{2t} 2^{2t} |x|
& \le (t^22^{2t}n^{t+1})^t \left(1+\frac{1}{\eps_\cocyc(r)}\right) |x| \\
&\le \frac{2(t^22^{2t}n^{t+1})^t}{\eps_\cocyc(r)} |x|\;.
\end{align*}
\end{proof}

We now conclude. First, we show the lower bound on $\mu_\syst(k)$. It suffices to show that whenever $x\in C_k(X)$ satisfies $\partial(x)=0$ and $|x|\leq (2nt)^{-t}\mu_\cosyst(t-k)$, there is $z\in C_{k+1}(X)$ such that $\partial(z)=x$. If $k=t$ this follows immediately from Claim~\ref{claim:dist-2} and the bound~\eqref{eq:bound-on-x0}. If $k<t$, we construct a sequence $z^{(0)},\ldots,z^{(r)}$ that explains $x^{(0)}$. If such a sequence is found with $r\leq t-k-1$ then Claim~\ref{claim:dist-1} gives the desired conclusion. If not, then our assumption on $|x|$ together with~\eqref{eq:ddz-2} shows that the assumption of Claim~\ref{claim:dist-2} is satisfied, allowing us to conclude the argument.

Finally, we show the ``Furthermore'' part of Proposition~\ref{prop:distance}. Let $x\in C_k(X)$ be the vector with the smallest cycle expansion where
$x':=\partial(x)\neq 0$,
\begin{equation}\label{eq:min-eq}
  |x|\,=\,\min_{y\in \ker \partial_{k}}|x-y|
\end{equation}
 and $|\partial(x)| = \eps_\cyc(k)|x|$. If $|x'|\geq (2nt)^{-t} \mu_\cosyst(t-(k-1))$ then because $|x|\leq |X(k)|$ we deduce $\eps_\cyc(k)\geq (2nt)^{-t}\mu_\cosyst(t-(k-1))/|X(k)|$. If $|x'|< (2nt)^{-t} \mu_\cosyst(t-(k-1))$, then following the arguments above, applied to $x'$, which satisfies $\partial(x')=0$, we find a $z'\in C_{k}(X)$ such that $\partial(z)=x'$ and $|z'|\leq  2(t^22^{2t}n^{t+1})^t \frac{1}{\eps_\cocyc(t-k)} |x'|$. By~\eqref{eq:min-eq}, $|z'|\geq |x|$, which concludes the proof.

\newcommand{\etalchar}[1]{$^{#1}$}


\begin{thebibliography}{GTC{\etalchar{+}}23}

\bibitem[AAV13]{aharonov2013guest}
Dorit Aharonov, Itai Arad, and Thomas Vidick.
\newblock Guest column: the quantum {PCP} conjecture.
\newblock {\em Acm sigact news}, 44(2):47--79, 2013.

\bibitem[ABN23]{anshu2023nlts}
Anurag Anshu, Nikolas Breuckmann, and Chinmay Nirkhe.
\newblock {NLTS} hamiltonians from good quantum codes.
\newblock In {\em Proceedings of the 55th Annual ACM Symposium on Theory of
  Computing}, pages 1090--1096, 2023.

\bibitem[AE15]{aharonov2015quantum}
Dorit Aharonov and Lior Eldar.
\newblock Quantum locally testable codes.
\newblock {\em SIAM Journal on Computing}, 44(5):1230--1262, 2015.

\bibitem[AGHP92]{AlGoHaPe}
N.~Alon, O.~Goldreich, J.~H\aa{}stad, and R.~Peralta.
\newblock Simple constructions of almost k-wise independent random variables.
\newblock {\em Random Structures and Algorithms}, 3, 1992.
\newblock See also Addendum in {\em Random Structures and Algorithms} 4 pages
  ?? 1993.

\bibitem[AR94]{alon1994random}
Noga Alon and Yuval Roichman.
\newblock Random {C}ayley graphs and expanders.
\newblock {\em Random Structures \& Algorithms}, 5(2):271--284, 1994.

\bibitem[BE21]{BE}
Nikolas~P. Breuckmann and Jens~N. Eberhardt.
\newblock Balanced product quantum codes.
\newblock {\em IEEE Transactions on Information Theory}, 67(10):6653--6674,
  2021.

\bibitem[BH14]{bravyi2014homological}
Sergey Bravyi and Matthew~B Hastings.
\newblock Homological product codes.
\newblock In {\em Proceedings of the forty-sixth annual ACM symposium on Theory
  of computing}, pages 273--282, 2014.

\bibitem[CS96]{calderbank1996good}
A~Robert Calderbank and Peter~W Shor.
\newblock Good quantum error-correcting codes exist.
\newblock {\em Physical Review A}, 54(2):1098, 1996.

\bibitem[DEL{\etalchar{+}}22]{DELLM}
Irit Dinur, Shai Evra, Ron Livne, Alexander Lubotzky, and Shahar Mozes.
\newblock Good locally testable codes.
\newblock {\em arXiv preprint arXiv:2207.11929}, 2022.

\bibitem[DHLV23]{DHLV}
Irit Dinur, Min-Hsiu Hsieh, Ting-Chun Lin, and Thomas Vidick.
\newblock Good quantum {LDPC} codes with linear time decoders.
\newblock In {\em Proceedings of the 55th Annual ACM Symposium on Theory of
  Computing}, pages 905--918, 2023.

\bibitem[EH17]{eldar2017local}
Lior Eldar and Aram~W Harrow.
\newblock Local {H}amiltonians whose ground states are hard to approximate.
\newblock In {\em 2017 IEEE 58th annual symposium on foundations of computer
  science (FOCS)}, pages 427--438. IEEE, 2017.

\bibitem[EK16]{EvraK16}
Shai Evra and Tali Kaufman.
\newblock Bounded degree cosystolic expanders of every dimension.
\newblock In {\em Proceedings of the 48th Annual {ACM} {SIGACT} Symposium on
  Theory of Computing, {STOC} 2016, Cambridge, MA, USA, June 18-21, 2016},
  pages 36--48, 2016.

\bibitem[EKZ22]{evra2022decodable}
Shai Evra, Tali Kaufman, and Gilles Z{\'e}mor.
\newblock Decodable quantum {LDPC} codes beyond the n distance barrier using
  high-dimensional expanders.
\newblock {\em SIAM Journal on Computing}, (0):FOCS20--276, 2022.

\bibitem[FH21]{freedman2021building}
Michael Freedman and Matthew Hastings.
\newblock Building manifolds from quantum codes.
\newblock {\em Geometric and Functional Analysis}, 31(4):855--894, 2021.

\bibitem[FK22]{first2022good}
Uriya~A First and Tali Kaufman.
\newblock On good $2 $-query locally testable codes from sheaves on high
  dimensional expanders.
\newblock {\em arXiv preprint arXiv:2208.01778}, 2022.

\bibitem[Got14]{gottesman2014fault}
Daniel Gottesman.
\newblock Fault-tolerant quantum computation with constant overhead.
\newblock {\em Quantum Information \& Computation}, 14(15-16):1338--1372, 2014.

\bibitem[Gro10]{Gromov2010}
Mikhail Gromov.
\newblock Singularities, expanders and topology of maps. part 2: from
  combinatorics to topology via algebraic isoperimetry.
\newblock {\em Geometric and Functional Analysis}, 20(2):416--526, 2010.

\bibitem[GTC{\etalchar{+}}23]{gu2023single}
Shouzhen Gu, Eugene Tang, Libor Caha, Shin~Ho Choe, Zhiyang He, and Aleksander
  Kubica.
\newblock Single-shot decoding of good quantum {LDPC} codes.
\newblock {\em arXiv preprint arXiv:2306.12470}, 2023.

\bibitem[Has16]{hastings2016quantum}
Matthew~B Hastings.
\newblock Quantum codes from high-dimensional manifolds.
\newblock {\em arXiv preprint arXiv:1608.05089}, 2016.

\bibitem[Has17]{hastings2017quantum}
Matthew~B Hastings.
\newblock Quantum codes from high-dimensional manifolds.
\newblock In {\em 8th Innovations in Theoretical Computer Science Conference
  (ITCS 2017)}. Schloss Dagstuhl-Leibniz-Zentrum fuer Informatik, 2017.

\bibitem[HHO21]{hastings2021fiber}
Matthew~B Hastings, Jeongwan Haah, and Ryan O'Donnell.
\newblock Fiber bundle codes: breaking the $n^{1/2}\poly\log(n)$ barrier for
  quantum {LDPC} codes.
\newblock In {\em Proceedings of the 53rd Annual ACM SIGACT Symposium on Theory
  of Computing}, pages 1276--1288, 2021.

\bibitem[JL99]{JL}
Bruce Jordan and Ron Livne.
\newblock The {R}amanujan property for regular cubical complexes.
\newblock {\em Duke Mathematical Journal}, 105:85--103, 1999.

\bibitem[JMO{\etalchar{+}}22]{JeronimoMO0T22}
Fernando~Granha Jeronimo, Tushant Mittal, Ryan O'Donnell, Pedro Paredes, and
  Madhur Tulsiani.
\newblock Explicit abelian lifts and quantum {LDPC} codes.
\newblock In Mark Braverman, editor, {\em 13th Innovations in Theoretical
  Computer Science Conference, {ITCS} 2022, January 31 - February 3, 2022,
  Berkeley, CA, {USA}}, volume 215 of {\em LIPIcs}, pages 88:1--88:21. Schloss
  Dagstuhl - Leibniz-Zentrum f{\"{u}}r Informatik, 2022.

\bibitem[Kal23]{kalachev2023prexp-vs-agreement}
Gleb Kalachev.
\newblock High-dimensional expansion of product codes is stronger than robust
  and agreement testability, 2023.

\bibitem[KKL14]{KKL14}
Tali Kaufman, David Kazhdan, and Alexander Lubotzky.
\newblock Ramanujan complexes and bounded degree topological expanders.
\newblock In {\em 55th {IEEE} Annual Symposium on Foundations of Computer
  Science, {FOCS} 2014, Philadelphia, PA, USA, October 18-21, 2014}, pages
  484--493, 2014.

\bibitem[KKL16]{kaufman2016isoperimetric}
Tali Kaufman, David Kazhdan, and Alexander Lubotzky.
\newblock Isoperimetric inequalities for ramanujan complexes and topological
  expanders.
\newblock {\em Geometric and Functional Analysis}, 26(1):250--287, 2016.

\bibitem[KP22]{kalachev2022two}
Gleb Kalachev and Pavel Panteleev.
\newblock Two-sided robustly testable codes.
\newblock {\em arXiv preprint arXiv:2206.09973}, 2022.

\bibitem[KP24]{PK24}
G.~Kalachev and P.~Panteleev.
\newblock Maximally extendable product codes are good coboundary expanders,
  2024.

\bibitem[KT21]{kaufman2021new}
Tali Kaufman and Ran~J Tessler.
\newblock New cosystolic expanders from tensors imply explicit quantum ldpc
  codes with {$\Omega(\sqrt n \log^k n)$} distance.
\newblock In {\em Proceedings of the 53rd Annual ACM SIGACT Symposium on Theory
  of Computing}, pages 1317--1329, 2021.

\bibitem[LH22]{LinH22}
Ting{-}Chun Lin and Min{-}Hsiu Hsieh.
\newblock c\({}^{\mbox{3}}\)-locally testable codes from lossless expanders.
\newblock In {\em {IEEE} International Symposium on Information Theory, {ISIT}
  2022, Espoo, Finland, June 26 - July 1, 2022}, pages 1175--1180. {IEEE},
  2022.

\bibitem[Lin24]{lin2024transversal}
Ting-Chun Lin.
\newblock Transversal non-Clifford gates for quantum LDPC codes on sheaves.
\newblock {\em arXiv preprint arXiv:2410.14631}, 2024.

\bibitem[LLZ22]{leverrier2022towards}
Anthony Leverrier, Vivien Londe, and Gilles Z{\'e}mor.
\newblock Towards local testability for quantum coding.
\newblock {\em Quantum}, 6:661, 2022.

\bibitem[LM06]{LinialM06}
Nathan Linial and Roy Meshulam.
\newblock Homological connectivity of random 2-complexes.
\newblock {\em Combinatorica}, 26(4):475--487, 2006.

\bibitem[Lub18]{lubotzky2018high}
Alexander Lubotzky.
\newblock High dimensional expanders.
\newblock In {\em Proceedings of the International Congress of Mathematicians:
  Rio de Janeiro 2018}, pages 705--730. World Scientific, 2018.

\bibitem[LZ22]{leverrier2022quantum}
Anthony Leverrier and Gilles Z{\'e}mor.
\newblock Quantum tanner codes.
\newblock In {\em 2022 IEEE 63rd Annual Symposium on Foundations of Computer
  Science (FOCS)}, pages 872--883. IEEE, 2022.

\bibitem[PK21]{PK1}
Pavel Panteleev and Gleb Kalachev.
\newblock Quantum {LDPC} codes with almost linear minimum distance.
\newblock {\em IEEE Transactions on Information Theory}, pages 1--1, 2021.

\bibitem[PK22]{PK2}
Pavel Panteleev and Gleb Kalachev.
\newblock Asymptotically good quantum and locally testable classical {LDPC}
  codes.
\newblock In {\em Proceedings of the 54th Annual ACM SIGACT Symposium on Theory
  of Computing}, pages 375--388, 2022.

\bibitem[PK24]{PK-future}
Pavel Panteleev and Gleb Kalachev.
\newblock Personal communication, 2024.

\bibitem[Por23]{portnoy2023local}
Elia Portnoy.
\newblock Local quantum codes from subdivided manifolds.
\newblock {\em arXiv preprint arXiv:2303.06755}, 2023.

\bibitem[SS96]{SipserSp96}
Michael Sipser and Daniel~A. Spielman.
\newblock Expander codes.
\newblock {\em IEEE Trans. Inform. Theory}, 42(6, part 1):1710--1722, 1996.
\newblock Codes and complexity.

\bibitem[Ste96]{steane1996error}
Andrew~M Steane.
\newblock Error correcting codes in quantum theory.
\newblock {\em Physical Review Letters}, 77(5):793, 1996.

\bibitem[WLH23a]{wills2023general}
Adam Wills, Ting-Chun Lin, and Min-Hsiu Hsieh.
\newblock General distance balancing for quantum locally testable codes.
\newblock {\em arXiv preprint arXiv:2305.00689}, 2023.

\bibitem[WLH23b]{wills2023tradeoff}
Adam Wills, Ting-Chun Lin, and Min-Hsiu Hsieh.
\newblock Tradeoff constructions for quantum locally testable codes.
\newblock {\em arXiv preprint arXiv:2309.05541}, 2023.

\end{thebibliography}
\end{document}